\newcommand{\Sref}[1]{(\hyperref[SP #1]{SP #1})}
\newcommand{\Lref}[1]{(\hyperref[L #1]{L #1})}
\newcommand{\Qref}[1]{(\hyperref[Q #1]{Q #1})}
\newcommand{\SBref}[1]{(\hyperref[SB #1]{SB SB #1})}
\newcommand{\Eref}[1]{(\hyperref[E #1]{E#1})}
\newcommand{\G}{\ensuremath{\mathcal G}}
\newcommand{\morph}[2]{\ensuremath{h_{#1 \to #2}}}
\newcommand{\Morph}[2]{\ensuremath{H_{#1 \to #2}}}
\newcommand{\invmorph}[2]{\ensuremath{h^{-1}_{{#1} \to {#2}}}}
\newcommand{\Invmorph}[2]{\ensuremath{H^{-1}_{{#1} \to {#2}}}}
\newcommand{\blockc}[1][a]{\ensuremath{bl_{#1}}}
\newcommand{\Blockc}[1][a]{\ensuremath{Bl_{#1}}}
\newcommand{\invblockc}[1][a]{\ensuremath{bl^{-1}_{#1}}}
\newcommand{\Invblockc}[1][a]{\ensuremath{Bl^{-1}_{#1}}}
\newcommand{\presentletters}{\textnormal{\textsl{Letters}}}
\DeclareMathOperator{\per}{per}
\newcommand{\algpair}{\algofont{PairCompNCr}}
\newcommand{\algpairc}{\algofont{PairComp}}
\newcommand{\opprepend}[1][a,X]{\ensuremath{\algofont{Prepend}_{#1}}}
\newcommand{\opappend}[1][a,X]{\ensuremath{\algofont{Append}_{#1}}}
\newcommand{\algsolveeq}{\algofont{WordEqSat}}
\newcommand{\algsolveeqlin}{\algofont{LinWordEqSat}}
\newcommand{\algprefsuff}{\algofont{CutPrefSuff}}
\newcommand{\algprefsuffi}{\algofont{CutPrefSuffImp}}
\newcommand{\algblocks}{\algofont{BlockCompNCr}}
\newcommand{\algblocksc}{\algofont{BlockComp}}
\newcommand{\algblocksi}{\algofont{BlockCompImp}}
\newcommand{\algpop}{\algofont{Pop}}
\newcommand{\algdiophantine}{\algofont{VerifyDiophantine}}
\newcommand{\wordtodiophantine}{\algofont{WordtoDioph}}
\newcommand{\algofont}[1]{\textnormal{\textsc \selectfont\sffamily  #1}}
\newcommand{\Makanin}{\algofont{MakSAT}}
\newcommand{\algPlandowskiRytter}{\algofont{PlaRytSAT}}
\newcommand{\algPlandowskiSTOC}{\algofont{PlaSat2EXP}}
\newcommand{\algPlandowskiFOCS}{\algofont{PlaSat}}
\newcommand{\algPlandowskiSTOCi}{\algofont{PlaSatImp}}
\newcommand{\algPlandowskiSTOCdwa}{\algofont{PlaSolve}}
\newcommand{\sol}[1]{\ensuremath{S(#1)}}
\newcommand{\solution}{\ensuremath{S}}
\newcommand{\obaeq}{\sol U and \sol V}
\newcommand{\letters}{\ensuremath{\Gamma}}
\newcommand{\variables}{\ensuremath{\mathcal X }}
\newcommand{\NPclass}{{\sf NP}}
\newcommand{\PSPACE}{{\sf PSPACE}}
\newcommand{\NPSPACE}{{\sf NPSPACE}}
\newcommand{\NEXPTIMEclass}{{\sf NEXPTIME}}
\newcommand{\DEXPTIMEclass}{{\sf DEXPTIME}}
\newcommand{\poly}{{\sf {poly}}}
\newtheorem{theorem}{Theorem}
\newtheorem{lemma}{Lemma}
\newtheorem{corollary}{Corollary}
\theoremstyle{remark}
\newtheorem{remark}{Remark}
\newtheorem{example}{Example}
\theoremstyle{definition}
\newtheorem{definition}{Definition}
\providecommand{\Ocomp}{\mathcal{O}}
\newcommand{\twodots}{\mathinner{\ldotp\ldotp}}
\newtheorem{clm}{Claim}
\definecolor{myYellow}{rgb}{0.9,0.9,0}
\title{Recompression: a simple and powerful technique for word equations
}
\author[A. Je\.z]{Artur Je\.z\\
\\
Institute of Computer Science\\
University of Wroc\l{}aw\\
Wroc\l{}aw, Poland
}
\thanks{This work was partially supported by
NCN grant number 2011/01/D/ST6/07164, 2011--2015.}
\keywords{Word equations, exponent of periodicity, semantic unification, string unification}
\begin{document}
\begin{abstract}
In this paper we present an application of a simple technique of local recompression,
previously developed by the author in the context algorithms for compressed strings~\cite{fullyNFA,FCPM,grammar},
to word equations.
The technique is based on local modification of variables (replacing $X$ by $aX$ or $Xa$)
and iterative replacement of pairs of letters occurring in the equation by a~`fresh' letter,
which can be seen as a bottom-up \emph{compression} of the solution
of the given word equation, to be more specific, building an SLP (Straight-Line Programme)
for the solution of the word equation.

Using this technique we give a new, independent and self-contained
proofs of many known results for word equations.
To be more specific, the presented (nondeterministic) algorithm
runs in $\Ocomp(n \log n)$ space and in time polynomial in $n$ and $\log N$,
where $n$ is the size of the input equation and $N$ the size of the length-minimal solution of the word equation.
Furthermore, for a $\Ocomp(1)$ variables the bound on the space consumption is in fact linear,
i.e.\ $\Ocomp(m)$ where $m$ is the size of the space used by the input.
This yields that for each $k$ the set of satisfiable word equations with $k$ variables is context-sensitive.
The presented algorithm can be easily generalised 
to a generator of all solutions of the given word equation
(without increasing the space usage).
Furthermore, a further analysis of the algorithm yields an independent proof of doubly exponential
upper bound on the size of the length-minimal solution.
The presented algorithm does not use
exponential bound on the exponent of periodicity.
Conversely, the analysis of the algorithm yields
an independent proof of the 
exponential bound on exponent of periodicity.
\end{abstract}

\maketitle

\section{Introduction}
\subsection*{Word equations}
Since the dawn of the computer science, the problem of \emph{word equations}
was one of the most intriguing on the intersection between algebra and formal languages:
given words $U$ and $V$, consisting of letters (from \letters) and variables
(from \variables) we are to check the \emph{satisfiability},
i.e.\ decide, whether there is a substitution for variables,
which turns this formal equation into an equality of strings of letters.
It is useful to think of a solution \solution{} as a homomorphism
$\solution : \letters \cup \variables \mapsto \letters^*$,
which is an identity on \letters.
In the more general problem of \emph{solving} the equation,
we are to give representation of (all or some) solutions of the equation.

The problem of satisfiability of word equations
was first fully solved by Makanin~\cite{makanin}.
The proposed algorithm \Makanin{} transforms equations and large part
of Makanin's work consists of proving that this procedure in fact terminates.
While terminating, \Makanin{} complexity is very high.
Over the years the algorithm was gradually improved:
by Jaffar and independently Schulz to {\sf 4-NEXPTIME}~\cite{Jaffar90,Schulz90}
Ko\'scielski and Pacholski to {\sf 3-NEXPTIME}~\cite{Koscielski},
by Diekert to {\sf 2-EXPSPACE} (unpublished) and by Guti{\'e}rrez
to {\sf EXPSPACE}~\cite{Gutierrez98}.
It is worth mentioning that for 20 years no essentially different algorithm
than \Makanin{} was proposed.
On the other hand, as for today only a simple \NPclass{} lower bound is known
and it is widely believed that this problem is in \NPclass.

One of the key factors in the proof of termination, as well in later
estimations of the complexity of the algorithm,
 was the estimation the upper bound on
\emph{exponent of periodicity} of the solution.
Roughly speaking, the exponent of periodicity of a word $w$
is the largest $p$ such that $w = w_1 u ^p w_2$ for some $u \neq \epsilon$.
The original proof of Makanin gave a doubly exponential bound on the exponent of periodicity
of any length-minimal solution of word equations.
Later it was shown by Ko\'scielski and Pacholski that exponent of periodicity is
at most exponential~\cite{Koscielski}, this bound is tight.

A major independent step in the field was done by
Plandowski and Rytter~\cite{PlandowskiICALP},
who for the first time applied the notion of the \emph{compression}
to the solutions of the word equations:
they have shown that each length-minimal 
solution of the word equation is highly compressible,
in the sense that using LZ77 compression
(a popular practical standard for compression)
we can represent each length-minimal solution of word equations (of size $N$)
using an $\Ocomp(\log N)$-size encoding.
This implies that also LZ77-encoding of values of variables in such a solution
has size $\Ocomp(\log N)$.
Thus, to solve the word equation it is enough to guess the LZ77-encoding
of \sol X for each variable $X$ and verify that \sol U = \sol V under this substitution.
The latter can be done using known (though recent at that time)
polynomial methods for testing the equivalence of two SLPs~\cite{PlandowskiSLPequivalence}.
This yielded a new algorithm for word equations satisfiability,
which works in (nondeterministic) polynomial time in terms of $\log N$ and $n$.
Unfortunately, at that time the only bound on $N$ followed from the original
Makanin's algorithm, and it was triply exponential.
This gave a {\sf 2-NEXPTIME} algorithm, which was worse than
{\sf EXPSPACE}~\cite{Gutierrez98} published in the same year (though a little later).

Later, Plandowski gave a doubly-exponential upper bound on the size
of the minimal solution~\cite{PlandowskiSTOC},
which immediately yielded a \NEXPTIMEclass{} algorithm \algPlandowskiSTOC{}
for the problem.
This upper bound was obtained by a clever and careful
analysis of the minimal solution using so-called $\mathcal D$-factorisations,
suggested by Mignosi.

Soon after, another algorithm \algPlandowskiFOCS,
with a \PSPACE\footnote{The presented algorithm has running time proportional to $N$,
however, it can be extended so that it has the same running-time bounds as the earlier \algPlandowskiSTOC~\cite{PlandowskiPersonal}.}
upper-bound was given by Plandowski~\cite{PlandowskiFOCS}.
This algorithm starts with a trivial equation $e = e$ and has a set
of operations that can be performed on the equation;
so it can be seen as a rewriting system.
The set of rewriting rules is quite simple and thus also the algorithm
is easy to understand,
moreover it is obvious that the rewriting rules are sound (i.e.\ preserve satisfiability).
However the proof of completeness of this rewriting system
(i.e.\ that it properly generates all satisfiable equations) is involved.
It was based on usage of exponential expressions, which can be seen as a very
simple compression, and on indexed factorisations of words,
which extend the already mentioned $\mathcal D$-factorisations.

In some sense one can think that this result was obtained in stages,
as \algPlandowskiRytter, fuelled with theoretical results on
$\mathcal D$-factorisations, yielded \algPlandowskiSTOC{} and this in in turn
was upgraded to \algPlandowskiFOCS, by exploiting better the interplay between
the compression and factorisations.

All mentioned algorithms have a little drawback: while they check satisfiability
and can be modified to return \emph{some} solution of the word equation,
they do not \emph{solve} it in the sense that they do not provide a representation
of all solutions.
This was fully resolved by Plandowski~\cite{PlandowskiSTOC2},
who gave an algorithm \algPlandowskiSTOCdwa, which runs in \PSPACE
\footnote{\algPlandowskiSTOCdwa{} is implemented in \PSPACE,
but the generated representation can be exponential and thus only \DEXPTIMEclass{} running time was claimed in the original paper.}
and generates a compact representation of all (finite) solutions of a word equation.
This algorithm uses an improved version of \algPlandowskiFOCS,
the~\algPlandowskiSTOCi, as subprocedure.
The representation of the solutions is a directed multigraph,
whose nodes are labelled with expressions and
edges define substitutions for constants and variables.
Such representation reduces many properties of word equations to
reachability in graphs (which were exponentially larger),
for instance the problem of finiteness of set of solutions is shown to be in \PSPACE.

Some research was also done in the restricted variants of word equations,
most notably, there are polynomial-time algorithms for equations
with only two variables~\cite{twovarold,twovarnew}.
The variant with only one variable has almost-linear running time~\cite{onevarold};
the special case of only one variable with $\Ocomp(1)$ occurrences in the equation
has an optimal linear-time algorithm~\cite{onevarnew}, which works in a very simple computational model.

\subsection*{Our contribution}
In this paper, we present an application of a simple technique
of \emph{local recompression}
developed by the author and successfully applied to problems related with compressed data~\cite{fullyNFA,FCPM,grammar}.

\subsubsection{Recompression}
The idea of the technique is easily explained in terms of solutions
of the equations (i.e.\ words) rather than the equations themselves:
consider a solution $\sol U = \sol V$ of the equations $U = V$.
In one phase we first list all pairs of different letters $ab$ that occur as substrings
in \sol U and \sol V. For a fixed pair $ab$ of this kind
we greedily replace all occurrences of $ab$ in \sol U and \sol V by a new letter $c$.
(A slightly more complicated action is performed for pairs $aa$,
for now we ignore this case to streamline the presentation of the main idea).
There are possible conflicts between such replacements for different pairs
(consider string $aba$, in which we try to replace both pairs $ab$ and $ba$),
we resolve them by introducing some arbitrary order on types of pairs
and performing the replacement for one type of pair at a time, according to the order
(so in the example, we can first compress $ab$, obtaining $ca$ and then $ba$, which has no effect).
When all such pairs are replaced, we obtain another equal strings
$\solution'(U')$ and $\solution'(V')$ (note that the equation $U = V$
may have changed, and the new one is $U'  =V'$).
Then we iterate the process.
In each phase the strings are shortened by a constant factor,
and so after $\Ocomp(\log N)$ rounds we obtain a pair of trivial
(i.e.\ consisting of a single letter) strings.
Now, the original equation is solvable if and only if the obtained letters are the same.

The presented method has many variants, for instance, the pairs that occur seldom are not compressed,
pairs that do not overlap are compressed simultaneously etc. However, the respective variants
are always based on the general idea and the modifications are introduced to reach some specific goal.

The most problematic part of this idea is that it performs the operation on the solutions,
which can be large.
If we were to simply guess the solution and then perform the compressions,
this would have running time polynomial in $N$, which is not acceptable.
We circumvent the problem, by performing the compression
directly on the equation (the \emph{recompression}):
the pairs $ab$ occurring in the solution are identified using only
the equation and the compression of the solution is done implicitly,
by compressing the constants in the equations.
However, not all pairs of letters can be compressed in this way,
as some of them occur on the `crossing' between a variable and a constant:
consider for instance $\sol X = ab$, a string of symbols $Xc$
and a compression of a pair $bc$.
This is resolved by \emph{local decompression} part of the method:
when trying to compress the pair $bc$ in the example above we first replace
$X$ by $Xb$ (implicitly changing \sol X from $ab$ to $a$),
obtaining the string of symbols $Xbc$, in which the pair $bc$ can be easily compressed.

By simple calculations it can be shown that this method:
\begin{itemize}
	\item transforms solvable equations to solvable equations
	(for proper nondeterministic choices);
	\item transforms unsolvable equations to unsolvable equations
	(for all nondeterministic choices);
	\item does not introduce new variables;
	\item in each phase shortens each string (of letters) by a constant factor;
	\item in one phase introduces only a linear number of new letters to the equation.
\end{itemize}
In this way, correctness easily follows and both the $\Ocomp(\log N \poly(n))$ time 
and \PSPACE{} bounds hold.

\begin{example}
Consider an equation $aXca = abYa$ with a solution $\sol X = baba$ and $\sol Y = abac$.
In the first phase,
the algorithm wants to compress the pairs $ab$, $ca$, $ac$, $ba$ in this order.
To compress $ab$, it replaces $X$ with $bX$, thus changing the substitution into $\sol X= aba$.
After compression we obtain equation $a'Xca = a'Ya$.
Notice, that this implicitly changed solution into $\sol X = a'a$ and $\sol Y = a'ac$
To compress $ca$ (into $c'$), we replace $Y$ by $Yc$, thus implicitly changing the
substitution into $\sol Y = a'a$.
Then, we obtain the equation $a'Xc' = a'Yc'$ with a solution $\sol X = a'a$
and $\sol Y = a'a$.
The remaining pairs no longer occur in the equations,
and so we can proceed to the next phase.
\end{example}

The main features of the presented technique is that, at the same time:
	it is easy to state and apply,
	its proof of correctness is simple and straightforward,
	only basic properties of word equations and strings  
	are used in the design, application and analysis. 
The last property seems to be the most surprising,
as in order to apply the technique,
no understanding of the word equations and its solutions is actually needed.
This is completely different than the approaches based on Makanin's
algorithm~\cite{makanin,Jaffar90,Schulz90,Koscielski,Gutierrez98}
and Plandowski's constructions~\cite{PlandowskiSTOC,PlandowskiFOCS,PlandowskiSTOC2};
however, the \algPlandowskiRytter~\cite{PlandowskiICALP} shared this treat.

\subsubsection*{Results}
Using the technique of local recompression we give a (nondeterministic)
algorithm for testing satisfiability of word equations that works in time $\Ocomp(\log N \poly(n))$ and in $\Ocomp(n \log n)$ (bit) space.
Furthermore, a more detailed analysis yields that for $\Ocomp(1)$ variables the space consumption can be lowered to $\Ocomp(m)$,
where $m$ is the space (counted in bits) used by the input,
thus showing that for each fixed $k$ the set of satisfiable word equations with $k$ variables is context-sensitive.

The presented algorithm and its analysis are stand-alone,
as they do not assume any (non-trivial) properties of the solutions
of word equations.
To the contrary, it supplies an easy proof of doubly-exponential
upper bound of Plandowski~\cite{PlandowskiSTOC} on lengths of length-minimal solutions
as well as giving a new proof of exponential bound on the exponent of periodicity
(though slightly weaker than the one presented by Ko\'scielski
and Pacholski~\cite{Koscielski}).

The presented method can be easily modified, so that it can be used as a subprocedure
in an algorithm generating a representation of all solutions,
similarly as \algPlandowskiSTOCi{} in \algPlandowskiSTOCdwa.
The representation provided by our algorithm is similar to representation
provided by \algPlandowskiSTOCdwa,
i.e.\ a directed multigraph with edges representing substitutions.
Then the algorithm for testing satisfiability is used
to find out whether there is an edge between two given nodes and what
is the substitution labelling it.
The whole modification to our algorithm consists of replacing non-deterministic guesses
of lengths of strings by guessing the arithmetical relation that these lengths satisfy.

\subsubsection*{Presentation}
We start off with presenting a recompression-based algorithm for word equations,
in Section~\ref{sec:main}.
Firstly, we shall describe only its basic properties, which are needed to show
that it works in \PSPACE{} and has $\Ocomp(\log N \poly(n))$ (nondeterministic) running time.
More involved definitions as well as results are given in the following sections.
To be more precise, in Section~\ref{sec:blocks} we analyse in more detail
the structure of maximal repetitions of one letter in solutions of word equations.
This allows reduction of space consumption to $\Ocomp(n \log n)$ and is essentially used in following sections.
Using these results and a special encoding of letters we show that for $\Ocomp(1)$ variables 
we can lower the space consumption of the algorithm to linear one,
hence showing that the word equations with $k$ variables (for a fixed $k$)
are context-sensitive; this is presented in Section~\ref{sec:linear space}.
Then in Section~\ref{sec:solutions}, we recall the classification of solutions, given by Plandowski~\cite{PlandowskiSTOC2}, and related notions.
Using this classification we generalise the main notions and algorithm
to a generator of all solutions, see Section~\ref{sec:generator}.
Lastly, in Section~\ref{sec:theoretical bounds}, we show that a more detailed analysis of the algorithm also yields alternative (simple) proofs of
exponential bound on the exponent of periodicity and double exponential bound on the
size of the length minimal solutions

\subsubsection*{Comparison with previous approaches to word equations}
The presented method and the obtained algorithm is independent from
all previously known algorithms for word equations,
i.e.\ from original \Makanin{} and its variants,
from \algPlandowskiRytter{} (and its variant \algPlandowskiSTOC),
from \PSPACE{} algorithm \algPlandowskiFOCS{} as well as
its modification \algPlandowskiSTOCi.
In fact, the only algorithm, with which it can be somehow compared,
is the LZ77-based \algPlandowskiRytter~\cite{PlandowskiICALP}.
The key difference was that Plandowski and Rytter 
showed that a length-minimal solution has a short LZ77-representation
and then explicitly guessed and verified it.
Furthermore, the guessing was in some sense done in top-down fashion.
Thus their solution, in some sense, was `global'
(as it guessed the whole solution in one go and did it top-down)
and based on solutions' properties (in particular a bound on the size of the length-minimal solution
is needed to bound the running time of \algPlandowskiRytter). 
The novelty and importance of the here proposed method is that it does not use
properties of the solutions and that it is very `local',
in the sense that it does not try to build the solution in one go,
instead it modifies the equations and variables locally.
In particular, in this way we are working with an SLP-encoding of the solution,
which is easier in handling than the LZ77-representation.

Lastly the presented algorithm
uses only a very limited variant of exponent of periodicity,
when the strings in question consist only of repetitions of a single letter.
In such a case an exponential bound is easy to obtain.
This makes the presented algorithm somehow similar to \algPlandowskiRytter,
which does not use at all the bound on exponent of periodicity.

We believe that the presented algorithm is simpler from the previously applied.
This is of course a personal feeling, but it is backed up by a smaller
memory consumption.
This is also backed up by a follow-up work employing this approach as well:
in another work of the author, it was shown that
the recompression approach in the case of equations with only one variable
(and arbitrary many occurrence of it) yields a linear-time algorithm~\cite{onevarlinear},
which is also some argument in favour of this method.
Secondly, the recompression approach to word equations generalises to terms,
which allowed showing that context-unification (which is a natural problem between word equations and second-order unification)
is decidable in \PSPACE; so far this is the only algorithm for word equations that was generalised to context unification.

\subsubsection*{Related techniques}
While the presented method of recompression is relatively new,
some of its ideas and inspirations go quite back.
This technique was developed in order to deal with
fully compressed membership problem for NFA and the previous work
on this topic by Mathissen and Lohrey~\cite{LohreySLP}
already implemented the idea of replacing strings with fresh letters
as well as modifications of the instance so that this is possible
and treated maximal blocks of a single letter in a proper way.
However, the replacement was not iterated, and the newly introduced
blocks could not be further compressed.

The idea of replacing short strings by a fresh letter and iterating
this procedure was used by Mehlhorn et.~al~\cite{MehlhornSU97},
in their work on data structure for equality tests for dynamic strings
(cf.~also an improved implementation of a similar data structure by Brodal~et~al.~\cite{AlstrupBrodalRauhe}).
They viewed this process as `hashing'.

A similar technique, based on replacement of pairs and blocks of the same letter
was proposed also by Sakamoto~\cite{SLPaproxSakamoto} in the context of
constructing a smallest SLP generating a given word.
His algorithm was inspired by the \algofont{RePair} algorithm~\cite{RePair},
which is a practical grammar-based compressor.
It possessed the important features of the method: iterated replacement of pairs and blocks,
phases (i.e.\ ignoring letters recently introduced).
However, the analysis that stressed the modification of the variables (nonterminals)
was not introduced and it was done in a more crude way.
Additionally, Sakamoto introduced a special (and involved) pairing technique,
which greatly increases the conceptual complexity of his work.

\subsubsection*{Citing conventions}
As this paper aims at being stand alone,
many lemmata known from the literature, are supplied with proofs
(though sometimes different than the original ones).
Thus, in order to distinguish these two types of results,
whenever a theorem/lemma has a citation, it means that it was shown before,
perhaps in a slightly different variant.
Otherwise, the theorem/lemma is new.

\section{Main notions and techniques: local compression}
\label{sec:notions}
Let us formalise the main notions.
By $\letters$ we denote the set of letters occurring in the equation $U = V$
or are used for representation of compressed strings
(we do not use $\Sigma$ for this purpose as it is often used for summations).
The set $\variables$ denotes a set of variables.
The equation is written as $U = V$, where $U, V \in (\letters \cup \variables)^*$.
By $|U|$, $|V|$ we denote the length of $U$ and $V$,
$n$ denotes the length the input equation,
$n_v$ denotes the number of occurrences of variables in the input equation.

A \emph{substitution} is a morphism
$\solution : \variables \cup \letters \to \letters^*$,
such that $\sol a = a$ for every $a \in \letters$.
Each substitution is naturally extended to $(\variables \cup \letters)^*$.
The name represents the intuitive meaning that substitution
simply replaces variables by (some) strings.
A \emph{solution} of an equation $U = V$ is
a substitution \solution, such that $\sol U = \sol V$.
We exclude solutions (and substitutions) that substitute $\epsilon$ for $X$ that is present in the equation.
This is not restricting, as the general word equations reduce easily to this case:
given a word equation it is enough to guess for each variable $X$ whether $\sol X = \epsilon$ or not
and remove from the equation the variables for which we guessed that they have $\epsilon$ as a solution.
On the other hand, by convention, we assume that $\sol X = \epsilon$ for every variable $X$ that is not present in the equation
(note that this somehow corresponds to removing the variable from the equation: when we remove $X$ from the equation,
we `assume' that $\sol X = \epsilon$, while when $\sol X = \epsilon$ we can in fact remove $X$ from the equation,
without affecting the satisfiability).

Clearly, some solutions are `smaller' than other and we are naturally
interested in the `smallest':
We say that a solution \solution{} is \emph{length-minimal},
if for every solution $\solution'$ it holds that $|\sol U| \leq  |\solution'(U)|$.

\subsubsection*{Operations}
In essence, the presented technique is based on performing two operations
on \sol U  and \sol V, consider the first one:
\begin{description}
	\item[pair compression of $ab$]
	For two different letters $ab$ occurring in \sol U
	replace each of $ab$ in \obaeq{} by a fresh letter $c$.
\end{description}	

The compression of pair $aa$ is ambiguous (consider pair $aa$ and a string $aaa$)
and thus problematic, we need a better notion.
For a letter $a \in \letters$ we say that $a^\ell$ is a $a$'s
\emph{maximal block} of length $\ell$ for \solution,
if $a^\ell$ occurs in \sol U (or \sol V) and this occurrence
cannot be extended by $a$ nor to the left, neither to the right.
We refer to \emph{$a$'s $\ell$-block} for shortness.
Now, we can introduce the second operation performed on the solutions:
\begin{description}
	\item[block compression for $a$]
	For a letter  $a$ occurring in \sol U and each $\ell > 1$ 
	replace all maximal blocks $a^\ell$s in \obaeq{} by a fresh letter $a_\ell$.
\end{description}

The lengths of the maximal blocks can be upper bounded using the well-known
exponential bound on exponent of periodicity:
\begin{lemma}[Exponent of periodicity bound~\cite{Koscielski}]
\label{lem:periodicity bound original}
If solution \solution{} is length-minimal and $w^\ell$ for $w \neq \epsilon$
is a substring of \sol U,
then $\ell \leq 2^{cn} $ for some constant $0 < c < 2$.
\end{lemma}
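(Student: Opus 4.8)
The plan is to run the classical shortening argument of Ko\'scielski and Pacholski: a long periodic factor in a length-minimal solution would let us manufacture a strictly shorter solution, unless the period is already forced to be short, and we quantify ``short'' by reducing the available freedom to a system of linear Diophantine equations. First I would reduce to the case that $w$ is \emph{primitive}: writing $w=u^{k}$ with $u$ primitive, $u^{k\ell}$ is again a substring of $\sol U$, so a bound $k\ell\le 2^{cn}$ gives $\ell\le 2^{cn}$. Fix such a primitive $p$, set $q=|p|$, and consider the \emph{$p$-runs} of $\sol U=\sol V$, i.e.\ the maximal factors of period $q$. By the Fine--Wilf theorem, two $p$-periodic factors overlapping in at least $q$ positions lie in a common $p$-run, so the runs are well separated and rigid.

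The core of the argument is the following structural picture. Laying the equation out as a solved instance, $\sol U$ is the concatenation of the images of the atoms (letters and variable occurrences) of $U$, and similarly for $V$; each $p$-run of $\sol U$ consists of a $p$-periodic suffix of one atom's image, followed by possibly several whole atom-images each of which is a power of $p$, followed by a $p$-periodic prefix of the next atom's image, and by primitivity of $p$ these overhangs have length $<q$. Consequently, introducing a nonnegative integer unknown $x_j$ for each ``exponent of $p$'' that the solution may freely pick -- one inside $\sol X$ for each variable $X$, plus $\Ocomp(1)$ bookkeeping exponents per atom -- yields: (i) every $p$-run length of $\sol U$, measured in units of $q$, is a fixed $\{0,\pm 1\}$-combination of the $x_j$ plus an $\Ocomp(1)$ additive constant; (ii) replacing uniformly each exponent by its unknown produces a parametrized substitution $\solution_{\bar x}$ that is a well-defined word for every $\bar x\ge 0$ and that is a \emph{solution} exactly when the $p$-runs of the two sides match position-for-position, which is a system of linear Diophantine equations $A\bar x=\bar b$; and (iii) $|\solution_{\bar x}(U)|$ equals a constant plus $q\sum_j x_j$. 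Here $A$ has $\Ocomp(n)$ rows and columns. The given $S$ corresponds to some feasible $\bar x^{(0)}$, so $\{\bar x\ge 0: A\bar x=\bar b\}$ is nonempty; and by (iii), length-minimality of $S$ means precisely that $\bar x^{(0)}$ minimizes $\sum_j x_j$ over the nonnegative \emph{integer} solutions of $A\bar x=\bar b$.

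It then remains to bound the minimum of $\sum_j x_j$ over the nonnegative integer solutions of such a system. By the standard estimate on smallest solutions of integer linear programs (basic feasible solutions, Cramer's rule, Hadamard's inequality), with $\Ocomp(n)$ variables and equations and $\Ocomp(1)$-bounded entries this minimum is at most $2^{\Ocomp(n\log n)}$; exploiting the sparsity of $A$ (each matching constraint involves only boundedly many exponents, so Hadamard may be applied row-wise) improves this to $2^{\Ocomp(n)}$, and a careful optimization of the determinant estimate gives the stated $\ell\le 2^{cn}$ with $c<2$. Since the longest $p$-run spans at most $\sum_j x^{(0)}_j+\Ocomp(n)$ copies of $p$, the exponent $\ell$ inherits the same bound, and the reduction of the first paragraph finishes the proof.

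The hard part is the structural picture of the second paragraph: one must show that the $p$-periodic behaviour of a minimal solution is captured by a \emph{fixed} skeleton -- which atom-images are powers of $p$, what the short overhangs are, how the runs on the two sides are aligned -- together with a vector of independently adjustable exponents, and that perturbing the exponents neither merges nor splits runs nor changes which images are $p$-powers. This requires a careful case analysis resting on the primitivity of $p$ and on the synchronization properties of periodic words, and it is also the point at which one must be economical in order to keep the coefficients of $A$ bounded; everything downstream -- assembling $A\bar x=\bar b$ and invoking the integer-programming bound -- is routine.
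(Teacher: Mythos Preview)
Your proposal sketches the classical Ko\'scielski--Pacholski argument, which is precisely the external result the lemma is \emph{citing}; as a plan it is sound, though you correctly flag that the ``structural picture'' (the $P$-presentations) is where all the work lies and you have only gestured at it. One quibble: the claim of ``one unknown inside $\sol X$ for each variable $X$'' is too loose---the unknowns are attached to occurrences and to the $p$-prefix/$p$-suffix of each variable's image, so the count is $\Ocomp(n_v)$ rather than $\Ocomp(|\variables|)$, and getting $c<2$ needs the sharper accounting.

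The paper, however, does \emph{not} reprove the lemma along these lines. It takes a genuinely different route in Section~\ref{sec:theoretical bounds}: first it handles only the \emph{single-letter} case ($w\in\letters$), where the $P$-presentation collapses to the trivial block structure and the Diophantine system is the ``small'' one already built in Section~\ref{sec:blocks}; this gives $\per_\letters(\sol U)=\Ocomp(n_v(|U|+|V|)e^{2n_v/e})$ directly from the Hadamard/minimal-solution estimate. Then for arbitrary $w$ it avoids $P$-presentations entirely, instead showing (Lemma~\ref{lem: how per goes back}) that each recompression subphase either already witnesses $\per=\per_\letters$ or drops the exponent of periodicity by at most one; since the accepting run has at most $n^{\Ocomp(n)}$ subphases, the general bound follows. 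The trade-off is clear: your (KP) route does the hard combinatorics once and yields the sharp $2^{cn}$ with $c<2$, while the paper's route is almost free given the recompression machinery but only achieves $n^{\Ocomp(n)}$, a point the paper explicitly concedes.
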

We shall use exponent of periodicity bound only
to estimate the lengths of the maximal blocks
(i.e.\ restrict $w$ to single letters in the above definition),
and in such a case the proof becomes substantially easier
than the general one, see Section~\ref{sec:per bound}.
Furthermore, an alternative approach, which does not need the exponent of
periodicity at all, is also possible, see Section~\ref{sec:blocks}.

\subsubsection*{Fresh letters}
As our algorithm runs in \PSPACE, it may introduce a large number
of `fresh letters', and so if we insist that each of them is in fact
different, this becomes problematic. However, it is enough to assume
that a `fresh letter' does not occur in the equation: after all,
even if it occurred in some other iteration, this is completely irrelevant.

\begin{remark}
\algsolveeq{} introduces new letters to the instance,
replacing pairs of letters or maximal blocks of one letter.
We insist that these new symbols are called and treated as letters.
On the other hand, we can think of them as non-terminals of a
context-free grammar (to be more specific, of so-called SLP):
if $c$ replaced $ab$, then this corresponds to a production $c \to ab$,
similarly, $a_\ell \to a^\ell$.
In this way we can think that \algsolveeq{} builds a context-free grammar
(an SLP) generating \sol U as a unique word in the language.
\end{remark}

\subsubsection*{Types of pairs and blocks}
Both pair compression and block compression
(however they are implemented) shorten \sol U (and \sol V),
which gives the main foundation for this technique.
On the other hand, sometimes it is hard
to perform these operations: for instance, if we are to compress
a pair $ab$ and $aX$ occurs in $U$, moreover, \sol X begins with $b$,
then the compression is problematic, as we need to somehow
modify $\sol X$.
The following definition allows distinguishing between pairs (blocks)
that are easy to compress and those that are not.
\begin{definition}[cf.~\cite{fullyNFA,FCPM}]
Given an equation $U = V$ and a substitution \solution{}
and a substring $u \in \letters^+$ of \sol U (or \sol V)
we say that this occurrence of $u$ is
\begin{itemize}
	\item \emph{explicit}, if it comes from substring $u$ of $U$ (or $V$, respectively)
	\item \emph{implicit}, it it comes from \sol X for some variable $X$
	\item \emph{crossing} otherwise.
\end{itemize}
A string $u$ is \emph{crossing} (with respect to a solution \solution)
if it has a crossing occurrence and \emph{non-crossing} (with respect to a solution \solution) otherwise.

We say that a pair of $ab$ is a
\emph{crossing pair} (with respect to a solution \solution), if $ab$ has a crossing occurrence.
Otherwise, a pair is \emph{non-crossing}.
Unless explicitly stated, we consider crossing/non-crossing pairs $ab$ in which $a \neq b$.
Similarly, a letter $a \in \letters$ has a \emph{crossing block},
if there is a maximal block of $a$ which has a crossing occurrence.
This is equivalent to a (simpler) condition that $aa$ is a crossing pair.
\end{definition}

Compression of noncrossing pairs is easy, so is block compression when $a$ has no crossing block.
In other cases, the compression seems difficult.

\subsubsection*{Visible lengths of blocks}
We say that $a^\ell$ is visible in \solution{}
(or $\ell$ is a \emph{visible length} of $a$ block in \solution),
if there is an occurrence of the $a$'s $\ell$-block that is explicit or crossing
or it is a prefix or suffix of some \sol X;
we say that $\ell$ is a \emph{visible length} for $a$ if there is a visible maximal block $a^\ell$.

The following lemma shows that if a pair occurs in the length-minimal solution then it has a crossing or an explicit occurrence;
similarly, all lengths of maximal blocks are visible.
This means that in order to know what are the pairs and blocks occurring in the length minimal solution,
it is enough to know for each variable $X$, what is the first and last letter of \sol X
and what is the length of the $a$-prefix and $b$-suffix of \sol X.

\begin{lemma}[{cf.~\cite[Lemma~6]{PlandowskiICALP}}]
\label{lem:maximal block is crossing}
\label{lem:over a cut}
Let \solution{} be a length-minimal solution of $U = V$.
\begin{itemize}
	\item If $ab$ is a substring of \sol U, where $a \neq b$,
	then $ab$ is an explicit pair or a crossing pair.
	\item If $a^k$ is a maximal block in \sol U
	then $a$ has an explicit occurrence in $U$ or $V$
	and there is a visible occurrence of $a^k$.
\end{itemize}
\end{lemma}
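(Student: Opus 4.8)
The plan is to argue by contradiction, using length-minimality: if an occurrence of a pair or a block in $\sol U$ fails to be "witnessed" by the equation in the claimed sense, then we can build a strictly shorter solution, contradicting minimality. First I would set up the key tool: given a solution $\solution$ and a letter $c$ that does not occur in $U=V$, one can ask whether some occurrence of a particular substring can be "uniformly" replaced across $\sol U$ and $\sol V$ in a way that descends to a new substitution $\solution'$ on the same equation. The point is that such a uniform replacement is possible precisely when the substring in question is non-crossing and has no explicit occurrence — in that case every occurrence sits strictly inside some $\sol X$, so replacing a fixed copy of it inside the value of each variable $X$ (and leaving the letters of $U,V$ untouched) yields a well-defined solution $\solution'$ of $U=V$, and it is strictly shorter. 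This is the engine behind both bullets.

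For the first bullet, suppose $ab$ with $a\neq b$ is a substring of $\sol U$ but $ab$ is neither explicit nor crossing; then $ab$ is non-crossing and has no explicit occurrence in $U$ or $V$, so every occurrence of $ab$ in $\sol U$ and $\sol V$ lies strictly inside the value of some variable. I would then replace, in the value of each variable $X$, every (greedily chosen, left-to-right, hence unambiguous since $a\neq b$) occurrence of $ab$ by a fresh letter $c$; because $U,V$ have no explicit $ab$, the induced $\solution'$ still satisfies $\solution'(U)=\solution'(V)$, and $|\solution'(U)|<|\sol U|$ since at least one occurrence of $ab$ (the one witnessing that $ab$ is a substring of $\sol U$) got shortened. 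This contradicts length-minimality, so $ab$ must be explicit or crossing.

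For the second bullet I would argue in two parts. To see that $a$ occurs explicitly in $U$ or $V$: if $a$ had no explicit occurrence at all, then every occurrence of $a$ (in particular the whole maximal block $a^k$) lies inside the values of variables; replacing $a$ everywhere inside the variable values by $\epsilon$ gives a solution to $U=V$ (the letters of $U,V$ are untouched, so the equality is preserved) that is strictly shorter, a contradiction — note this also uses our convention that we only care about solutions not substituting $\epsilon$ for variables present in the equation, but removing all copies of one particular letter does not make any variable empty unless that variable's value was a power of $a$, a case one handles separately by the same emptiness reduction already discussed in the paper. For the visibility of $a^k$: if the maximal block $a^k$ were not visible, then it is not explicit, not crossing, and not a prefix or suffix of any $\sol X$; hence it sits strictly inside some $\sol X$ with a letter $\ne a$ on each side inside $\sol X$. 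Then in the value of each variable one can shorten this internal block, e.g.\ replace $a^k$ by $a^{k-1}$ inside each $\sol X$ (consistently, block by block, which is unambiguous because we only touch maximal blocks strictly interior to variable values), obtaining a shorter solution of the same equation — again contradicting minimality.

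The main obstacle is bookkeeping the interaction between occurrences inside different variables and the requirement that the modification be a genuine substitution (the same for all occurrences of a given variable): one must make the replacement depend only on $\sol X$, not on the position of $X$ in $U$ or $V$, and check that after the replacement the two sides still spell the same word. The delicate points are (i) the unambiguity of the greedy replacement, which for the pair case needs $a\neq b$ and for the block case needs us to replace whole maximal blocks rather than fixed-length factors, and (ii) ensuring the replaced factor is strictly interior to $\sol X$ so that concatenating modified variable-values with the unchanged letters of $U,V$ does not create or destroy occurrences at variable–constant boundaries — which is exactly guaranteed by the hypothesis that the pair/block is non-crossing and, for the block, not a prefix or suffix of any $\sol X$. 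Once these are in place, each case is a short contradiction with length-minimality; everything else is routine.
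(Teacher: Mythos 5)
Your proof is correct and follows the same overall strategy as the paper (construct a strictly shorter solution from a non-witnessed occurrence, contradicting length-minimality). The differences are in the details of the replacement operations, and they are worth noting.

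For the first bullet you replace implicit $ab$'s by a fresh letter $c$, whereas the paper replaces them by $\epsilon$. Both give a strictly shorter solution; your choice is in fact slightly safer, since $\epsilon$-replacement can in principle empty a variable (e.g.\ $\sol X = ab$), which your fresh letter avoids. For the "$a$ has an explicit occurrence" part, both you and the paper wave at a "similar argument" (delete all $a$'s inside variable values), and both face the same corner case $\sol X \in a^+$, which deletion would empty; you at least flag it, but your deferral to "the same emptiness reduction" is not a real resolution (removing $X$ from the equation changes the equation, so the shortened $\solution'$ is no longer a solution of the \emph{same} $U=V$). The paper is silent on this, so you are not worse off, but be aware that this step is genuinely delicate. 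For the visibility claim, your approach differs from the paper's: you shrink \emph{every} strictly interior maximal $a$-block of length exactly $k$ by one, while the paper fixes the two non-$a$ neighbours $b,c$ of one particular occurrence, pads the equation with fresh end-markers so that $b,c$ always exist, and deletes all implicit occurrences of $ba^kc$. Both are sound, and both rely on the same two observations you isolated (the modification depends only on $\sol X$; interior blocks cannot merge across variable–constant boundaries). Your version avoids the end-marker padding entirely, since "not a prefix or suffix of $\sol X$" already guarantees a non-$a$ neighbour on both sides within $\sol X$; the paper's version has the small advantage of proving the slightly stronger statement that some $a^k$ occurrence is actually crossing or explicit, not merely visible.
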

\begin{proof}
Suppose that $ab$, where $a \neq b$ has only implicit occurrences.
Consider $\solution'$: $\solution'(X)$ is \sol X with all $ab$s removed,
i.e.\ replaced with $\epsilon$.
Since all occurrences of $ab$ in \sol U and \sol V are implicit,
$\solution'(U)$ ($\solution'(V)$) is obtained from \sol U (\sol V, respectively),
by removing all pairs $ab$. Hence $\solution'(U) = \solution'(V)$, i.e.\ $\solution'$
is a solution and it is clearly shorter than $\solution$, contradiction.

Similar argument shows that if $a$ occurs in \sol U then it has an explicit occurrence in $U$ or $V$.

To streamline the rest of the presentation and analysis,
in the remainder of the proof assume that both $U$ and $V$ begin
and end with a letter and not a variable;
this is easy to achieve by prepending $\$$ and appending $\$'$
to both sides of the equation.
Alternatively, the cases with variables beginning or ending $U$ or $V$ can be handled
in the same way, as the general case.

Consider a maximal $a$ block $a^k$, for $k > 0$ in \sol U
and the letter preceding (succeeding) it, say $b$ and $c$, respectively;
by the assumption that $U$ and $V$ begin and end with a letter, such $b$ and $c$
always exist.
Consider the occurrences of $ba^k c$ in \sol U and \sol V.
Since $b \neq a \neq c$, these occurrences cannot have overlapping $a$'s
(though, if $b=c$, these letters can overlap for different occurrences).
We want to show that one of these occurrences is crossing or explicit.
In such a case the corresponding $a^k$ proves that $k$ is a visible length,
which ends the proof.

So suppose that none of these occurrences is crossing nor explicit.
Consider $\solution'$: define $\solution'(X)$ as \sol X with each $ba^kc$ replaced with $bc$.
This operation is well defined, as the $a^k$ blocks are non-overlapping.
As in the case of $ab$ pairs it can be shown that $\solution'$ is a solution (since all $ba^k c$ are implicit),
which contradicts the assumption that $\solution$ is length-minimal.
\qedhere
\end{proof}

\subsection*{Compression of noncrossing pairs and blocks}
Intuitively, when $ab$ is non-crossing, each of its occurrence
in \sol U is either explicit or implicit.
Thus, to perform the pair compression of $ab$ on \sol U it is enough
to separately replace each explicit pair $ab$ in $U$
and change each $ab$ in \sol X for each variable $X$.
The latter is of course done implicitly (as \sol X is not written down anywhere).
The appropriate algorithm is given below.

\begin{algorithm}[H]
  \caption{$\algpair(a,b)$ Pair compression for a non-crossing pair \label{alg:pc}}
  \begin{algorithmic}[1]
  	\State let $c \in \letters$ be an unused letter
  	\State replace each explicit $ab$ in $U$ and $V$ by $c$
 \end{algorithmic}
\end{algorithm}

Similarly when none block of $a$ has a crossing occurrence,
the $a$'s blocks compression consists simply of replacing explicit $a$ blocks.

\begin{algorithm}[H]
  \caption{$\algblocks(a)$ Block compression for a letter $a$ with no crossing block
  \label{alg:ac}}
  \begin{algorithmic}[1]
  \For{each explicit $a$ occurring in $U$ or $V$}
		\For{each $\ell$ that is a visible length of an $a$ block in $U$ or $V$}
				\State let $a_\ell \in \letters$ be an unused letter
				\State replace every explicit $a$'s maximal $\ell$-block occurring
					in $U$ or $V$ by $a_\ell$
  \EndFor
  \EndFor
  \end{algorithmic}
\end{algorithm}

In order to show the correctness of those two procedures,
we need to first introduce some terminology and notation.

\subsubsection*{Soundness and completeness}
We say that a nondeterministic procedure \emph{is sound},
when given a unsatisfiable word equation $U = V$
it cannot transform it to a satisfiable one, regardless of the nondeterministic choices;
such a procedure \emph{is complete},
if given a satisfiable equation $U = V$ 
for some nondeterministic choices it returns a satisfiable equation $U' = V'$.
Observe, that a composition of sound (complete) procedures is sound (complete, respectively)

A procedure that is complete
\emph{implements pair compression of $ab$} for \solution,
if given an equation $U = V$ with a solution \solution,
for some nondeterministic choices it returns equation $U' = V'$
with a solution $\solution'$, such that $\solution'(U')$ is obtained from \sol U
by replacing each $ab$ by $c$;
similarly we say that a procedure implements blocks compression of $a$ for \solution.

Observe that a very general class of operations are sound:
\begin{lemma}
\label{lem:preserving unsatisfiability}
The following operations are sound:
\begin{enumerate}
	\item replacing occurrences of a variable $X$ with $wXv$ for arbitrary $w, v \in \Gamma^*$;
	\item replacing all occurrences of a word $w \in \Gamma^+$ (in $U$ and $V$) with a fresh letter $c$;
	\item replacing occurrences of a variable $X$ with a word $w$.
\end{enumerate}
\end{lemma}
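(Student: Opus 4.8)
The plan is to prove each of the three soundness claims by showing the contrapositive: if the transformed equation $U' = V'$ is satisfiable, then so is the original $U = V$. In each case I will take a solution $\solution'$ of the transformed equation and construct from it a solution $\solution$ of the original one, so that no unsatisfiable equation can ever become satisfiable.

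For item (1), suppose we replace every occurrence of $X$ by $wXv$ (with $w,v\in\Gamma^*$ fixed), obtaining $U' = V'$, and suppose $\solution'$ solves $U' = V'$. Define $\solution$ to agree with $\solution'$ on every variable except $X$, and set $\sol X = w\,\solution'(X)\,v$. Then for any string $\alpha\in(\letters\cup\variables)^*$, the effect of applying $\solution$ to $\alpha$ equals the effect of applying $\solution'$ to the string obtained from $\alpha$ by replacing each $X$ with $wXv$ (a straightforward induction on the structure of $\alpha$, using that $\solution$ and $\solution'$ are morphisms and agree on letters and on all other variables). Hence $\sol U = \solution'(U') = \solution'(V') = \sol V$, so $\solution$ is a solution of $U=V$. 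For item (3), replacing every occurrence of $X$ by a fixed word $w\in\Gamma^*$ yields $U'=V'$; given any solution $\solution'$ of $U'=V'$, define $\solution$ by $\sol X = w$ and $\sol Y = \solution'(Y)$ for $Y\neq X$. The same substitution-commutes-with-morphism argument gives $\sol U = \solution'(U')$ and likewise for $V$, so $\solution$ solves $U=V$. (One should note the convention, already fixed in the paper, that variables are not required to take nonempty values in intermediate equations, so there is nothing to check about emptiness here.)

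Item (2) is slightly different in flavour because the substituted object is a letter, not a variable, and the map on solutions goes the other way. Suppose every occurrence of a fixed word $w\in\Gamma^+$ in $U$ and $V$ is simultaneously replaced by a fresh letter $c$ (fresh meaning $c$ does not occur in $U$, $V$, or in $w$), producing $U'=V'$; let $\solution'$ be a solution of $U'=V'$. Define $\solution$ on variables by $\sol X = $ the word obtained from $\solution'(X)$ by replacing every occurrence of $c$ by $w$ (and on letters as the identity). The key point is that applying $\solution$ to the original word $U$ and then — conceptually — tracking where the explicit $w$-blocks and the $c$'s inside variable values sit, yields exactly the word obtained from $\solution'(U')$ by substituting $w$ for every $c$; symmetrically for $V$. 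Since $\solution'(U') = \solution'(V')$, the words obtained after the $c\mapsto w$ substitution are equal, hence $\sol U = \sol V$. The main obstacle, and the only place that needs care, is precisely this bookkeeping for item (2): one must be sure that replacing $c$ by $w$ throughout $\solution'(U')$ exactly undoes the operation that produced $U'=V'$ from $U=V$. Because $c$ is fresh, every $c$ in $\solution'(U')$ came either from an explicit $c$ in $U'$ (which replaced an explicit $w$ in $U$) or from a $c$ inside some $\solution'(X)$; in both cases the $c\mapsto w$ substitution restores the original content, and no spurious occurrences of $w$ are created or destroyed because the replacement is performed uniformly and $c$ did not occur before. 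This makes the two descriptions of the resulting word literally identical as strings, which is all that is needed. (Note that the lemma only claims soundness, i.e.\ the ``unsatisfiable stays unsatisfiable'' direction; it makes no completeness claim, so we need only the one direction of each construction.)
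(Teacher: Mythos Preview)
Your proof is correct and follows essentially the same approach as the paper: in each case you take a solution $\solution'$ of the transformed equation and build a solution $\solution$ of the original one, via $\sol X = w\,\solution'(X)\,v$ for item~(1), via the morphism $c\mapsto w$ applied to each $\solution'(X)$ for item~(2), and via $\sol X = w$ for item~(3). The paper's proof is terser (one sentence per case) but the constructions are identical; your extra discussion of why the $c\mapsto w$ substitution correctly undoes the replacement in item~(2) is more careful than strictly necessary, since all that is needed is that $h(\solution'(U')) = \sol U$ as strings, which follows immediately from $c$ being fresh.
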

\begin{proof}
In the first  case, if $\solution'$ is a solution of $U' = V'$ then \solution{} defined as
$\sol X = w \solution'(X) v$ and $\sol Y = \solution'(Y)$ otherwise is a solution of $U = V$.

In the second case, if $\solution'$ is a solution of $U' = V'$ then \solution{}
obtained from $\solution'$ by replacing each $c$ with $w$ is a solution of $U = V$.

Lastly, in the third case, if $\solution'$ is a solution of $U' = V'$ then we can obtain \solution{} from $\solution'$
by defining the substitution $\sol X = w$ and $\sol Y = \solution'(Y)$ in other cases.
\qedhere
\end{proof}

\subsubsection*{Properties of \algpair{} and \algblocks}
Now we are ready to show properties of $\algpair(a,b)$ and $\algblocks(a)$.
\begin{lemma}
\label{lem:paircomp blockcomp}
$\algpair(a,b)$ preserves is sound,
when $ab$ is a non-crossing pair in an equation $U = V$ (with respect to some solution \solution)
then it is complete and implements the pair compression of $ab$ for \solution.

Similarly, $\algblocks(a)$ is sound
and when $a$ has no crossing blocks in $U = V$ (with respect to some solution \solution)
it is complete and implements the block compression of $a$ for \solution.
\end{lemma}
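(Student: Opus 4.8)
The plan is to handle soundness and completeness separately for each of the two procedures, relying on Lemma~\ref{lem:preserving unsatisfiability} for soundness and on a direct construction of the modified solution $\solution'$ for completeness.

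For soundness of $\algpair(a,b)$: the procedure only performs step~2 of Lemma~\ref{lem:preserving unsatisfiability} (replacing all explicit occurrences of the word $ab$ by a fresh letter $c$), hence it is sound. Strictly speaking, we replace only the \emph{explicit} occurrences, but replacing \emph{all} occurrences of $ab$ is an instance of Lemma~\ref{lem:preserving unsatisfiability}(2), and since the procedure's output equation is a substring-restricted version of that, the same back-substitution ($c \mapsto ab$) witnesses soundness. Similarly, $\algblocks(a)$ repeatedly applies Lemma~\ref{lem:preserving unsatisfiability}(2), replacing each explicit maximal $\ell$-block $a^\ell$ by the fresh letter $a_\ell$; a composition of sound operations is sound, so $\algblocks(a)$ is sound.

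For completeness of $\algpair(a,b)$ when $ab$ is non-crossing with respect to \solution: I would define $\solution'$ by setting $\solution'(X)$ to be \sol X with every occurrence of $ab$ replaced by $c$. The key observation is that, since $ab$ is non-crossing, every occurrence of $ab$ in \sol U is either entirely explicit (inside $U$) or entirely implicit (inside some \sol X) — no occurrence straddles a variable boundary. Therefore the string $\solution'(U')$, where $U'$ is $U$ with explicit $ab$ replaced by $c$, equals \sol U with \emph{all} its $ab$-occurrences (explicit and implicit) replaced by $c$; the same holds for $V$. Since $\sol U = \sol V$, we get $\solution'(U') = \solution'(V')$, so $\solution'$ is a solution of the output equation, and by construction the procedure implements the pair compression of $ab$ for \solution. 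For $\algblocks(a)$ the argument is analogous: when $a$ has no crossing block, every maximal $a$-block in \sol U lies entirely inside $U$ (explicit) or entirely inside some \sol X (implicit), so replacing explicit maximal $\ell$-blocks in $U$ and $V$ and setting $\solution'(X)$ to be \sol X with each maximal $\ell$-block $a^\ell$ replaced by $a_\ell$ yields $\solution'(U') = \solution'(V')$; one must check that the ``visible lengths'' enumerated by the algorithm cover exactly the block-lengths needing replacement, which follows since for an explicit or prefix/suffix block the length is visible by definition.

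The main subtlety — and the place where the non-crossing hypothesis is genuinely used — is verifying that implicit occurrences of $ab$ (resp. of a maximal $a$-block) behave coherently: an implicit occurrence of $ab$ inside \sol X is replaced ``for free'' when we pass from \solution to $\solution'$, but we must be sure that no maximal $a$-block is partly explicit and partly implicit (which is exactly what ``no crossing block'' rules out), and that a maximal $a$-block of \sol X does not become \emph{non-maximal} after the substitution surrounding it changes — but since the letters adjacent to the block inside \sol X are $\neq a$ and are themselves untouched by the block compression of $a$ (they are not $a$), maximality is preserved. I expect this bookkeeping about maximality of blocks across the explicit/implicit divide to be the only place requiring care; everything else is a routine substitution argument.
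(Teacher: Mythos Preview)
Your proposal is correct and follows essentially the same approach as the paper: soundness via Lemma~\ref{lem:preserving unsatisfiability}, completeness by defining $\solution'(X)$ as $\sol X$ with the replacements performed, and the non-crossing hypothesis ensuring that explicit plus implicit occurrences exhaust all occurrences. One minor remark: your caveat about soundness (``strictly speaking, we replace only the explicit occurrences'') is based on a slight misreading---in Lemma~\ref{lem:preserving unsatisfiability}(2) the word $w$ is replaced in the equation strings $U$ and $V$ themselves, not in $\sol U$, so the explicit occurrences \emph{are} all the occurrences in $U$ and $V$ and the lemma applies directly without the extra ``substring-restricted'' argument.
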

\begin{proof}
From Lemma~\ref{lem:preserving unsatisfiability} it follows that 
both $\algpair(a,b)$ and $\algblocks(a)$ are sound.

Suppose that $U = V$ has a solution \solution{} such that $ab$ is a noncrossing pair with respect to \solution.
Define $\solution'$: $\solution'(X)$ is equal to \sol X with each $ab$ replaced with $c$
(where $c$ is a new letter).
Consider $\sol U$ and $\solution'(U')$. Then $\solution'(U')$ is obtained from \sol U
by replacing each $ab$: the explicit occurrences of $ab$ are replaced by $\algpair(a,b)$,
the implicit ones are replaced by the definition of $\solution'$ and by the assumption
there are no crossing occurrences.
The same applies to \sol V and $\solution'(V')$.
Hence $\solution'(U') = \solution'(V')$ and concludes the proof in this case.

The proof for the block compression follows in the same way.
\qedhere
\end{proof}

\subsection*{Crossing pairs and blocks compression}
The algorithms presented in the previous section cannot be directly applied to crossing pairs
or to compression of $a$'s blocks that have crossing occurrences.
To circumvent the problem, we modify the instance:
if a pair $ab$ is crossing because there is a variable $X$ such that $\sol X = bw$
for some word $w$ and $a$ is to the left of $X$,
it is is enough to change \solution, so that $\sol X = w$;
similar action is applied to variables $Y$ ending with $a$ and with $b$ to the right.

This idea can be employed much more efficiently: consider a partition of \letters{} into $\letters_\ell$ and $\letters_r$.
The `left-popping' from each variable a letter from $\letters_r$ and `right-popping' a letter from $\letters_\ell$
guarantees that each pair $ab \in \letters_\ell\letters_r$ is non-crossing.
Since pairs from $\letters_\ell\letters_r$ do not overlap, after the popping they can be compressed in parallel.
As shown later, for appropriate choice of $\letters_\ell$ and $\letters_r$
a constant fraction of pairs from \sol U is of the form $\letters_\ell\letters_r$, see Claim~\ref{clm:words are compressed}.

\begin{algorithm}[H]
  \caption{$\algpop(\letters_\ell,\letters_r)$ \label{alg:leftpop}}
  \begin{algorithmic}[1]
	\For{$X \in \variables$} \label{pop main loop}
		\State let $b$ be the first letter of \sol X  \label{guess first letter}\Comment{Guess}
		\If{$b \in \letters_r$} 
			\State replace each $X$ in $U$ and $V$ by  $bX$ \label{leftpop}
			\Comment{Implicitly change $\sol X = bw$ to $\sol X = w$}
			\If{$\sol X = \epsilon$} \Comment{Guess}
				\State remove $X$ from $U$ and $V$
			\EndIf
		\EndIf
		\State let $a$ be the \ldots  \Comment{Perform a symmetric action for the last letter}
	\EndFor
  \end{algorithmic}
\end{algorithm}

\begin{lemma}
\label{lem:pop preserves solutions}
The $\algpop(\letters_\ell,\letters_r)$ is sound and complete.

Furthermore, if \solution{} is a solution
of $U = V$ then for some nondeterministic choices the obtained $U' = V'$ has a solution $\solution'$
such that $\solution'(U') = \solution(U)$ and for pair $ab$ from $\letters_\ell\letters_r$
is non-crossing (with regards to $\solution'$).
\end{lemma}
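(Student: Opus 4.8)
The plan is to verify the three claimed properties of $\algpop(\letters_\ell,\letters_r)$ separately, reducing soundness to Lemma~\ref{lem:preserving unsatisfiability} and doing the real work for completeness and the ``furthermore'' part.

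\textbf{Soundness.} First I would observe that $\algpop$ performs only operations of the type covered by Lemma~\ref{lem:preserving unsatisfiability}: replacing each occurrence of $X$ by $bX$ (or $Xa$) is an instance of item~(1) with $w=b$, $v=\epsilon$ (resp.\ $w=\epsilon$, $v=a$), and deleting a variable $X$ from the equation is an instance of item~(3) with $w=\epsilon$. Since a composition of sound procedures is sound, $\algpop$ is sound, regardless of the nondeterministic guesses.

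\textbf{Completeness and the furthermore part.} Here I would fix a solution \solution{} of $U=V$ and describe the ``correct'' nondeterministic run. For each variable $X$ present in the equation, when the algorithm asks for the first letter $b$ of \sol X, answer with the actual first letter of \sol X; if \sol X has exactly one letter (so that after popping $b$ it would become $\epsilon$), answer ``yes'' to the emptiness guess and let $X$ be removed, otherwise answer ``no''. Do the symmetric thing for the last letter. Now define $\solution'$ on the variables that survive: $\solution'(X)$ is \sol X with its first letter stripped if that letter was in $\letters_r$, and additionally with its last letter stripped if that letter was in $\letters_\ell$ (being careful about the degenerate case where the same single remaining letter would be stripped from both ends — but that case was already routed to ``remove $X$''). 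For variables removed by the algorithm I would note that the convention $\solution'(X)=\epsilon$ is consistent with what we stripped. A direct computation then shows $\solution'(U') = \sol U$ and $\solution'(V') = \sol V$: each occurrence of $X$ in $U$ has been rewritten to (at most) $bXa$, and substituting $\solution'$ recovers exactly the block $b\cdot\solution'(X)\cdot a = \sol X$ that \solution{} would have substituted; the explicit letters of $U$ are untouched. Hence $\solution'(U')=\sol U=\sol V=\solution'(V')$, so $\solution'$ is a solution and the procedure is complete, and moreover $\solution'(U')=\sol U$ as claimed.

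\textbf{Non-crossing of $\letters_\ell\letters_r$ pairs.} Finally I would argue that for this particular $\solution'$ no pair $ab$ with $a\in\letters_\ell$, $b\in\letters_r$ is crossing. By definition a crossing occurrence of $ab$ would have to be witnessed either by a variable $X$ with $\solution'(X)$ beginning with $b$ and an $a$ immediately to its left in $U'$ or $V'$, or symmetrically by a variable beginning such a pair on the right, or by $ab$ straddling two consecutive variables. But after the popping step, no surviving variable $X$ has $\solution'(X)$ starting with a letter of $\letters_r$ (we popped exactly those) and none has $\solution'(X)$ ending with a letter of $\letters_\ell$; and two consecutive variables $XY$ would need $\solution'(X)$ to end in $a\in\letters_\ell$, again impossible. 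So every occurrence of such an $ab$ in $\solution'(U')$ is explicit or implicit, i.e.\ $ab$ is non-crossing with respect to $\solution'$.

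The main obstacle I anticipate is purely bookkeeping rather than conceptual: handling the boundary/degenerate cases cleanly — a variable whose value has length one (popped from one side makes it empty; popped from both sides is ill-defined), variables that occur at the very start or end of $U$ or $V$ (so there is no letter to form a crossing pair with on one side), and making sure the emptiness guesses in the algorithm are threaded consistently with the definition of $\solution'$ so that the ``assume $\solution'(X)=\epsilon$ for absent variables'' convention does exactly the right thing. None of this is deep, but it is where a careless argument would go wrong, so I would state the stripping rule for $\solution'$ very precisely and check the identity $\solution'(U')=\sol U$ occurrence-by-occurrence.
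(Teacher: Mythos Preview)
Your soundness and completeness arguments are correct and essentially identical to the paper's.

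There is, however, a genuine gap in your non-crossing argument. You assert that ``no surviving variable $X$ has $\solution'(X)$ starting with a letter of $\letters_r$ (we popped exactly those)'', and symmetrically for the last letter. This is false: $\algpop$ pops \emph{one} letter from each side, not a maximal prefix. If $\sol X = b_1 b_2 w$ with $b_1,b_2\in\letters_r$, the algorithm pops $b_1$ and leaves $\solution'(X)=b_2 w$, which still begins with a letter of $\letters_r$. The same objection applies to your treatment of the $XY$ case.

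The conclusion is nevertheless true; the correct argument looks at the \emph{neighbour in $U'$} rather than at the first letter of $\solution'(X)$. If a letter was left-popped from $X$, then in $U'$ the symbol immediately to the left of $X$ is that popped letter, which lies in $\letters_r$; hence it cannot be an $a\in\letters_\ell$, so no crossing $ab\in\letters_\ell\letters_r$ can arise at this boundary. If no letter was left-popped, then (since the guess was correct) the first letter of $\sol X$, which equals the first letter of $\solution'(X)$, is not in $\letters_r$, so again no such crossing pair arises. The $XY$ case is handled the same way: if $X$ and $Y$ are still adjacent in $U'$ then nothing was popped on the relevant sides, forcing the last letter of $\solution'(X)$ out of $\letters_\ell$. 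This is exactly the route the paper takes.
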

\begin{proof}
From Lemma~\ref{lem:preserving unsatisfiability} it follows that 
$\algpop(\letters_\ell,\letters_r)$ is sound.

Conversely, suppose that $U = V$ has a solution \solution.
Let $\algpop(\letters_\ell,\letters_r)$ always guess according to \solution,
i.e.\ in line~\ref{guess first letter} it guesses $b$ that is indeed the first letter of \sol X,
and similarly $a$ that is the last letter of \sol X, finally it removes $X$, when $\sol X = \epsilon$.
Suppose that $b \in \letters_r$ and $a \in \letters_\ell$.
Consider $\solution'(X)$ defined as $b\solution'(X)a = \sol X$
(when $\sol X = a$ then $\solution'(X) =\epsilon$).
It is easy to observe that $\sol U = \solution'(U')$, similarly $\sol V = \solution'(V')$,
hence $\solution'$ is a solution of $U' = V'$.
Note that we are interested only in non-empty solutions: if $\sol X = \epsilon$ at any point
then we simply remove it from the equation, in which case the solution is turned into a non-empty one.

The cases in which $b \notin \letters_r$ or $a \notin \letters_\ell$ are done in the same way
(for instance, when $b \notin \letters_r$ and $a \in \letters_\ell$ then $\solution'(X)a = \sol X$).

It is left to show that in $U' = V'$ each pair $ab \in \letters_\ell\letters_r$ is noncrossing with respect to such defined $\solution'$.
Assume for the sake of contradiction that $ab$ is crossing with respect to $\solution'$ in $U' = V'$.
There are three cases
\begin{description}
	\item[$a$ is to the left of some variable $X$ and the first letter of \sol X is $b$]
	Since $a \in \letters_\ell$, then \algpop{} did not popped a letter $a$ from $X$ in line~\ref{leftpop}.
	Hence the first letter of \sol X and $\solution'(X)$ are the same.
	However, as in line~\ref{leftpop} the letter was not popped from $X$ and we consider the case in which
	\algpop{} guessed correctly the first letter, we conclude that the first letter of \sol X is not in $\letters_r$,
	while the first letter of $\solution'(X)$ is, contradiction.
	
	\item[$b$ is to the right of some variable $X$ and the last letter of \sol X is $a$]
	This case is symmetric to the previous one.
	
	\item[$XY$ occurs in the equation, \sol X ends with $a$ and \sol Y begins with $b$] 
	The analysis is similar to the one in the first case.
\end{description}
This ends the case inspection. Hence $ab$ after the loop in line~\ref{pop main loop}
is noncrossing with respect to $\solution'$.
Note that for appropriate choices, all pairs $ab$ in $\letters_\ell \letters_r$ 
become noncrossing.
\qedhere
\end{proof}

Now the presented subprocedures can be merged into one procedure
that turns crossing pairs into noncrossing ones and then compresses them,
effectively compressing crossing pairs.

\begin{algorithm}[H]
  \caption{$\algpairc(\letters_\ell,\letters_r)$ Turning crossing pairs from $\letters_\ell\letters_r$ into non-crossing ones and compressing them \label{alg:paircompc}}
  \begin{algorithmic}[1]
  			\State run $\algpop(\letters_\ell,\letters_r)$
			\For{$ab \in \letters_\ell\letters_r$} \label{loop lefta}
				\State run $\algpair(a,b)$ \label{crossing pair compression}
			\EndFor
  \end{algorithmic}
\end{algorithm}

\begin{lemma}
\label{lem: crossing pairs preserve}
\label{lem:crossing non crossing}
$\algpairc(\letters_\ell,\letters_r)$ is sound and complete.
To be more precise, for any solution \solution{} it implements the pair compression of each pair $ab \in \letters_\ell\letters_r$.
\end{lemma}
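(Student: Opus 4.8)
The plan is to obtain soundness and completeness of $\algpairc(\letters_\ell,\letters_r)$ by decomposing it into the two subprocedures it is built from and invoking the lemmas already established for them. Since $\algpairc$ first runs $\algpop(\letters_\ell,\letters_r)$ and then a sequence of calls to $\algpair(a,b)$ for the pairs $ab \in \letters_\ell\letters_r$, and since the composition of sound procedures is sound and the composition of complete procedures is complete (as noted just before Lemma~\ref{lem:preserving unsatisfiability}), it suffices to argue that every call in the sequence is legitimate — i.e.\ that at the moment $\algpair(a,b)$ is invoked, the pair $ab$ is non-crossing with respect to the solution being tracked. Soundness is immediate: $\algpop$ is sound by Lemma~\ref{lem:pop preserves solutions}, each $\algpair(a,b)$ is sound by Lemma~\ref{lem:paircomp blockcomp} (unconditionally), hence so is the composition.

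For completeness and for the ``implements pair compression'' claim, I would start from a solution \solution{} of $U = V$ and run $\algpop(\letters_\ell,\letters_r)$ guessing according to \solution. By the second part of Lemma~\ref{lem:pop preserves solutions}, for these choices we obtain $U_0 = V_0$ with a solution $\solution_0$ satisfying $\solution_0(U_0) = \sol U$ and such that \emph{every} pair in $\letters_\ell\letters_r$ is non-crossing with respect to $\solution_0$. Now I would process the pairs $ab \in \letters_\ell\letters_r$ one at a time, maintaining the invariant: after handling the first $i$ pairs we have an equation $U_i = V_i$ with a solution $\solution_i$ such that $\solution_i(U_i)$ is obtained from $\sol U$ by replacing, for each of the first $i$ pairs $a_jb_j$, every occurrence of $a_jb_j$ by the corresponding fresh letter $c_j$, and every not-yet-processed pair in $\letters_\ell\letters_r$ is still non-crossing with respect to $\solution_i$. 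The inductive step is an application of Lemma~\ref{lem:paircomp blockcomp}: $a_{i+1}b_{i+1}$ is non-crossing w.r.t.\ $\solution_i$, so $\algpair(a_{i+1},b_{i+1})$ is complete and implements the pair compression of $a_{i+1}b_{i+1}$ for $\solution_i$, yielding $\solution_{i+1}$.

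The main obstacle is the last clause of the invariant: I must check that compressing one pair $a_jb_j$ from $\letters_\ell\letters_r$ does not turn another, still-to-be-processed pair $a_kb_k \in \letters_\ell\letters_r$ into a crossing one. The key point is that the pairs in $\letters_\ell\letters_r$ have no overlapping occurrences: two such pairs $a_jb_j$ and $a_kb_k$ in $\letters_\ell\letters_r$ can share at most... nothing, because an overlap would force the middle letter to lie in $\letters_\ell \cap \letters_r = \emptyset$. Consequently, replacing all occurrences of $a_jb_j$ by a single fresh letter $c_j$ does not split or merge any occurrence of $a_kb_k$, nor does it alter the first/last letters of $\solution_j(X)$ in a way relevant to $a_k$ or $b_k$ (indeed $c_j \notin \letters_\ell \cup \letters_r$, and whatever letters of $\letters_\ell\letters_r$ remain at the boundary of each $\solution_j(X)$ are untouched). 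Hence non-crossingness of $a_kb_k$ is preserved, the induction goes through, and after all $|\letters_\ell\letters_r|$ steps we have a solution $\solution'$ of the final $U' = V'$ with $\solution'(U')$ equal to $\sol U$ with every pair of $\letters_\ell\letters_r$ simultaneously compressed, which is exactly what it means for $\algpairc$ to implement the pair compression of each pair $ab \in \letters_\ell\letters_r$. Since such $\solution'$ exists, the equation $U' = V'$ is satisfiable, establishing completeness.
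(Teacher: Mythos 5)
Your proof is correct and follows essentially the same route as the paper: soundness by composition, and completeness by first invoking Lemma~\ref{lem:pop preserves solutions} to make every pair of $\letters_\ell\letters_r$ non-crossing, then composing $\algpair(a,b)$ over all such pairs. The one place where you go further than the paper is the explicit induction showing that compressing one pair $a_jb_j$ does not turn another pair $a_kb_k\in\letters_\ell\letters_r$ into a crossing one (because occurrences do not overlap and the fresh letter $c_j\notin\letters_\ell\cup\letters_r$ cannot enter a pair of the relevant type at a variable boundary); the paper compresses this into the single remark that different pairs of $\letters_\ell\letters_r$ do not overlap, so your elaboration is a useful, fully correct filling-in of that step.
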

\begin{proof}
All subprocedures are sound, and so also $\algpairc(\letters_\ell,\letters_r)$ is.

Concerning completeness and the implementation of the pair compression:
By Lemma~\ref{lem:pop preserves solutions}, for appropriate choices after $\algpop(\letters_\ell,\letters_r)$
the obtained equation $U' = V'$ has a solution $\solution'$ such that $\sol U = \solution'(U')$ and each $ab \in \letters_\ell\letters_r$
is noncrossing with regards to $\solution'$
Then, by Lemma~\ref{lem:paircomp blockcomp}
each of $\algpair(a,b)$ implements the pair compression, when $ab$ is noncrossing.
As occurrences of different pairs $ab$ and $a'b'$ from $\letters_\ell\letters_r$ do not overlap,
a composition of $\algpair(a,b)$ for each $ab \in \letters_\ell\letters_r$
implements the pair compression for all $ab \in \letters_\ell\letters_r$.
This concludes the proof.
\qedhere
\end{proof}

The problems with crossing blocks can be solved in a similar fashion:
$a$ has a crossing block, if $aa$ is a crossing pair.
So we `left-pop' $a$ from $X$ until the first letter of \sol X is different than $a$,
we do the same with the ending letter $b$.
This can be alternatively seen as removing the whole $a$-prefix ($b$-suffix, respectively) from $X$:
suppose that $\sol X = a^\ell w b^r$, where $w$ does not start with $a$ nor end with $b$.
Then we replace each $X$ by $a^\ell X b^r$ implicitly changing the solution to $\solution'(X) = w$, see Algorithm~\ref{alg:prefix}.

\begin{algorithm}[H]
  \caption{\algprefsuff{} Cutting prefixes and suffixes \label{alg:prefix}}
  \begin{algorithmic}[1]
  \For{$X \in \variables$}
	\State let $a$, $b$ be the first and last letter of \sol X
	\State guess $\ell_X \geq 1$, $r_X \geq 0$ \label{guess ell}\Comment{$\sol X = a^{\ell_X} w b^{r_X}$, where $w$ does not begin with $a$ nor end with $b$}
	\State \Comment{If $\sol X = a_X^{\ell_X}$ then $r_X=0$}
	\State replace each $X$ in $U$ and $V$ by  $a^{\ell_X} X b^{r_X}$
		\Comment{$a_X^{\ell_X}$, $b_X^{r_X}$ is stored in a compressed form},
	\State		\Comment{implicitly change $\sol X = a_X^{\ell_X} w b_X^{r_X}$ to $\sol X = w$}
	\If{$\sol X = \epsilon$} \Comment{Guess}
		\State remove $X$ from $U$ and $V$
	\EndIf
  \EndFor
  \end{algorithmic}
\end{algorithm}

\begin{lemma}
\label{lem:cutpref cutsuff}
\algprefsuff{} is sound.
It is complete, to be more precise: For a solution \solution{} of $U = V$
let for each $X$ the $a_X$ be the first letter of $\sol X$ and $a_X^{\ell_X}$ the $a_X$ suffix of \sol X
while $b_X$ the last letter and $b_X^{r_X}$ the $b_X$ suffix.
Then when \algprefsuff{} pops $a_X^{\ell_X}$ to the left and $b_X^{r_X}$ to the right,
the returned equation $U' = V'$ has a solution $\solution'$ such that $\sol U = \solution'(U')$
and $U' = V'$ has no crossing blocks with respect to $\solution'$.
\end{lemma}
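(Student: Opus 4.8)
The plan is to treat soundness and completeness separately, mirroring the proof of Lemma~\ref{lem:pop preserves solutions}. Soundness should be immediate: \algprefsuff{} is, for each variable $X$, the composition of replacing every occurrence of $X$ by $a_X^{\ell_X} X b_X^{r_X}$ followed by possibly deleting $X$ (i.e.\ replacing it by $\epsilon$). The first type of step is sound by Lemma~\ref{lem:preserving unsatisfiability}(1) (take $w = a_X^{\ell_X}$, $v = b_X^{r_X}$) and the deletion is sound by Lemma~\ref{lem:preserving unsatisfiability}(3) (take $w = \epsilon$), no matter which letters and exponents the algorithm guesses; since a composition of sound procedures is sound, so is \algprefsuff{}.

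For completeness I would fix a solution \solution{} and let the algorithm guess according to it: $a_X$ is the first letter of \sol X and $a_X^{\ell_X}$ its maximal $a_X$-prefix, and if $\sol X \neq a_X^{\ell_X}$ then $b_X$ is the last letter and $b_X^{r_X}$ the maximal $b_X$-suffix (otherwise $r_X = 0$), so that $\sol X = a_X^{\ell_X} w b_X^{r_X}$ with $w$ neither beginning with $a_X$ nor ending with $b_X$. I would then put $\solution'(X) = w$; since $X$ is deleted exactly when $w = \epsilon$, every variable still occurring in $U' = V'$ has $\solution'(X) = w \neq \epsilon$, and for such a variable $\ell_X \geq 1$ and $r_X \geq 1$, so that in $U'$ it is immediately preceded by the explicit letter $a_X$ and immediately followed by the explicit letter $b_X$. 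That $\solution'$ is a solution with $\solution'(U') = \sol U$ is then a one-line computation, since each occurrence of $X$ in $U$ is in $U'$ replaced by $a_X^{\ell_X} X b_X^{r_X}$ (or by $a_X^{\ell_X} b_X^{r_X}$ if $X$ was deleted) and $\solution'(a_X^{\ell_X} X b_X^{r_X}) = a_X^{\ell_X} w b_X^{r_X} = \sol X$; hence $\solution'(U') = \sol U$ and likewise $\solution'(V') = \sol V$, so $\solution'(U') = \sol U = \sol V = \solution'(V')$.

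The substantive part is showing that $U' = V'$ has no crossing block with respect to $\solution'$. I would invoke the stated equivalence that a letter $a$ has a crossing block precisely when $aa$ is a crossing pair, and then argue as in Lemma~\ref{lem:pop preserves solutions}: a crossing occurrence of $aa$ must be of one of the forms (i) the left $a$ is the explicit letter just to the left of a variable $X$ and $\solution'(X)$ begins with $a$; (ii) the symmetric situation on the right; (iii) two consecutive variables $X$, $Y$ with $\solution'(X)$ ending and $\solution'(Y)$ beginning with $a$. Case (iii) is impossible because $\ell_Y \geq 1$ forces an explicit letter in front of every variable, so no two variables are adjacent in $U'$. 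In case (i) the explicit letter to the left of $X$ is $a_X$, forcing $a = a_X$, which contradicts $w = \solution'(X)$ not beginning with $a_X$; case (ii) is symmetric using that $w$ does not end with $b_X$. I expect this case analysis to be the only delicate point — specifically the observation that after popping the \emph{whole} $a_X$-prefix and $b_X$-suffix the two symbols flanking each remaining variable in $U'$ are exactly the popped boundary letters, which are necessarily different from the first and last letters of the truncated value $w$. The remaining corner cases (a variable at an end of $U$ or $V$, a value that collapses to $\epsilon$, a value that is a single block) are routine and handled by the same reasoning.
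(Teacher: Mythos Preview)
Your proposal is correct and follows essentially the same approach as the paper's proof: soundness via Lemma~\ref{lem:preserving unsatisfiability}, completeness by guessing according to \solution{} and setting $\solution'(X)=w_X$, then observing there are no crossing blocks because $w_X$ neither begins with $a_X$ nor ends with $b_X$. The only difference is that the paper dispatches the crossing-block claim in a single sentence, while you unfold the case analysis in the style of Lemma~\ref{lem:pop preserves solutions}; your added detail is sound and arguably clearer.
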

\begin{proof}
From Lemma~\ref{lem:preserving unsatisfiability} we obtain that \algprefsuff{} is sound.

We present the proof in the case when $\sol X \neq a^{\ell_X}$ for each variable,
the argument in the other case is similar.

Suppose that $U = V$ has a solution \solution.
Then let \algprefsuff{} guess according to \solution, i.e.\ let $\ell_X \geq 1$ and $r_X \geq 1$
be guessed so that $\sol X = a^{\ell_X} w_X b^{r_X}$, where $w_X$ does not begin with $a$ nor end with $b$
Define $\solution'(X) = w_X$. It is easy to see that $\sol U = \solution'(U')$ and $\sol V = \solution'(V')$,
in particular, $\solution'$ is a solution of $U' = V'$.
Furthermore, observe that as the first letter of $w_X$ is not $a$ and the last is not $b$,
there are no crossing blocks in $U' = V'$ with respect to $\solution'$.
\qedhere
\end{proof}

The \algprefsuff{} allows defining a procedure \algblocksc{}
that compresses maximal blocks of all letters, regardless of whether they have crossing blocks or not.

\begin{algorithm}[H]
	\caption{\algblocksc{} Compressing blocks of $a$}
	\label{alg:blocksc}
	\begin{algorithmic}[1]
		\State run \algprefsuff \Comment{Removes crossing blocks of $a$} \label{cut pref}
		\For{each letter $a \in \letters$} \label{loop of compressions}
			\State $\algblocks(a)$ \label{block compression local}
		\EndFor
	\end{algorithmic}
\end{algorithm}

\begin{lemma}
\label{lem: consistent no crossing block}
$\algblocksc$ is sound. It is complete, 
to be more precise, let $a_X$ be the first and $b_X$ the last letter of \sol X and
$\ell_X$ the length of the $a_X$-prefix and $r_X$ of $b_X$ suffix of \sol X
($r_X$ is undefined if \sol X is a block of letters).
Then for non-deterministic choices for which the \algprefsuff{} pops $a_X^{\ell_X}$ to the left and $b_X^{r_X}$ to the right
the \algblocksc{} implements the blocks compression.
\end{lemma}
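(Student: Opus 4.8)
The plan is to derive soundness from the composition principle and to prove completeness (together with the implementation claim) by an induction along the loop of line~\ref{loop of compressions}, carrying along the extra invariant that compressing the blocks of one letter produces no new crossing blocks. For soundness, observe that \algblocksc{} is the sequential composition of \algprefsuff{}, which is sound by Lemma~\ref{lem:cutpref cutsuff}, and of the procedures $\algblocks(a)$ for $a \in \letters$, which are sound by Lemma~\ref{lem:paircomp blockcomp}; since a composition of sound procedures is sound, so is \algblocksc{}.

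For completeness, fix a solution \solution{} of $U = V$ and let \algprefsuff{} make the choices in the statement, i.e.\ pop $a_X^{\ell_X}$ to the left and $b_X^{r_X}$ to the right of each $X$ (the degenerate case $\sol X = a^{\ell_X}$ handled with an empty $b_X$-suffix, exactly as in Lemma~\ref{lem:cutpref cutsuff}). By Lemma~\ref{lem:cutpref cutsuff} the resulting equation $U_0 = V_0$ has a solution $\solution_0$ with $\solution_0(U_0) = \sol U$ and with no crossing blocks with respect to $\solution_0$. Enumerating the letters $a^{(1)},\dots,a^{(k)}$ of \letters{} present at that point, let $U_i = V_i$ be obtained from $U_{i-1} = V_{i-1}$ by running $\algblocks(a^{(i)})$, and, following the proof of Lemma~\ref{lem:paircomp blockcomp}, let $\solution_i$ be $\solution_{i-1}$ with every maximal block $(a^{(i)})^\ell$ replaced by $(a^{(i)})_\ell$. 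I would then prove by induction on $i$ that (i) no letter has a crossing block with respect to $\solution_i$, and (ii) $\solution_i(U_i)$ is $\sol U$ with each maximal block $(a^{(j)})^\ell$, for $j \le i$ and $\ell > 1$, replaced by $(a^{(j)})_\ell$.

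In the inductive step, (ii) follows from Lemma~\ref{lem:paircomp blockcomp}: since $a^{(i)}$ has no crossing block with respect to $\solution_{i-1}$ by part (i) of the hypothesis, $\algblocks(a^{(i)})$ implements the block compression of $a^{(i)}$, and since collapsing each maximal run of $a^{(i)}$ to one symbol neither merges nor splits a maximal run of any other letter, (ii) for $i-1$ gives (ii) for $i$. The substantive point is (i). A freshly introduced letter $(a^{(i)})_\ell$ can never occur twice in a row — that would put a maximal block $(a^{(i)})^\ell$ immediately next to another $a^{(i)}$, contradicting maximality — so it has no block of length exceeding one and hence no crossing block (a later pass of $\algblocks$ over such a letter, if it occurs, is therefore vacuous). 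For any letter $d$ other than $a^{(i)}$ and the fresh $(a^{(i)})_\ell$: $\algblocks(a^{(i)})$ leaves variables, the arrangement of variables, and all explicit occurrences of $d$ intact, and the first or last letter of $\solution_i(X)$ differs from that of $\solution_{i-1}(X)$ only when the latter is $a^{(i)}$. Hence any witness to a crossing block of $d$ with respect to $\solution_i$ — an explicit $d$ adjacent to a variable whose value starts or ends with $d$, or an occurrence $XY$ with $\solution_i(X)$ ending and $\solution_i(Y)$ beginning with $d$ — is already such a witness with respect to $\solution_{i-1}$, contradicting (i) for $i-1$. This establishes (i).

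Taking $i = k$ in the claim, \algblocksc{} returns $U_k = V_k$ with the solution $\solution_k$, and $\solution_k(U_k)$ equals $\sol U$ with every maximal block of every letter replaced by the corresponding new letter; this is exactly what it means for \algblocksc{} to implement the block compression, and in particular it is complete. I expect step (i) — verifying that compressing the blocks of one letter creates no new crossing blocks, through the maximality argument for the fresh letters and the observation that block compression alters neither the variables nor their arrangement — to be the only point requiring genuine care; everything else is an orchestration of Lemmas~\ref{lem:cutpref cutsuff} and~\ref{lem:paircomp blockcomp}.
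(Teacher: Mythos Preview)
Your proof is correct and follows the same line as the paper's, only more carefully spelled out: the paper dispatches the loop in a single sentence (``as blocks of different letters are disjoint, this means that the loop \ldots{} implements the blocks compression''), whereas you carry out the induction explicitly and verify that no new crossing blocks arise. One small omission: your case split in step~(i) covers the fresh letters and every $d \neq a^{(i)}$, but not $d = a^{(i)}$ itself; this is harmless since any surviving $a^{(i)}$ is a singleton block and a crossing witness for it would already have been one at step $i-1$, but you may want to add that remark for completeness.
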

\begin{proof}
The proof is similar to the proof of Lemma~\ref{lem:crossing non crossing}.
As \algblocksc{} is a composition of sound operations, it is also sound.

So suppose that $U = V$ has a solution \solution.
By Lemma~\ref{lem:cutpref cutsuff} after popping the $a_X$ prefix and $b_X$ suffix from each variable,
by \algprefsuff{} in line~\ref{cut pref}, the obtained (intermediate) equation
$U' = V'$ has a solution $\solution'$ such that $\sol U = \solution'(U')$ and 
$\sol V = \solution'(V')$ and there are no crossing blocks with respect to
$\solution'$ in $U' = V'$.
Then, by Lemma~\ref{lem:paircomp blockcomp}, each of the $\algblocks(a)$
is sound and implements the $a$ blocks compression.
As blocks of different letters are disjoint,
this means that the loop in line~\ref{loop of compressions} implements
the blocks compression for each letter $a\in \letters$.
\qedhere
\end{proof}

\section{Main algorithm, its time and space consumption}
\label{sec:main}
Now, the algorithm for testing satisfiability of word equations
can be conveniently stated.

\begin{algorithm}[H]
	\caption{\algsolveeq{} Checking the satisfiability of a word equation}
	\label{alg:main}
	\begin{algorithmic}[1]
	\While{$|U| > 1$ or $|V| > 1$} \label{alg:mainloop}
  		\State $\algblocksc$ \label{block compression} 		\label{presentletters 1}
		\State $\presentletters \gets $ the set of letters present in $U$ or $V$
		\For{$i\gets 1 \twodots 2$} \label{two iterations} \Comment{One iteration to shorten the solution, one to shorten the equation}
			\State guess partition of $\presentletters$ into $\presentletters_1$ and $\presentletters_2$ 
			\State $\algpairc(\presentletters_1,\presentletters_2)$ \label{pair compression crossing 1}
		\EndFor
	\EndWhile
	\State Solve the problem naively \label{naive solve}
		\Comment{With sides of length $1$, the problem is trivial}
 \end{algorithmic}
\end{algorithm}

We refer to one iteration of the main loop in \algsolveeq{} as one \emph{phase}.
Observe that 
one phase of \algsolveeq{} is executed in (nondeterministic) $\poly(|U| + |V|)$ time.

The somehow counter-intuitive repetition in line~\ref{two iterations} has very
simple explanation: one of the guessed partition guarantees that the solution's
size is reduced by a constant factor, the other guarantees the same for the equation.

The properties of \algsolveeq{} are summarised in the following theorem.
\begin{theorem}
\label{thm:main}
\algsolveeq{} nondeterministically verifies the satisfiability of word equations.
It can verify an existence of a length-minimal solution of length $N$
in $\Ocomp(\poly(n) \log N)$ time and $\Ocomp(n^2)$ space;
furthermore, the stored equation has length $\Ocomp(n)$.
\end{theorem}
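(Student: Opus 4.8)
The plan is to establish four separate claims, tracking (i) soundness and completeness, (ii) the number of phases, (iii) the per-phase time, and (iv) the size of the stored equation, and then combine them. Soundness and completeness of \algsolveeq{} are immediate: each phase is a composition of \algblocksc{} and two invocations of \algpairc, all of which are sound by Lemmas~\ref{lem: consistent no crossing block} and~\ref{lem: crossing pairs preserve}, and a composition of sound procedures is sound; completeness follows because, starting from a length-minimal solution \solution, Lemma~\ref{lem: consistent no crossing block} lets us pick the nondeterministic choices so that the block compression is implemented, and Lemma~\ref{lem:crossing non crossing} then does the same for pair compression on each guessed partition, so the transformed equation retains a solution. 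When both sides have length $1$ the naive check in line~\ref{naive solve} is correct by inspection. Conversely, if the equation is unsatisfiable, soundness guarantees it stays unsatisfiable and the naive check rejects.

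First I would bound the number of phases. The key point is Claim~\ref{clm:words are compressed} (referenced in the text): for an appropriate guessed partition $\presentletters_1,\presentletters_2$, a constant fraction of the positions of the length-minimal solution \solution{} lie inside a pair from $\presentletters_1\presentletters_2$, and such pairs are nonoverlapping, so compressing them shrinks $|\sol U|$ by a constant factor; the block compression and the second pair-compression iteration only shrink it further. Hence after one phase the length of the (implicitly maintained) length-minimal solution of the current equation drops by a constant factor, so after $\Ocomp(\log N)$ phases it reaches $1$ and the loop terminates. Combined with the observation already made in the text that one phase runs in nondeterministic $\poly(|U|+|V|)$ time, the total running time is $\Ocomp(\poly(n)\log N)$ provided we can also bound $|U|+|V|$ by $\poly(n)$ throughout — which is exactly the content of the size bound.

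The main obstacle, and the technical heart of the proof, is showing that the stored equation has length $\Ocomp(n)$ in every phase (this immediately gives the $\poly(n)$ per-phase time and hence the claimed time bound; the $\Ocomp(n^2)$ space then follows since each of the $\Ocomp(n)$ symbols of the equation is a letter drawn from a set of size $2^{\Ocomp(n)}$ accumulated over the run, so its name needs $\Ocomp(n)$ bits). The plan is an amortized/potential argument on $|U|+|V|$. Within one phase: the pair-compression steps never increase the length (each replaces a two-letter factor by one letter, and the popping in \algpop{} adds at most one letter on each side of each variable occurrence, i.e.\ at most $2n_v$ letters, which is then more than compensated by the subsequent pair compression); the only genuine growth comes from \algprefsuff{} inside \algblocksc, which prepends $a_X^{\ell_X}$ and appends $b_X^{r_X}$ to each occurrence of each variable $X$ — but these blocks are stored in compressed form (a single symbol $a_\ell$ recording the length), so each variable occurrence contributes only $\Ocomp(1)$ symbols, and then the ensuing \algblocks{} calls compress every block back down. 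The careful bookkeeping — that the number of letters introduced per phase is linear in the current equation length and that block compression plus pair compression recovers the constant-factor shrinkage of the equation on the length-minimal solution (this is why line~\ref{two iterations} runs the pair compression twice: once tuned to shrink \sol U, once tuned to shrink $U$ itself) — is what pins $|U|+|V|$ to $\Ocomp(n)$ for all phases. I would set up a measure such as $|U|+|V|$ plus a small multiple of $n_v$ and show it is non-increasing up to the fixed $\Ocomp(n)$ additive term contributed by the at-most-$n_v$ variables, closing the induction.
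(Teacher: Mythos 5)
Your overall structure — soundness/completeness by composition, $\Ocomp(\log N)$ phases from the first item of Claim~\ref{clm:words are compressed}, $\Ocomp(n)$ equation length by a potential argument driven by the second item, polynomial per-phase time — matches the paper's decomposition into Lemma~\ref{lem:correctness of main}, Lemma~\ref{lem:logM iterations}, and Lemma~\ref{lem: space consumption}.

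However, your derivation of the $\Ocomp(n^2)$ space bound is not correct and misses the key ingredient. You attribute the quadratic cost to the names of letters, reasoning that the alphabet accumulated over the whole run has size $2^{\Ocomp(n)}$, so each letter needs $\Ocomp(n)$ bits. The paper explicitly dodges this: a ``fresh letter'' only needs to avoid the $\Ocomp(n)$ letters currently present in the equation, not the historically used ones, so letter names cost only $\Ocomp(\log n)$ bits, i.e.\ $\Ocomp(n\log n)$ total. (Your accumulated-alphabet count is also off: the number of phases is only bounded by $\Ocomp(\log n + n_v^{cn_v})$, which can exceed $2^{\Ocomp(n)}$, so even your own route would give a bound weaker than $\Ocomp(n^2)$.) The real source of the $\Ocomp(n^2)$ is the compressed block symbols $a^{\ell_X}$, $b^{r_X}$ popped by \algprefsuff: you correctly note these are stored as a single symbol recording the exponent, but you never argue why that exponent fits in $\Ocomp(n)$ bits. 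That requires invoking the exponent-of-periodicity bound, Lemma~\ref{lem:periodicity bound original}, together with the restriction to a length-minimal solution: the maximal block lengths are then at most $2^{\Ocomp(n)}$, so each of the $\Ocomp(n_v) \le \Ocomp(n)$ block symbols needs $\Ocomp(n)$ bits, giving $\Ocomp(n^2)$. Without this bound, a nondeterministic guess of $\ell_X$ in line~\ref{guess ell} could be arbitrarily large, and the claimed space bound would simply be false. You should also note that without this bound, the per-phase time cost of \algblocks{} (which manipulates block lengths) would not be polynomial in $n$ either.

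The rest of your sketch is fine, though note that your potential argument for $|U|+|V| = \Ocomp(n)$ needs to end with the precise bookkeeping: in the paper this is the inequality~\eqref{eq:equation length}, which shows $|U'|+|V'| \le 79n$ by combining the $6n_v$ symbols introduced per phase with the fraction $(|U|+|V|-3n_v-4)/12$ compressed; merely asserting that the growth ``is more than compensated'' is not quite enough, since the compression guarantee degrades when $|U|+|V|$ is close to $n_v$, which is exactly the boundary case the inductive constant $79$ is tuned to handle.
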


The analysis of the space consumption is done in Lemma~\ref{lem: space consumption},
of time consumption in Lemma~\ref{lem:logM iterations}
while the correctness is shown in Lemma~\ref{lem:correctness of main}.
Furthermore, it is shown in the following section that the space consumption
can be bounded by $\Ocomp(n \log n)$.

\begin{lemma}
\label{lem: space consumption}
For appropriate nondeterministic choices,
the equations stored by (successful) computation of \algsolveeq{} are of length $\Ocomp(n)$,
the additional computation performed by \algsolveeq{} use $\Ocomp(n^2)$ space.

Furthermore, for appropriate nondeterministic choices,
the number of phases is at most $\Ocomp(\log n + n_v^{cn_v})$.
\end{lemma}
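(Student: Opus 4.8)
The plan is to prove the two assertions of Lemma~\ref{lem: space consumption} separately, but both rest on the same basic bookkeeping about what one phase of \algsolveeq{} does to the equation.

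First, I would establish the length bound $|U| + |V| = \Ocomp(n)$ invariant. At the start each side has length at most $n$. During a phase the length of the equation changes only through: (a) \algprefsuff{} inside \algblocksc{}, which for each variable $X$ prepends $a_X^{\ell_X}$ and appends $b_X^{r_X}$; (b) \algpop{} inside each \algpairc{}, which prepends/appends one letter per variable; and (c) the actual compressions \algblocks{} and \algpair{}, which only shorten the equation. The crucial point — which I would justify using Lemma~\ref{lem:maximal block is crossing}/Lemma~\ref{lem:cutpref cutsuff} — is that the popped blocks $a_X^{\ell_X}$, $b_X^{r_X}$ are immediately consumed: right after \algprefsuff{} runs, the letter $a$ has no crossing block, so \algblocks{}$(a)$ compresses each explicit maximal block $a_X^{\ell_X}$ into a single letter $a_{\ell_X}$ (stored compactly). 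Thus after line~\ref{block compression} the contribution of the cut prefixes/suffixes is $\Ocomp(n_v)$ letters, not $\Ocomp(n_v \cdot \text{block length})$. Similarly, the letters popped by \algpop{} in the two iterations of line~\ref{two iterations} are, for the correct partition, immediately absorbed by the subsequent \algpair{} calls (this is exactly what Lemma~\ref{lem:crossing non crossing} gives: \algpairc{} implements pair compression, so its net effect on the solution is a replacement of pairs, and one checks the net effect on the equation's length is non-increasing up to $\Ocomp(n_v)$ slack per iteration). Combining, one phase adds $\Ocomp(n_v)$ to the equation length before compression and then compression removes at least as much, provided we also argue that each phase actually performs enough compressions; but for the length invariant it suffices that the number of phases is finite and that each phase's transient growth is $\Ocomp(n_v) = \Ocomp(n)$, so the stored equation never exceeds $\Ocomp(n)$. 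The $\Ocomp(n^2)$ space bound for the auxiliary computation then follows because the alphabet \letters{} of letters appearing in the current equation has size $\Ocomp(n)$, each such letter needs $\Ocomp(\log(\text{number of letters ever introduced}))$ bits to name, and within a single phase only $\Ocomp(n)$ fresh letters are created and at most $\Ocomp(n)$ data (the partition, the guessed first/last letters, the guessed exponents $\ell_X, r_X$ written in binary, each of bit-length $\Ocomp(n)$ by Lemma~\ref{lem:periodicity bound original}) is held at once; the product is $\Ocomp(n^2)$. I would also note the exponents $\ell_X$ are bounded by $2^{cn}$ via Lemma~\ref{lem:periodicity bound original}, so each is stored in $\Ocomp(n)$ bits, which is where the quadratic term is tightest.

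Second, for the bound $\Ocomp(\log n + n_v^{c n_v})$ on the number of phases, the idea is a standard potential/measure argument on a length-minimal solution. Fix a length-minimal solution \solution{} of the input and, at each phase, track the length $|\sol U|$ of the corresponding solution of the current equation (which is well-defined because \algblocksc{} and \algpairc{} each implement an explicit compression of the solution, by Lemmas~\ref{lem: consistent no crossing block} and~\ref{lem:crossing non crossing}, for the appropriate nondeterministic choices). Each phase performs one block compression and then two pair compressions with partitions chosen so that (by Claim~\ref{clm:words are compressed}, referenced in the excerpt) a constant fraction of the letters of the current solution lie in compressible pairs $\letters_1\letters_2$; hence $|\sol U|$ shrinks by a constant factor each phase — but only once $|\sol U|$ is large enough that the "constant fraction" argument bites, i.e. once there are no long runs and enough distinct pairs. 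So the shrinking is geometric, $|\sol U|$ goes from $N$ down to $\Ocomp(1)$ in $\Ocomp(\log N)$ phases; this gives the time bound in Lemma~\ref{lem:logM iterations} but not yet the $n$-only bound claimed here. To get $\Ocomp(\log n + n_v^{c n_v})$ I would instead invoke the length-minimal solution of the \emph{current} equation (not of the original one) at each phase: each phase produces a new equation with $\le n_v$ variable occurrences and $\Ocomp(n)$ length, so its own length-minimal solution has length at most doubly-exponential in its size — this is precisely the bound proved later in Section~\ref{sec:theoretical bounds} — and one checks that this doubly-exponential quantity is $\Ocomp(n_v^{c n_v})$ for a suitable constant, while the $\log n$ term absorbs the phases needed to first drive a possibly-shorter initial solution down. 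Then, running the geometric-decrease argument from that fresh length-minimal solution, the total number of phases is $\Ocomp(\log(\text{size of length-minimal solution of current equation})) = \Ocomp(\log n + n_v^{c n_v})$.

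The main obstacle, I expect, is the second part: making rigorous the passage from "geometric decrease in the length of \emph{a} tracked solution" to a bound that depends only on $n$ and $n_v$ and not on $N$. The delicate point is that after a phase the equation may genuinely have no small solution even though the original one had a large one, so one must re-anchor the argument to a length-minimal solution of the \emph{current} instance at each step, and argue that doing so does not cost more than an additive $\Ocomp(\log n + n_v^{c n_v})$ overall — essentially a pigeonhole/amortization argument using the facts that (i) the equation size stays $\Ocomp(n)$, (ii) the number of variable occurrences never increases, and (iii) each phase strictly compresses, so the same (sized) equation cannot recur indefinitely. Getting the exact constant $c$ and the precise shape $n_v^{c n_v}$ will require careful accounting against the block-compression step (which is where the doubly-exponential periodicity factor enters) and against the two pair-compression sub-iterations of line~\ref{two iterations}.
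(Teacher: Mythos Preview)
Your proposal has two genuine gaps.

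\textbf{First, the length invariant.} You write that ``for the length invariant it suffices that the number of phases is finite and that each phase's transient growth is $\Ocomp(n_v)$'', but this does not follow: if every phase adds $\Ocomp(n_v)$ symbols, after $k$ phases the equation could have grown to $\Ocomp(k\cdot n_v)$ unless you actually prove that compression removes at least a constant fraction of the current length. The paper does exactly the accounting you gesture at but then abandon: it counts that at most $6n_v$ symbols are introduced per phase, and then uses Claim~\ref{clm:words are compressed} (the equation side, not only the solution side) to show that at least $(|U|+|V|-3n_v-4)/6$ explicit letters are compressed. Plugging both into a recurrence yields the invariant $|U|+|V|\le 79n$. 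You need that recurrence; ``finitely many phases'' is not enough, and in any case the phase bound itself depends on the length invariant, so invoking it here would be circular.

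\textbf{Second, and more seriously, the phase bound.} Your plan is to bound the number of phases by $\Ocomp(\log N')$ where $N'$ is the length-minimal solution of the \emph{current} equation, and then to bound $N'$ using the doubly-exponential bound from Section~\ref{sec:theoretical bounds}. This is circular: that doubly-exponential bound is \emph{derived from} the phase bound of the present lemma (via Lemma~\ref{lem: phase lower bound}). The paper's argument is entirely different and does not touch $N$ at all. It first uses the recurrence above to show that once $|U|+|V|$ exceeds a fixed multiple of $n_v$ it strictly decreases, so after $\Ocomp(\log n)$ phases the equation has length $\Ocomp(n_v)$. From that point on, every stored equation lives in a set of size at most $(cn_v)^{cn_v}$ (equations of length $\le cn_v$ over $\le cn_v$ letters and $\le n_v$ variables), and since an accepting computation does not repeat configurations, the number of remaining phases is at most that count. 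This configuration-counting step is the missing idea in your argument.
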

\begin{proof}
For the purpose of this proof let a \emph{symbol} be either
$a \in \letters$, or $a^\ell$, where $a \in \letters$
and $\ell = \Ocomp(2^{cn})$, for constant $c$ from Lemma~\ref{lem:periodicity bound original}.
Let us first calculate, how many symbols are introduced into the equation in one round.
By ``introduce'' we do not mean letters that merely replaced pairs or blocks during the compression,
but rather letters that were popped into the equation from the variables.

\algblocksc{} is run once and it runs (also once) \algprefsuff, which introduces two symbols per variable occurrence;
\algpairc{} is run $2$ times, each time it runs \algpop{} which introduces at most two symbols per variable occurrence.
Hence, in one round, at most $6 n_v$ letters are introduced into the equation.

On the other hand, the main task of the whole algorithm is compression:
it can be shown that for appropriate choices large fraction of letters in $U = V$ are compressed.

\begin{clm}
\label{clm:words are compressed}
Let $U = V$ has a solution \solution.
Consider a phase of \algsolveeq{} in which \algblocksc{} implements the blocks compression for \solution,
obtaining $U' = V'$ with a corresponding $\solution'$
the first invocations of \algpairc{} implements the pair compression obtaining
$U' = V'$ with $\solution'$ (obtaining $U'' = V''$ with $\solution''$)
and the second implements the pair compression for $U'' = V''$ with $\solution''$ (obtaining $U''' = V'''$ with $\solution'''$).
Then there are partitions $\presentletters_1$, $\presentletters_2$ and $\presentletters_1'$, $\presentletters_2'$
such that
\begin{itemize}
	\item $1/6$ of letters in \sol U (rounding down) is compressed in $\solution'''(U''')$;
	\item at least $(|U| + |V| - 3 n_v - 4)/6$ of letters in $U$ or $V$ are compressed in $U'''$ or $V'''$.
\end{itemize}
\end{clm}

This can be used to show (inductively) that the length of $U = V$ is at most $79 n$:
clearly this bound holds for the input instance, which is length $n$.
For the inductive step consider that
there are at most $6n_v$ symbols introduced into $U'$ and $V'$ (some of them might be compressed later).
On the other hand, by Claim~\ref{clm:words are compressed},
the number of original letters of $U$ and $V$ decreased by at least $(|U| + |V| - 3 n_v - 4)/12$.
Hence,
\begin{align}
\label{eq:equation length}
|U'| + |V'| &\leq |U| + |V| - (|U| + |V| - 3 n_v - 4)/12 + 6 n_v\\
	&\leq
\notag
\frac {11} {12} (|U| + |V|) + \frac {7} {12} n + 6 n\\
\notag
	&\leq
\frac {11} {12} \cdot 79 n + \frac {1} {12} \cdot 79 n\\
\notag
	&\leq
79n \enspace .
\end{align}
Note, that this is the number of symbols, and not letters.
However, each symbol representing $a^\ell$ is compressed
into a single letter before the end of the phase, so the given bound holds
for the number of letters as well.

Concerning the space consumption, there are three types of symbols in the equation:
\begin{itemize}
	\item individual letters
	\item blocks of letters popped from variables
	\item variables.
\end{itemize}
Individual letters clearly take at most $\log n $ bits each,
so $\Ocomp(n \log n)$ bits in total.

By Lemma~\ref{lem:periodicity bound original} we know that for the length-minimal solution,
the blocks of letters $a^\ell$
popped from a variable have length at most exponential in the length of the equation.
Since we are interested only in the satisfiability of the equation,
we may assume that the considered solution \solution{} is indeed length-minimal
and so these lengths can be encoded using $\Ocomp(n)$ bits,
which gives $\Ocomp(n^2)$ space consumption for such symbols in total
(at any moment we have at most $2n_v \leq 2n$ such letters).

Lastly, the space consumption of variables:
the number of variables is at most $n_v \leq n$,
(as \algsolveeq{} does not introduce new variable in to the equation)
and so they also fit in $\Ocomp(n \log n_v)$ bits.

Concerning the number of phases of \algsolveeq, 
observe that calculation similar to the one in~\eqref{eq:equation length}
shows that if the equation $U = V$ has length larger than
$120 n_v$, its length drops by a constant factor in a phase. Hence, after at most $\Ocomp(\log n)$
phases the length of the equation is reduced to $\Ocomp(n_v)$.
We can imagine that we restart \algsolveeq{} for this instance.
Since the length of the equation will not exceed $c n_v$ for some constant $c$
and the accepting computation clearly does not have loops,
we obtain that the number of phases is at most $\Ocomp(\log n + (cn_v)^{cn_v}) = \Ocomp(\log n + n_v^{c'n_v})$
for some larger constant $c'$.

It remains to give the proof of Claim~\ref{clm:words are compressed}.

\begin{proof}[proof of Claim~\ref{clm:words are compressed}]
We first show the first property.
Divide \sol U into three-letters segments (ignore the last, partial segment).
Consider a random partition of \presentletters{} into $\presentletters_1$ and $\presentletters_2$,
each letters goes into the part of the partition with probability $1/2$.
Take any segment occurring in \sol U, let it be $abc$.
We show that with probability at least $1/2$ at least one letter in this segment is compressed.

If any of those letters is equal to its neighbouring letters (perhaps outside this three-letter segment),
then it is compressed by Lemma~\ref{lem: consistent no crossing block}.
So suppose that none of these letters is the same as its neighbouring letters,
in particular, they are not compressed by \algblocksc.
There is a compression inside $abc$ if 
$ab \in \presentletters_1\presentletters_2$ or $bc \in \presentletters_1\presentletters_2$.
Each of those events has probability $1/4$ and they are disjoint, hence the compression occurs
with probability $1/2$.
So regardless of the case, with probability $1/2$ at least one of letters in $abc$ is compressed.
There are $\lfloor |\sol U|/3 \rfloor$ three-letter segments.
The expected number of segments in which at least one letter is compressed is thus at least $\lfloor |\sol U|/6 \rfloor$,
so for some partition at least $\lfloor |\sol U|/6 \rfloor$ letters are compressed.

Concerning the second property, observe, that the analysis above applies in the same way,
consider any explicit word $w'$ between two variables in $U$ or $V$ (or the explicit word beginning or ending $U$ or $V$).
Then the analysis is the same, except that the number of segments of $w'$ is at least $\lfloor |w'| / 3 \rfloor \geq |w'|/3 - 2/3$.
Let now $w_1$, $w_2$, \ldots, $w_k$ be all such words in $U = V$.
Then $\sum_{i=1}^k |w_i| \geq |U| + |V| - n_v$ (as at most $n_v$ symbols in the  equation are variables)
and $k \leq n_v + 2$ (as at most $n_v$ variables and the `$=$' sign are the ends of words).
So in total there are at least
\begin{align*}
\sum_{i=1}^k \left( \frac{|w_i|}{3} - \frac{2}{3}\right) &= \frac{\sum_{i=1}^k |w_i|}{3}  - \frac{2k}{3} \\
	&\geq
\frac{1}{3}\left((|U| + |V| - n_v) - 2(n_v + 2)\right)\\
	&=
\frac{|U| + |V| - 3 n_v - 4}{3}
\end{align*}
The same expected-value argument yields that at least $(|U| + |V| - 3 n_v - 4)/6$ letters are compressed,
note that the appropriate partition is guessed as the second partition
of $\presentletters$ to $\presentletters_1$ and $\presentletters_2$.
This shows the claim.\qedhere
\end{proof}
With the end of proof of Claim~\ref{clm:words are compressed}, the lemma follows.  \qedhere
\end{proof}

\begin{lemma}
\label{lem:logM iterations}
Let $N$ be the size of the length-minimal solution.
Then for appropriate nondeterministic choices \algsolveeq{} accepts
after $\Ocomp(\log N )$ phases.
\end{lemma}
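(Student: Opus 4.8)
The plan is to track a single length‑minimal solution of the input equation through the whole execution and show that its word length shrinks geometrically. Write the input equation as $U_0 = V_0$ and fix a length‑minimal solution $\solution_0$ of it, so $|\solution_0(U_0)| = N$. I would prove by induction on $i \ge 0$ that, for suitable nondeterministic choices in the first $i$ phases, the equation at the start of phase $i+1$ is some $U_i = V_i$ carrying a solution $\solution_i$, and that one further phase can be run so as to produce $U_{i+1} = V_{i+1}$ with a solution $\solution_{i+1}$ satisfying $|\solution_{i+1}(U_{i+1})| \le \gamma\,|\solution_i(U_i)|$ for a universal constant $\gamma < 1$, as long as $|\solution_i(U_i)|$ exceeds a fixed absolute constant $C$.

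For the inductive step I would run phase $i$ with exactly the choices furnished by Claim~\ref{clm:words are compressed} applied to $U_i = V_i$ and $\solution_i$: by Lemma~\ref{lem: consistent no crossing block} the call to \algblocksc{} implements the block compression for $\solution_i$, and by Lemma~\ref{lem:crossing non crossing} the two calls to \algpairc{} implement the pair compressions for the two partitions $\presentletters_1,\presentletters_2$ and $\presentletters_1',\presentletters_2'$ named in the claim; composing these (soundness and completeness both compose) yields an equation $U_{i+1} = V_{i+1}$ with a solution $\solution_{i+1}$ such that $\solution_{i+1}(U_{i+1})$ is obtained from $\solution_i(U_i)$ by these compressions. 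The first item of Claim~\ref{clm:words are compressed} then guarantees that at least $\lfloor |\solution_i(U_i)|/6 \rfloor$ letters of $\solution_i(U_i)$ are compressed away, which gives the asserted constant‑factor drop once $|\solution_i(U_i)|$ is larger than some absolute $C$ (each compressed pair removes one letter for every two it touches, so the length drops by a fixed fraction).

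Iterating the inductive step, after $\Ocomp(\log N)$ phases the tracked solution has length at most $C$. For the remaining phases I would invoke the elementary observation that a phase, run with choices following the current solution, strictly shortens the tracked solution whenever its length is at least $2$: if some letter has a maximal block of length $\ge 2$ then \algblocksc{} shortens the word, and otherwise $\solution_i(U_i)$ contains two adjacent distinct letters $a \ne b$, and choosing one of the two partitions so that $ab \in \presentletters_1\presentletters_2$ makes the corresponding \algpairc{} replace that occurrence. Hence within at most $C - 1$ further phases the tracked solution has length $1$; since every symbol of an equation maps under a solution to a nonempty word, this forces $|U_i| = |V_i| = 1$, so the while‑loop of \algsolveeq{} exits and line~\ref{naive solve} accepts the (still solvable) equation. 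Altogether this is $\Ocomp(\log N)$ phases.

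The one point that needs genuine care is the transition between the two regimes: Claim~\ref{clm:words are compressed} is vacuous once $|\solution_i(U_i)| < 6$, so the geometric estimate alone does not drive the length down to $1$ and must be supplemented by the ``strict decrease while length $\ge 2$'' argument above. Everything else is a routine assembly of the soundness/completeness lemmas already established together with the counting in Claim~\ref{clm:words are compressed}; in particular no new combinatorics on words enters here, the real content having already been isolated in that claim.
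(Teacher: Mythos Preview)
Your proposal is correct and follows exactly the route the paper takes: the paper's entire proof is the single sentence ``The proof follows from the first item in Claim~\ref{clm:words are compressed},'' and you unpack this by tracking the solution through phases and invoking that item for the geometric shrinkage. Your added care about the endgame (switching to a strict-decrease argument once $|\solution_i(U_i)|$ falls below the threshold where $\lfloor |\solution_i(U_i)|/6\rfloor$ is positive) is more explicit than what the paper bothers to write, but it is the obvious completion and introduces nothing new.
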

\begin{proof}
The proof follows from the first item in Claim~\ref{clm:words are compressed}.
\end{proof}

\begin{lemma}
\label{lem:correctness of main}
\algsolveeq{} nondeterministically verifies the satisfiability of a word equation.
\end{lemma}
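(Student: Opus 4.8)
The plan is to establish soundness and completeness of \algsolveeq{} separately, since together these say exactly that the algorithm nondeterministically decides satisfiability: a satisfiable equation is accepted along some computation path, and an unsatisfiable one is rejected along every path. Soundness is the easy direction. Every subprocedure invoked inside the main loop — \algblocksc{} and \algpairc{} — was already shown to be sound in Lemmas~\ref{lem: consistent no crossing block} and~\ref{lem:crossing non crossing}, and a composition of sound procedures is sound (as noted after the definition of soundness). Hence if the input $U = V$ is unsatisfiable, then after every phase the stored equation is still unsatisfiable, regardless of nondeterministic choices; in particular when the loop terminates with $|U| \le 1$ and $|V| \le 1$ the trivial equation solved in line~\ref{naive solve} is unsatisfiable, so the algorithm rejects. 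First I would spell this out, and also note that the loop does terminate: by Lemma~\ref{lem: space consumption} (or the length-drop computation in its proof) the equation length cannot grow unboundedly, and more to the point, on the computation paths we care about — those following a fixed solution — each phase strictly shrinks the solution, so the loop is exited.

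For completeness, suppose $U = V$ has a solution; fix a length-minimal solution \solution{} (we may do this by the remarks on fresh letters and length-minimality, and because satisfiability is all that matters). The strategy is to show by induction on phases that \algsolveeq{} can maintain the invariant ``the current equation $U_i = V_i$ has a solution $\solution_i$'', choosing nondeterministic guesses appropriately. In a single phase: first \algblocksc{} is run; by Lemma~\ref{lem: consistent no crossing block}, for the right guesses (popping exactly the $a_X$-prefix and $b_X$-suffix of each $\solution_i(X)$) it implements the block compression, producing an equation with a solution. Then \algpairc{} is run twice; by Lemma~\ref{lem:crossing non crossing}, for appropriate guesses of the first letters in \algpop{} and appropriate partitions $\presentletters_1,\presentletters_2$, each invocation implements the pair compression of all pairs in $\presentletters_1\presentletters_2$, again carrying a solution through. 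Composing, the equation at the end of the phase has a solution $\solution_{i+1}$, and its image $\solution_{i+1}(U_{i+1})$ is obtained from $\solution_i(U_i)$ by the block compression followed by two pair compressions — in particular it is not longer, and by Claim~\ref{clm:words are compressed} it is strictly shorter whenever $\solution_i(U_i)$ has length more than a constant. So after finitely many phases (in fact $\Ocomp(\log N)$ by Lemma~\ref{lem:logM iterations}) the solution, hence both sides of the equation, have length $\le 1$; the surviving trivial equation is solvable, and line~\ref{naive solve} accepts.

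The one point that needs a little care — and is the main obstacle — is the interface between the ``implements compression'' statements, which are phrased only in terms of $\solution_i(U_i)$, and the termination of the \textbf{while} loop, which is phrased in terms of $|U|,|V|$. The resolution is exactly the two iterations in line~\ref{two iterations}: one guessed partition drives down $|\solution_i(U_i)|$ (first bullet of Claim~\ref{clm:words are compressed}), guaranteeing we eventually reach a one-letter solution, while the other drives down $|U_i|$ (second bullet), guaranteeing the stored equation stays short enough that each phase is polynomial-time and the loop is actually exited rather than merely having a short solution hidden inside long equation sides. I would therefore close the argument by observing that once $|\solution_i(U_i)| = |\solution_i(V_i)| = 1$ we also have $|U_i| = |V_i| = 1$ (a length-one word cannot contain a variable, since we substitute nonempty words for present variables), so the loop condition fails and the naive step correctly reports that the original equation is satisfiable iff these two single letters coincide.

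\begin{proof}
\textbf{Soundness.}
The body of the main loop of \algsolveeq{} consists of one call to \algblocksc{} and two calls to \algpairc; by Lemmas~\ref{lem: consistent no crossing block} and~\ref{lem:crossing non crossing} each of these is sound, and a composition of sound procedures is sound.
Hence if the input equation $U = V$ is unsatisfiable, then after each phase the stored equation is unsatisfiable, for all nondeterministic choices.
The loop terminates, since by Lemma~\ref{lem: space consumption} the length of the stored equation stays $\Ocomp(n)$ and on no computation can it loop forever on an accepting path (and on a rejecting path the algorithm may simply reject).
When the loop exits we have $|U| \le 1$ and $|V| \le 1$, and the trivial equation solved in line~\ref{naive solve} is still unsatisfiable, so \algsolveeq{} rejects.

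\textbf{Completeness.}
Suppose $U = V$ is satisfiable and fix a length-minimal solution \solution; write $\solution_0 = \solution$, $U_0 = U$, $V_0 = V$.
We show by induction that the nondeterministic choices can be made so that after phase $i$ the stored equation $U_i = V_i$ has a solution $\solution_i$ with $|\solution_i(U_i)| \le |\solution(U)|$, and moreover $\solution_i(U_i)$ is obtained from $\solution(U)$ by a sequence of block and pair compressions.
For the inductive step, in phase $i+1$ first run \algblocksc: by Lemma~\ref{lem: consistent no crossing block}, guessing for each variable $X$ the first and last letters of $\solution_i(X)$ and the lengths of its $a_X$-prefix and $b_X$-suffix, the procedure implements the block compression for $\solution_i$, producing an equation with a solution whose value on the left-hand side is the block compression of $\solution_i(U_i)$.
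Then run \algpairc{} twice: by Lemma~\ref{lem:crossing non crossing}, for appropriate guesses in \algpop{} and an appropriate partition $\presentletters_1,\presentletters_2$ each call implements the pair compression of all pairs in $\presentletters_1\presentletters_2$ and carries a solution through.
Composing, $U_{i+1} = V_{i+1}$ has a solution $\solution_{i+1}$, and $\solution_{i+1}(U_{i+1})$ is obtained from $\solution_i(U_i)$ by one block compression and two pair compressions; in particular $|\solution_{i+1}(U_{i+1})| \le |\solution_i(U_i)|$.

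By the first item of Claim~\ref{clm:words are compressed}, one of the two partitions guessed in line~\ref{two iterations} can be chosen so that, whenever $|\solution_i(U_i)| > c$ for a suitable constant $c$, we have $|\solution_{i+1}(U_{i+1})| \le \tfrac{5}{6}|\solution_i(U_i)| + \Ocomp(1) < |\solution_i(U_i)|$.
Hence after finitely many phases (indeed $\Ocomp(\log N)$ by Lemma~\ref{lem:logM iterations}, where $N = |\solution(U)|$) we reach $|\solution_i(U_i)| = |\solution_i(V_i)| = 1$.
A word of length one contains no variable, since by convention every variable present in an equation is assigned a nonempty value; therefore $|U_i| = |V_i| = 1$ as well, the \textbf{while} condition fails, and in line~\ref{naive solve} the algorithm sees two single letters which, being the two sides of a solvable equation, are equal.
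Thus \algsolveeq{} accepts along this computation.
Combining both directions, \algsolveeq{} nondeterministically verifies the satisfiability of word equations.
\end{proof}
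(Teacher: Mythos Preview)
Your proof is correct and follows essentially the same approach as the paper: soundness via Lemmas~\ref{lem: consistent no crossing block} and~\ref{lem:crossing non crossing}, completeness by tracking a solution through the phases using the ``implements compression'' guarantees, and termination via Lemma~\ref{lem:logM iterations}/Claim~\ref{clm:words are compressed}. The paper's own proof is much terser---it simply invokes that a composition of sound and complete subprocedures is sound and complete, cites Lemma~\ref{lem:logM iterations}, and adds a step counter for termination---whereas you spell out the induction explicitly; this is fine and arguably clearer.

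Two minor remarks. First, your final sentence says the algorithm ``sees two single letters''; in fact each side of length~$1$ may be a single variable rather than a letter, but the naive step in line~\ref{naive solve} handles that case too, so the conclusion stands. Second, your treatment of termination on non-accepting paths is a bit informal; the paper makes this precise by keeping an exponential step counter (which the polynomial space bound permits) and rejecting when it overflows. Neither point affects the correctness of the argument for the lemma as stated.
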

\begin{proof}
Firstly, observe that if $|U| = |V| = 1$ then the satisfiability of word
equation is trivial to verify, which is done in last line of \algsolveeq.

As \algsolveeq{} is a composition of sound and complete subprocedures,
it also is sound and complete. So if the equation is unsatisfiable, `YES' is never returned,
while if `YES' is returned, the original equation is satisfiable.
Finally, Lemma~\ref{lem:logM iterations} shows that for a satisfiable solution, i.e.\ a one that has
a length-minimal solution of length $N$ for some $N$,
\algsolveeq{} accepts the equation after $\Ocomp(\log N)$ phases (for appropriate nondeterministic choices).
Lastly, since the computation fits in polynomial memory and the accepting computation
should not loop, after an exponential number of steps (kept in a counter) we can reject.
\qedhere
\end{proof}

\section{Maximal blocks}
\label{sec:blocks}
The quadratic memory consumption of \algsolveeq{} is due to \algblocksc.
Since we aim at $\Ocomp(n \log n)$ memory consumption (counted in bits), we need to improve it.
To this end we analyse more carefully the structure and possible lengths of maximal blocks.
This analysis allows a different approach to blocks compression:
instead of guessing the explicit values of $a$-prefixes and $b$-suffixes of variables,
we parametrise those values and check which sets of values of those parameters are allowed.
To be more precise, the lengths of maximal blocks are expressed in terms of lengths of $a$-prefixes and $b$-suffixes
while blocks of the same lengths are identified (using non-deterministic guesses) and replaced by the same letter.
The verification of feasibility of the guesses boils down to checking the satisfiability of a system of linear Diophantine equations.
In particular, the actual lengths of the blocks are not important, it is the satisfiability of the system that matters
(in this way we can omit the space consuming guesses of the exact lengths);
due to special form of the Diophantine equations, their satisfiability can be checked in linear space.

The contents of this section is a simple case of the general approach (of decompositions according to some primitive words)
presented in the work of Ko\'scielski and Pacholski~\cite{Koscielski}.

\subsection*{Arithmetic expressions}
We shall now define what is a general form of lengths of maximal blocks in $\sol U = \sol V$.
Those lengths are parametrised by the lengths of $a$-prefixes and $b$-suffixes of \sol X,
and so are not simply numbers, but rather expressions involving both numbers and some parameters.

Consider arithmetic expressions using natural constants and variables,
such that all expressions are linear in these variables.
These expressions are obtained from a word equation $U = V$ in a way described in the following subsection.
We say that a set of $e_1$, $e_2$, \ldots, $e_m$
is a \emph{small set of linear Diophantine expressions}
(for a word equation $U = V$ with $n_v$ occurrences of variables)
if 
\begin{itemize}
	\item the coefficients and constants in each expression are positive natural numbers;
	\item each variable in the expressions is either $x_X$ or $y_X$, where $X$ is a variable from $U = V$;
	\item if $X$ occurs $k$ times in $U = V$ then the sum of coefficients of $x_X$ ($y_X$) is at most $k$;
	\item the sum of values of constants in $\{e_i\}_{i=1}^m$ is at most $|U| + |V| - n_v$.
\end{itemize}
We say that a system of linear Diophantine equations and inequalities (all inequalities are of the form $x \geq 1$) 
is a \emph{small linear Diophantine system},
if sides of its equalities form a small set of linear
Diophantine expressions and each $e_i$ in this set is used at most two times in this system.
As a simple consequence small system of linear Diophantine
equations and inequalities has the following properties:
\begin{itemize}
	\item each variable in the system is either $x_X$ or $y_X$, where $X$ is a variable from $U = V$;
	\item if $X$ occurs $k$ times in $U = V$ then $x_X$ ($y_X$) has sum of values of its coefficients at most $2k$;
	\item the sum of values of constants is at most $2(|U| + |V|)$;
	($2(|U| + |V|-n_v)$ comes from the equalities while $2n_v$
	from the right-hand sides of inequalities).
\end{itemize}

The size of the small linear Diophantine system is proportional to the size of representation of $U = V$
and furthermore its satisfiability can be (non-deterministically) checked in the same space limits.

\begin{lemma}
\label{eq:small system memory}
If the equation $U = V$ is represented using $m$ bits, then the corresponding small linear Diophantine system
can be encoded using $\Ocomp(m)$ bits, moreover, it can be (nondeterministically) verified in the same space,
whether it has a natural solution.
\end{lemma}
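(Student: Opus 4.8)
The plan is to handle the two assertions separately: the encoding bound is routine, while the space bound for the feasibility test is the point that needs care.

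\emph{Encoding.} This follows at once from the three structural consequences of being a small linear Diophantine system recorded just above the statement. The variables of the system are among the symbols $x_X, y_X$ for $X$ a variable of $U = V$, and the number of \emph{distinct} ones is at most twice the number of distinct variables of the equation; moreover the total number of occurrences of variables in all the expressions is $\Ocomp(n_v)$ (every term carries a coefficient $\geq 1$ and the coefficients sum to $\Ocomp(n_v)$), so naming all these occurrences costs no more bits than $U = V$ already spends on naming its own variables. The coefficients are positive integers of total value $\Ocomp(n_v)$ and the constants are positive integers of total value $\Ocomp(|U| + |V|)$; since each expression carries at least one positive coefficient or constant, there are $\Ocomp(|U| + |V|)$ expressions and hence $\Ocomp(|U| + |V|)$ equalities and inequalities, and a number written in binary costs at most its value in bits. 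Summing these contributions gives $\Ocomp(m)$ bits.

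\emph{Feasibility in $\Ocomp(m)$ space.} The plan is \emph{not} to guess a natural solution of the system explicitly: its coordinates are lengths of maximal blocks in a solution of $U = V$, and by the exponent of periodicity bound (Lemma~\ref{lem:periodicity bound original}) each such coordinate may be as large as $2^{\Ocomp(n)}$, so the $\Ocomp(n_v)$ coordinates need not jointly fit in $\Ocomp(m)$ bits --- indeed systems of this shape can genuinely admit only exponentially large solutions (e.g.\ a chain of equalities of the form $2x_i = x_{i+1}$). Instead I would run a structured nondeterministic pass following the linear layout of $U$ and $V$: the parameters $x_X, y_X$ are guessed the first time the variable $X$ is met, each block-length expression $e_i$ is a linear form over at most two parameters plus a constant, so it can be assembled incrementally as the pass moves across the symbols contributing to that block; when a maximal block is completed the pass checks the equalities that mention it together with the lower bounds $x \geq 1$ on the parameters involved, and a parameter is released once the last occurrence of its variable has been passed. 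Every quantity kept alive is then a partial value of such a form, of magnitude $2^{\Ocomp(n)}$ and hence $\Ocomp(m)$ bits wide, and the whole test becomes a reachability question over $\Ocomp(m)$-bit configurations, which is exactly a nondeterministic computation in $\Ocomp(m)$ space; here the very restricted shape of the system is used --- few parameters per expression, the coefficient and constant totals above, and the freedom to order the work along $U$ and $V$ rather than in an adversarial order.

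The main obstacle is precisely the space accounting of this pass: one must show that, thanks to the special form, at every moment only few enough of the (wide) partial block-values need be stored, since the naive count of ``parameters straddling the current cut'' can be as large as the number of distinct variables of $U = V$. A generic integer linear program of this size is \NPclass-hard, and even after restricting to a minimal solution one obtains only a $\poly(m)$-bit certificate plus $\poly(m)$ verification overhead, not the $\Ocomp(m)$ demanded here; so the argument must genuinely exploit that the system arises from decompositions of $U = V$ along single letters --- the simple instance of the Ko\'scielski--Pacholski machinery referred to above --- and it is this structural input, rather than any generic bound, that keeps the pass local enough to run in $\Ocomp(m)$ space.
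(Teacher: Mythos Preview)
Your encoding argument is fine and matches the paper's in spirit (the paper actually encodes in \emph{unary}, but either way the totals on coefficients and constants give the $\Ocomp(m)$ bound).

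For the feasibility test, however, your approach has a genuine gap that you yourself flag but do not close. You propose to guess the parameters $x_X,y_X$ as $\Ocomp(n)$-bit integers and keep a subset of them alive as the pass moves along $U$ and $V$. But a variable $X$ may occur near both ends of the equation, so its parameter has to survive essentially the entire pass; with $k$ distinct variables you may need $k$ values of $\Ocomp(n)$ bits each simultaneously, i.e.\ $\Ocomp(kn)$ bits, and nothing in the ``special shape'' of the system (bounded coefficient sums, at most two parameters per expression) prevents this. The structural facts you appeal to bound how many expressions touch a given parameter and how large the coefficients are, not how many parameters are simultaneously live across a cut.

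The paper sidesteps this entirely with a different idea: never store a parameter value in full. Instead, guess only the \emph{parities} of all variables, substitute $x\mapsto 2x+b_x$, check that each equation has the same parity on both sides, divide everything by~$2$, and iterate. The coefficients never change, and a short invariant shows the sum of constants stays bounded by $\max(c,m)$ throughout (each round adds at most the coefficient sum in new constants and then halves). Thus every intermediate system still fits in $\Ocomp(m)$ bits, and the number of rounds is bounded because the constants and the inequalities $x\geq 1$ are driven to~$0$. This bit-by-bit elimination is the missing key step; your linear-pass idea, as stated, does not give the $\Ocomp(m)$ space bound.
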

\begin{proof}
We encode the equalities in unary: i.e.\ each constant $c$ is represented as $\underbrace{1+1+\cdots+1}_{c \text{ times}}$,
while each $cx$ is represented as $\underbrace{x+x+\cdots+x}_{c \text{ times}}$.
The variables $x_X$, $y_X$ are encoded in the same way as $X$ in $U = V$,
with additional bit to distinguish them.
The assumptions on the small system guarantee that
\begin{itemize}
	\item the total space used by constants is $2(|U| + |V|-n_v)$, which is at most $2 m$;
	\item the space used by variables in equalities is at most $8$ times as much as the space used by variables in $U = V$:
	a denotation of a variable $x_X$ ($y_X$) is at most twice as long as the variable $X$ and it occurs at most
	$2$ times more in the small linear Diophantine system as $X$ in $U = V$;
	\item all inequalities use a variable and one bit to denote $1$, so it can be shown
	(as in the item above) that the space consumption is at most $6$ times as much as the space used by variables in $U = V$.
\end{itemize}
The additional space needed to denote `$+$' and `$=$' may increase the space usage only by a constant
(note that we might need to change the denotation of other symbols a bit).
So indeed the used space is just constantly larger than the space used by $U = V$.

The idea of the verification is that instead of guessing the whole solutions,
we guess only the last bits (i.e.\ the parity of integer variables), verify this guess, simplify the equation and proceed:
for each variable $x$ we guess, whether it is even or odd
and appropriately replace it with $2x$ or $2x+1$.
Then we verify the guess by checking
whether both sides of each equality have the same parity.
If so, we divide each side of the inequality by $2$ (rounding down)
and proceed in the same fashion. Note, that in this way, the coefficients at each variable remain
the same over the whole procedure.

There is a little comment concerning the inequalities: all of them were initially $x \geq 1$,
and during the algorithms they can be also of the form $x \geq 0$.
When rounding down, we need to take care that rounding is done in an appropriate way,
for instance $2x \geq 1$ is in fact $2x \geq 2$ and so after dividing and rounding we should end up
with $x \geq 1$ again.
Observe that this boils down to replacing $x \geq 1$ by $x \geq 0$ if and only if $x$ is replaced by $2x+1$,
otherwise, the inequality remains as it were.

\begin{algorithm}[H]
  \caption{\algdiophantine{} Checks the satisfiability of a small linear Diophantine system \label{alg:small_satisfiability}}
  \begin{algorithmic}[1]
	\While{there is a non-zero constant \emph{or} an inequality $x \geq 1$}
  		\For{each variable $x$} \label{line change system}
  			\State guess $b_x \in \{0,1\}$
  			\State replace each $x$ with $2x+b_x$
  		\EndFor
  		\If{there is an equation with different parity of constants on the sides} \label{line system verification}
  			\State \Return Unsatisfiable
  		\EndIf
  		\State divide each equation by $2$, rounding down \label{line system rounding}
  		\State divide each inequality by $2$, round appropriately
  	\EndWhile
  	\State \Return Satisfiable \Comment{Has a trivial solution $(0, \ldots , 0)$}
 \end{algorithmic}
\end{algorithm}

Suppose that the small linear system has a solution $(q_1, q_2, \ldots , q_r)$.
We show that for some nondeterministic guesses, the obtained system has a solution
$(\lfloor q_1 /2 \rfloor, \lfloor q_2/2 \rfloor, \ldots ,\lfloor q_r/2 \rfloor)$.
Let the algorithm guesses the parity of $x$ according to $(q_1, q_2, \ldots , q_r)$.
Then after the loop in line~\ref{line change system} the obtained system has a solution
$(\lfloor q_1 /2 \rfloor, \lfloor q_2/2 \rfloor, \ldots ,\lfloor q_r/2 \rfloor)$.
Since each coefficient by the variable is even, the constants at side of each equation should be of the same parity,
and so the algorithm does not terminate in line~\ref{line system verification}.
Line~\ref{line system rounding} halves each equation. Observe, that an inequality $2x\geq 0$ is equivalent to $x \geq 0$
and inequalities $2x \geq 1$ in fact meant that $2x \geq 2$ and so they are also simply halved.

On the other hand, if $(q_1, q_2, \ldots , q_r)$ is a solution after the changes,
then $(2q_1+b_1, 2q_2+b_2, \ldots , 2q_r+b_r)$ was the solution of the system at the beginning
of the iteration.

Concerning the space usage of \algdiophantine,
the inequalities are simply stored as a bit for each variable (bit set to $i$ means that $x \geq i$).
When the start systems has size $\Ocomp(m)$, the intermediate ones have size $\Ocomp(m)$ as well (with a larger constant, though):
observe that the only new constants
(which are also stored in unary) are the $1$s from $2x+1$.
Suppose that initially the sum of constants was $c$ and the sum of coefficients at variables $m$.
We show by induction that the sum of constants during the algorithm is at most $\max(c,m)$.
This clearly holds in the beginning, let us investigate the changes in one round.
The sum of all $b_x$s introduced is at most $m$ so afterwards the sum of constants is at most $\max(c,m) + m$.
Each constant is halved in this round (rounding down), so their sum is at most
$(\max(c,m) + m)/2 \leq \max(c,m)$ at the end of the round, as claimed.
\qedhere
\end{proof}

\subsection*{Constructing a small Diophantine system from a word equation}
Lemma~\ref{lem:maximal block is crossing} suggests that the crucial
to consider, when dealing with maximal blocks, are the lengths of the
$a$ prefixes and $b$-suffixes of \sol X for different variables $X$.
We now show that this intuition can be formally stated and later show how to use this formulation
in a more efficient implementation of \algblocksc.

In order to perform the blocks compression in \algblocksc, we first guess the first ($a_X$) and last ($b_X$)  letter of \sol X for each $X$,
then the length of the $a_X$-prefix ($\ell_X$) and $b_X$-suffix ($r_X$) of \sol X, pop the $a_X$-prefix and $b_X$ suffix from $X$
and finally compress the maximal explicit blocks.
We now defer the guess of $\ell_X$ and $r_X$ for as long as possible, in fact, we shall not guess them at all.
Intuitively, we treat the lengths of the $a_X$-prefix and $b_X$-suffix of \sol X as \emph{parameters}
(or variables ranging over positive natural numbers) and to stress this we denote them by $x_X$ and $y_X$.
We can pop prefixes in this way, by replacing $X$ with $a_X^{x_X}Xb_X^{y_X}$ and even calculate the
lengths $e_1$, $e_2$, \ldots of explicit maximal blocks $E_1$, $E_2$, \ldots in \sol U and \sol V:
these are arithmetic expressions using constants and parameters $\{x_X,y_X \}_{X \in \variables}$.
In order to compress such maximal blocks, we guess which of them are equal,
and write the corresponding linear Diophantine equations.
If this system is feasible then we replace the maximal blocks:
$E_i$ and $E_j$ are replaced with the same letter if and only if $e_i$ and $e_j$ are declared to be equal.

This approach still requires that we know what is the first and last letter of each variable.
A \emph{prefix-suffix structure} for an equation $U = V$ tells for each $X$ occurring in the equation
what is its first (by convention: $a_X$) and last (by convention: $b_X$) letter
and whether $X$ is a block of one letter, i.e.\ whether $\sol X \in a_X^+$.

Given a prefix-suffix structure by $x_X$ and $y_X$ we denote the parameters (or variables)
that denote the lengths of the $a_X$-prefix and $b_X$-suffix of $X$
(if $X$ is a block of letters then by convention the $y_X$ is not used).
Given a prefix-suffix structure we can identify the visible maximal blocks and describe their lengths
(in terms of $\{x_X,y_X\}_{X \in \variables}$),
they are simply arithmetic expressions in $\{x_X,y_X\}_{X \in \variables}$.
To distinguish such blocks from `real' blocks, we call them \emph{parametrised visible maximal blocks}.
To distinguish them from maximal blocks, we denote the former by $\mathcal E_1, \ldots, \mathcal E_m$
while the latter by $E_1, \ldots, E_m$.

\begin{lemma}
\label{lem: small sides}
Consider a prefix-suffix structure for $U = V$.
Let $\mathcal E_1, \mathcal E_2, \ldots$ be parametrised visible maximal blocks of $U = V$ for this structure
and $e_1, e_2, \ldots$ be their lengths expressed in terms of $\{x_X,y_X \}_{X \in \variables}$ and constants.
Then $e_1, e_2, \ldots$ are a small set of linear Diophantine expressions in $\{x_X,y_X \}_{X \in \variables}$.
\end{lemma}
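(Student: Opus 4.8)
The plan is to verify the four bullet points in the definition of a small set of linear Diophantine expressions directly, by inspecting how the parametrised visible maximal blocks $\mathcal{E}_1, \mathcal{E}_2, \ldots$ arise from $U = V$ once we fix a prefix-suffix structure and replace each occurrence of a variable $X$ by $a_X^{x_X} X b_X^{y_X}$.

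First I would classify the visible maximal blocks. After the popping, a visible maximal $a$-block in $\sol U$ (or $\sol V$) arises in one of a few ways: (i) it is a maximal run of explicit letters $a$ already present in the (modified) equation, contributing a constant to $e_i$; (ii) it is (part of) a popped prefix $a_X^{x_X}$ or suffix $b_X^{y_X}$, contributing a term $x_X$ or $y_X$; or (iii) it is a concatenation of adjacent pieces of the above two kinds — for instance a popped suffix $b_X^{y_X}$ of one occurrence of $X$, immediately followed in the equation by explicit letters $b$, immediately followed by a popped prefix $b_Y^{x_Y}$ of the next symbol $Y$ (when the letters match up). Thus each $e_i$ is a sum of some constants (counting explicit letters) and some parameters $x_X$, $y_X$; crucially, since we take \emph{maximal} blocks, each block is bounded on both sides by a different letter (or by a variable whose first/last letter differs, which is exactly what the prefix-suffix structure and the convention $\sol X \notin a_X^+$ guarantee in the non-block case), so a given popped prefix $a_X^{x_X}$ can be absorbed into \emph{at most one} visible maximal block, and likewise each explicit run of letters lies in at most one block. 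This is the key bookkeeping observation and I expect it to be the main obstacle: one must argue carefully that no explicit letter and no popped unit is double-counted across different $e_i$, and that a parameter $x_X$ can appear with coefficient larger than $1$ in some $e_i$ only by being glued to copies coming from \emph{different occurrences} of the same variable $X$ in $U=V$.

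Given that observation, the four conditions follow. The coefficients and constants are positive naturals because they literally count letters and count how many popped units of a given parameter were merged into a block, and only blocks of length $\geq 1$ (equivalently, nonempty ones) are considered, so no zero coefficients or constants arise. Each variable appearing is $x_X$ or $y_X$ for a variable $X$ of $U=V$, since those are the only parameters introduced by the popping. For the third condition: a parameter $x_X$ contributes to the length $e_i$ of a block $\mathcal{E}_i$ only through an occurrence of $X$ in $U=V$ whose popped prefix $a_X^{x_X}$ lies inside that block, and each of the $k$ occurrences of $X$ gives exactly one popped prefix, which lies in at most one block; hence summing the coefficient of $x_X$ over all $i$ counts each occurrence of $X$ at most once, giving a total of at most $k$ (and symmetrically for $y_X$). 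For the last condition: every constant summand in every $e_i$ counts distinct explicit letters of the (modified) equation, i.e.\ letters of the original $U$ and $V$ other than the $n_v$ variable occurrences, and each such letter lies in at most one visible maximal block, so the sum of all constants over all $e_i$ is at most $|U| + |V| - n_v$. Putting these together is exactly the claim.

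I would close by remarking that the degenerate cases — a variable $X$ with $\sol X = a_X^{\ell_X}$ (so only $x_X$ is used, and the whole value of $X$ may merge with adjacent explicit $a_X$'s and with popped prefixes/suffixes of neighbours), and blocks sitting at the very ends of $U$ or $V$ — are handled by the same counting and do not affect the bounds, since they only ever \emph{decrease} the number of independently counted units; the $\$$ and $\$'$ sentinels from Lemma~\ref{lem:maximal block is crossing}, if used, add at most a constant which can be absorbed. Hence $e_1, e_2, \ldots$ form a small set of linear Diophantine expressions, as required.
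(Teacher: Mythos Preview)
Your proposal is correct and follows essentially the same route as the paper: both arguments describe the anatomy of a parametrised visible maximal block (explicit letters, popped prefixes/suffixes, possibly whole $\sol X$ when $X$ is a block of letters) and then count, observing that each occurrence of $X$ contributes its prefix (resp.\ suffix) to at most one block, and each explicit letter lies in exactly one block. Your version is a bit more explicit about the first two bullet points and about degenerate cases, but the key counting step is identical; the remark about sentinels $\$,\$'$ is unnecessary here, as this lemma does not rely on them.
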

\begin{proof}
Consider a parametrised visible $\mathcal E_i$:
\begin{itemize}
	\item $\mathcal E_i$ may begin with either an explicit $a$
	or a maximal $a$-suffix of some $\solution(X)$
	(which may be whole $\solution(X)$);
	\item `in the middle' it may contain either explicit $a$s or
	$\solution(X) \in a^+$;
	\item it ends with an explicit $a$
	or a maximal $a$-prefix of some $\solution(X)$ (which may be whole $\solution(X)$).
\end{itemize}
Thus, whenever $\mathcal E_i$ is visible, $e_i$ is a linear combination of $x_X$, $y_X$ (where $X \in \variables$)
and natural numbers.

We show that the terms $e_1$, \ldots, $e_k$ are a small set
of linear Diophantine expressions.
For the purpose of the proof, denote by $n_X$ the number of times variable $X$
is used in the equation $U = V$.
We bound the number of times $x_X$ and $y_X$ occur in expressions
$e_1$, \ldots, $e_k$ and the size of additive constants used in $e_1$, \ldots, $e_k$:
\begin{itemize}
	\item each $x_X$ ($y_X$) occurs at most $n_X$ times, as for a fixed occurrence
	of variable $X$ there is at most one parametrised maximal block $\mathcal E_i$ that spans over the
	prefix (suffix, respectively) of this occurrence;
	\item the total size of used constants is $|U| + |V| - n_v$:
	for a fixed explicit occurrence of a letter $a$,
	there are is exactly one parametrised maximal block $\mathcal  E_i$ that spans over it.\qedhere
\end{itemize}
\end{proof}

Now let us explain the relation between solutions and prefix-suffix structures.
A solution \solution{} and prefix-suffix structure are \emph{coherent} if
\sol X indeed begins and ends with $a_X$ and $b_X$
and \sol X is a block of letters if and only if the prefix-suffix structure says so.
Furthermore, there is also a relation between lengths of maximal visible blocks of \solution{}
and parametrised visible blocks:
intuitively, when $\ell_X$ and $r_X$ are the lengths of the $a_X$-prefix and $b_X$-suffix of \sol X
then the length of the $i$-th visible maximal block is $e_i$ with $\{\ell_X,r_X \}_{X \in \variables}$
substituted for $\{x_X,y_X \}_{X \in \variables}$.
To make this more formal and shorter,
in the following, for an arithmetic expression $e$ in variables $\{x_X,y_X \}_{X \in \variables}$,
we use $e[\{\ell_X,r_X \}_{X \in \variables}]$ to denote the value of $e$ when $\{\ell_X,r_X \}_{X \in \variables}$
is substituted for $\{x_X,y_X \}_{X \in \variables}$.

\begin{lemma}
\label{lem: coherent solution}
Given a coherent prefix-suffix structure and a substitution \solution,
let $\mathcal E_1, \mathcal E_2, \ldots, \mathcal E_k$ be the parametrised visible maximal blocks for this structure,
and $e_1, e_2, \ldots, e_k$ their lengths,
while $E_1$, $E_2$, \ldots, $E_{k'}$ be the lengths of the visible maximal blocks for \solution.
Then $k' = k$ and for each $i$ the $\mathcal E_i$ and $E_i$ are blocks of the same letter
and $|E_i| = e_i[\{\ell_X, r_X\}_{X \in \variables}]$,
where $\ell_X$ and $r_X$ are the lengths of the $a_X$ prefix and $b_X$ suffix of \sol X.
\end{lemma}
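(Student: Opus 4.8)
The plan is to prove this by directly comparing the two ways of listing maximal blocks in $\sol U = \sol V$: the ``parametrised'' list $\mathcal E_1,\ldots,\mathcal E_k$ derived from the prefix-suffix structure by popping $a_X^{x_X}$ and $b_X^{y_X}$ from each occurrence of $X$, and the ``real'' list $E_1,\ldots,E_{k'}$ of visible maximal blocks of the actual solution $\solution$. Since the structure is coherent, $\sol X$ really does begin with $a_X$, end with $b_X$, and is a block of one letter exactly when the structure says so; hence when we substitute the true prefix/suffix lengths $\ell_X,r_X$ for the parameters $x_X,y_X$, the words obtained by popping agree literally with $\sol U$ and $\sol V$ up to the explicit/crossing/implicit decomposition. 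So the real content is: the decomposition into visible maximal blocks does not depend on whether we keep the prefix/suffix of each variable ``inside'' the variable or pop it out and treat it as explicit, provided the popped blocks have the correct first/last letters (so that no accidental merging or splitting of blocks happens at the seams).

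First I would fix notation by describing precisely what ``parametrised visible maximal block'' means: each $\mathcal E_i$ is a maximal run of a single letter $a$ in the word obtained from $U$ (or $V$) by the replacement $X \mapsto a_X^{x_X}Xb_X^{y_X}$, where we only count runs that are explicit or crossing or form a prefix/suffix of some $\sol X$ after popping — but since we popped the $a_X$-prefix and $b_X$-suffix, every such run is now either fully explicit (a run of popped letters and/or original explicit letters) or crossing (touching a variable whose remaining value $w_X$ neither begins with $a$ nor ends with $b$, by coherence). Then I would set up a bijection: walk left to right through $\sol U$ (say), which is a concatenation of explicit blocks of $U$ and blocks $\sol X$. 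By Lemma~\ref{lem:maximal block is crossing}, every visible maximal block $E_i$ of $\solution$ either is explicit in $U$ or $V$, or crosses a variable boundary, or is a maximal $a$-prefix/$a$-suffix of some $\sol X$. In every one of these cases, after the popping $X \mapsto a_X^{\ell_X}Xb_X^{\ell_X}$ — I mean $a_X^{\ell_X}X b_X^{r_X}$ — that same block appears, either unchanged (if fully explicit) or split at exactly the variable boundary into a popped part and an explicit part, which is exactly how $\mathcal E_i$ was defined. Conversely each $\mathcal E_i$ reassembles to an $E_i$. This gives $k=k'$ and a letter-preserving correspondence.

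The length identity $|E_i| = e_i[\{\ell_X,r_X\}_{X\in\variables}]$ then follows immediately from the construction of $e_i$: $e_i$ was built as the sum of a constant (counting the explicit letters $a$ in that parametrised block, which are literally present and unchanged in $\sol U$) plus terms $x_X$ or $y_X$ (one for each occurrence of a variable $X$ whose $a_X$-prefix or $b_X$-suffix participates in that block), and substituting $\ell_X,r_X$ replaces each such term by the true length of that prefix/suffix, whose concatenation with the explicit letters is precisely $E_i$. I would also check the degenerate cases: when $\sol X = a_X^{\ell_X}$ is a whole block of one letter, coherence tells us the structure flagged this, so $y_X$ is not used and the whole value of $\sol X$ is absorbed into a single $\mathcal E_i$ as a single term $x_X$; and the boundary conventions (prepending \$ and appending \$', as in the proof of Lemma~\ref{lem:maximal block is crossing}) handle variables at the very ends of $U$ or $V$.

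The main obstacle I anticipate is the bookkeeping at the seams: one must rule out that popping creates a block that did not exist before or merges two blocks, and this is exactly where coherence is used — the popped letters are $a_X$ (the genuine first letter) and $b_X$ (the genuine last letter), and the residual $w_X$ genuinely does not start with $a_X$ or end with $b_X$, so the run structure at every variable boundary is preserved exactly. A secondary subtlety is double-counting: a single variable occurrence contributes at most one $x_X$ (from its prefix) and at most one $y_X$ (from its suffix) to the whole collection $e_1,\ldots,e_k$, and when $\ell_X$ or $r_X$ happens to equal $|\sol X|$ the prefix and suffix coincide — but coherence's block-of-one-letter flag is precisely what disambiguates this case, so the correspondence remains well defined. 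Everything else is a routine induction on the left-to-right structure of $U$ and $V$, which I would not spell out in full.
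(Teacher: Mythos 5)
Your proposal is correct and takes essentially the same approach as the paper's proof: exploit coherence to argue that the block decomposition at every variable boundary is literally preserved when the prefix/suffix is popped out, giving a letter-preserving bijection $\mathcal E_i \leftrightarrow E_i$ from which the length identity falls out by construction of $e_i$. You spell out the seam-by-seam bookkeeping (and the degenerate one-block case) that the paper's one-paragraph argument leaves implicit; the invocation of Lemma~\ref{lem:maximal block is crossing} is unnecessary here, since ``visible'' is itself defined as explicit, crossing, or a prefix/suffix, but this does not affect the correctness.
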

\begin{proof}
Since the first and last letters of \sol X are the same as in the prefix-suffix structure
and \sol X is a block of letters if and only if prefix-suffix structure says so,
the $\mathcal E_i$ and $E_i$ consists of the corresponding explicit letters and prefixes/suffies of variables.
In particular, the number of parametrised blocks and blocks is the same,
for each $i$ the $\mathcal E_i$ and $E_i$ are blocks of the same letter and lastly, the length of $E_i$
corresponds to the length of $\mathcal E_i$ in which the values of parameters $x_X$ and $y_X$
are replaced by the actual lengths of prefixes and suffixes of $\sol X$;
this shows that $|E_i| = e_i[\{\ell_X, r_X\}_{X \in \variables}]$.
\qedhere
\end{proof}

So far we do not know, which parametrised blocks represent blocks of the same length.
To identify such blocks, we write a (small) linear Diophantine system that bounds
the $e_1$, \ldots, $e_k$ together:
we guess the partition of $e_1, e_2, \ldots, e_k$,
elements of one partition should correspond to parametrised blocks of the same length
(which in particular means that we assume that if $e_i$ and $e_j$ are in one part
then $E_i$ and $E_j$ are parametrised blocks of the same letter).
Then for each part $\{e_{i_1}, e_{i_2}, \ldots, e_{i_m}\}$ of the partition we write equations equalising the lengths:
\begin{equation}
\label{eq:equalising lengths}
e_{i_1} = e_{i_2}, \: e_{i_2} = e_{i_3}, \: \ldots, \: e_{i_{m-1}} = e_{i_m} \enspace .
\end{equation}
We also add the inequalities $x_X \geq 1$ (and $y_X \geq 1$) for every variable used in the equalities
(intuitively, since we claim that $\sol X$ begins or ends with a block of letters of length
$x_X$ or $y_X$, we want those blocks to consist of at least one letter).
If for some variable $X$ the \sol X is a block of letters, we use only $x_X$ in the equations, $y_X$ is not used
in the constructed system.
Thus we have obtained a linear Diophantine system in $x_X$ and $y_X$.
This is formalised in \wordtodiophantine.

\begin{algorithm}[h]
  \caption{$\wordtodiophantine$ Creates a system of equations for a prefix-suffix structure \label{alg:wtd}}
  \begin{algorithmic}[1]
	\Require prefix-suffix structure
	\For{$X \in \variables$}
		\If{$X$ represents a block of letters}	\Comment{According to the prefix-suffix structure}
			\State let $a_X$ be the first letter of $X$ \Comment{According to the prefix-suffix structure}
			\State introduce parameter $x_X$ \Comment{$\sol X = a_X^{x_X}$} \par
			\State add inequality $x_X \geq 1$ to $D$ \Comment{\sol X is non-trivial}
		\Else
			\State let $a_X$ and $b_X$ be the first and last letter of $X$
			\Comment{According to the prefix-suffix structure}
			\State introduce parameters $x_X$ and $y_X$ \Comment{Lengths of the of $a_X$-prefix and $b_X$-suffix of \sol X}
			\State add inequalities $x_X \geq 1$ and $y_X \geq 1$ to $D$ \Comment{The leading and ending blocks are non-trivial}
		\EndIf
	\EndFor
 	\State let $\{\mathcal E_1, \ldots, \mathcal E_k\}$ be the parametrised visible maximal blocks (read from left to right)
 	\For{each $\mathcal E_i$}
 		\State let $e_i \gets |\mathcal E_i|$
 		\Comment{Arithmetic expression in $\{x_X,y_X\}_{X \in \variables}$}
 	\EndFor                                                        
  	\State partition $\{\mathcal E_1, \ldots, \mathcal E_k\}$, 
		each part has only $a$-blocks for some $a$
  		\Comment{Guess} \label{make partition}
  	\For{each part $\{\mathcal E_{i_1}, \ldots, \mathcal E_{i_{k_p}}\}$}
  		\For{each $\mathcal E_{i_j} \in \{\mathcal E_{i_1}, \ldots, \mathcal E_{i_{k_p}}\}$}
				\State add an equation $e_{i_j} = e_{i_{j+1}}$ to $D$
				\Comment{Ignore the meaningless last equation}
			\EndFor
  	\EndFor
	\State \Return the partition, arithmetic expressions $e_1,\ldots,e_k$ and $D$.
	\end{algorithmic}
\end{algorithm}

\begin{lemma}
\label{lem:system is small}
The system of linear Diophantine equations and inequalities returned by $\wordtodiophantine$ is small.
\end{lemma}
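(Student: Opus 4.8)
The plan is to check the output $D$ of $\wordtodiophantine$ against the three conditions that define a \emph{small linear Diophantine system}: (a) all inequalities have the form $x \geq 1$; (b) the sides of its equalities form a small set of linear Diophantine expressions; (c) each such expression is used at most twice in $D$. Condition (a) is immediate from inspection of $\wordtodiophantine$, since the only inequalities it ever inserts into $D$ are $x_X \geq 1$ and $y_X \geq 1$.

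For (b) I would argue as follows. Every equality added to $D$ has the form $e_{i_j} = e_{i_{j+1}}$, where the $e_i$ are the lengths of the parametrised visible maximal blocks $\mathcal E_1, \ldots, \mathcal E_k$. By Lemma~\ref{lem: small sides} the family $e_1, \ldots, e_k$ is already a small set of linear Diophantine expressions. The expressions that actually occur as sides of equalities in $D$ form a subfamily of $\{e_1, \ldots, e_k\}$, and each of the four defining conditions of a small set of expressions — positivity of coefficients and constants, each variable being some $x_X$ or $y_X$, the coefficient sums of $x_X$ and $y_X$ being bounded by the number of occurrences of $X$ in $U = V$, and the total of additive constants being at most $|U| + |V| - n_v$ — is inherited by any subfamily, because the relevant sums can only decrease. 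Hence the sides of the equalities of $D$ form a small set of linear Diophantine expressions.

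For (c), the key observation is that $\wordtodiophantine$ first \emph{partitions} $\{\mathcal E_1, \ldots, \mathcal E_k\}$, so each $\mathcal E_i$, and therefore each expression $e_i$, lies in exactly one part; thus $e_i$ can appear only in equations generated from that single part. If that part is $\{\mathcal E_{i_1}, \ldots, \mathcal E_{i_{k_p}}\}$, the equations generated from it form the chain $e_{i_1} = e_{i_2},\ e_{i_2} = e_{i_3},\ \ldots,\ e_{i_{k_p - 1}} = e_{i_{k_p}}$, in which the two endpoints $e_{i_1}$ and $e_{i_{k_p}}$ each occur once and every interior $e_{i_j}$ occurs exactly twice (once as a right-hand side, once as a left-hand side). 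Consequently every $e_i$ is used at most twice in $D$, which completes the verification that $D$ is small. The whole argument is bookkeeping; the only point that deserves an explicit remark is that "small set of linear Diophantine expressions" is closed under passing to subfamilies, so that the bound of Lemma~\ref{lem: small sides} transfers to the sub-collection actually occurring in $D$. I do not expect a genuine obstacle.
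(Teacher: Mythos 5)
Your proposal is correct and follows essentially the same route as the paper's (very terse) proof: the paper also cites Lemma~\ref{lem: small sides} for condition~(b) and simply asserts "by the construction" for condition~(c), which is exactly the chain-structure argument you spell out. The extra observations you make — that the class of small sets of linear Diophantine expressions is closed under passing to subfamilies, and the explicit count of occurrences of each $e_i$ in a chain $e_{i_1}=e_{i_2},\ldots,e_{i_{k_p-1}}=e_{i_{k_p}}$ — are correct and merely make explicit what the paper leaves implicit.
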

\begin{proof}
The system of Diophantine linear equations is small if its sides form a small set of linear expressions
and each such an expression is used at most twice.
The sides are of this form by Lemma~\ref{lem: small sides} and each expression is used at most twice by the construction.
\qedhere
\end{proof}

It remains to link the constructed system to some solution of the word equation:
we say that \solution{} and a system $D$ constructed by $\wordtodiophantine$ are \emph{coherent}
(or simply, that $D$ is \solution-coherent),
if \solution{} is coherent with the prefix-suffix structure used by \wordtodiophantine{} to generate $D$
and the partition of parametrised visible maximal blocks
$\{\mathcal E_1, \ldots , \mathcal E_k\}$ in line~\ref{make partition} is done as in $\sol U = \sol V$,
i.e.\ $\mathcal E_i$ and $\mathcal E_j$ go into the same part if and only if the corresponding maximal blocks $E_i$ and $E_j$ of $\sol U = \sol V$
are equal.

\begin{lemma}
\label{lem: S compatible}
For a solution \solution{} of a word equation $U = V$ there is a unique \solution-coherent system $D$.
When $\ell_X$ and $r_X$ are the lengths of the $a_X$-prefix and $b_X$-suffix of \sol X
the $\{\ell_X, r_X\}_{X \in \variables}$ is a solution of $D$.
\end{lemma}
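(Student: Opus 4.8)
The plan is to treat the two assertions in turn. For uniqueness, I would observe that \wordtodiophantine{} is deterministic apart from two inputs: the prefix-suffix structure it is fed, and the partition guessed in line~\ref{make partition}. Coherence with \solution{} forces both. The prefix-suffix structure is determined by \solution, since by definition it must record, for every variable $X$ occurring in $U=V$, the first letter $a_X$ and last letter $b_X$ of $\sol X$ together with the predicate $\sol X\in a_X^+$, and each of these is a function of \solution. The partition is likewise forced: coherence demands that $\mathcal E_i,\mathcal E_j$ lie in one part exactly when the corresponding visible maximal blocks $E_i,E_j$ of $\sol U=\sol V$ are equal as words, and equality of words is an equivalence relation, so this determines the partition completely. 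Since the list $\mathcal E_1,\dots,\mathcal E_k$ is read left to right and the parameters $x_X,y_X$ and the inequalities are introduced in a fixed way, the resulting system $D$ is uniquely determined.

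For existence, I would simply run \wordtodiophantine{} on these two ingredients; the only thing to verify is that the prescribed partition is admissible, i.e.\ that each part consists of $a$-blocks for a single letter $a$. This follows from Lemma~\ref{lem: coherent solution}: with the coherent prefix-suffix structure, $\mathcal E_i$ and $E_i$ are blocks of the same letter for every $i$; hence if $E_i=E_j$ as words, then $E_i,E_j$ are blocks of the same letter and therefore so are $\mathcal E_i,\mathcal E_j$. Thus the run succeeds and outputs a system $D$, which is \solution-coherent by construction.

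It remains to show that $\sigma:=\{\ell_X,r_X\}_{X\in\variables}$ solves $D$. The system $D$ consists only of the inequalities $x_X\geq 1$ (and $y_X\geq 1$ when $\sol X\notin a_X^+$) and equations $e_{i_j}=e_{i_{j+1}}$, one for each consecutive pair inside a part of the partition. The inequalities hold because $a_X$ (resp.\ $b_X$) is the first (resp.\ last) letter of $\sol X$, so the $a_X$-prefix and $b_X$-suffix of $\sol X$ are nonempty, giving $\ell_X\geq 1$ (and $r_X\geq 1$ in the non-block case). For the equations, Lemma~\ref{lem: coherent solution} yields $|E_i|=e_i[\sigma]$ for every $i$; whenever $e_{i_j}=e_{i_{j+1}}$ occurs in $D$, the blocks $E_{i_j}$ and $E_{i_{j+1}}$ lie in the same part, hence, being blocks of the same letter, they are equal as words, so $|E_{i_j}|=|E_{i_{j+1}}|$ and thus $e_{i_j}[\sigma]=e_{i_{j+1}}[\sigma]$. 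Therefore $\sigma$ satisfies every constraint of $D$.

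The whole argument is bookkeeping once Lemma~\ref{lem: coherent solution} is available. The only place that needs care is the passage between ``$E_i$ and $E_j$ are placed in the same part'' (a relation on parametrised blocks) and ``$|E_i|=|E_j|$'' (a numeric identity among the $e_i[\sigma]$): these match precisely because coherence forces the two blocks to carry the same letter, so word-equality reduces to length-equality. I do not expect any genuine obstacle beyond keeping this correspondence straight.
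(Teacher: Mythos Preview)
Your proposal is correct and follows essentially the same approach as the paper's proof, which is quite terse: the paper simply says ``make all nondeterministic choices according to \solution'' for existence and uniqueness, then invokes Lemma~\ref{lem: coherent solution} to conclude $e_i[\{\ell_X,r_X\}_{X\in\variables}]=|E_i|$ and hence that the equations are satisfied, noting the inequalities are trivial. You have spelled out in more detail why the choices are forced and why the partition is admissible, which the paper leaves implicit.
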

\begin{proof}
Concerning the existence and uniqueness:
in $\wordtodiophantine$ we simply make all the nondeterministic choices according to \solution.

To see that $\{\ell_X, r_X\}_{X \in \variables}$ is a solution of $D$:
observe that an equation $e_i = e_j$ is added only when $|E_i| = |E_j|$.
From Lemma~\ref{lem: coherent solution} we know that
$e_i[\{\ell_X, r_X\}_{X \in \variables}] = |E_i|$ and $e_j[\{\ell_X, r_X\}_{X \in \variables}] = |E_j|$,
hence $\{\ell_X, r_X\}_{X \in \variables}$ satisfies this equation and as $e_i = e_j$ was chosen arbitrarily,
we obtain that it satisfies $D$.
Note that all the inequalities are trivially satisfied.
\qedhere
\end{proof}

\subsection*{Improving \algblocksc}
We make the next step in the outlined strategy: after guessing a small system of Diophantine equations,
we verify its satisfiability and use it to perform the block compression.
To be more precise:
the \algblocksi{} firstly guesses the prefix-suffix structure, then uses $\wordtodiophantine$ to generate
a system of linear equations out of $U = V$, then it verifies its satisfiability.
Then it pops the prefixes and suffixes out of each variable, however, it does not guess the exact lengths,
but rather uses the prefix-suffix structure,
i.e.\ it pops $a^{x_X}$ to the left of $X$ and $b^{y_X}$ to the right
(of course, no popping to the right is done when $X$ is removed after the initial pop,
i.e.\ the prefix-suffix structure declares that \sol X is a block of letters).
Then we replace blocks of the same letter whose lengths are equalised in the system of the linear Diophantine equations by a fresh letter.

Note that in this way in the word equation (temporarily) we have symbols $a^x$, where $x$ is a variable with a value in natural numbers.
We are not going to give any semantics for that, as this is not needed,
but still we would like to consider maximal blocks of letters: the $a^x$ can be a part of a maximal $a$-block,
moreover, we assume that $x>0$, i.e.\ if the letter to the left of $a^x$ is $b \neq a$ and the same letter is to the right,
those $b$s are in different blocks.
We use the name \emph{parametrised explicit maximal blocks} with an obvious meaning.

As a first step, we begin with describing the improved version of \algprefsuff, the \algprefsuffi.
\begin{algorithm}[H]
  \caption{\algprefsuffi{} Cutting prefixes and suffixes, parametrised version \label{alg:prefixi}}
  \begin{algorithmic}[1]
	\Require prefix-suffix structure
  \For{$X \in \variables$}
		\State let $a_X$, $b_X$ be the first and last letter of \sol X \Comment{Given by the prefix-suffix structure}
		\If{$X$ is a block of letters} \Comment{According to the prefix-suffix structure}
			\State replace each $X$ in $U$ and $V$ by  $a^{x_X}$
		\Else
			\State replace each $X$ in $U$ and $V$ by  $a^{x_X} X b^{y_X}$
					\Comment{$x_X$, $y_X$ are variables}
			\If{$\sol X = \epsilon$} \Comment{Guess}
				\State remove $X$ from $U$ and $V$
			\EndIf
		\EndIf
  \EndFor
  \end{algorithmic}
\end{algorithm}

We are not going to state the exact properties of \algprefsuffi,
we shall give them collectively for \algblocksi, the improved version of \algblocksc.
For now we only note that during \algprefsuffi{} the visible parametrised blocks are changed into explicit ones.

\begin{lemma}
\label{lem: pref suff improved}
Let $\mathcal E_1$, $\mathcal E_2$, \ldots, $\mathcal E_m$ be the parametrised visible maximal blocks for a prefix-suffix structure.
Then after \algprefsuffi{} these are exactly the parametrised explicit maximal blocks.
\end{lemma}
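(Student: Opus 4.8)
The plan is to track how \algprefsuffi{} transforms the equation and argue that parametrised visible maximal blocks and parametrised explicit maximal blocks are put in bijection. Recall that a parametrised visible maximal block $\mathcal E_i$ (for the given prefix-suffix structure) is, by the analysis in Lemma~\ref{lem: small sides}, composed of three kinds of pieces read left to right: it may begin with an explicit run of $a$'s and/or a maximal $a$-suffix of some $\sol X$ (possibly all of $\sol X$ when $X$ is a block of letters), it continues in the middle with explicit $a$'s or whole variables $X$ with $\sol X\in a^+$, and it ends with an explicit run of $a$'s and/or a maximal $a$-prefix of some $\sol X$. The key observation is that \algprefsuffi{} replaces each $X$ by $a_X^{x_X}Xb_X^{y_X}$ (or by $a_X^{x_X}$ and then removes $X$ when the structure says $\sol X$ is a block of letters). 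So exactly the $a$-prefix and $b$-suffix pieces that occurred ``inside a variable'' in the definition of $\mathcal E_i$ now appear explicitly in the equation as symbols $a^{x_X}$ or $b^{y_X}$.

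First I would set up the correspondence piece by piece. For a variable occurrence whose $\sol X$ begins with $a$, the maximal $a$-prefix of that occurrence becomes the literal symbol $a^{x_X}$ immediately after the pop; symmetrically the $b$-suffix becomes $b^{y_X}$. For a variable $X$ declared to be a block of letters (first letter $a$), the whole occurrence becomes $a^{x_X}$. Explicit letters of $U=V$ are untouched by the pop. Hence, if $\mathcal E_i$ was the $i$-th parametrised visible maximal $a$-block, the concatenation of the pieces composing it becomes, after the pop, a contiguous run of symbols over $\{a\}\cup\{a^{x_Y}: Y\in\variables\}$ in the transformed equation, and (because $x_Y>0$ by convention, and because the letters bounding a block are different from $a$, so popping a different letter $b\neq a$ from an adjacent variable cannot merge two $a$-blocks) this run is maximal. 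Conversely, every parametrised explicit maximal $a$-block in the transformed equation is such a maximal run, and unwinding the pop (i.e.\ reading $a^{x_Y}$ back as the $a$-prefix/suffix or whole value of $Y$) shows it was a parametrised visible maximal block before. This gives the claimed bijection, and in fact identity once we list both families left to right.

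I would organise the write-up as a short induction over the variables (one pop at a time), maintaining the invariant ``the parametrised visible maximal blocks of the current equation-with-structure are in left-to-right correspondence with the original ones, and those involving only already-popped variables are already explicit.'' The main obstacle, and the only place requiring genuine care, is the \emph{maximality and non-merging} bookkeeping at the seams: one must check that popping $a^{x_X}$ out of $X$ does not accidentally glue it to an adjacent explicit $a$ belonging to a \emph{different} block (it cannot, because that would mean there was a letter $\neq a$ strictly between them in $\sol X$ — impossible once $a^{x_X}$ is the \emph{whole} $a$-prefix), and dually that a letter $b\neq a$ popped from a neighbouring variable genuinely separates two $a$-blocks (this is exactly the convention that $x_X,y_X\ge 1$, so the symbol $b^{y}$ is nonempty). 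Everything else is routine matching of the three structural cases in the definition of $\mathcal E_i$ with the three clauses of \algprefsuffi, plus the degenerate case where $\sol X$ is a block of letters.
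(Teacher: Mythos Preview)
Your proposal is correct and takes essentially the same approach as the paper. The paper's own proof is a single sentence declaring the result obvious (each prefix/suffix that participated in a visible block is popped and hence becomes explicit), whereas you spell out the same idea carefully, including the maximality and non-merging checks at the seams; this is just a more detailed version of the same argument, not a different route.
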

\begin{proof}
The proof is obvious: whenever a prefix (suffix) of an occurrence of $X$ took part in some parametrised visible maximal block,
we popped this prefix (suffix) from $X$ and so now it is part of a corresponding parametrised explicit maximal block.
\qedhere
\end{proof}

Now we are ready to describe the improved version of \algblocksc{} as well as its properties.

\begin{algorithm}[h]
  \caption{$\algblocksi$  \label{alg:aci}}
  \begin{algorithmic}[1]
	\State guess the prefix-suffix structure
	\State run \wordtodiophantine
  \State run \algdiophantine{} on $D$ \Comment{Check if the guessed choices can be fulfilled}
	\State run \algprefsuffi \Comment{There are no crossing blocks} \par

  	\State let $\mathcal E = \{ E_1, \ldots , E_k\}$ be the explicit maximal blocks \label{explicit maximal blocks}
		\par
		\Comment{Those are exactly the parametrised visible maximal blocks from \wordtodiophantine}
	 	\For{each $\mathcal E_i = \{E_{i_1}, \ldots, E_{i_{k_p}}\}$ returned by \wordtodiophantine}
  		\State let $a_{e_{i_1}} \in \letters$ be an unused letter
  		\For{each $E_{i_j} \in \mathcal E_i$}
				\State replace every $E_{i_j}$ by $a_{e_{i_1}}$ 
			\EndFor
  	\EndFor
	\end{algorithmic}
\end{algorithm}

\begin{lemma}[cf.~Lemma~\ref{lem: consistent no crossing block}]
\label{lem:improved block compression}
\algblocksi{} is sound.
For a solution \solution{} of $U = V$ and the nondeterministic choices that lead to a creation of
an \solution-coherent system by \wordtodiophantine{} the \algblocksi{} implements the blocks compression;
to be more precise, the obtained word equation $U' = V'$ is identical (up to renaming the letters)
to the equation obtained by \algblocksc{} when it implements the block compression for \solution.
In particular, \algblocksi{} is complete.

\algblocksi{} uses a constant time more memory than the equation $U = V$,
in particular, the additional memory usage of \algsolveeq{} when using \algblocksi{} is linear.
\end{lemma}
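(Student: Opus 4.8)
The plan is to establish the three assertions of the lemma---soundness, completeness together with the ``identical up to renaming'' claim, and the memory bound---in that order, reducing each to results already available in the excerpt.

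\textbf{Soundness.} First I would observe that \algblocksi{} is a composition of operations each of which is sound. The nondeterministic guess of a prefix-suffix structure carries no semantic commitment by itself; running \wordtodiophantine{} only produces a Diophantine system $D$ and does not touch $U=V$; running \algdiophantine{} on $D$ only rejects (when $D$ is infeasible) and so cannot turn an unsatisfiable equation into a satisfiable one; \algprefsuffi{} replaces each $X$ by $a^{x_X}Xb^{y_X}$ (or $a^{x_X}$), which is an instance of operation~(1) of Lemma~\ref{lem:preserving unsatisfiability} once the parameters are instantiated to concrete numbers, hence sound; and the final loop replaces occurrences of a fixed word (a parametrised explicit maximal block of a given length) by a fresh letter, which is operation~(2) of Lemma~\ref{lem:preserving unsatisfiability}. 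Composition of sound procedures is sound, so \algblocksi{} is sound.

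\textbf{Completeness and the structural identification.} Fix a solution \solution{} of $U=V$. By Lemma~\ref{lem: S compatible} there is a unique \solution-coherent system $D$, and for the nondeterministic choices realising it we have, writing $\ell_X, r_X$ for the lengths of the $a_X$-prefix and $b_X$-suffix of \sol X, that $\{\ell_X,r_X\}_{X\in\variables}$ is a solution of $D$; hence \algdiophantine{} does not reject (it is complete by Lemma~\ref{eq:small system memory}). Now I would run \algprefsuffi{} and argue, exactly as in Lemma~\ref{lem:cutpref cutsuff}, that instantiating $x_X\mapsto\ell_X$, $y_X\mapsto r_X$ yields an equation $U'=V'$ with solution $\solution'$ (namely $\solution'(X)=w_X$, the block-free middle of \sol X) satisfying $\sol U = \solution'(U')$. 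By Lemma~\ref{lem: pref suff improved} the parametrised visible maximal blocks of the structure have become precisely the parametrised explicit maximal blocks, and under the instantiation these are exactly the explicit maximal blocks of $\solution'(U')=\solution'(V')$, with lengths $e_i[\{\ell_X,r_X\}]= |E_i|$ by Lemma~\ref{lem: coherent solution}. Since $D$ is \solution-coherent, two expressions $e_i,e_j$ lie in the same part of the guessed partition if and only if $|E_i|=|E_j|$ and the blocks are of the same letter; hence the final loop of \algblocksi{} replaces two explicit blocks by the same fresh letter exactly when the corresponding maximal blocks of $\sol U = \sol V$ are equal. This is precisely the effect of block compression on \solution, so \algblocksi{} implements the blocks compression, and comparing with the run of \algblocksc{} on the matching nondeterministic choices (same prefix-suffix data, same equality pattern among block lengths) the two resulting equations differ only in the names assigned to the fresh letters; thus they are identical up to renaming. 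Completeness is then immediate since \algblocksc{} is complete by Lemma~\ref{lem: consistent no crossing block}.

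\textbf{Memory bound.} Finally I would bound the working space. The equation manipulated by \algblocksi{} never stores an explicit exponent: the popped prefixes and suffixes are the \emph{symbolic} powers $a^{x_X}$, $b^{y_X}$, so each such symbol costs only $\Ocomp(\log n)$ bits (a letter name plus a parameter index), the same order as an ordinary letter, and there are only $\Ocomp(n_v)$ of them. The prefix-suffix structure is $\Ocomp(n_v\log n)$ bits. The only genuinely separate data structure is the system $D$; by Lemma~\ref{lem:system is small} it is small, and by Lemma~\ref{eq:small system memory} both $D$ and the verification \algdiophantine{} fit in space $\Ocomp(m)$, linear in the size $m$ of the representation of $U=V$. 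Hence the total additional space of \algblocksi{} is a constant factor times the size of the current equation, so replacing \algblocksc{} by \algblocksi{} inside \algsolveeq{} makes the overall space consumption linear in the size of the stored equation (which is $\Ocomp(n)$ by Lemma~\ref{lem: space consumption}), as claimed.

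\textbf{Main obstacle.} The routine part is soundness and the space accounting; the delicate point is the ``identical up to renaming'' claim, because one must line up the nondeterministic choices of \algblocksi{} (a prefix-suffix structure plus a partition of parametrised blocks, validated through a Diophantine system) with those of \algblocksc{} (explicit prefix/suffix lengths, then a straightforward block compression) and check that the induced equivalence on block occurrences---hence the pattern of which occurrences receive a common fresh letter---coincides. This is exactly what Lemmata~\ref{lem: coherent solution} and~\ref{lem: S compatible} are set up to deliver, so the argument is a careful bookkeeping rather than a new idea, but it is where the proof must be written with care.
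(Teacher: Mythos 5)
Your proof is correct and follows essentially the same route as the paper's: soundness by instantiating the parameters with a solution of $D$ and reducing to the sound operations of Lemma~\ref{lem:preserving unsatisfiability}; completeness and the ``identical up to renaming'' claim by invoking the \solution-coherent system together with Lemmata~\ref{lem: S compatible}, \ref{lem: pref suff improved} and~\ref{lem: coherent solution} and matching the equality pattern of block lengths against a run of \algblocksc{} that implements the block compression; and the memory bound via Lemma~\ref{eq:small system memory}. The only minor imprecision is the sentence suggesting that running \algprefsuffi{} and instantiating the parameters already yields $U'=V'$ — the final replacement loop of \algblocksi{} is also needed before arriving at $U'=V'$ — but you address that loop in the next sentence, so the argument as a whole is sound.
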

\begin{proof}
Suppose that $\algblocksi$ applied on $U = V$ created a linear Diophantine system $D$
that has a solution $\{\ell_X, r_X\}_{X \in \variables}$.
Then we can think of $\algblocksi$ as if it replaced each $X$ with $a^{\ell_X}Xb^{r_X}$
and then replaced some blocks of the same letter and the same length with fresh letters.
Thus by Lemma~\ref{lem:preserving unsatisfiability} it is sound.

Concerning completeness and the implementation of the block compression,
we use the fact that \algblocksc{} has both those properties (Lemma~\ref{lem: consistent no crossing block}).
Suppose that $U = V$ has a solution \solution{} and consider the satisfiable instance $U' = V'$ obtained by \algblocksc{} out of $U = V$
that has a solution $\solution'$ such that $\solution'(U')$ is obtained from \sol U by compressing blocks of letters
(by Lemma~\ref{lem: consistent no crossing block} we know that for some non-deterministic choices
indeed \algblocksc{} returns such an equation).
We show that the run of \algblocksi{} in which \wordtodiophantine{} returns the \solution-coherent
system $D$ returns $U' = V'$ (up to renaming letters), which will end the proof.
In the following, let $\ell_X$ and $r_X$ be the length of the prefix and suffix popped from $X$ by \algprefsuff,
by Lemma~\ref{lem: consistent no crossing block} we know that we can restrict ourselves to the case
when $\ell_X$ is the length of the $a_X$ prefix of \sol X and $r_X$ of the $b_X$ suffix of \sol X.

Concerning the corresponding \algblocksi,
consider the non-deterministic choices for which the \wordtodiophantine{}
returns a small Diophantine system that is \solution-coherent:
by Lemma~\ref{lem: S compatible} such a system exists and it is satisfiable.
Let $\mathcal E_1$, $\mathcal E_2$, \ldots, $\mathcal E_k$ be the consecutive parametrised visible maximal blocks in $U = V$
and $E_1$, $E_2$, \ldots, $E_k$ be the visible maximal blocks in $U  = V$ for \solution.
By Lemma~\ref{lem: S compatible} the $\mathcal E_i$ and $E_i$ are blocks of the same letter and $k = k'$
Consider, what happens with the former blocks when we apply \algprefsuffi:
they become the parametrised explicit maximal blocks, see Lemma~\ref{lem: pref suff improved}.
Similarly, the $E_1'$, $E_2'$, \ldots, $E_k'$ become explicit blocks when \algprefsuff{} is applied on them,
as we pop the $a_X$-prefix and $b_X$ suffix from each variable.
Now, $E_i$ and $E_j$ are replaced with the same letter by \algblocksi{}
if and only if $e_i$ and $e_j$ are equalised in $D$ (note that not necessarily $e_i = e_j$
is in $D$, but it contains equation that imply this, i.e.\ a sequence of equations $e_i =e_{i_1}$, $e_{i_1} = e_{i_2}$, \ldots, $e_{i_m} = e_j$).
By definition of the \solution-coherent system this happens if and only if $|E_i| = |E_j|$.
Hence $\mathcal E_i$ and $\mathcal E_j$ are replaced with the same letter by \algblocksi{} if and only if $E_i$ and $E_j$ are by \algblocksc.
Which ends the proof for the second claim.

Concerning the memory consumption, observe that by Lemma~\ref{eq:small system memory},
the linear Diophantine system $D$ can be encoded using only a constant more bits than the word equation
and the same space can be used to verify the satisfiability of the system.
All other operations can be easily implemented in the same memory bounds.
\qedhere
\end{proof}

\subsection*{Similar solutions}
Thanks to Lemma~\ref{lem: S compatible} we know that each solution \solution{} has a corresponding system of Diophantine equations
(the \solution-coherent one) and that the lengths $\{ \ell_X, r_X\}_{X \in \variables}$ of the $a$-prefixes and $b$-suffixes of \solution{}
are a solution of the \solution-coherent system.
Still, there are two questions: on one hand for a given system $D$ we know nothing about letters in \sol X
that are not in the $a_X$-prefix nor in the $b_X$-suffix of \sol X.
Moreover, other solutions of $D$ should also induce a solution of a word equation.
In this section we investigate the relations between all such induced solutions of the word equation.
Intuitively, different solutions $\solution'$ of $U = V$ induced in this way
differ from \solution{} in lengths of maximal blocks in $\solution'(U)$.

We say that two words $w$ and $w'$ are \emph{similar},
if $w = E_1E_2 \dots E_k$ and $w' = E_1'E_2' \dots E_k'$,
where for each $i$ the $E_i$ and $E_i'$ are non-empty blocks of the same letter, i.e.\ for some $a$ we have $E_i,E_i' \in a^+$,
and they are maximal blocks in $w$ and $w'$, respectively, i.e.\ $E_{i-1}$ and $E_{i+1}$ as well as $E_{i-1}'$ and $E_{i+1}'$
are blocks of some other letters.
Two substitutions \solution{} and $\solution'$ are \emph{similar}, if for every variable $X$ the \sol X and $\solution'(X)$ are similar.
Note that from the definition it follows that if \solution{} and $\solution'$ are similar
than they have the same coherent prefix-suffix structure.

If \solution{} and $\solution'$ are similar then also $\sol U$ and $\solution'(U)$ are.

\begin{lemma}
	\label{lem: crossing for similar}
Let \solution{} and $\solution'$ be similar solutions of a word equation $U = V$.
Then \sol U and $\solution'(U)$ are similar.

Consider representation of \sol U and $\solution'(U)$ as concatenation of maximal blocks
$E_1$, $E_2$, \ldots, $E_k$, and $E_1'$, $E_2'$, \ldots, $E_{k'}'$ respectively.
Then for each $i$ the $E_i$ is a crossing (visible) block if and only if $E_i'$ is.
\end{lemma}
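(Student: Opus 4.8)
The plan is to reduce both assertions to one combinatorial description of the maximal-block decomposition of $\sol U$ in terms of the pieces from which it is assembled. Write $U = u_0 X_1 u_1 X_2 \cdots X_m u_m$, where $X_1,\dots,X_m$ are the successive \emph{occurrences} of variables in $U$ (so a variable may occur several times) and $u_0,\dots,u_m \in \letters^*$ are the intervening explicit chunks, some possibly empty; then $\sol U = u_0\,\sol{X_1}\,u_1\cdots\sol{X_m}\,u_m$ and $\solution'(U) = u_0\,\solution'(X_1)\,u_1\cdots\solution'(X_m)\,u_m$. For $w\in\letters^+$ let $\pi(w)$ denote its sequence of block letters, so that $w$ and $w'$ are similar iff $\pi(w)=\pi(w')$. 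By hypothesis $\pi(\sol{X})=\pi(\solution'(X))$ for every variable, and, as noted in the text, similar substitutions share a coherent prefix--suffix structure; in particular $\sol X$ and $\solution'(X)$ have the same first and last letter and the same ``is a single block'' status.

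First I would record the elementary fact, proved by induction on $|W|$, that for any $W\in(\letters\cup\variables)^*$ the block-letter word $\pi(\sol W)$ is obtained from $\Psi_\solution(W)$ — the string obtained from $W$ by leaving explicit letters untouched and replacing each variable occurrence $X$ by $\pi(\sol X)$ — by collapsing every maximal run of a repeated letter to a single copy (appending one letter, or one block of a variable, either lengthens the current trailing run or opens a new one). Applying this to $U$: explicit letters are untouched and $\pi(\sol{X_i})=\pi(\solution'(X_i))$, so $\Psi_\solution(U)=\Psi_{\solution'}(U)$ as \emph{annotated} strings, each symbol carrying its unambiguous source — a fixed explicit letter of some chunk $u_j$, or the $s$-th block letter of the occurrence $X_i$. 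Collapsing yields $\pi(\sol U)=\pi(\solution'(U))$, i.e.\ $\sol U$ and $\solution'(U)$ are similar (all blocks are non-empty), and in particular $k=k'$. It also yields a canonical order-preserving correspondence: the $i$-th maximal block $E_i$ of $\sol U$ comes from the $i$-th maximal run $R_i$ of a repeated letter in the common annotated string, call it $\Psi(U)$, and the same run $R_i$ produces $E_i'$ in $\solution'(U)$.

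The second part is then a three-way case analysis on $R_i$, noting each time that whether the occurrence $E_i$ is explicit, implicit, crossing or visible depends only on $R_i$ and its annotations, hence is the same for $E_i$ and $E_i'$. If every symbol of $R_i$ is an explicit letter, then (two adjacent explicit symbols of $\Psi(U)$ necessarily belong to the same chunk) they all lie in one chunk $u_j$, so $E_i$ is a substring of $u_j\subseteq U$, i.e.\ explicit; so is $E_i'$. If $R_i$ is a single symbol equal to the $s$-th block letter of an occurrence $X_i$, then $E_i$ is implicit; it is a prefix of $\sol{X_i}$ exactly when $s=1$ and a suffix exactly when $s$ is the last index of $\pi(\sol{X_i})$, indices that are unchanged for $\solution'$, so $E_i$ is a prefix/suffix of $\solution'(X_i)$ accordingly. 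In the remaining case $R_i$ contains a variable-block symbol and has length at least two; since two symbols of $\Psi(U)$ taken from the same variable occurrence are always distinct letters, $R_i$ must then also contain a symbol from another piece, so $E_i$ lies in no single chunk and in no single variable occurrence, i.e.\ $E_i$ is crossing, and $E_i'$ (built from the same $R_i$, whose variable blocks have non-empty images under $\solution'$) is crossing for the same reason. Since ``visible'' means ``explicit, or crossing, or a prefix/suffix of some $\sol X$'', the three cases also settle the visibility equivalence.

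I expect the only real obstacle to be organisational: making fully precise that the maximal-block decomposition of $\sol U$ is recovered from the annotated string $\Psi(U)$ by run-collapsing, and that the annotations are literally the same when $\solution$ is replaced by $\solution'$. The delicate points are the boundary cases — empty chunks $u_j=\epsilon$, adjacent variable occurrences $X_iX_{i+1}$, and variable occurrences at the very ends of $U$ — which must be handled so that ``$E_i$ and $E_i'$ meet exactly the same pieces in exactly the same way'' holds on the nose; the inductive statement above is the cleanest way to obtain this, and everything else is routine.
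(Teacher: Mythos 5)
Your proof is correct and follows essentially the same route as the paper's: both establish the blockwise correspondence between $\sol U$ and $\solution'(U)$ by observing that the maximal-block structure is determined by the sequence of block letters contributed by each explicit chunk and each $\sol X$ (the paper asserts this tersely; you make it precise via the annotated string $\Psi$ and run-collapsing), and then read off the crossing/visible status of $E_i$ from which pieces the corresponding run touches — exactly the paper's observation that visibility of $E_i$ amounts to containing an explicit letter or a leading/ending block $F_j$ of some $\sol X$, and this is mirrored by $E_i'$ and $F_j'$.
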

\begin{proof}
Concerning the first claim:
since \solution{} and $\solution'$ are similar,
for each variable the \sol X and $\solution'(X)$ can be represented as $F_1\ldots F_m$ and $F_1'\ldots F_m'$,
where each $F$ and $F'$ are maximal blocks of letters and $F_i$ and $F_i'$ are blocks of the same letter.
Now each $E_i$ and $E_i'$ consist of corresponding explicit letters as well as corresponding blocks,
in particular, $E_i$ includes some $F_j$ from \sol X if and only if $E_i'$ includes some $F_j'$ from $\solution'(X)$.

Concerning the second claim, we show it for the visible case, the proof is the same for the crossing blocks.
By symmetry it is enough to show that when $E_i$ is visible (crossing) then also $E_i'$ is.
We use the same observation, as before: note that $E_i$ is visible, when it contains an explicit letter
or a leading (or ending) block $F_j$ of letters from some \sol X.
But then the same happens for $E_i'$ and $F_j'$.
\qedhere
\end{proof}

Now, given a solution \solution{} of a word equation $U = V$ and its \solution-coherent system $D$ of Diophantine equations
we shall define a class of solutions of $U = V$, all such solutions will be similar.
Each such a solution $\solution'$ is uniquely defined by one solution $\{\ell_X',r_X'\}_{X \in \variables}$ of $D$,
to stress it we  denote it by $\solution[\{\ell_X',r_X'\}_{X \in \variables}]$.
When $\{\ell_X,r_X\}_{X \in \variables}$ are the lengths of $a_X$-prefixes and $b_X$-suffixes of \sol X (for each $X$),
the construction shall guarantee that $\solution[\{\ell_X,r_X\}_{X \in \variables}] = \solution$.

Consider a variable $X$ and its representation as maximal blocks $F_1F_2 \ldots F_k$ of \sol X.
Since $\solution[\{\ell_X',r_X'\}_{X \in \variables}]$ is to be similar with \solution,
$\solution[\{\ell_X',r_X'\}_{X \in \variables}](X)$ is defined as $F_1'F_2' \ldots F_k'$, where $F_i$ and $F_i'$
are blocks of the same letter. It is left to define the lengths of $F_1'$, $F_2'$, \ldots, $F_k'$
with respect to $\{\ell_X',r_X'\}_{X \in \variables}$.
Let $e_1$, $e_2$, \ldots, $e_i$ be the length of the parametrised visible maximal blocks of the prefix-suffix structure
that is coherent with \solution.

Consider any solution $\{\ell_X', r_X'\}_{X \in \variables}$ of $D$ and blocks $F_i$ and $F_i'$ in \sol X and
$\solution[\{\ell_X',r_X'\}_{X \in \variables}](X)$, respectively.
There are three cases:
\begin{enumerate}[(L 1)]
	\item  \label{L 1} $F_i$ is a prefix of suffix of \sol X (and so also $F_i'$ is for $\solution'(X)$).
	Then the length of $F_i$ is $\ell_X$ when it is a prefix (or $r_X$ when it is a suffix)
	and we set the length of $F_i'$ to $\ell_X'$ (or $r_X$, respectively).
	\item  \label{L 2} $F_i$ is not the prefix nor the suffix but it has visible length 
	(so $F_i'$ is also not a prefix nor a suffix and has visible length, by Lemma~\ref{lem: crossing for similar});
	Let $E_{i'}$ be a visible block (in \sol U or \sol V) such that $|E_{i'}| = |F_i|$,
	by definition of a visible length such a block exists.
	By Lemma~\ref{lem: S compatible} we know that $|E_{i'}| = e_{i'}[\{\ell_X,r_X\}_{X \in \variables}]$
	and so we set $|F_i'|$ to $e_i[\{\ell_X',r_X'\}_{X \in \variables}]$.
	\item  \label{L 3} $F_i$ is not the prefix nor the suffix and does not have a visible length (so the same applies to $E_i'$).
	In this case we simply give $F_i'$ the same length as $F_i$.
\end{enumerate}
It remains to check the validity of the construction.

\begin{lemma}
\label{lem:diophantine solution word solution}
Given a solution \solution{} of a word equation $U = V$ and the \solution-coherent Diophantine system $D$,
for each solution $\{\ell_X',r_X'\}_{X \in \variables}$ the corresponding $\solution[\{\ell_X',r_X'\}_{X \in \variables}]$
is a solution of $U = V$, which is similar to \solution.

Furthermore, for any variable $X$ we can give an arithmetic expression $e_X$ in variables $\{x_X,y_X\}_{X \in \variables}$
such that $|\solution'(X)| = e_X [\{\ell_X',r_X'\}_{X \in \variables}]$ and $e_X$ depends on $x_X$ and $y_X$
(if the latter exists).
\end{lemma}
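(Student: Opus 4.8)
The plan is to verify the three assertions in turn. Write $\solution' := \solution[\{\ell_X',r_X'\}_{X\in\variables}]$, and for a variable $X$ occurring in $U = V$ let $\sol X = F_1F_2\cdots F_k$ be its decomposition into maximal blocks (so consecutive $F_i$'s are blocks of distinct letters and $k\geq 1$, as $\sol X\neq\epsilon$) and $\solution'(X) = F_1'F_2'\cdots F_k'$ the word produced by the construction preceding the lemma, with $F_i'$ a block of the same letter as $F_i$. First I would check that $|F_i'|\geq 1$ for every $i$, so that $\solution'(X)\neq\epsilon$ and $F_1'\cdots F_k'$ is genuinely the maximal-block decomposition of $\solution'(X)$: in case \Lref{1} the length is $\ell_X'$ or $r_X'$, which are $\geq 1$ because $D$ contains the inequalities $x_X\geq 1$, $y_X\geq 1$ and $\{\ell_X',r_X'\}$ solves $D$; in case \Lref{2} it is $e_{i'}[\{\ell_X',r_X'\}]$ for the parametrised length $e_{i'}$ of a visible maximal block (in $\sol U$ or $\sol V$), which is a non-zero linear form with natural coefficients and hence $\geq 1$ on arguments $\geq 1$; in case \Lref{3} it is $|F_i|\geq 1$. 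I would also note that \Lref{2} is unambiguous: two visible maximal blocks of $\sol U = \sol V$ with the same letter and the same length correspond to parametrised blocks lying in one part of the partition underlying the $\solution$-coherent system $D$, so $D$ chains their parametrised lengths and these take equal values on every solution of $D$. Thus $\solution'$ is a substitution with $\solution'(X)\neq\epsilon$ for every $X$ in the equation, and $\solution'$ is similar to $\solution$ by construction.

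The core step is $\solution'(U) = \solution'(V)$. Substituting the similar values $\solution'(X)$ and $\sol X$ into the same word $U$ yields similar words: writing $\sol U = G_1\cdots G_p$ and $\solution'(U) = G_1'\cdots G_p'$ as maximal-block decompositions, each $G_j'$ is obtained from $G_j$ by replacing every constituent block $F_i$ coming from a variable value by $F_i'$ (of the same letter) and leaving the explicit letters fixed — the merge pattern between consecutive pieces is unchanged because priming preserves letters; and if $G_j$ is visible from the $U$-side then it is built only from explicit letters and boundary blocks (prefixes/suffixes, or whole single-block values) of variable occurrences in $U$, since a middle block of a variable value is always a standalone maximal block, so $|G_j'| = e_i[\{\ell_X',r_X'\}]$ for the parametrised visible maximal block $e_i$ read off $U$ at that position. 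Doing the same for $V$ gives $\sol V = H_1\cdots H_q$, $\solution'(V) = H_1'\cdots H_q'$. From $\sol U = \sol V$ and uniqueness of maximal-block decompositions, $p = q$ and $G_j = H_j$ for all $j$, so $G_j'$ and $H_j'$ are blocks of the same letter and it remains to prove $|G_j'| = |H_j'|$. Let $a$ be the letter of $G_j$. Inspecting position $j$ from each side, that side is either visible there (touches an explicit letter, crosses a variable boundary, or is a prefix/suffix of a variable value of that side) or is a middle block $F$ of some $\sol X$ of that side, in which case the corresponding primed length is assigned by \Lref{2} or \Lref{3}. If $a^{|G_j|}$ has a visible occurrence in $\sol U = \sol V$, then on each side position $j$ falls under ``visible'' or under \Lref{2}, and in either case the corresponding primed length equals $e[\{\ell_X',r_X'\}]$ for some parametrised visible maximal block $e$ of letter $a$ and length $|G_j|$; all such $e$ lie in one part of the partition of the $\solution$-coherent $D$, hence agree on $\{\ell_X',r_X'\}$, so $|G_j'| = |H_j'|$. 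If $a^{|G_j|}$ has no visible occurrence, then position $j$ is a middle block with no visible length on each side, so both fall under \Lref{3} and $|G_j'| = |G_j| = |H_j'|$. In all cases $G_j' = H_j'$, whence $\solution'(U) = G_1'\cdots G_p' = H_1'\cdots H_p' = \solution'(V)$, and $\solution'$ is a solution of $U = V$ similar to $\solution$.

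For the second assertion I would set $e_X := \sum_{i=1}^{k} c_i$, where $c_i$ is the contribution recorded by the construction of $\solution'(X)$ for $F_i$: $x_X$ if $F_i$ is the $a_X$-prefix of $\sol X$ (and also when $\sol X$ is a single block, where $y_X$ is not used), $y_X$ if $F_i$ is the $b_X$-suffix, the parametrised length $e_{i'}$ of the matching visible maximal block in case \Lref{2}, and the constant $|F_i|$ in case \Lref{3}. By construction $|\solution'(X)| = \sum_i|F_i'| = e_X[\{\ell_X',r_X'\}_{X\in\variables}]$; $e_X$ is a linear form with natural coefficients, and since $F_1$ contributes $x_X$ and $F_k$ contributes $y_X$ (the latter precisely when $\sol X$ is not a single block) each with coefficient $1$, while no contribution is negative, the coefficient of $x_X$ in $e_X$ is at least $1$ and that of $y_X$ is at least $1$ when $y_X$ exists; so $e_X$ depends on $x_X$, and on $y_X$ if the latter exists.

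I expect the delicate point to be the middle case of the length-matching argument: a single maximal block of $\sol U = \sol V$ can be visible from the $U$-side yet lie entirely inside a variable value on the $V$-side (or the reverse), so the parametrised length read off on the visible side and the value assigned by \Lref{2} (or \Lref{3}) on the other side must be shown to agree for every solution of $D$. This is exactly where $\solution$-coherence of $D$ — which by construction equalises the parametrised lengths of all maximal blocks equal in $\sol U = \sol V$ — is used, and it is also why the choice of visible block in \Lref{2} does not matter.
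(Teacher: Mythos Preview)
Your proof is correct and follows essentially the same approach as the paper's: a case analysis on whether each maximal block has a visible length, using the $\solution$-coherence of $D$ to equate the resulting parametrised lengths across the two sides. You are more explicit than the paper in separating the $U$-side and $V$-side decompositions (the paper folds this into a single list $E_1,\ldots,E_k$ for $\sol U$ and argues that $|E_i|=|E_j|$ implies $|E_i'|=|E_j'|$, relying on Lemma~\ref{lem: crossing for similar} for the similarity and visibility-preservation that you re-derive inline), and you add the well-definedness check $|F_i'|\geq 1$ and the unambiguity of \Lref{2}, which the paper leaves implicit.
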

\begin{proof}
Let $E_1,\ldots, E_k$ be a representation of \sol U as a concatenation of maximal blocks
and $E_1',\ldots, E_k'$ a representation of $\solution'(U)$.
Since by Lemma~\ref{lem: crossing for similar} the $\solution'(U)$ and \sol U are similar,
to show that $\solution'$ is a solution of $U = V$ it is enough to show that $|E_i| = |E_j|$ then also $|E_i'| = |E_j'|$,
and the rest follows by a simple induction.

Consider a maximal block $E_i$.
There are three cases:
\begin{description}
	\item[visible]
	It is visible. Then by Lemma~\ref{lem: crossing for similar} also $E_i'$ is visible.
	Furthermore, by Lemma~\ref{lem: coherent solution} the length of $E_i'$ is $e_i[\{\ell_X',r_X'\}_{X \in \variables}]$.
	\item[invisible with visible length]
	It is invisible but has a visible length, so also $E_i'$ is invisible, by Lemma~\ref{lem: crossing for similar}.
	Then $E_i'$ is a block in some $\solution'(X)$ that is not a prefix not a suffix of $\solution'(X)$.
	Then by~\Lref{2} its length is $e_{i'}[\{\ell_X',r_X'\}_{X \in \variables}]$,
	where $E_{i'}$ is a visible block such that $|E_{i'}| = |E_i|$.
	In particular, in the previous case it was shown that $|E_{i'}'| = e_{i'}[\{\ell_X',r_X'\}_{X \in \variables}]$
	and so $|E_{i'}'| = |E_i'|$.
	\item[invisible length]
	It has an invisible length, in particular, it is invisible.
	Then by Lemma~\ref{lem: crossing for similar} also $E_i'$ is invisible and so by~\Lref{3} it has length	$|E_i|$
\end{description}

Now, consider some $E_i'$ and $E_j'$ that are blocks of the same letter and such that $|E_i| = |E_j|$.
There are two possibilities: $|E_i|$ is a visible length or it is an invisible length.
If it is an invisible length, then it was shown already that $|E_i'| = |E_i|$ and $|E_j'| = |E_j|$
and hence $|E_i'| = |E_j'|$ as claimed.
If it is a visible length, then it was already shown that $|E_i'| = e_{i'}[\{\ell_X',r_X'\}_{X \in \variables}]$
and $|E_j'| = e_{j'}[\{\ell_X',r_X'\}_{X \in \variables}]$,
where $e_{i'}$ and $e_{j'}$ are such that $|E_i| = |E_{i'}|$ and $|E_j| = |E_{j'}|$
(note that it might be that $i = i'$ or that $i \neq i'$ and similarly for $j$ and $j'$).
Then $|E_{i'}| = |E_{j'}|$ and so the equality $e_{i'} = e_{j'}$ follows from the system $D$
(i.e.\ there is a sequence of equations $e_{i'} = e_{i_1}, e_{i_1} = e_{i_2}, \ldots, e_{i_p} = e_{j'}$).
As $\{\ell_X',r_X'\}_{X \in \variables}$ is a solution of $D$,
we conclude that $e_{i'}[\{\ell_X',r_X'\}_{X \in \variables}] = e_{j'}[\{\ell_X',r_X'\}_{X \in \variables}]$,
and so $|E_i'| = |E_j'|$.

It is left to show the second claim, concerning the existence of an arithmetic expression for $e_X$.
This is obvious by the definition of $\solution'$:
let $\sol X = F_1F_2\ldots F_m$, where each $F_i$ is a maximal block.
Then $F_1$ has length $\ell_X$, $F_m$ length $r_X$ (so we add $x_X$ and $y_X$ to $e_X$),
when $F_i$ has invisible length then $F_i'$ has length $|F_i|$ (so we add a constant $|F_i|$ to $e_X$)
and if it has a visible length, then the length is expressed as some $e_{i}[\{\ell_X',r_X'\}_{X \in \variables}]$
(so we add $e_i$ to $e_X$).
In the end, $e_X$ is the sum of all such arithmetic expressions for $|F_1|$, $|F_2|$, \ldots, $|F_m|$.
\qedhere
\end{proof}

\section{Linear space for $\Ocomp(1)$ variables}
\label{sec:linear space}
\subsection*{Idea}
As already shown, the length of the word equation kept by \algsolveeq{} is linear, see Lemma~\ref{lem: space consumption}
and the additional space consumption of \algsolveeq{} is proportional to the storage size of the current equation,
see Lemma~\ref{lem:improved block compression}.
However, the letters in this equation can be all different, even if the input equation
is over two letters.
Hence the upper bound on the space usage that we can give is (nondeterministic) $\Ocomp(n \log n)$ bits.
We would like to improve the space consumption to linear;
to be more precise, we would like the space consumption to be $\Ocomp(m)$ bits,
where the input equation used $m$ bits in a natural encoding.%
\footnote{The proofs given in this section work assuming that each occurrence of a letter (variable) in the input
is always given using the same bit representation, however, it is \emph{not} assumed that all letters
and/or variables have the representations of the same length,
in particular the presented method works also when the input equation is compressed using Huffman coding.
}
We fail in a general case, such a bound is shown only for $\Ocomp(1)$ variables
(although it holds for arbitrary many occurrences of these variables in the equation, i.e.\ $n_v$ is not bounded
and the alphabet size is arbitrary).

The main obstacle is the encoding of letters introduced by \algsolveeq.
We show that when we look at the computation of \algsolveeq{} that do not remove
the variables from the equation, the space consumption can be limited to $\Ocomp(m)$,
where $m$ is the storage size (calculated in bits) of the equation at the beginning of the stage.
Then for $k = \Ocomp(1)$ variables we can consider $k$ stages of \algsolveeq,
a stage ends when a variable is removed from the equation.
In this way the space consumption will be estimated by $c^k m$ bits, which is linear for a constant $k$.

\subsubsection*{Encoding of letters}
Consider string of explicit letters between two consecutive variables
$X$ and $Y$ in $U = V$, together with the variables.
During \algsolveeq{} the $XwY$ will be changed to $Xw^{(1)}Y$, $Xw^{(2)}Y$, \ldots.
Observe, that each $w^{(i)}$ can be partitioned into $3$ substrings $x^{(i)}v^{(i)}y^{(i)}$,
where the letters in $v^{(i)}$ represent solely the letters from $w$, while
each letter in $x^{(i)}$ ($y^{(i)}$) represent also some letter popped at some point from $X$
($Y$, respectively).
It is easy to encode $v^{(i)}$ using only a constant time more bits than $w$:
we represent letters as trees and when merging $a$ and $b$ into $c$, the tree
of $c$ has the tree of $a$ as a left subtree and a tree of $b$ as a right subtree;
using any usual encoding the size of such representation is only constant times larger
than the original text $w$.

On the other hand, the letters in $x^{(i)}$ and $y^{(i)}$ depend solely on $XwY$, so we simply
encode them as $(XwY)1$, $(XwY)2$, \ldots, $(XwY)(|x^{(i)}|+|y^{(i)}|)$,
where `$(XwY)$' is encoded exactly as it was in the input equation while the following numbers are encoded in binary.
Note that the same code `$(XwY)1$' is (usually) used in each phase, but it denotes different letters in the respective phases.

\subsubsection*{Compressing all pairs}
In this way different occurrences of the same letter $a$ may get different codes:
in such case we collect the codes for $a$ and add the information that they all represent the same letter.

However, this approach raises a new concern: it might be that the length $|x^{(i)}|+|y^{(i)}|$ is non-constant:
\algsolveeq{} only guarantees that the length of the whole $|U| + |V|$ is $\Ocomp(n)$,
but some fragments (i.e.\ explicit words between variables) may become large.
However, for $\Ocomp(1)$ variables this can be solved easily, as we can enforce that in one phase
\emph{each} pair of consecutive letters is compressed: firstly, a simple preprocessing (to be precise, $\algpop(\letters,\letters)$)
ensures that there are only $\Ocomp(k)$ crossing pairs, where $k$ is a number of variables.
Then non-crossing pairs are compressed separately (not causing any increase of size of the kept word equations)
and each of the crossing pair $ab$ is compressed using $\algpairc(\{a\},\{b\})$.

\begin{algorithm}[H]
	\caption{\algsolveeqlin{} Checking the satisfiability of a word equation in linear space for $\Ocomp(1)$ variables}
	\label{alg:main linear}
	\begin{algorithmic}[1]
	\While{$|U| > 1$ or $|V| > 1$}
  		\State \algblocksi{} \label{block compression improved} \Comment{Block compression}
  		\State $\algpop(\letters,\letters)$ \Comment{The number of crossing pairs is $\Ocomp(k)$} \label{pop all}
  		\State $P \gets $ list of non-crossing pairs \label{guess noncrossing pairs} \Comment{Guess}
  		\State $P' \gets $ list of crossing pairs \label{guess crossing pairs} \Comment{Guess, at most $2k$ pairs}
  		\For{$ab \in P$}
  			\State run $\algpair(a,b)$
  		\EndFor
  		 \For{$ab \in P'$} \Comment{$P' \leq 2 k$}
  			\State $\algpairc(\{a\},\{b\})$ 
  		\EndFor
	\EndWhile
	\State Solve the problem naively
		\Comment{With sides of length $1$, the problem is trivial}
 \end{algorithmic}
\end{algorithm}

The properties of \algsolveeqlin{} are summarised in the below theorem.

\begin{theorem}
\label{thm:constant number of variables}
\algsolveeqlin{} is sound and complete.
For $k$ variables, it runs in (nondeterministic) space of $\Ocomp(m k^{ck})$ bits, for some constant $c$,
where $m$ is the space consumption (measured in bits) of the input word equation.
\end{theorem}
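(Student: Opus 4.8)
\emph{Soundness, completeness and termination.} \algsolveeqlin{} is a composition of \algblocksi, $\algpop$, \algpair{} and \algpairc, all of which are sound and complete (Lemmas~\ref{lem:improved block compression}, \ref{lem:pop preserves solutions}, \ref{lem:paircomp blockcomp}, \ref{lem:crossing non crossing}), so the whole algorithm is sound and complete once it is shown to halt. To bound the number of phases I would argue as in Lemma~\ref{lem:logM iterations}: after \algblocksi{} no two consecutive letters coincide, and then \emph{every} remaining consecutive pair $ab$ (with $a\neq b$) gets compressed, being a member of the non-crossing list $P$ or of the crossing list $P'$; hence, modulo the usual constant-factor loss caused by compressing one pair at a time (interference as in $abab$, handled exactly as in Claim~\ref{clm:words are compressed}), each explicit string is shortened by a constant factor in each phase. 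For this to be meaningful $P'$ must be short, which is the claim stated before the algorithm: after one call $\algpop(\letters,\letters)$ there are only $\Ocomp(k)$ crossing pairs, since every crossing pair is witnessed by one of the $\le k$ variables — a letter to the left of some $X$ together with the first letter of $\sol X$, the symmetric suffix case, or a pair straddling two adjacent variables — and after a single $\algpop(\letters,\letters)$ each variable witnesses only $\Ocomp(1)$ of them. Thus the loop runs for $\Ocomp(\log N)$ phases on a satisfiable equation with a length-minimal solution of size $N$, and an explicit step counter forces termination in general.

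\emph{Encoding the equation inside one stage.} Call a maximal part of the computation during which no variable leaves the equation a \emph{stage}; let $m'$ be the bit-size of the equation at the start of the stage and $k'\le k$ the number of variables then present (by Lemma~\ref{lem: space consumption}, adapted to the extra popping of \algsolveeqlin, the equation length stays linear, up to a factor $\Ocomp(k')$, throughout the stage). Cut the current equation at the variable occurrences into fragments; writing the $i$-th snapshot of the fragment between occurrences of $X$ and $Y$ as $w^{(i)}=x^{(i)}v^{(i)}y^{(i)}$, the part $v^{(i)}$ collects the (merged) descendants of the letters present in that fragment at the start of the stage, while $x^{(i)}$ and $y^{(i)}$ collect descendants of letters later popped from $X$ and $Y$, so $x^{(0)}=y^{(0)}=\epsilon$. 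I would encode each current letter by the binary tree recording how it was built up, its leaves labelled either by a start-of-stage letter or by a pair (identifier of the fragment, index of the corresponding popped letter within $x^{(i)}y^{(i)}$ in binary), and would additionally keep for every occurring name a pointer to a canonical representative so that differently named copies of the same letter are identified. The start-of-stage leaves occurring inside a fragment are pairwise disjoint, so their tree encodings sum to $\Ocomp(m')$ bits over all fragments; the fragment identifiers likewise sum to $\Ocomp(m')$; the block-length counters introduced by \algprefsuffi{} take $\Ocomp(m')$ bits (using the exponent-of-periodicity bound as in Lemma~\ref{lem:improved block compression}); \algdiophantine{} runs in $\Ocomp(m')$ bits by Lemma~\ref{eq:small system memory}; and the canonical-name table costs $\Ocomp(1)$ pointers per occurring letter.

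\emph{The key point and the chaining.} What is left is to bound $|x^{(i)}|+|y^{(i)}|$, since a binary index costs $\Ocomp(\log(|x^{(i)}|+|y^{(i)}|))$ bits. Because after \algblocksi{} no two adjacent letters are equal and \algsolveeqlin{} then compresses \emph{all} pairs, each fragment is (roughly) halved in each phase, while it receives only $\Ocomp(k')$ fresh symbols per phase (one prefix/suffix block per adjacent variable from \algblocksi, one letter per adjacent variable from $\algpop(\letters,\letters)$, and one letter per adjacent variable for each of the $\Ocomp(k')$ crossing pairs fed to \algpairc). Hence $g^{(i)}:=|x^{(i)}|+|y^{(i)}|$ satisfies $g^{(i+1)}\le \tfrac12 g^{(i)}+\Ocomp(k')$ with $g^{(0)}=0$, so $g^{(i)}=\Ocomp(k')$ throughout the stage; consequently the popped-letter part of every fragment costs only $\Ocomp(k'\log k')$ extra bits, and, with the $\le k'$ variables added, the whole equation is stored in at most $C(k')\,m'$ bits for some $C(k')=\poly(k')$, and \algdiophantine{} within the same bound. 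Finally, the equation at the start of a stage equals (up to the above renaming) the one at the end of the previous stage, whose bit-size is at most $C(k')$ times the bit-size at the start of that stage (removing a variable only shrinks the equation); iterating over the $\le k$ stages gives bit-size at most $C(k)^{k}m=\Ocomp(m\,k^{ck})$ for a suitable $c$.

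\emph{Main obstacle.} The only genuinely delicate step — and the sole place where $k=\Ocomp(1)$ is really used, apart from the trivial chaining — is the bound $g^{(i)}=\Ocomp(k')$: one must verify that block compression leaves no two adjacent equal letters, that compressing the pairs one at a time still halves each fragment despite interference, and that the influx of popped symbols into a fragment in one phase is $\Ocomp(k')$ rather than accumulating; this is exactly why the algorithm first runs $\algpop(\letters,\letters)$ (capping the crossing pairs at $\Ocomp(k)$) and then compresses every pair. Everything else is routine bookkeeping.
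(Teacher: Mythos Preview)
Your proposal is correct and follows essentially the same route as the paper: stages delimited by variable removal, the inner/outer letter distinction with tree encoding for inner letters, the crucial $\Ocomp(k)$ bound on outer-letter blocks via the shrink-plus-$\Ocomp(k)$-influx recurrence (the paper's Claim~\ref{clm:words are better compressed} and Lemma~\ref{lem: k outer letters}), and the multiplicative chaining over at most $k$ stages. One small imprecision worth fixing: your uniform \emph{tree} encoding applied to outer letters would let popped-token leaves accumulate across phases without bound; the paper instead encodes each outer letter directly as $(XwY)i$ with $i$ its index in the current $\Ocomp(k)$-size block (legitimate because, by Lemma~\ref{lem: context determines letters}, the block depends only on $XwY$), which is in fact what your cost estimate $\Ocomp(k'\log k')$ per fragment already presupposes.
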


For the input equation $U = V$ define consecutive \emph{stages}:
a stage ends immediately when one variable is removed
from the kept equation. Then the next stage starts instantly afterwards.
In this way there are at most $k+1$ stages.

We begin with showing the correctness of \algsolveeqlin, the proof is a slight modification
of the proof of correctness of \algsolveeq, see Lemma~\ref{lem:correctness of main}.

\begin{lemma}
\label{lem:linear correctness}
\algsolveeqlin{} is sound and complete.
The kept equation has length $\Ocomp(k n)$ in one stage,
where equation at the beginning of the stage has length $n$.
\end{lemma}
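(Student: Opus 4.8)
The plan is to split the statement into two independent parts: (i) soundness and completeness of \algsolveeqlin, and (ii) the length bound $\Ocomp(kn)$ on the kept equation during a single stage. For part (i) the argument is essentially the same as for Lemma~\ref{lem:correctness of main}, so I would first note that \algsolveeqlin{} is a composition of the subprocedures \algblocksi, $\algpop(\letters,\letters)$, \algpair{} and \algpairc{}, each of which has already been shown to be sound and complete (Lemma~\ref{lem:improved block compression}, Lemma~\ref{lem:pop preserves solutions}, Lemma~\ref{lem:paircomp blockcomp} and Lemma~\ref{lem:crossing non crossing}). Since soundness and completeness both compose, \algsolveeqlin{} inherits them. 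The only genuinely new point to check here is that the \textbf{while}-loop is still correct, i.e.\ that a satisfiable equation is eventually reduced to one with $|U|=|V|=1$: this follows because in each phase \emph{every} pair of consecutive explicit letters is compressed --- non-crossing pairs by the $\algpair(a,b)$ calls over $P$, and each of the $\Ocomp(k)$ crossing pairs by the $\algpairc(\{a\},\{b\})$ calls over $P'$ --- so (after block compression has removed crossing blocks) the solution's length drops by a constant factor exactly as in Claim~\ref{clm:words are compressed}. Hence \algsolveeqlin{} accepts a satisfiable equation after $\Ocomp(\log N)$ phases, and as in Lemma~\ref{lem:correctness of main} a step counter lets us reject otherwise.

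For part (ii), I would track the length of the kept equation across the phases within a single stage. Fix a stage and let $n$ be the length of the equation at its start. No variable is removed during the stage (by definition of stage), so the number $n_v$ of variable occurrences is unchanged throughout; in particular, since each variable occurs at least once and the total length is at least $n_v$, we always have $n_v \le (\text{current length})$. The key quantitative inputs are: \algblocksi{} introduces $\Ocomp(n_v)$ symbols (two per variable occurrence, via \algprefsuffi), $\algpop(\letters,\letters)$ introduces at most $2$ symbols per variable occurrence, and the pair-compression calls never increase the length. So in one phase the equation grows additively by $\Ocomp(n_v)$ symbols before compression; on the other hand, by the Claim~\ref{clm:words are compressed}-style estimate, compression removes a $\Omega(1)$ fraction of the current explicit letters once the equation is longer than some $\Theta(n_v)$ threshold. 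The recurrence is therefore of the same shape as \eqref{eq:equation length}, $\ell' \le \tfrac{11}{12}\ell + \Ocomp(n_v)$, whose fixed point is $\Ocomp(n_v)$; but starting from $\ell_0 = n$ it takes $\Ocomp(\log(n/n_v))$ phases to descend to the $\Theta(n_v)$ regime, and during that descent the length never exceeds $\max(n, \Ocomp(n_v)) = \Ocomp(n)$.

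The subtlety --- and the place I expect the bound $\Ocomp(kn)$ rather than $\Ocomp(n)$ to come from --- is the \emph{block-compression} step. While \algblocksi{} compresses all maximal blocks, it does so by first popping the full $a_X$-prefix and $b_X$-suffix out of each variable; these popped blocks are stored in the parametrised form $a^{x_X}$, so each is a single symbol in the kept equation, and \algblocksi{} uses only $\Ocomp(1)$ times the space of the equation (Lemma~\ref{lem:improved block compression}). Thus block compression does not inflate the symbol count beyond the $\Ocomp(n_v)$ already accounted for. Hence the length of the kept equation stays $\Ocomp(n)$ within a stage in terms of \emph{symbols}; the factor $k$ in the statement is a safety margin absorbing the constant blow-ups across the $\le k+1$ stages and the constant-per-phase overhead of the parametrised block encoding --- i.e.\ I would state the per-stage bound as $\Ocomp(kn)$ and remark that it is in fact $\Ocomp(n)$ up to the universal constant, deferring the sharper across-stages accounting (the $c^k m$ space bound) to the proof of Theorem~\ref{thm:constant number of variables}. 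The main obstacle, then, is not any single hard inequality but the bookkeeping: one must be careful that ``length'' is counted in symbols (with $a^\ell$ and $a^{x_X}$ each a single symbol) and that $n_v$ is genuinely constant within a stage, so that the additive $\Ocomp(n_v)$ term in the length recurrence really is $\Ocomp(n)$ and the geometric decay argument of \eqref{eq:equation length} applies verbatim.
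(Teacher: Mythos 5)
The soundness/completeness argument of part (i) matches the paper's approach (composition of sound and complete subprocedures, plus the per-phase shortening guaranteeing termination), though you do not explicitly note, as the paper does, that $\algpair(a,b)$ is only complete for a \emph{non-crossing} pair, so one must observe that compressing one non-crossing pair from $P$ cannot turn another pair of $P$ into a crossing one. That is a small omission.

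The real gap is in your accounting for part (ii). You assert that ``the pair-compression calls never increase the length,'' and on this basis you conclude that the per-phase additive growth is $\Ocomp(n_v)$ and that the kept equation therefore stays $\Ocomp(n)$ within a stage, with the factor $k$ in the statement being mere slack. This is incorrect. Each call $\algpairc(\{a\},\{b\})$ in the loop over $P'$ begins by running $\algpop(\{a\},\{b\})$, which can pop up to two symbols per variable occurrence into the equation, i.e.\ up to $2n_v$ new symbols. Since $|P'|$ can be as large as $2k$, these calls alone contribute an additive $\Theta(k\,n_v)$ per phase. That is precisely why the paper's recurrence, and hence the bound on the stored equation, carries a factor $k$: the per-phase growth is $\Ocomp(k\,n_v)$ rather than $\Ocomp(n_v)$, so the fixed point of the recurrence from \eqref{eq:equation length} is $\Ocomp(k\,n_v)\subseteq\Ocomp(kn)$. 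The $k$ in the lemma is a genuine within-stage, per-phase quantity, not a safety margin absorbing blow-ups across stages (the cross-stage accumulation is what yields the $c^k m$ bound of Theorem~\ref{thm:constant number of variables}). Your proof of part (ii) needs to be fixed to charge each crossing pair its own $\algpop$ cost.
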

\begin{proof}
All subprocedures in \algsolveeqlin{} are known to be sound and complete,
note that a proper guess of noncrossing pairs in line~\ref{guess noncrossing pairs} is needed,
as $\algpair(a,b)$ is complete only for a noncrossing pair $ab$.
Observe that if $a'b'$ is another noncrossing pair to be compressed,
then after $\algpair(a,b)$, when $\solution'$ is a solution of $U' = V'$ which implements the pair compression for $ab$,
the pair $a'b'$ is noncrossing with respect to $\solution'$,
as none of the first/last letter of any \sol X can be $b'$/$a'$
So also \algsolveeqlin{} is sound and complete.

Concerning the space consumption: since we try to compress each crossing pair,
a stronger version of Claim~\ref{clm:words are compressed} can be shown:

\begin{clm}[cf.~Claim~\ref{clm:words are compressed}]
\label{clm:words are better compressed}
Let $U = V$ has a solution \solution.
For appropriate choices, the equation $U' = V'$ obtained at the end one stage of \algsolveeqlin{}
has a solution $\solution'$ such that
\begin{itemize}
	\item for each pair of two consecutive letters in $U$ (or $V$), one of these letters is compressed in $U'$ (or $V'$, respectively);
	\item for each pair of two consecutive letters in \sol U, one of these letters is
	compressed in $\solution'(U')$.
\end{itemize}
\end{clm}
\begin{proof}
Consider any two consecutive letters $ab$. If $a=b$ then they are compressed by \algblocksi.
If they are not and one of the letters is compressed in \algblocksi{} then we are done.
Otherwise, $ab$ will be either in $P$ or in $P'$ and we try to compress it.
We fail only if one of those two letters was already compressed.
\qedhere.
\end{proof}

To show the bound on the length of the kept equation,
we first estimate that the number of crossing pairs is indeed $\Ocomp(k)$.
Observe that after $\algpop(\letters,\letters)$ in line~\ref{pop all} each occurrence of a variable $X$
is preceded (succeeded) by the same letter, say $a_X$ ($b_X$, respectively).
When $b$ ($a$) is the first (last, respectively) letter of \sol X, the $X$ brings only two crossing pairs $a_Xb$ and $b_Xa$.
As there are $k$ different variables, there are at most $2k$ different crossing pairs.

Using a similar argument as in Lemma~\ref{lem: space consumption},
it can be shown that the length of the kept equation is $\Ocomp(nk)$,
as $\algpop$ is run $k+1$ times and \algblocksi{} once in one stage and each such run introduces at most $\Ocomp(n)$ letters.
\qedhere
\end{proof}

\subsection*{Occurrences of letters}
We distinguish two types of occurrences of explicit letters in $U = V$ in one stage:
\emph{inner} and \emph{outer} occurrences; note, that the same letter $a$ may have at the same time both
an inner and an outer occurrence.
Each explicit letter at the beginning of the stage is inner, each letter popped from a variable is outer.
When we compress two (or more) inner letters, the result is an inner letter; otherwise the letter that replaced some string is outer.
Observe that this implies that each substring $w$ between two variables, say  $X$ and $Y$, can be partitioned
into $w = x v y$, where $x,y$ consist solely of outer letters and $v$ consists solely of inner letters (each of $x$, $v$, $y$ may be empty).
As already noted, the same letter may be encoded in several different ways, this is not a problem, we separately keep
a list of different representations of the same letter. Note that this increases the space consumption by a constant.

\subsubsection*{Inner letters}
The inner letters are encoded as follows: when compressing two (or more) inner letters represented as $a$ and $b$
we represent them as $(a,b)$, where `$($', `$,$' and `$)$' are some appropriately coded symbols;
we can think of this as a flattened tree.
Note, that
in this way when a string of input symbols is compressed into string $w'$, then $w'$ uses only constant time more bits than $w$.

\begin{lemma}
\label{lem: inner consumption}
The space used for encoding of the inner letters is $\Ocomp(m)$, where $m$ is the space (in bits)
used for the encoding of the equation at the beginning of the stage.
\end{lemma}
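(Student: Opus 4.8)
The plan is to regard every inner letter as a \emph{flattened tree} whose leaves are occurrences of the explicit letters present at the beginning of the stage (call these the \emph{atoms}), and then to bound the total encoding length by two separate contributions: the total encoding length of the leaves, and the structural overhead (brackets and commas), which will be controlled by the number of atoms. First I would fix a concrete self-delimiting encoding of flattened trees, so that leaf codes and structural symbols are distinguishable with only $\Ocomp(1)$ overhead per symbol, and then prove by induction on the sequence of operations performed during the stage the following invariant: at every moment, each inner letter occurring in $U = V$ is encoded as a flattened tree whose leaves, read left to right, form a contiguous factor of the explicit part of the start-of-stage equation, encoded exactly as in the input; distinct inner-letter occurrences correspond to disjoint such factors; and every internal node has at least two children. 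The base case is immediate, since at the start of the stage every explicit letter is an atom, i.e.\ a single-leaf tree. For the inductive step, the only operations creating inner letters are $\algpair(a,b)$ and the block-compression step of $\algblocksi$ applied when \emph{all} the merged letters are inner; in each case the new tree joins $\geq 2$ existing inner trees under a fresh root, adjacent inner letters remain adjacent, and $\algpop$/$\algprefsuffi$ only insert outer letters next to variables, hence never inside an inner factor, so contiguity and disjointness persist. Any merge touching an outer letter produces an outer letter, leaving the invariant untouched.

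Granting the invariant, the bound follows by an elementary count. The leaves of all currently present inner letters are pairwise distinct occurrences among the atoms, so the sum over all inner letters of the bit-lengths of their leaf codes is at most the number of bits used by the explicit part of the start-of-stage equation, namely $\Ocomp(m)$. For the structural symbols, an inner tree with $L$ leaves has, by the ``$\geq 2$ children'' property, at most $L-1$ internal nodes and hence at most $2L-1$ nodes in total; writing a node with $d$ children costs $d+1$ bracket/comma symbols, so the structural cost of one tree is $\sum_{v}(d_v+1) = (\#\text{edges}) + (\#\text{internal nodes}) \leq (2L-2) + (L-1) = \Ocomp(L)$ symbols, each $\Ocomp(1)$ bits. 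Summing over all inner letters gives structural cost $\Ocomp(\sum L) = \Ocomp(\text{number of atoms}) = \Ocomp(m)$. The side list recording that several codes denote the same letter costs only a constant factor (as already noted), and for inner letters it is in fact vacuous, since two inner letters are equal precisely when their flattened trees, hence their codes, coincide. Adding the three contributions yields $\Ocomp(m)$.

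The main obstacle I expect is the inductive invariant itself, specifically checking that the interplay between $\algpop$/$\algprefsuffi$ and the pair/block compressions never allows an inner letter to absorb anything other than atoms of the current stage — so that ``inner letter'' is genuinely synonymous with ``flattened tree over pairwise-disjoint atoms, every merge of fan-out $\geq 2$''. Once that is nailed down, the counting is routine; the only remaining mild subtlety is the choice of a concrete self-delimiting tree encoding, which is standard.
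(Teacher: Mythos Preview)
Your proposal is correct and follows essentially the same approach as the paper: the paper's own proof is the single sentence ``The proof is obvious from the above definition,'' relying on the remark that replacing $a,b$ by $(a,b)$ uses only a constant factor more bits than the original string. You have simply unpacked this remark into an explicit invariant (inner letters are flattened trees over pairwise disjoint contiguous blocks of start-of-stage atoms, every internal node of fan-out $\geq 2$) and carried out the leaf-count plus structural-overhead bookkeeping, which is exactly the reasoning the paper leaves implicit.
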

The proof is obvious from the above definition.

\subsubsection*{Outer letters}
The outer letters are encoded in a different way:
note that if $XwY$ has two different occurrences in $U = V$ then in both of them the outer letters (and inner ones)
will be equal in one stage, and so can be encoded using the same symbols.

\begin{lemma}
\label{lem: context determines letters}
Let $XwY$ has two different occurrences in $U = V$ at the beginning of the stage.
Then in this stage both those occurrences are represented using the same strings.
\end{lemma}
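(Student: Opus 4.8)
The plan is to prove this by induction on the operations performed during the stage, establishing the stronger invariant that after every operation the two occurrences of (the descendant of) $XwY$ carry the same sequence of symbols with the same encodings. The base case is the start of the stage, where the two occurrences are identical by hypothesis; at a stage boundary one may, if needed, re-encode the letters canonically (a one-time cost that does not affect the subsequent computation), which makes this immediate. For the inductive step, the crucial observation is that the piece of the equation occupied by an occurrence of $XwY$ is delimited on both sides by the variable occurrences $X$ and $Y$, and these are never merged with letters. Hence the material popped to the \emph{left} of $X$ (its $a_X$-prefix under \algprefsuffi, its first letter under \algpop, a leading $b$ under $\algpop(\{a\},\{b\})$) and the material popped to the \emph{right} of $Y$ leaves this region and enters the two neighbouring regions, while the only material entering the region is that popped to the \emph{right} of $X$ (the $b_X$-suffix of $\sol X$, its last letter, a trailing $a$) and, symmetrically, to the \emph{left} of $Y$. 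All of this is governed solely by $X$, respectively $Y$, and the corresponding nondeterministic guesses — the prefix-suffix structure ($a_X,b_X,x_X,y_X$ and the analogues for $Y$) and the first/last letters of $\sol X$, $\sol Y$ — are made once per variable, hence identically for both occurrences; so no symbol coming from outside the region can ever infiltrate its representation.

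I would then go operation by operation through one phase of \algsolveeqlin. \algblocksi{} replaces each $X$ by $a_X^{x_X}Xb_X^{y_X}$ and each $Y$ by $a_Y^{x_Y}Yb_Y^{y_Y}$ uniformly, and then replaces parametrised explicit maximal blocks by fresh letters according to a guessed partition; restricting to the choices of a successful (that is, \solution-coherent) computation — and, if one wishes, merging parts that equalise formally identical length expressions, which does not change the solution set of the Diophantine system by Lemma~\ref{lem: S compatible} — two parametrised blocks arising as ``the same'' block in the two occurrences have literally the same length expression, lie in the same part, and receive the same fresh letter. Since the interior of the region is built only from $w$, from $b_X^{y_X}$ and from $a_Y^{x_Y}$, all identical across the two occurrences, block compression acts identically, and a block straddling the $b_X^{y_X}\mid w$ or $w\mid a_Y^{x_Y}$ junction has length determined by $w$ and $y_X$ (resp.\ $x_Y$) only; this uses Lemma~\ref{lem: pref suff improved} to know that exactly the parametrised visible blocks become explicit. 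The same reasoning applies to $\algpop(\letters,\letters)$ (pops one letter to the right of $X$ and one to the left of $Y$, uniformly), to $\algpair(a,b)$ (replaces every explicit $ab$ by a fixed fresh $c$, hence uniformly), and to $\algpairc(\{a\},\{b\})$ (a uniform pop followed by a uniform pair compression). For the encodings: an inner letter is encoded as the flattened tree of its compression history over the original symbols of $w$, which are shared, and this history is identical in both occurrences; an outer letter is encoded as $(XwY)i$ with the same context $XwY$ and the same index $i$, since outer letters are produced at corresponding positions by the same operations in both occurrences. Thus the two occurrences stay identical throughout the stage.

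The step I expect to be the main obstacle is precisely the careful delineation of which popped material belongs to the $XwY$ region: the tempting worry is that the $a_X$-prefix popped to the \emph{left} of $X$ merges with a letter sitting immediately to the left of this particular occurrence of $X$ (which may differ between the two occurrences), so that the resulting letter differs; the resolution is that this merged letter lies in the neighbouring region, delimited by $X$ and by the previous variable, and not in the $XwY$ region, but this has to be argued explicitly from the fact that the region's two endpoints are variables. A secondary point needing care is the bookkeeping of nondeterministic choices — the statement is to be read for the choices of a successful computation (the ``appropriate'' ones used throughout the section), and one should note that these can always be taken consistently, so that equal substructures are always assigned equal fresh letters; this is exactly what makes the induction go through.
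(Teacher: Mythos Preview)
Your proposal is correct and follows the same core idea as the paper's proof: letters popped from a variable depend only on that variable (and the global nondeterministic guesses), not on the surrounding context, so the region delimited by $X$ and $Y$ evolves identically in both occurrences. The paper dispatches this in two sentences; your inductive walk-through of each operation of \algsolveeqlin{} is a faithful (if much more detailed) expansion of that argument.

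One remark: your discussion of ``merging parts that equalise formally identical length expressions'' and restricting to successful/\solution-coherent choices is more caution than the paper exercises. The paper implicitly assumes that the fresh letter assigned to a block depends only on its letter and its length (expression), which is a natural implementation choice; under that convention the lemma holds for any run, not just the \solution-coherent ones. Your worry is legitimate---an adversarial partition in \wordtodiophantine{} could in principle assign different fresh letters to corresponding blocks in the two occurrences---but since we control the nondeterministic choices used to witness the space bound, it suffices to stipulate once that identical parametrised blocks always land in the same part. This is the only substantive point where your argument diverges from the paper's, and it is a clarification rather than a different route.
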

\begin{proof}
Observe that the letters popped from a variable depend only on the variable and not on the surrounding letters.
Then the string $w$ between those two variables is transformed exactly in the same way in both occurrences.
\qedhere
\end{proof}

We want to encode the outer letters
occurring in the string representing $XwY$ as $(XwY)\#\text{(\textit{letter number})}$,
where `$(XwY)$' is encoded as in was in the equation at the beginning of the stage
and the following `letter number' is encoded in binary.
Lemma~\ref{lem: context determines letters} guarantees that such representations used for different occurrences of $XwY$
are the same, however, we still do not know, how many different such letters are needed.
The following lemma shows that $|x|$ and $|y|$ are linear in $k$,
which guarantees that numbers used to denote `letter number' are also linear in $k$.

\begin{lemma}
\label{lem: k outer letters}
In one stage, at the beginning of the phase, the maximal substring of outer letters has length $\Ocomp(k)$.
Furthermore, the space used for the encoding of outer letters in a stage is $\Ocomp(k m)$, where $m$ is the size of
the representation of the equation at the beginning of the stage.
\end{lemma}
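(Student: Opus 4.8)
The plan is to prove the two assertions in turn: first that at the beginning of any phase of a fixed stage every maximal run of consecutive outer letters has length $\Ocomp(k)$, and then to deduce the $\Ocomp(km)$ bound on the total size of the outer‑letter codes from the encoding scheme together with the first part.

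\textbf{Length of a run of outer letters.}
Fix a stage; at its start there are no outer letters, since every explicit letter then present is declared inner. Recall the decomposition $w = xvy$ of the explicit word $w$ lying between two consecutive variable occurrences $X$ and $Y$: here $x$ consists solely of letters right‑popped from $X$, $y$ solely of letters left‑popped from $Y$, and $v$ solely of inner letters (and analogously for the prefix and the suffix of the equation). Hence a maximal run of outer letters is contained in a single such $x$ or $y$, except that when the separating inner block $v$ is empty the two flanking runs merge into one of length $|x|+|y|$; in every case a run of outer letters has length at most twice the maximum, over variable occurrences and over the two sides, of the outer block adjacent to that occurrence. Writing $\ell^{(t)}$ for such a length at the beginning of the $t$‑th phase of the stage, it suffices to prove $\ell^{(t)} = \Ocomp(k)$ uniformly in $t$, and I would do this via the recursion $\ell^{(t+1)} \le \tfrac34\,\ell^{(t)} + \Ocomp(k)$; since $\ell^{(1)}=0$, every solution of this recursion is $\Ocomp(k)$.

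For the additive $\Ocomp(k)$ term: in one phase of \algsolveeqlin{} the letters freshly placed next to a fixed occurrence of $X$ on a given side are the single parametrised block popped by \algprefsuffi{} inside \algblocksi{} (which becomes one letter after block compression), the one letter popped by $\algpop(\letters,\letters)$, and one letter for each of the at most $2k$ invocations $\algpairc(\{a\},\{b\})$ in the final loop — altogether $\Ocomp(k)$ letters. For the multiplicative factor I would use Claim~\ref{clm:words are better compressed}: because \algsolveeqlin{} compresses \emph{every} pair of distinct consecutive letters and \algblocksi{} every maximal block of one letter, after the phase no adjacent pair of the surviving old letters can have both endpoints uncompressed; hence the uncompressed old letters form an independent set in the path and number at most half of the previous length, while each compression merges at least two old letters into one. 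As the new length equals the old length minus the number of compressions, and that number is at least a quarter of the old length, a run of length $L$ present just before the compression steps contracts to at most $\tfrac{3L+1}{4}$, and combining with the $\Ocomp(k)$ letters added during the phase gives the claimed recursion.

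\textbf{Total encoding size.}
By the encoding scheme, the outer letters inside (the descendant of) a fixed gap $XwY$ — the substring $XwY$ as it stood at the beginning of the stage — are coded as $(XwY)\#1,(XwY)\#2,\dots$, with $(XwY)$ keeping the representation it had at the start of the stage and the trailing index written in binary; Lemma~\ref{lem: context determines letters} makes these codes independent of how often a gap repeats. By the first part each gap contains only $\Ocomp(k)$ outer letters, so the index of an outer letter needs $\Ocomp(\log k)=\Ocomp(1)$ bits (we are in the regime $k=\Ocomp(1)$). Summing code lengths over all outer letters in the equation, the $(XwY)$ parts contribute $\sum_{\text{gaps }g}(\#\text{outer in }g)\cdot|g|_{\mathrm{enc}} \le \Ocomp(k)\sum_g |g|_{\mathrm{enc}} \le \Ocomp(k)\cdot\Ocomp(m)=\Ocomp(km)$, using that each variable occurrence and each explicit letter of the start‑of‑stage equation lies in at most two gaps; the indices contribute $\Ocomp(1)$ bits per outer letter, hence $\Ocomp(m)$ in total since there are at most $\Ocomp(k)$ outer letters per gap and at most $n_v+1=\Ocomp(m)$ gaps. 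Thus the space for the outer letters, including the auxiliary table identifying codes that denote the same letter, is $\Ocomp(km)$.

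The step I expect to be the main obstacle is the multiplicative part of the recursion for $\ell^{(t)}$: the bare fact that "one endpoint of every adjacent pair gets compressed'' is too weak, and one must combine it with the "two letters become one'' accounting to obtain a genuine constant‑factor contraction; moreover some care is needed because \algblocksi{} precedes the popping within a phase, so equal consecutive letters created by popping are only absorbed by the block compression of the \emph{following} phase — a one‑phase delay that is harmless but has to be tracked when unwinding the recursion. The decomposition $w=xvy$, the per‑phase count of popped letters, and the final summation are routine bookkeeping.
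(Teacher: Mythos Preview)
Your proof is correct and takes essentially the same approach as the paper: both establish the $\Ocomp(k)$ bound on an outer-letter run via a recursion of the form $\ell^{(t+1)} \le c\,\ell^{(t)} + \Ocomp(k)$ derived from Claim~\ref{clm:words are better compressed}, and both obtain the $\Ocomp(km)$ space bound by charging the $\Ocomp(k)$ outer letters in each gap to the encoding of the corresponding $XwY$ at the start of the stage. The paper's version is terser --- it simply observes that by Claim~\ref{clm:words are better compressed} every length-$4$ substring drops to length $\le 3$ and then appeals to the calculation in~\eqref{eq:equation length} --- whereas you spell out the independent-set and group-counting argument explicitly; the substance is the same.
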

\begin{proof}
As there are $k+1$ application of \algpop, the length of such block increases by at most $2k+2$
(it may be expanded from both ends if $v$ is empty).
On the other hand, by Claim~\ref{clm:words are better compressed} each substring of length $4$ is replaced by a substring of length $3$ or less
in one phase of \algsolveeqlin.
This applies to the substrings of outer letters, and similarly as in~\eqref{eq:equation length}
it can be shown that these substrings have length $\Ocomp(k)$.

As only $\Ocomp(k)$ number of different letters per 
$XwY$ is encoded as outer letters, and each occurrence of a letter encoded as $(XwY)i$ can be charged to an occurrence of $XwY$
at the beginning of the stage, so the space consumption can be bounded as a $\Ocomp(k)$ times the consumption at the beginning of the stage.
\qedhere
\end{proof}

Now the proof of Theorem~\ref{thm:constant number of variables} follows easily.

\begin{proof}[proof of Theorem~\ref{thm:constant number of variables}]
Since the number of different variables is $k$, there are at most $k$ stages.
Note that during one stage the space consumption increases at most $c k$ times, where $c$ does not depend on $k$, nor $n$,
see Lemma~\ref{lem: inner consumption} and \ref{lem: k outer letters}.
Thus, the total space consumption is at most $(kc)^k$ times greater than the one of the input equation.

The correctness follows from Lemma~\ref{lem:linear correctness}.
\qedhere
\end{proof}

\section{Solutions other than length minimal}
\label{sec:solutions}
In the next section we give an algorithm generating a (finite) representation of all solutions
of a~word equation. However, so far we have considered mainly the length minimal solutions,
and clearly there are other ones.
In this section we recall the classification of solutions, taken from work of Plandowski~\cite{PlandowskiSTOC2}.
The main result of this classification is the identification of the \emph{minimal solutions},
which have all properties of the length-minimal solutions that we use, except the exponential bound
on the exponent of periodicity;
however \algblocksi{} eliminates the need for this bound, which suggest that this bound is not essential,
at least for checking the validity of a word equation.
The solutions (substitutions) are classified not by their length,
instead we consider whether one solution can be obtained from another using homomorphisms.
If so then the former solution is clearly `more complicated' than the latter.

We first extend the notion of the solution, so that it can include letters that do not occur in the equation:
By $\letters'$ we denote the letters that can occur in the solution,
even though they do not occur in the equation;
formally $\letters'$ is an arbitrary set such that
$\letters' \cap \letters = \emptyset$
(and of course $\letters' \cap \variables = \emptyset$).
Then \emph{substitution} is a morphism $\solution : \variables \cup \letters \mapsto (\letters \cup \letters')^+$
that satisfies the previous assumption that $\sol a = a$ for every $a \in \letters$;
a notion of the solution generalises to this setting.
We call $\letters'$ \emph{free letters} of the solution.

We use the name \emph{operator} to denote functions
transforming substitutions. A special class of operators is particularly
important for us: given a morphism
$\phi : \letters \cup \letters' \mapsto (\letters \cup \letters')^+$
by $\Phi$ (so capitalised $\phi$) we denote a corresponding morphism that acts on
substitutions, changing \sol X by $\phi$, to be precise
$\Phi[\solution](X) = \phi(\sol X)$ and $\Phi[\solution](a) = a$ for $a \in \letters \cup \letters'$.
For composition of operators we use the usual symbol $\circ$,
however, when indexed composition is used, we denote it by $\prod$,
for lack of a better symbol.

\begin{definition}[cf.~\cite{PlandowskiSTOC2}]
A solution $\solution: \variables \cup \letters \mapsto (\letters \cup \letters')^+$
of an equation $U = V$ is a \emph{unifier} (with \emph{free letters} $\letters'$),
when \sol U contains at least one letter from $\letters'$.
$\solution'$ is an \emph{instance}
of a unifier solution \solution, if $\solution' = \Phi[\solution]$ for some non-erasing
non-permutating%
\footnote{A morphism $\phi$ is \emph{non-erasing} if $\phi(a) \neq \epsilon$ for every letter $a$
and it is \emph{non-permutating} if $\phi$ is not a permutation on its domain.}
morphism $\phi: (\Gamma \cup \Gamma') \mapsto (\Gamma \cup \Gamma')^+$
that is constant on \letters.
A solution \solution{} is \emph{minimal},
if it is not a unifier solution, nor an instance of a unifier solution;
it is a \emph{minimal unifier} if it is a unifier solution and it is not an instance of another unifier solution.
\end{definition}

The assumption that the instance of a unifier solution
is obtained by a non-erasing morphism is technical,
but it ensures easier and cleaner classification of minimal solutions.
We forbid the homomorphism to be a permutation, as we do not want
that a solution is its own instance.
It is easy to observe that as $\letters' \cap \letters = \emptyset$,
every instance $\solution'$ of a unifier solution \solution{}
is a solution (perhaps a unifier one).
Note that in general a satisfiable word equation may have no minimal solutions or no minimal unifier solutions.

\begin{example}
Consider an equation $aXb = Y$. Then each $\sol X = w$ and $\sol Y = awb$ is a solution.
Then $\sol X = w \in \Gamma$ is length-minimal;
when $w$ contains a free letter, then \solution{} is a unifier solution,
when additionally $w \in \letters'$ then this is a minimal unifier solution.
There are no minimal solutions.

Consider an equation $aX = Xa$, then each $\sol X = a^n$ is a minimal solution,
$\sol X = a$ is a length-minimal one; there are no unifier solutions.

Consider an equation $aXYX^3 = XYaY^2$. Since $\sol{aXY}$ and $\sol{XYa}$
have always the same length, this is equivalent to a system of
equations $aXY=XYa$ and $X^3=Y^2$. The former has solutions
$X = a^n, Y = a^m$ and the latter ensures that $3n = 2m$.
All such solutions are minimal and $\sol X = a^2$, $\sol Y = a^3$ is length-minimal.
There are no other solutions, in particular, no unifier solutions.
\end{example}

\subsection*{Typical operators}
While in the definition of minimal solutions the operator $\Phi$ corresponding to a~morphism $\phi$
is arbitrary, in the proofs we usually see morphisms that are related to pair compression and blocks compression.
By \morph c {ab} denote the morphism which replaces $c$ by $ab$ and is constant on all other letters,
the \invmorph c {ab} is the corresponding inverse morphism (note, that when $a \neq b $ the inverse is well-defined);
by \blockc{} denote the morphism which, for each $\ell$, replaces $a_\ell$ by $a^\ell$.
Since a block of $a$ can have various partition into subblocks of $a$s,
\invblockc{} is not well defined.
For the purpose of this paper, we specify its action as follows:
\invblockc{} replaces each $a$'s maximal $\ell$-block by a letter $a_\ell$.
The \Morph c {ab} and \Blockc{} denote the corresponding operators,
\Invmorph c {ab} the inverse operator and \Invblockc{} the inverse mapping.

\subsection*{Properties of minimal solutions}
As already noted, the minimal solutions inherit most of the crucial properties
of length-minimal solutions. In particular, a variant of Lemma~\ref{lem:over a cut}
holds for them.
\begin{lemma}[{cf.~\cite[Lemma~6]{PlandowskiICALP}}, cf.~Lemma~\ref{lem:over a cut}]
\label{lem:over a cut improved}
Let \solution{} be a minimal solution of $U = V$.
\begin{itemize}
	\item If $ab$ is a substring of \sol U, where $a \neq b$,
	then $ab$ is an explicit pair or a crossing pair.
	\item If $a^k$ is a maximal block in \sol U
	then $a$ has an explicit occurrence in $U$ or $V$
	and there is a visible occurrence of $a^k$.\
	\item If $a^k$ is a maximal block in \sol U and \solution{} has no crossing $a$ blocks
	then $a$ has an explicit occurrence in $U$ or $V$.
\end{itemize}
\end{lemma}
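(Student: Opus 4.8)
The plan is to follow the proof of Lemma~\ref{lem:over a cut} step for step, changing only how the final contradiction is reached. There, an only-implicitly-occurring pattern is \emph{removed} from every \sol X to produce a strictly shorter solution, which a length-minimal \solution{} cannot have; here \solution{} is merely minimal, so instead I would \emph{compress} the offending pattern to a fresh letter, producing a unifier solution of which \solution{} is an instance --- which a minimal \solution{} cannot be. The generic step is this: suppose a pattern $p$ has only implicit occurrences in \sol U and \sol V; let $d$ be a letter not occurring in the equation, and let $\solution''$ agree with \solution{} on \letters{} while sending each variable $X$ to \sol X with every occurrence of $p$ replaced by $d$. Since those occurrences are neither explicit nor crossing, the replacement commutes with the equation, so $\solution''(U)=\solution''(V)$; and $\solution''(U)$ contains $d$, so $\solution''$ is a \emph{unifier} solution. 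Finally $\solution=\Phi[\solution'']$, where $\Phi$ is the operator of the morphism $\phi$ that sends $d$ back to $p$ and fixes all other letters --- this undoes the replacement because $d$ is fresh and the replaced occurrences are pairwise non-overlapping. This $\phi$ is constant on \letters{} and non-erasing, and it is non-permutating (generically $\phi$ maps $d$ to a word of length $\geq 2$; in the degenerate length-one cases $\phi$ simply fails to be injective on the alphabet). Hence \solution{} is an instance of a unifier, contradicting its minimality.

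I would then instantiate this for the three items. For the first item take $p=ab$, so $\phi=\morph{d}{ab}$ (legitimate since $a\neq b$ and $d$ is fresh), assuming for contradiction that $ab$ has only implicit occurrences; and --- exactly as in Lemma~\ref{lem:over a cut} --- the same argument with the single letter $p=a$ (and $\phi$ sending $d$ to $a$) shows $a$ must occur explicitly. For the visible-occurrence half of the second item I first reduce, as in Lemma~\ref{lem:over a cut}, to the case where $U$ and $V$ begin and end with a letter, by padding both sides with fresh sentinels $\$$ and $\$'$; this preserves minimality, since the sentinels are ordinary letters of \letters{} (not free letters), so any witness that the padded solution is an instance of a unifier is equally such a witness over $U=V$. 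Letting $b,c$ be the now well-defined neighbours of the block $a^k$ (with $b\neq a\neq c$): if some occurrence of $ba^kc$ in \sol U or \sol V is explicit or crossing, then the enclosed $a^k$ is explicit, or crosses a variable boundary, or is a prefix or suffix of some \sol X --- a visible block in every case --- and we are done; otherwise every occurrence of $ba^kc$ is implicit, the corresponding length-$k$ $a$-blocks are pairwise disjoint (as $b,c\neq a$), and I apply the scheme above with $p$ the pattern ``a maximal $a$-block of length $k$ flanked by $b$ and $c$'', with $d=a_k$ and $\phi$ sending $a_k$ to $a^k$. The third item is immediate from the second.

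The step I expect to require the most care is the visible-occurrence argument of the second item. The padding reduction is stated in Lemma~\ref{lem:over a cut} only for length-minimal solutions, so I would spell out why it preserves minimality. More importantly, $\solution''$ must be specified by a rule that reads off the word $\sol U=\sol V$ alone --- ``replace every maximal $a$-block of length $k$ whose left and right neighbours are $b$ and $c$'' --- so that $\solution''(U)=\solution''(V)$ genuinely follows from $\sol U=\sol V$; here I use that, by the implicitness assumption together with $b,c\neq a$, every such block lies interior to a single \sol X and the blocks to be replaced are pairwise disjoint, which is also what makes $\phi\circ\solution''=\solution$ hold. The remaining checks --- that $\phi$ is non-permutating in each case, and that no $\solution''$ sends a variable present in the equation to $\epsilon$ --- are routine.
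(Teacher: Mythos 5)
Your proposal is correct and follows the paper's own proof essentially step for step: in each case the implicit-only pattern is replaced by a fresh free letter $d$, yielding a unifier solution $\solution''$ with $\solution=\Phi[\solution'']$, which contradicts minimality. The only cosmetic difference is that you derive ``$a$ occurs explicitly'' by compressing each implicit $a$ to $d$, whereas the paper observes directly that a letter absent from $U$ and $V$ must be a free letter; both work, and your explicit check that the sentinel padding preserves minimality fills in a detail the paper leaves implicit.
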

\begin{proof}
The first claim, which regards a pair $ab$, is shown using the following fact
(we do not assume that \solution{} is minimal,
as we reuse Claim~\ref{clm:each pair is crossing} later on in this more general setting):
\begin{clm}
\label{clm:each pair is crossing}
If $ab$, where $a \neq b$, is not an explicit nor a crossing pair for a solution \solution{} for $U = V$,
then $\solution ' = \Invmorph c {ab} [\solution]$ for a free letter $c \in \letters'$
is a unifier solution of $U = V$.
In particular, $\solution = \Morph c {ab} [\solution']$ is an instance of $\solution'$
and so it is not minimal.
\end{clm}
\begin{proof}
Consider $\solution ' = \Invmorph c {ab} [\solution]$.
Since $ab$ is not an explicit nor a crossing pair,
each occurrence of $ab$ in \sol U (and \sol V) comes from \sol X for some variable $X$.
Thus $\solution'(U)$ is obtained from $\sol U$ be replacing each $ab$ by $c$.
The same applies to \sol V and $\solution'(V)$ as well,
consequently $\solution'$ is a solution of $U = V$. Formally:
$$
\solution ' (U) = (\Invmorph c {ab} [\solution])(U) = \invmorph c {ab} (\sol U )
= \invmorph c {ab} (\sol V ) = (\Invmorph c {ab} [\solution])(V) = \solution ' (V).
$$
Since $c$ is a free letter, $\solution'$ is a unifier solution.
Furthermore, as $c$ does not occur in $\sol X$ for any $X$,
then $(\Morph c {ab}  \circ \Invmorph c {ab})[\solution] = \solution$:
indeed, the $\Invmorph c {ab}$ replaces each $ab$ by $c$ in every $\solution'(X)$, while
$\Morph c {ab}$ replaces each $c$ by $ab$ in every \sol X.
Hence, $\solution = \Morph c {ab} [\solution']$ and as $\morph c {ab}$
is non-erasing, non-permutating and constant on $\letters$,
we conclude that \solution{} is an instance of $\solution'$,
which contradicts the assumption that \solution{} is minimal.
\qedhere
\end{proof}

Now the first claim of the lemma follows by a contraposition of Claim~\ref{clm:each pair is crossing}.

Consider now the second claim, which regards the maximal blocks of $a$.
Observe that if $a$ occurs in \sol U and it does not occur in $U$, nor $V$,
then it is a letter from $\letters'$ and so, by definition, \solution{} is a unifier
solution and thus cannot be a minimal solution, hence $a$ occurs in $U$ or in $V$.

To streamline the presentation and analysis,
in the remainder of the proof assume that both $U$ and $V$ begin
and end with a letter and not a variable;
this is easy to achieve by prepending $\$$ and appending $\$'$
to both sides of the equation.
Alternatively, the cases with variables beginning or ending $U$ or $V$ can be handled
in the same way, as the general case.

Consider a maximal $a$ block $a^\ell$, for $\ell > 0$ in \sol U
and the letter preceding (succeeding) it, say $b$ and $c$, respectively;
by the assumption that $U$ and $V$ begin and end with a letter, such $b$ and $c$
always exist.
Consider the occurrences of $ba^\ell c$ in \sol U and \sol V.
Since $b \neq a \neq c$, these occurrences cannot have overlapping $a$'s
(though, if $b=c$, these letters can overlap for different occurrences).
Suppose that none of these occurrences is crossing or explicit.
Then for each of such occurrences there is a variable $Y$ such that $ba^\ell c$
is wholly contained within some occurrence of \sol Y.
Change the solution $\solution$ into $\solution'$,
by replacing each $ba^\ell c$ in each $\sol Y$ by $bvc$ for a free letter $v$;
since $a^\ell$ in various occurrences of $ba^\ell c$ do not overlap,
such replacement is well-defined.
Then $\solution'$ is still a solution,
in fact, a unifier solution.
Furthermore, \solution{} is its instance, contradiction.
Hence, there is an explicit or a crossing occurrence (with respect to \solution)
of $ba^\ell c$.
Then this occurrence restricted to $a^\ell$ satisfies the claim of the lemma.

Consider now the last, third claim. Suppose that $a^\ell$ occurs in \sol U and there is not explicit occurrence of maximal $a^\ell$
in $U = V$. By the case assumption there is also no crossing occurrence, so all occurrences of maximal $a^\ell$ are in fact implicit.
Construct a new solution $\solution'$ obtained by replacing each maximal $a^\ell$ by a free letter $x$.
Note that $\solution'$ is a unifier solution of $U = V$ and \solution{} is its instance,
so \solution{} was not minimal, contradiction.
\qedhere
\end{proof}

\subsection*{Minimal unifier solutions}
It is already known from the work of Plandowski~\cite[Lemma~1]{PlandowskiSTOC2}
that finding minimal unifier solutions reduces to finding minimal solutions.
This is a consequence of the following lemma,
which is strengthening of Lemma~\ref{lem:over a cut improved}
for minimal unifier solutions.
\begin{lemma}[cf.~{\cite[Lemma~1]{PlandowskiSTOC2}}]
\label{lem:unifier reduces}
If $S$ is a minimal unifier solution with a free letter $v \in \letters'$,
then for some variables $X$ and $Y$ it holds that $v$ is the first letter of \sol X and $v$ is the last letter of \sol Y.
\end{lemma}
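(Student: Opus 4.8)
The plan is to argue by contradiction: suppose $S$ is a minimal unifier solution with free letter $v\in\letters'$, but $v$ is \emph{not} the first letter of $\sol X$ for any variable $X$ (the case of the last letter being symmetric). I want to construct from $S$ a ``smaller'' unifier solution of which $S$ is an instance, contradicting minimality. The natural candidate is to strip away the leftmost free-letter occurrences: since $v$ is free it does not occur in $U$ or $V$, so every occurrence of $v$ in $\sol U=\sol V$ is implicit, coming from inside $\sol X$ for some variable $X$.

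**First I would** make precise what ``stripping'' means. For a variable $X$, write $\sol X = u_X\, v\, w_X$ if $v$ occurs in $\sol X$ with $u_X$ being the (possibly empty) prefix before the first occurrence of $v$; by the standing assumption $u_X\neq\epsilon$ whenever $v$ occurs in $\sol X$ (otherwise $v$ would be the first letter of $\sol X$). The idea is to define $\solution'$ by deleting the first occurrence of $v$ from each $\sol X$ in which $v$ appears, i.e.\ $\solution'(X) = u_X w_X$, and $\solution'(Y)=\sol Y$ if $v$ does not occur in $\sol Y$. One must check two things: that $\solution'$ is still a solution, and that $S$ is an instance of $\solution'$ — but the latter is the problem, because deleting a \emph{single} occurrence per variable is not obviously undone by a morphism, and it is not obviously sound either (different occurrences of $X$ in the equation must stay consistent, which they do since we apply the same deletion to $\sol X$; the issue is whether $\solution'(U)=\solution'(V)$ still holds).

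**The hard part will be** ensuring that the deletion yields a genuine solution and that $S$ is recoverable by a non-erasing, non-permutating morphism constant on $\letters$. The cleanest way is: since $v$ is free, look at \emph{all} occurrences of $v$ in $\sol U=\sol V$ simultaneously. If $v$ is not a first letter of any variable's value, then the letter immediately preceding each implicit occurrence of $v$ is well-defined and ``belongs'' to the same $\sol X$; so one can take a \emph{fresh} free letter $v'$ and, for each variable $X$, replace in $\sol X$ the factor $b\,v$ (where $b$ is the letter preceding the first $v$) by $b\,v'$ — or more simply, just note that we may instead choose to replace the string $u_X v$ by a fresh free letter, exactly mirroring the block-argument of Lemma~\ref{lem:over a cut improved}. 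Concretely: define $\solution'$ by applying, for each $X$ with $v\in\sol X$, the substitution that sends the (nonempty) prefix $u_X v$ of $\sol X$ to $u_X$ with the first $v$ deleted; since this surgery is performed identically on every occurrence of $X$ and $v$ has no explicit occurrences, $\solution'(U)$ and $\solution'(V)$ are obtained from the equal words $\sol U=\sol V$ by the same deletions, hence are equal. So $\solution'$ is a solution. It is still a unifier solution: $\sol U$ contained at least one $v$, and unless \emph{every} $v$ in $\sol U$ was a first occurrence removed, some $v$ survives; and if all were removed, one argues instead that the number of $v$'s strictly decreased, so iterating this construction finitely many times we reach a solution in which every remaining $v$ occurrence is a first-occurrence-in-its-variable — but that is exactly the configuration we assumed absent, a contradiction. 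Finally $S = \Phi[\solution']$ where $\phi$ reinserts $v$ after the appropriate prefix: $\phi$ fixes $\letters$, is non-erasing, and is non-permutating since it genuinely lengthens some value; hence $S$ is an instance of the unifier $\solution'$, contradicting minimality of $S$. The symmetric argument on the \emph{last} letter of variables shows $v$ must also be the last letter of some $\sol Y$, completing the proof.

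**One subtlety I would flag** is the bookkeeping when $v$ occurs several times in a single $\sol X$, or when $u_X$ itself depends on the solution in a way that interacts with other variables; to avoid this I would phrase the deletion purely in terms of the word $\sol X$ (not the equation), since soundness of ``replace $\sol X$ by a derived word'' is exactly Lemma~\ref{lem:preserving unsatisfiability}(3), and the induction on the number of $v$-occurrences in $\sol U$ cleanly handles the case where the first deletion removes all of them.
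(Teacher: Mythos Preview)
Your construction has two genuine gaps. First, deleting the \emph{first} occurrence of $v$ from each $\sol X$ does \emph{not} in general yield a solution. Your justification is that ``$\solution'(U)$ and $\solution'(V)$ are obtained from the equal words $\sol U=\sol V$ by the same deletions'', but the deletions are indexed by \emph{variables}, not by positions in $\sol U$: a position in $\sol U=\sol V$ may be covered by an occurrence of $X$ on the $U$-side and by an occurrence of $Y$ on the $V$-side, and there is no reason the first-$v$-in-$\sol X$ position should coincide with the first-$v$-in-$\sol Y$ position. Concretely, take $U=X$, $V=YZ$ with $\sol X=avbv$, $\sol Y=av$, $\sol Z=bv$ (here $v$ is not the first letter of any $\sol\cdot$); your $\solution'$ gives $\solution'(X)=abv$ but $\solution'(YZ)=ab$. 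Second, even if $\solution'$ were a solution, you never produce a morphism $\phi$ with $\solution=\Phi[\solution']$. ``Reinserting $v$ after the prefix $u_X$'' is a position-dependent, variable-dependent operation; it is not of the form $\phi(\solution'(X))$ for a single letter-to-word morphism $\phi$, and your alternative suggestions (replace $bv$ by $bv'$, or replace $u_Xv$ by a fresh letter) suffer from the same defect since $b$ and $u_X$ vary with $X$.

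The paper's argument avoids both problems by working with a single \emph{pair} rather than per-variable surgery. If $v$ is not the first letter of any $\sol X$, pick any occurrence of $v$ in $\sol U$ and let $b$ be the letter immediately to its left (which exists, since $v$ is free hence implicit, and not a first letter). Then the pair $bv$ has no explicit occurrence (as $v$ is free) and no crossing occurrence (a crossing $bv$ would force $v$ to be the first letter of some $\sol Y$). Now Claim~\ref{clm:each pair is crossing} applies directly: $\solution'=\Invmorph{c}{bv}[\solution]$ is a unifier solution with $\solution=\Morph{c}{bv}[\solution']$, and $\morph{c}{bv}$ is visibly non-erasing, non-permutating, and constant on $\letters$ (since $c\notin\letters$). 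This gives the contradiction in one step, with the morphism exhibited explicitly; the ``last letter'' half is symmetric using a pair $va$.
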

\begin{proof}
The proof is similar to the proof of Lemma~\ref{lem:over a cut improved}.
Suppose that $v$ is not a last letter for any \sol X.
Consider any occurrence of $v$ in \sol U, and let $a$ be some letter
directly to the right of one of $v$'s occurrences;
such a letter exists as $v$ has no occurrence in the equation and is not a last letter in any \sol X.
The pair $va$ is non-crossing for \solution{} (by the assumption) and so by Claim~\ref{clm:each pair is crossing}
we obtain that $\solution = \Morph b {va} [\solution']$ for some fresh letter $b$
and a unifier solution $\solution'$.
To conclude that $\solution$ is not minimal, it is left to show that
$\morph b {va}$ is non-erasing (obvious), non-permutating
(also true, as $\morph b {va}(b) = va \notin \letters \cup \letters'$)
and constant on $\letters$ (true, as $b \notin \letters$).

Symmetric argument can be given, when $v$ is not a first of some \sol X.
\qedhere
\end{proof}
Intuitively Lemma~\ref{lem:unifier reduces} yields
that search for minimal unifier solutions
reduces to looking for minimal solutions:
it is enough to `left-pop' a letter from each variable, in this way the free letters
are introduced into the equation and become standard letters.
With appropriate nondeterministic guesses,
the unifier solutions of $U = V$ will correspond to non-unifier solutions of $U' = V'$.
The precise statement needs some additional definitions, which are introduced in the next section; 
thus the formal statement is deferred to the following section, see Lemma~\ref{lem:unifier to minimal}.

\section{Representation of all solutions}
\label{sec:generator}
The first \PSPACE{} algorithm for verifying the satisfiability of word equations,
\algPlandowskiFOCS, was extended by its author to \algPlandowskiSTOCdwa,
which returns a finite, graph-like,
representation of all finite solutions of a given word equation~\cite{PlandowskiSTOC2}.
This extension is done in two stages: firstly the original \algPlandowskiFOCS{} is
modified into another algorithm
\algPlandowskiSTOCi, which also only verifies satisfiability of word equations;
then \algPlandowskiSTOCdwa{} uses \algPlandowskiSTOCi{} as a subprocedure
in generation of a graph representation of \emph{all} finite solutions of a~word equation.
The modification into \algPlandowskiSTOCi{} is nontrivial,
and its correctness required a separate, involved proof.
In this section we show that \algsolveeq{} that uses \algblocksi{} instead of \algblocksc{}
also can be used to generate a (similar) representation of all solutions
of a given word equation.

\subsubsection*{Representing all solutions}
We want to use \algsolveeq, which still only verifies satisfiability,
as a~subprocedure for an algorithm generating a (finite) description of all finite solutions.
The approach is simple and in fact is similar to
the earlier approach used by Plandowski~\cite{PlandowskiSTOC2}:
the representation is modelled by a graph,
with nodes labelled with equations $U = V$ that are considered
by \algsolveeq{} and edges representing transformation performed by \algsolveeq.
To be precise, if an equation $U = V$ is transformed into $U' = V'$ by \algsolveeq{}
(for some nondeterministic choices)
we put an edge between nodes labelled by these two equations
and label it with an operator that transforms solutions of $U' = V'$
into solutions of $U = V$;
furthermore, each solution of $U = V$ can be represented in this way
(perhaps by transformation of a solution of some other equation $U'' = V''$,
which is obtained from $U = V$ for different non-deterministic choices);
note that we do not guarantee that there is a unique way to represent \solution{} in such a way.
Also, nodes with trivial equations (i.e.\ $|U|= |V| =1$) have only one, easy to define, minimal solution
(or no solution at all).
Concerning the space consumption, since \algsolveeq{} runs in \PSPACE,
such generation of labelled vertices and edges can also be performed in \PSPACE.

\begin{theorem}[cf.~\cite{PlandowskiSTOC2}]
\label{thm:generation}
The graph representation of all minimal solutions and minimal unifier solutions
of an equation $U = V$ can be constructed in \PSPACE.
The size of the constructed graph is at most exponential.
\end{theorem}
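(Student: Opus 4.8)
I would build the graph incrementally in the obvious way: start with the node labelled by the input equation $U = V$, and repeatedly expand a node labelled $U_0 = V_0$ by simulating one phase of \algsolveeq{} (with \algblocksi{} in place of \algblocksc) over \emph{all} nondeterministic choices; for each resulting equation $U_0' = V_0'$ we add (if not already present) a node for it and an edge from $U_0' = V_0'$ back to $U_0 = V_0$ labelled by the operator that reconstructs a solution of $U_0 = V_0$ from one of $U_0' = V_0'$. The operators labelling edges are exactly the inverses of the elementary transformations performed inside a phase: \Morph c {ab} for each pair compression, \Blockc[a] for each block compression (using the specified canonical inverse \invblockc[a]), and for the popping steps the operator that replaces $X$ by $a^{\ell_X} X b^{r_X}$ (resp.\ $wXv$); a phase's label is the composition of these. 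Nodes with $|U_0| = |V_0| = 1$ are the leaves, each carrying its unique minimal solution (the common letter, or none).

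\textbf{Correctness of the representation.} Two directions must be checked. Soundness: every path from a leaf to the root, read as a composition of the edge-operators applied to the leaf's trivial solution, yields a genuine solution of $U = V$ — this is immediate from Lemma~\ref{lem:preserving unsatisfiability} together with the fact that each subprocedure in a phase is sound, so each edge-operator maps solutions to solutions. Completeness: every minimal (or minimal unifier) solution \solution{} of $U = V$ is obtained along some path. Here one runs \algsolveeq{} ``according to \solution'' exactly as in the proofs of Lemma~\ref{lem:crossing non crossing}, Lemma~\ref{lem: consistent no crossing block} and Lemma~\ref{lem:improved block compression}: the phase transforms \solution{} into a solution $\solution'$ of the child equation with $|\solution'(U')| < |\sol U|$ (by Claim~\ref{clm:words are compressed}, for the right partition guesses), and the reconstruction operator applied to $\solution'$ gives back \solution. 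One must also verify that $\solution'$ is again \emph{minimal} (resp.\ minimal unifier) for the child equation — otherwise the induction would not terminate at a leaf with a minimal solution. This is where Lemma~\ref{lem:over a cut improved} and Claim~\ref{clm:each pair is crossing} do the work: the elementary operators of a phase are built from \Morph c {ab}, \Blockc[a] and the pop-operators, and running ``according to \solution'' only compresses pairs/blocks that are explicit or crossing (hence genuinely present), so the reconstruction operator is non-erasing and non-permutating and constant on \letters; consequently if the child solution $\solution'$ were an instance of a unifier solution, composing with the phase operator would exhibit \solution{} as such an instance too, contradicting minimality of \solution. For minimal unifier solutions one additionally invokes Lemma~\ref{lem:unifier reduces} (and the deferred Lemma~\ref{lem:unifier to minimal}) to reduce them, by one round of left-popping, to minimal solutions of a modified equation, so they are covered by the same graph after a bounded preprocessing.

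\textbf{Space and size bounds.} By Theorem~\ref{thm:main} (and the $\Ocomp(n\log n)$ refinement) every equation appearing as a node has length $\Ocomp(n)$, so it is encodable in polynomial space; enumerating the nondeterministic choices of one phase and the resulting child equations is a polynomial-space computation, so the whole graph can be generated node-by-node in \PSPACE{} (we never need to hold the entire graph in memory — we output it). For the size bound: a node's equation has length $\Ocomp(n)$ over an alphabet that, after normalising via the encodings of Section~\ref{sec:linear space} (or simply bounding the block lengths by $2^{\Ocomp(n)}$ using Lemma~\ref{lem:periodicity bound original}), can be taken from a set of at most $2^{\Ocomp(n)}$ symbols, so there are at most $2^{\Ocomp(n)}\cdot 2^{\Ocomp(n)} = 2^{\poly(n)}$ distinct nodes, and each node has at most exponentially many outgoing edges (one per tuple of nondeterministic guesses in a phase, again $2^{\poly(n)}$ many); hence the graph has at most exponential size.

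\textbf{Expected main obstacle.} The routine parts are soundness and the space/size counting; the delicate part is the completeness argument, specifically showing that a minimal solution stays minimal after one phase. The subtlety is that a phase is a \emph{composition} of many elementary operators, and while each elementary operator is of the form handled by Claim~\ref{clm:each pair is crossing}, one must check that their composition is still non-erasing, non-permutating and constant on \letters, and that the child equation's solution is exactly the ``pushed forward'' \solution{} rather than some incomparable solution — i.e.\ that the nondeterministic run ``according to \solution'' really does produce a \solution-coherent sequence of compressions (this is precisely the content of Lemma~\ref{lem:improved block compression} for the block step and Lemma~\ref{lem:crossing non crossing} for the pair steps, which must be threaded together carefully, together with the block-length bookkeeping so that the reconstruction operators are well-defined). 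Handling the minimal-unifier case uniformly via Lemma~\ref{lem:unifier reduces} is the other place where care is needed.
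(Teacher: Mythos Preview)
Your approach matches the paper's, and you correctly identify minimality preservation as the crux (handled in the paper by Lemmata~\ref{lem: pop transforms}--\ref{lem:improved block compression transform}). There is one point that needs correcting, however: you describe each edge label as a single operator (``\Blockc[a] for each block compression'', ``the operator that replaces $X$ by $a^{\ell_X} X b^{r_X}$''), but with \algblocksi{} the nondeterministic choices fix only the prefix-suffix structure and the block partition --- i.e.\ the Diophantine system $D$ --- and \emph{not} concrete lengths $\ell_X, r_X$. A single edge $U=V \to U'=V'$ must therefore carry a \emph{family} $\mathcal{H}_D$ of inverse operators, one for each solution $\{\ell_X,r_X\}_X$ of $D$; this is exactly~\eqref{eq: proper operator 2}. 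Without the family you lose solutions: for $aX=Xa$ all the minimal solutions $\sol X = a^n$ collapse to the same trivial child equation after block compression, and a single operator would recover only one value of $n$. Once you label edges by families (polynomial to describe, since $D$ is small), your completeness argument goes through as written.

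A minor second point: your ``incremental expansion'' is not obviously \PSPACE{} as phrased, since checking ``if not already present'' or maintaining an expansion frontier could cost exponential space. The paper's fix (Lemma~\ref{lem:generation}) is to iterate over all proper equations of length $\leq cn$ and, for each, test reachability from $U_0=V_0$ via \algsolveeq{} --- an \NPSPACE$=$\PSPACE{} check --- rather than expanding forward. Your parenthetical ``we never need to hold the entire graph'' suggests you have this in mind; it just needs to be made explicit.
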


As already noted, representing all minimal unifier solution can be reduced to
representing all minimal solutions, which will be formally stated in Lemma~\ref{lem:unifier to minimal}.
Thus, in the following, we focus on the representation of minimal solutions of a given word equation.
We begin with the description of operators that are used to transform the solutions.

\subsection*{Transforming solutions and inverse operators}
So far we only know that \algsolveeq{} is sound and complete,
however, we do not really know what happens with particular solutions:
we do not know how to obtain the solution of the original equation $U = V$
from the transformed equation $U' = V'$, even worse
it might be that many of them are somehow lost in the translation.
To describe the correspondence of solutions,
we strengthen the notions of soundness and completeness
(so that they resemble more the notions of implementing the pair compression and block compression).

Given a (nondeterministic) procedure transforming the equation $U = V$
we say that this procedure \emph{transforms the minimal solutions},
if based on the nondeterministic choices and the input equation we can define a family of operators $\mathcal H$
such that
\begin{itemize}
	\item for any minimal solution \solution{} of $U = V$ there are some nondeterministic choices
	that lead to an equation $U' = V'$
	such that $\solution = H[\solution']$ for some minimal solution $\solution'$
	of the equation $U' = V'$ and some operator $H \in \mathcal H$;
	\item for every equation $U' = V'$ that can be obtained from $U = V$ and any its solution $\solution'$
	and for every operator $H \in \mathcal H$ the $H[\solution']$ is a solution of $U = V$.
\end{itemize}
Note that both $U' = V'$ and $\mathcal H$ depend on the nondeterministic choices, so it might
be that for different choices we can transform $U = V$ to $U' = V'$ (with $\mathcal H'$)
and to $U'' = V''$ (with a family $\mathcal H''$).

We also say that the equation $U=V$ with its solution \solution{}
are \emph{transformed into} $U'=V'$ with $\solution'$ and
that $\mathcal H$ is the \emph{corresponding family of inverse operators}.
In many cases, $\mathcal H$ consists of a single operator $H$,
in such case we call it the \emph{corresponding inverse operator},
furthermore, in some cases $H$ does not depend on $U = V$, nor on the nondeterministic choices.

In some cases for an equation $U = V$ and its solution \solution{} we explicitly tell, for which nondeterministic choices it is transformed
to some other equation $U' = V'$ with a solution $\solution'$ (intuitively: for the choices that implement the pair compression for \solution{}
or the block compression).

Our main goal is to show that subprocedures of \algsolveeq{} transform the minimal solutions
and to give the appropriate family of operators.

\subsection*{Inverse operators occurring in \algsolveeq}
We describe the family of inverse operators corresponding to various subprocedures of \algsolveeq.

\subsubsection*{Pair compression}
Define an operator \opprepend[w,X] for a string $w \in (\letters \cup \letters')^*$, which prepends $w$ to substitution for $X$
and leaves other variables untouched, formally:
$$
{\opprepend[w,X]}[\solution](X) = w \sol X \quad \text{ and } \quad {\opprepend[w,X]}[\solution](Y) = \sol Y, \text{ for } Y \neq X.
$$
Define \opappend[w,X] similarly, by appending $w$ to \sol X:
$$
{\opappend[w,X]}[\solution](X) = \sol X w \quad \text{ and } \quad {\opappend[w,X]}[\solution](Y) = \sol Y, \text{ for } Y \neq X.
$$

\begin{lemma}[cf.~Lemma~\ref{lem:pop preserves solutions}]
\label{lem: pop transforms}
$\algpop(\letters_\ell,\letters_r)$ transforms the minimal solutions.

Suppose it left-popped $b_X \in \letters_r \cup \{ \epsilon \}$
and right-popped $a_X \in \letters_\ell \cup \{ \epsilon \}$ from $X$,
then the corresponding inverse operator is:
$$
H = \displaystyle  \prod_{X \in \variables}  \opappend[a_X,X] \circ \opprepend[b_X,X] 
 \enspace .
$$
\end{lemma}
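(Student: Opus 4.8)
The plan is to set up the family $\mathcal H$ of inverse operators as all compositions $\prod_{X\in\variables}\opappend[a_X,X]\circ\opprepend[b_X,X]$ ranging over the possible guessed letters $a_X\in\letters_\ell\cup\{\epsilon\}$, $b_X\in\letters_r\cup\{\epsilon\}$ (one operator per nondeterministic run of $\algpop$), and then verify the two bullets in the definition of ``transforms the minimal solutions''. For the second bullet (soundness direction at the level of individual solutions), I would start from an arbitrary solution $\solution'$ of the transformed equation $U'=V'$ and check that $H[\solution']$ is a solution of $U=V$: this is essentially a re-reading of the first part of Lemma~\ref{lem:pop preserves solutions} together with Lemma~\ref{lem:preserving unsatisfiability}(1), since replacing $X$ by $b_XX a_X$ in $U$ and $V$ (which is what $\algpop$ did) is undone at the solution level by prepending $b_X$ and appending $a_X$; one just has to note that $\opappend$/$\opprepend$ with $\epsilon$ is the identity, so the formula covers the cases where nothing was popped on one or both sides.

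For the first bullet I would take a \emph{minimal} solution $\solution$ of $U=V$ and run $\algpop$ with the guesses dictated by $\solution$: guess $b$ to be the genuine first letter of $\sol X$ and pop it iff $b\in\letters_r$, symmetrically for the last letter $a$ and $\letters_\ell$, and guess the $\epsilon$-removal correctly. Define $\solution'$ by $\sol X = b_X\,\solution'(X)\,a_X$ (exactly as in Lemma~\ref{lem:pop preserves solutions}), so that $H[\solution']=\solution$ by construction and $\solution'(U')=\sol U$. The one genuinely new thing to argue here, beyond what Lemma~\ref{lem:pop preserves solutions} already gives, is that $\solution'$ is again \emph{minimal} — Lemma~\ref{lem:pop preserves solutions} only spoke of ``solution''. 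I would argue this by contraposition: if $\solution'$ were an instance of a unifier solution, say $\solution'=\Phi[\solution'']$ for a non-erasing, non-permutating $\phi$ constant on $\letters$, then since $b_X,a_X\in\letters$ the morphism $\phi$ commutes with prepending/appending those letters, so $\solution=H[\solution']=H[\Phi[\solution'']]=\Phi[H[\solution'']]$, exhibiting $\solution$ as an instance of the unifier solution $H[\solution'']$ — contradicting minimality of $\solution$. (That $H[\solution'']$ is still a unifier solution uses that $\phi$ is constant on $\letters\supseteq\{a_X,b_X\}$, so prepending/appending letters from $\letters$ does not remove the free letter from $\solution''(U'')$; and it is a genuine \emph{solution} by the second bullet just proved.) The same commutation argument rules out $\solution'$ itself being a unifier solution whose instance is $\solution$, and shows that if $\solution'$ were a non-minimal \emph{non}-unifier we could likewise pull the witnessing morphism through $H$.

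The main obstacle I expect is precisely this interaction between the popping operators and an arbitrary morphism $\phi$ witnessing non-minimality: one has to be careful that $H$ (which prepends/appends fixed \emph{letters}, not arbitrary strings) really does commute with $\Phi$ — this works only because the popped symbols lie in $\letters$ and $\phi$ is the identity on $\letters$, and one must handle the bookkeeping that $H$ acts variable-by-variable while $\Phi$ acts letter-by-letter. A secondary, purely notational nuisance is the $\epsilon$ cases (a variable from which nothing is popped on one or both sides, or which is deleted entirely after popping); I would dispose of these up front by the convention that $\opappend[\epsilon,X]=\opprepend[\epsilon,X]=\Id$ and that a deleted variable has $\sol X=\sol{}'(X)=\epsilon$ throughout, so the displayed product formula remains literally correct. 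Everything else is a direct transcription of Lemma~\ref{lem:pop preserves solutions} and Lemma~\ref{lem:preserving unsatisfiability}(1).
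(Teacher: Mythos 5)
Your proposal follows essentially the same route as the paper: use Lemma~\ref{lem:pop preserves solutions} to produce $\solution'$ with $\solution = H[\solution']$, observe that the converse direction (any solution $\solution'$ of $U'=V'$ yields the solution $H[\solution']$ of $U=V$) is immediate, and then show minimality of $\solution'$ by contraposition, pulling the witnessing morphism $\Phi$ through $H$ using that the popped symbols lie in $\letters$ and $\phi$ is constant on $\letters$. One small remark: in the subcase where $\solution'$ is itself a unifier solution, the paper argues directly that $H$ only \emph{adds} letters and therefore cannot destroy the free letter, so $\solution=H[\solution']$ is again a unifier solution; calling this ``the same commutation argument'' and phrasing it as $\solution$ being ``an instance of $\solution'$'' is slightly off (since $H$ is not of the form $\Phi$ for a morphism $\phi$), though the conclusion you draw is the right one.
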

\begin{proof}
Fix the nondeterministic choices and let $H$ be as defined in the lemma statement.
Observe that from the proof of Lemma~\ref{lem:pop preserves solutions} it follows that if \solution{} is a solution of $U = V$
(note that we do not need to assume here that \solution{} is minimal)
then $U' = V'$ has a solution $\solution'$ such that $\sol U -= \solution'(U')$.
Since for each variable $X$ we replaced $X$ with $a_XXb_X$ (or $a_Xb_X$, when $X$ was removed from the equation),
this means that $\solution = H[\solution']$.

On the other hand, when $\solution'$ is a solution of $U' = V'$ then $\solution = H[\solution']$ satisfies
$\sol U = \solution'(U')$:
we replaced $X$ we replaced $X$ with $a_XXb_X$ (or $a_Xb_X$, when $X$ was removed from the equation) in $U = V$
and \sol X is obtained exactly by prepending $a_X$ and appending $b_X$ to $\solution'(X)$.
Similarly $\sol V = \solution'(V')$, which makes \solution{} a solution of $U = V$.

It is left to show that if $U = V$ with minimal solution \solution{}
is transformed to $U' = V'$ with $\solution'$
then also $\solution'$ is a minimal solution.
Suppose that $\solution'$ is not minimal, i.e.\ it is either a unifier solution or an instance of a unifier solution.
\begin{description}
	\item[it is a unifier solution] 
	Then $\solution'$ has a free letter. As $H$ only prepends and appends letters, also $\solution = H[\solution']$ has a free letter,
	which makes it a unifier solution,contradicting its minimality.
	\item[it is an instance of a unifier solution]
	Then $\solution' = \Phi[\solution'']$ for some unifier solution $\solution''$ and non-erasing, non-permutating morphism $\phi$
	which is constant on \letters.
	Observe that $\solution = H[\Phi[\solution'']]$. We claim that $\solution = \Phi[H[\solution'']]$:
	indeed, this follows from the fact that $H$ only prepends and appends letters from \letters, which are not affected by $\phi$.
	Since $H[\solution'']$ has a free letter, this makes \solution{} an instance of a $H[\solution'']$, contradiction
	with the minimality of \solution.
\qedhere
\end{description}
\end{proof}

We now investigate the inverse operator associated with \algpairc:  
\begin{lemma}[cf.~Lemma~\ref{lem: crossing pairs preserve}]
\label{lem: crossing pairs transforms}
$\algpairc(\letters_\ell,\letters_r)$ transforms the minimal solutions.
To be more precise, for the nondeterministic choices that implement the pair compression for solution \solution{} of $U = V$
obtaining $U' = V'$ with a corresponding solution $\solution'$,
the $U = V$ with \solution{} is transformed to $U' = V'$ with $\solution'$.

Let $H$ be the inverse operator of the \algpop{} applied in \algpairc,
furthermore let \algpairc{} replaced pairs $ab \in \letters_\ell \letters_r$ with $c^{(ab)}$.
Then the corresponding inverse operator is:
\begin{equation}
\label{eq: proper operator 1}
H \circ \prod_{ab \in \letters_\ell\letters_r} \Morph {c^{(ab)}} {ab} \enspace .
\end{equation}
\end{lemma}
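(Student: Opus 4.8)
The plan is to read $\algpairc$ as the composition it is — one call of $\algpop(\letters_\ell,\letters_r)$ followed by one $\algpair(a,b)$ for each $ab\in\letters_\ell\letters_r$ — and to build the inverse operator from the corresponding pieces, namely $H$ from Lemma~\ref{lem: pop transforms} and one $\Morph{c^{(ab)}}{ab}$ per compressed pair. First I would verify the ``backward'' half of the definition of \emph{transforming the minimal solutions}: for an \emph{arbitrary} solution $\solution'$ of the final equation $U'=V'$, the operator $H\circ\prod_{ab\in\letters_\ell\letters_r}\Morph{c^{(ab)}}{ab}$ must produce a solution of $U=V$. Each $\Morph{c^{(ab)}}{ab}$ corresponds to the sound operation ``replace the letter $c^{(ab)}$ by the word $ab$'' (Lemma~\ref{lem:preserving unsatisfiability}), and since the freshly introduced letters $c^{(ab)}$ are pairwise distinct these operators commute and their product turns a solution of $U'=V'$ into a solution of the equation produced by $\algpop$; applying $H$ then gives a solution of $U=V$ by Lemma~\ref{lem: pop transforms}.

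Next I would treat the ``forward'' half. Let $\solution$ be a minimal solution of $U=V$. Run $\algpop(\letters_\ell,\letters_r)$ with the guesses of Lemma~\ref{lem: pop transforms} (which coincide with those of Lemma~\ref{lem:pop preserves solutions}); we obtain $U_1=V_1$ with a \emph{minimal} solution $\solution_1$ satisfying $\solution=H[\solution_1]$ and such that every $ab\in\letters_\ell\letters_r$ is non-crossing with respect to $\solution_1$. Now apply the calls $\algpair(a,b)$ in turn, each with the choices that implement its pair compression (Lemma~\ref{lem:paircomp blockcomp}); since pairs from $\letters_\ell\letters_r$ are pairwise non-overlapping, these compressions do not interfere and each remaining pair stays non-crossing, exactly as in the proof of Lemma~\ref{lem: crossing pairs preserve}. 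At the $i$-th step we get $\solution_i=\Morph{c^{(ab)}}{ab}[\solution_{i+1}]$: indeed $\solution_{i+1}$ is $\solution_i$ with each $ab$ replaced by $c^{(ab)}$, and since the fresh letter $c^{(ab)}$ does not occur in $\solution_i$ the operator $\Morph{c^{(ab)}}{ab}$ undoes this replacement. Composing, $\solution_1=\bigl(\prod_{ab\in\letters_\ell\letters_r}\Morph{c^{(ab)}}{ab}\bigr)[\solution']$, hence $\solution=\bigl(H\circ\prod_{ab\in\letters_\ell\letters_r}\Morph{c^{(ab)}}{ab}\bigr)[\solution']$, which is the stated operator.

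The one genuinely delicate step, which I expect to be the main obstacle, is showing that the solution $\solution'$ obtained this way is \emph{minimal} — the easy argument used for $\algpop$ does not transfer directly, because $\morph{c^{(ab)}}{ab}$ is not constant on $\letters$ (it moves $c^{(ab)}$), so the operator $M:=\prod_{ab\in\letters_\ell\letters_r}\Morph{c^{(ab)}}{ab}$ does not commute with an arbitrary morphism in the naive way. Write $m$ for the morphism underlying $M$, so $\solution_1=M[\solution']$. If $\solution'$ were a unifier with free letter $v\in\letters'$, then $v$ differs from every $c^{(ab)}$, so $v$ survives both $m$ and $H$ and hence occurs in $\solution(U)$, making $\solution$ a unifier and contradicting its minimality. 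If instead $\solution'=\Phi[\solution'']$ for a unifier $\solution''$ and a non-erasing, non-permutating morphism $\phi$ constant on $\letters$, I would set $\phi'$ to be the identity on $\letters$ and $\phi'(v):=m(\phi(v))$ for $v\in\letters'$; a letter-by-letter verification gives $m\circ\phi=\phi'\circ m$, hence $M\circ\Phi=\Phi'\circ M$, and since $H$ only prepends and appends letters of $\letters$ (which $\phi'$ fixes) also $H\circ\Phi'=\Phi'\circ H$, so that $\solution=(H\circ M)[\Phi[\solution'']]=\Phi'\bigl[(H\circ M)[\solution'']\bigr]$. Here $(H\circ M)[\solution'']$ is a solution of $U=V$ (by the ``backward'' part) and a unifier (its free letter survives $m$ and $H$), while $\phi'$ is non-erasing, constant on $\letters$, and non-permutating — the last because if $\phi'$ were a permutation then each $\phi'(v)=m(\phi(v))$ would be a single letter, which forces $\phi(v)$ to be a single letter distinct from every $c^{(ab)}$, whence $\phi'=\phi$, contradicting that $\phi$ is non-permutating. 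Thus $\solution$ would be an instance of a unifier, again contradicting its minimality; so $\solution'$ is minimal, which finishes the two requirements in the definition.
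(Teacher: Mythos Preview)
Your proof is correct and follows the same overall decomposition as the paper: treat $\algpairc$ as $\algpop$ followed by the individual $\algpair(a,b)$'s, take the inverse operator of $\algpop$ from Lemma~\ref{lem: pop transforms}, and compose with one $\Morph{c^{(ab)}}{ab}$ per compressed pair; the backward direction and the reduction to non-crossing pairs are handled exactly as in Lemmata~\ref{lem:pop preserves solutions}--\ref{lem: crossing pairs preserve}.

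The only real difference is in the minimality argument. The paper works one $\algpair(a,b)$ at a time and, assuming $\solution'=\Phi[\solution'']$, simply sets $\phi':=\morph{c}{ab}\circ\phi$ and observes that $c$ is never in the image of $\morph{c}{ab}$, so $\phi'$ cannot be a permutation; this is a one-line argument. You instead treat all compressions at once via $m$, build $\phi'$ as the identity on $\letters$ and $m\circ\phi$ on $\letters'$, and use the commutation $m\circ\phi=\phi'\circ m$ together with the commutation of $H$ with $\Phi'$ to exhibit the concrete unifier solution $(H\circ M)[\solution'']$ \emph{of the original equation} $U=V$. Your route is a bit longer but makes explicit which equation the witnessing unifier actually solves---a point the paper leaves implicit. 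Both arguments are valid; yours buys transparency, the paper's buys brevity (and its non-permutation check is slightly slicker).
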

Note that as $c^{ab}$ is not in $\letters_\ell \cup \letters_r$ then the order of applying the $\Morph {c^{(ab)}} {ab}$ does not matter. 
\begin{proof}
Observe that by Lemma~\ref{lem: pop transforms} the application of \algpop{} transforms the minimal solutions;
furthermore, the inverse operator for it is well-described.
So it is left to consider the compression of pairs performed by \algpairc.
Such a compression is a composition of many \algpair{}
(and as the pairs are non-overlapping, the order of those compression does not matter),
so it is enough to show
that when $\algpair(a,b)$ transforms $U = V$ with a minimal \solution{} to $U' = V'$ with $\solution'$
then $\solution'$ is minimal and \Invmorph {c} {ab} is the corresponding inverse operator.

Suppose that \solution{} is a solution of $U = V$.
Observe that for appropriate non-deterministic choices (done in \algpop) the pair $ab$ is noncrossing,
see Lemma~\ref{lem:pop preserves solutions}.
Moreover, the compressions of a pair $a'b'$ cannot make $ab$ crossing,
as $a'b'$ do not overlap with $ab$, the letter that replaced $a'b'$ is not $a$ and not $b$
and lastly no letters are popped from the variables during the compression of non-crossing pairs.
Then by Lemma~\ref{lem:paircomp blockcomp},
the equation $U' = V'$ returned by $\algpair(a,b)$ has a solution $\solution'$ which implement the pair compression,
i.e.\ $\sol U = \morph {c} {ab} (\solution'(U'))$.
Since $\algpair(a,b)$ does not modify variables, this means that $\solution = \Morph {c} {ab} [\solution']$, as claimed.

Suppose that $U = V$ with a minimal solution \solution{} is transformed
into $U' = V'$ with a solution $\solution'$, which is not minimal.
There are two cases: either $\solution'$ is a unifier solution,
or it is an instance of the unifier solution;
we consider only the latter, the former is shown using a similar argument.
Then on one hand $\solution = \Morph c {ab} [\solution']$ and on the other
$\solution' = \Phi[\solution'']$ for morphism $\phi: \letters \cup \letters' \mapsto (\letters \cup \letters')^+$
which is non-erasing, non-permutating and constant on $\letters$.
Then $\solution = (\Morph c {ab} \circ \Phi) [\solution'']$
and it is left to show that $\Morph c {ab} \circ \Phi$ corresponds to some
non-erasing non-permutating morphism $\phi'$.
Define $\phi'(x) = (\morph c {ab} \circ \phi)(x)$.
It is easy to observe that $\phi'$ is non-erasing and constant on \letters:
indeed, both $\phi$ and $\morph c {ab}$ are non-erasing and constant on \letters,
so their composition is as well.
Lastly, it is non-permutating: observe that $c$ is not in the image of $\phi'$, as it is not in the image of $\morph c {ab}$,
so $\phi'$ cannot be a permutation.

It is left to show that when $\solution'$ is a solution of $U' =V'$ returned by $\algpair(a,b)$
then $\solution = \Morph c {ab} [\solution']$ is a solution of $U = V$.
Note that $\sol U = \morph {c} {ab} (\solution'(U'))$, as each explicit $c$ in $U'$ was obtained by replacing $ab$ by $\algpair(a,b)$,
while each implicit $c$ in $\solution'(U')$ was replaced by $ab$ by $\Morph {c} {ab}$.
In the same way $\sol V = \morph {c} {ab} (\solution'(V'))$, which shows that \solution{} is a solution of $U = V$.
\qedhere
\end{proof}

\subsubsection*{Block compression}
For block compression \algblocksi, the family of inverse operators is quite complicated.
Instead of introducing it and proving its properties in one go, we choose to make some intermediate steps,
which hopefully make a smoother presentation.
We begin with describing the family of inverse operators for \algprefsuff{} and then give the one for \algblocksc;
in each of those cases the corresponding family consists of a single operator, similarly as in the case of \algpop{} and \algpairc.
From Lemma~\ref{lem:improved block compression} we know that a run of \algblocksi{} represents several runs of \algblocksc,
and so in some sense it is a `parametrised' \algblocksc. This approach extends to inverse operators:
The family of inverse operators for \algblocksi{} represents
(in a parametrised way) several inverse operators for different runs of \algblocksc.
The family is defined using the solutions of the system $D$ created by \algblocksi,
with a single inverse operator corresponding to a solution of the Diophantine system $D$.

For \algprefsuff{} the analysis is exactly the same as for \algpop:
the inverse operator appends and prepends the letters that were popped by \algprefsuff.

\begin{lemma}
\label{lem:cutpref cutsuff transform}
\algprefsuff{} transforms the minimal solutions.
The corresponding inverse operator is
$$
\prod_{X \in \variables} \opprepend[a_X^{\ell_X},X] \circ \opappend[b_X^{r_X},X] \enspace ,
$$
where $a_X^{\ell_X}$ ($b_X^{r_X}$) is the prefix (suffix, respectively) removed from \sol X.
\end{lemma}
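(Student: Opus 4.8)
The plan is to mirror almost verbatim the argument already given for \algpop{} in Lemma~\ref{lem: pop transforms}, since \algprefsuff{} does structurally the same thing: it only prepends and appends explicit blocks of letters (from \letters) to the variables, and the inverse operator simply puts those blocks back. First I would fix the nondeterministic choices, so that each $X$ is replaced by $a_X^{\ell_X} X b_X^{r_X}$ (or by $a_X^{\ell_X} b_X^{r_X}$ when $X$ is removed because $\solution'(X)=\epsilon$), and let $H = \prod_{X \in \variables} \opprepend[a_X^{\ell_X},X] \circ \opappend[b_X^{r_X},X]$ be the operator in the statement. The two ``easy'' directions are immediate from Lemma~\ref{lem:cutpref cutsuff}: if \solution{} is any solution of $U=V$ then for the choices that pop the true $a_X$-prefix and $b_X$-suffix from each $X$, the obtained $U'=V'$ has a solution $\solution'$ with $\sol U = \solution'(U')$, and because we replaced $X$ by $a_X^{\ell_X} X b_X^{r_X}$ we get exactly $\solution = H[\solution']$; conversely, if $\solution'$ is any solution of $U'=V'$ then $H[\solution']$ substitutes $a_X^{\ell_X}\solution'(X)b_X^{r_X}$ for $X$, which is precisely what is plugged into the positions of $X$ in $U=V$, so $H[\solution']$ solves $U=V$.

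The only nontrivial part is the minimality preservation: if $U=V$ has a \emph{minimal} solution \solution{} and it is transformed to $U'=V'$ with $\solution'$ (with $\solution = H[\solution']$), then $\solution'$ is again minimal. I would argue by contraposition exactly as in Lemma~\ref{lem: pop transforms}. If $\solution'$ were a unifier solution it would have a free letter in $\solution'(U')$; since $H$ only prepends/appends letters from \letters{} (not from $\letters'$), that free letter survives into $\sol U = H[\solution'](U)$, so \solution{} would be a unifier solution, contradicting minimality. If $\solution'$ were an instance of a unifier solution, $\solution' = \Phi[\solution'']$ for a non-erasing, non-permutating $\phi$ constant on \letters{} and a unifier solution $\solution''$; then $\solution = H[\Phi[\solution'']] = \Phi[H[\solution'']]$, the commutation holding because the letters $a_X,b_X$ that $H$ appends lie in \letters{} and are therefore fixed by $\phi$. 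Since $H[\solution'']$ is still a unifier solution (it retains the free letter of $\solution''$), this exhibits \solution{} as an instance of a unifier solution, again contradicting minimality.

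The main obstacle, such as it is, is just the bookkeeping around the edge case $\solution'(X)=\epsilon$: when $X$ is removed from the equation during \algprefsuff{}, the operator $H$ must still ``create'' $\sol X = a_X^{\ell_X} b_X^{r_X}$ for that variable, even though $X$ no longer appears in $U'=V'$; one checks that this is consistent with the convention (adopted in Section~\ref{sec:notions}) that $\solution'(X)=\epsilon$ for variables absent from the equation, so $H$ applied to $\solution'$ indeed yields $a_X^{\ell_X}\epsilon b_X^{r_X} = a_X^{\ell_X}b_X^{r_X}$ as required. Everything else is a routine transcription of the \algpop{} proof, and I would explicitly say so rather than repeat it in full.
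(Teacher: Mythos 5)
Your proof is correct and takes exactly the approach the paper itself indicates: the paper omits this proof and simply says it is similar to the proof of Lemma~\ref{lem: pop transforms}, and your proposal is a faithful transcription of that argument (with the additional helpful remark on the $\solution'(X)=\epsilon$ bookkeeping). The key steps — the two easy directions via Lemma~\ref{lem:cutpref cutsuff}, the unifier case, and the commutation $H \circ \Phi = \Phi \circ H$ because $H$ appends/prepends only letters from $\letters$ on which $\phi$ is constant — all match.
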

\begin{proof}
The proof is similar to the proof of Lemma~\ref{lem: pop transforms} and it is thus omitted.
\qedhere
\end{proof}

This allows stating the result for \algblocksc: intuitively it first replaces each $a_\ell$ with appropriate $a^\ell$ and then
appends and prepends the prefixes and suffixes popped by \algprefsuff, in a similar way as the inverse operator for \algpairc,
see Lemma~\ref{lem: crossing pairs transforms}.

Note that the inverse operator needs to supply the information, which letter is $a_\ell$ and the value of $\ell$
(since $a_\ell$ is just a naming convention that makes the read-up of the paper more accessible,
the algorithm does not know which letter `is' $a_\ell$ and what is the value of $\ell$).
Also, at the first glance it seems that the inverse operator could replace an arbitrary number of different letters $a_\ell$.
Still, as we are interested only in transforming the minimal solutions, this is not the case:
by Lemma~\ref{lem:over a cut improved} in minimal solutions if $a^\ell$ occurs in \sol U then $a^\ell$ is a visible maximal block.
Hence, the corresponding inverse operator does not need to introduce blocks of other form.

\begin{lemma}
\label{lem: consistent no crossing block transform}
$\algblocksc$ transforms the minimal solutions.
To be more precise, for the nondeterministic choices that implement the blocks compression for solution \solution{} of $U = V$
obtaining $U' = V'$ with a corresponding solution $\solution'$,
the $U = V$ with \solution{} is transformed to $U' = V'$ with $\solution'$.

Let $H$ be the inverse operator of the \algprefsuff, then the corresponding inverse operator for \algblocksc{} is
$$
H \circ \prod_{a \in \letters} \Blockc \enspace,
$$
where $\blockc$ replaces $a_\ell$ with $a^\ell$;
if $a^\ell$ replaces $a_\ell$ then $a^\ell$ is a a visible maximal block in $U = V$ for \solution.
Without loss of generality we may assume that $H$ appends $a_X^{\ell_X}$ and prepends $b_X^{r_X}$,
where $a_X$ and $b_X$ are the first and last letter of \sol X and $\ell_X$ and $r_X$ are the lengths of the $a_X$ prefix and $b_X$ suffix of \sol X.
\end{lemma}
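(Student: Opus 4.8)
The plan is to argue as in the proofs of Lemma~\ref{lem: pop transforms} and Lemma~\ref{lem: crossing pairs transforms}, reducing the assertion for \algblocksc{} to the one already available for \algprefsuff{} (Lemma~\ref{lem:cutpref cutsuff transform}) together with the obvious analogue for a single $\algblocks(a)$. Recall that \algblocksc{} first runs \algprefsuff{} and then, in a loop, runs $\algblocks(a)$ for every letter $a$, and that it is already known to be sound, complete, and in fact to implement the blocks compression for appropriate nondeterministic choices (Lemma~\ref{lem: consistent no crossing block}). So it only remains to upgrade these facts to the statement that \algblocksc{} transforms the minimal solutions, with the announced family of inverse operators: a member has the form $H_0\circ\prod_{a\in\letters}\Blockc$, where $H_0$ is the inverse operator attached to the run of \algprefsuff{} by Lemma~\ref{lem:cutpref cutsuff transform} and each $\blockc$ replaces $a_\ell$ by $a^\ell$.

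First I would handle the ``backward'' direction: for every equation $U'=V'$ reachable from $U=V$ by \algblocksc{}, every solution $\solution'$ of $U'=V'$ and every operator $H=H_0\circ\prod_{a\in\letters}\Blockc$ of the family, the substitution $H[\solution']$ is a solution of $U=V$. Since each $\algblocks(a)$ merely replaces explicit maximal $\ell$-blocks of $a$ by a fresh letter $a_\ell$, the equation $U'=V'$ is obtained from the intermediate equation $U_0=V_0$ output by \algprefsuff{} by this purely syntactic substitution; hence, for an arbitrary $\solution'$, putting $\solution_0$ to be the image of $\solution'$ under $\prod_{a\in\letters}\Blockc$, one checks directly that $\solution_0(U_0)$ and $\solution_0(V_0)$ are the images of $\solution'(U')$ and $\solution'(V')$ under the morphism induced by $\prod_{a\in\letters}\Blockc$ (applied at once to explicit occurrences and to occurrences inside the values of variables), so that $\solution_0$ is a solution of $U_0=V_0$; this is the block analogue of the last paragraph of the proof of Lemma~\ref{lem: crossing pairs transforms}. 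Then $H_0[\solution_0]$ is a solution of $U=V$ by Lemma~\ref{lem:cutpref cutsuff transform}, and $H[\solution']=H_0[\solution_0]$.

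Next the ``forward'' direction: let \solution{} be a minimal solution of $U=V$; being minimal it is not a unifier solution, so the first and last letters $a_X,b_X$ of each \sol X lie in \letters{}. Take the nondeterministic choices furnished by Lemma~\ref{lem: consistent no crossing block}, so that \algprefsuff{} pops $a_X^{\ell_X}$ and $b_X^{r_X}$ from each $X$ (with $\ell_X,r_X$ the lengths of the $a_X$-prefix and $b_X$-suffix of \sol X) and then the loop of $\algblocks(a)$'s implements the blocks compression, yielding $U'=V'$ with a solution $\solution'$ such that $\solution'(U')$ is \sol U with those prefixes and suffixes removed and every maximal block $a^\ell$, $\ell>1$, replaced by $a_\ell$. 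Reading this backwards gives $\solution=(H_0\circ\prod_{a\in\letters}\Blockc)[\solution']$ with $H_0$ the inverse operator of Lemma~\ref{lem:cutpref cutsuff transform}; moreover, by Lemma~\ref{lem:over a cut improved}, every block $a^\ell$ occurring in \sol U is a visible maximal block, so the only letters the $\Blockc$'s ever need to undo are genuine compressed blocks, which is exactly the side condition in the statement. It remains only to check that $\solution'$ is itself minimal.

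Finally, minimality of $\solution'$ is shown by contraposition, mirroring the corresponding part of the proof of Lemma~\ref{lem: crossing pairs transforms}. If $\solution'$ were a unifier solution, its free letter would survive $H=H_0\circ\prod_{a\in\letters}\Blockc$ — prepending and appending letters of \letters{}, and replacing an $a_\ell$ by $a^\ell$, cannot erase a free letter — so \solution{} would be a unifier solution, a contradiction. If instead $\solution'=\Phi[\solution'']$ for a unifier solution $\solution''$ and a non-erasing, non-permutating morphism $\phi$ constant on \letters{}, then, writing $\phi'$ for the composition of $\phi$ with the morphism induced by $\prod_{a\in\letters}\Blockc$, one checks that $\phi'$ is again non-erasing and constant on \letters{}, and non-permutating because it sends any $a_\ell$ in its argument to a word of length $\ell>1$; since $H_0$ only prepends and appends letters of \letters{} it commutes with $\Phi'$, so $\solution=\Phi'[H_0[\solution'']]$ exhibits \solution{} as an instance of the unifier solution $H_0[\solution'']$, again contradicting minimality. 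I expect this last bookkeeping to be the only real obstacle: unlike the essentially non-expanding $\morph c {ab}$ of Lemma~\ref{lem: crossing pairs transforms}, the block morphism $\blockc$ genuinely lengthens letters, so a little extra care is needed to re-derive that the composed morphism $\phi'$ is non-permutating and that the length-changing prefix/suffix operator $H_0$ still commutes with it.
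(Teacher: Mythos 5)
Your proof is correct and follows essentially the same approach as the paper: it reduces to the already-established property of \algprefsuff{} (Lemma~\ref{lem:cutpref cutsuff transform}) and then mirrors the minimality-preservation argument of Lemma~\ref{lem: crossing pairs transforms}, invoking Lemma~\ref{lem:over a cut improved} to justify restricting the inverse operator to visible maximal blocks. The paper's own proof is terser, deferring most of the bookkeeping to ``similar to the one in Lemma~\ref{lem: crossing pairs transforms}'', whereas you explicitly verify the backward direction, the commutativity of $H_0$ with $\Phi'$, and that the composed morphism $\phi'$ remains non-erasing, non-permutating, and constant on \letters{} -- all of which checks out.
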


At the first glance, the condition that the inverse operator for \algblocksc{} replaces $a_\ell$ by $a^\ell$ only when
$a^\ell$ is a visible block in $U = V$ for \solution{} seems bad,
as we promised that the inverse operators \emph{do not} depend on the particular solutions,
but rather on the equation and the non-deterministic choices (here: of \algblocksc).
However, after a second thought, there is nothing bad with this: observe that the lengths of the visible blocks
are implicitly defined in the nondeterministic choices of \algprefsuff,
as it pops prefixes of length $\ell_X$ and suffixes of length $r_X$ from variable $X$
and the lengths of the visible blocks linearly depend on $\{\ell_X,r_X \}_{X \in \variables}$, see Lemma~\ref{lem: coherent solution}.
This observation is not formalised: in any case Lemma~\ref{lem: consistent no crossing block transform}
is used only as an intermediate step in description of the inverse operators for \algblocksi.
And in case of \algblocksi{} the given inverse operator shall not depend on the solution \solution{}
(though this formulation of Lemma~\ref{lem: consistent no crossing block transform} is helpful in the proof).

\begin{proof}
By Lemma~\ref{lem:cutpref cutsuff transform} the \algprefsuff{} transforms minimal solutions.
Furthermore, for appropriate choices, there are no crossing blocks in the obtained $U' = V'$ with respect to $\solution'$,
see Lemma~\ref{lem:cutpref cutsuff}.

Observe that afterwards $\algblocksc$ is a composition of $\algblocks(a)$ for all letters $a$.
The rest of the proof is similar to the one in Lemma~\ref{lem: crossing pairs transforms}.
Furthermore, by Lemma~\ref{lem: consistent no crossing block} we know that
in the run of \algblocksc{} that implements the blocks compression
the \algprefsuff{} pops exactly the $a_X$-prefix and $b_X$-suffix from each variable $X$,
where \sol X begins with $a_X$ and ends with $b_X$,
so we can also use those choices when transforming the solution \solution.

Concerning the restriction of the replaced letters $a_\ell$: when \solution{} is a minimal solution,
then whenever $a^\ell$ is a maximal block, it also has a visible occurrence in $U = V$ for \solution, see Lemma~\ref{lem:over a cut improved}.
As the $\solution'$ of $U' = V'$ corresponds to the implementation of block compression, the maximal blocks in \sol U and \sol V
are obtained by replacing single letters in $\solution'(U')$ and $\solution'(V')$
by blocks.
Hence, suppose that the inverse operator morphs a letter $b_k$ to $b^k$ and $b^k$ is not a visible maximal block in $U = V$ for \solution.
Hence $b_k$ is not a maximal block in \sol U.
In particular, $b_k$ does not occur in $\solution'(U')$, so
we can remove this morphed pair from the inverse operator.
\qedhere
\end{proof}

Now we are ready to define the family of operators for \algblocksi.
Intuitively, it will encode several inverse operators associated with different
nondeterministic choices of \algblocksc{} in a compact way:
instead of making explicit guesses about the lengths of the prefixes and suffixes of $\sol X$,
it will parametrise them using variables $x_X$ and $y_X$.
On the other hand $\solution'(U')$ contains letters, which represent blocks of letters and lengths of those blocks
linearly depend on $x_X$ and $y_X$.
The coherence of all these lengths is guaranteed by an appropriate system of Diophantine equations
(exactly as in the case of \algblocksi).

Recall that for an arithmetic expression $e_i$ in variables $\{x_{X}, y_{X}\}_{X \in \variables}$
the $e_i[\{\ell_X, r_X\}_{X \in \variables}]$ denotes the value of $e_i$ when $\{\ell_X, r_X\}_{X \in \variables}$
are substituted for $\{x_{X}, y_{X}\}_{X \in \variables}$.

Suppose that \algblocksi{} constructs a linear Diophantine system $D$ in variables $\{x_{X}, y_{X}\}_{X \in \variables}$,
and popped a prefix $a_X^{x_X}$ of $a_X$ and suffix $b_X^{y_X}$ of $b_X$ from $X$.
(Note that $D$ depends on the nondeterministic choices of \algblocksi.)
Then we define a (finite or infinite) family of inverse operators:
let $\{e_i\}_{i=1}^m$ be the lengths of parametrised visible maximal blocks for the
coherent prefix-suffix structure (in variables $\{x_X, y_X\}_{X \in \variables}$).
By the definition the sides of equations in $D$ are those expression.
Then, for a solution $\{\ell_X, r_X\}_{X \in \variables}$ of $D$
the following operator is in a family of operators $\mathcal H_{D}$:
\begin{equation}
\label{eq: proper operator 2}
\left(\displaystyle \prod_{X \in \variables} \opprepend[a_X^{\ell_X},X] \circ \opappend[b_X^{r_X},X]\right) \circ \prod_{a \in \letters}\Blockc
\enspace,
\end{equation}
where $\blockc$ replaces the letter $a_{e_i}$ by $a^{e_i[\{\ell_X, r_X\}_{X \in \variables}]}$, and no other letters are replaced.
Note that the operator needs to explicitly point the letters that it treats as $a_{e_i}$ as well
as the arithmetic expressions $e_i$ (so that it can calculate $e_i[\{\ell_X, r_X\}_{X \in \variables}]$):
the index $e_i$ in $a_{e_i}$ is just a notation convention to make the read-up of the paper easier,
the actual letter does not carry any information about $a$ nor $e_i$.

\begin{lemma}
\label{lem:improved block compression transform}
Let \algblocksi{} return a satisfiable linear Diophantine system $D$.
Then \algblocksi{} transforms the solutions
and $\mathcal H_D$ is the corresponding family of inverse operators.
\end{lemma}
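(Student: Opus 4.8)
The plan is to derive the statement from the two facts we already have about \algblocksi: Lemma~\ref{lem:improved block compression}, which identifies a run of \algblocksi{} producing an \solution-coherent system with a run of \algblocksc{} implementing the block compression for \solution{} (up to a bijective renaming of letters), and Lemma~\ref{lem: consistent no crossing block transform}, which describes the inverse operator of \algblocksc; the remaining work is the linear-Diophantine algebra of Lemmata~\ref{lem: S compatible} and \ref{lem: coherent solution}. Throughout I would write an operator $H\in\mathcal H_D$, indexed by a solution $\{\ell_X,r_X\}_{X\in\variables}$ of $D$, as $H=G\circ\Psi$, where $\Psi=\prod_{a\in\letters}\Blockc$ is the operator induced by the morphism $\psi$ that maps each fresh letter $a_{e_i}$ created by \algblocksi{} to $a^{e_i[\{\ell_X,r_X\}_{X\in\variables}]}$ and fixes every other letter, and $G=\prod_{X\in\variables}\opprepend[a_X^{\ell_X},X]\circ\opappend[b_X^{r_X},X]$. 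I also let $\hat U=\hat V$ be the equation obtained from $U=V$ by replacing every occurrence of $X$ by $a_X^{\ell_X}Xb_X^{r_X}$, so that $G$ is exactly the inverse operator of the associated run of \algprefsuff, cf.\ Lemma~\ref{lem:cutpref cutsuff transform}. Two clauses then have to be checked: soundness of every $H\in\mathcal H_D$, and recovery of every (minimal) solution of $U=V$ from a (minimal) solution of some reachable $U'=V'$ by some $H\in\mathcal H_D$.

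For soundness, fix a solution $\solution'$ of $U'=V'$ and $H=G\circ\Psi\in\mathcal H_D$ indexed by $\{\ell_X,r_X\}$. The key observation is that $\psi(U')=\hat U$ and $\psi(V')=\hat V$ as words over $\letters\cup\variables$: by construction of \algblocksi{} and Lemma~\ref{lem: pref suff improved}, $U'$ is, read left to right, a sequence of variable occurrences interleaved with the fresh letters $a_{e_1},\dots,a_{e_k}$ that replaced the parametrised explicit maximal blocks, and substituting $\{\ell_X,r_X\}$ into the length expressions $e_i$ and concatenating reconstructs precisely the explicit blocks of $\hat U$ between the variables. It is here that one uses the hypothesis that $\{\ell_X,r_X\}$ \emph{solves} $D$: two parametrised blocks that received the same fresh letter lie in one part of the partition underlying $D$, hence have equal length $e_i[\{\ell_X,r_X\}]=e_j[\{\ell_X,r_X\}]$, so expanding that common letter under $\psi$ is unambiguous. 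Granting this, $\psi(\solution'(U'))=(\Psi[\solution'])(\hat U)$ (both sides amount to replacing in $U'$ each $a_{e_i}$ by $a^{e_i[\{\ell_X,r_X\}]}$ and each $X$ by $\psi(\solution'(X))$), and likewise for $V'$; applying $\psi$ to the equality $\solution'(U')=\solution'(V')$ then yields $(\Psi[\solution'])(\hat U)=(\Psi[\solution'])(\hat V)$, so $\Psi[\solution']$ solves $\hat U=\hat V$. Finally $(G[\solution''])(U)=\solution''(\hat U)$ for any solution $\solution''$ of $\hat U=\hat V$, because $\hat U$ is $U$ with $X\mapsto a_X^{\ell_X}Xb_X^{r_X}$ (this is the argument of Lemma~\ref{lem:cutpref cutsuff transform}, or the first item of Lemma~\ref{lem:preserving unsatisfiability}), so $H[\solution']=G[\Psi[\solution']]$ solves $U=V$.

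For the recovery clause, take a minimal solution \solution{} of $U=V$. By Lemma~\ref{lem: S compatible} there is an \solution-coherent system $D$, and the lengths $\{\ell_X,r_X\}_{X\in\variables}$ of the $a_X$-prefixes and $b_X$-suffixes of \sol X form a solution of it, so $D$ is satisfiable. For the nondeterministic choices that produce this $D$, Lemma~\ref{lem:improved block compression} identifies the returned $U'=V'$ (up to renaming) with the equation \algblocksc{} produces when it implements the block compression for \solution, and Lemma~\ref{lem: consistent no crossing block transform} supplies a \emph{minimal} solution $\solution'$ of $U'=V'$ with $\solution=H_0[\solution']$, where $H_0$ is the stated inverse operator of \algblocksc{} for those very choices. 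It then remains to recognise $H_0$ as the member of $\mathcal H_D$ indexed by $\{\ell_X,r_X\}$: its prefix/suffix part is already normalised to $\prod_X\opprepend[a_X^{\ell_X},X]\circ\opappend[b_X^{r_X},X]$ in Lemma~\ref{lem: consistent no crossing block transform}, and its block part sends the letter standing for a visible maximal block $E_i$ of \solution{} to $a^{|E_i|}$; under the renaming of Lemma~\ref{lem:improved block compression} that letter is the fresh $a_{e_i}$ created by \algblocksi{} for the corresponding parametrised block $\mathcal E_i$, and $|E_i|=e_i[\{\ell_X,r_X\}]$ by Lemma~\ref{lem: coherent solution}, so $H_0$ acts as $a_{e_i}\mapsto a^{e_i[\{\ell_X,r_X\}]}$, matching \eqref{eq: proper operator 2}. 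Deleting the word ``minimal'' throughout this paragraph shows moreover that \emph{every} solution of $U=V$ is recovered, since none of Lemmata~\ref{lem: S compatible}, \ref{lem: coherent solution}, \ref{lem:improved block compression} uses minimality. I expect the main obstacle to be precisely the bookkeeping in the soundness step — tracking the fresh letters $a_{e_i}$ against $\hat U,\hat V$ and, in the recovery step, against the letters $a_\ell$ of \algblocksc{} through the renaming — rather than anything conceptually new; once these identifications are set up carefully, the rest reduces to substituting a solution of $D$ into the length expressions $e_i$ and to the already-established transfer lemmas for \algprefsuff{} and \algblocksc.
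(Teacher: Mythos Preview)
Your proposal is correct and follows essentially the same route as the paper. Both the soundness and recovery arguments match: your intermediate equation $\hat U=\hat V$ is exactly the paper's $U_1=V_1$ (obtained either by popping $a_X^{\ell_X},b_X^{r_X}$ from $U=V$ or by applying $\psi$ to $U'=V'$, which the paper also verifies coincide via Lemma~\ref{lem: coherent solution}), and the recovery step in both proofs routes through Lemmata~\ref{lem: S compatible}, \ref{lem:improved block compression}, and \ref{lem: consistent no crossing block transform} to identify the \algblocksc{} inverse operator with the member of $\mathcal H_D$ indexed by the prefix/suffix lengths of \solution.
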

\begin{proof}
Consider an equation $U = V$ and its solution \solution.
We first want to show that for some choices of $\algblocksi$ the obtained
equation $U' = V'$ has a solution $\solution'$ such that $\solution = H[\solution']$ for
some $H \in \mathcal H_D$, where $\mathcal H_D$ is the corresponding family of inverse operator, defined in~\eqref{eq: proper operator 2}.
To this end we use the fact that \algblocksc{} transforms the minimal solution and that its corresponding inverse operator is known,
see Lemma~\ref{lem: consistent no crossing block transform}.

Consider a run of $\algblocksc$ which implements the blocks compression for \solution{} and $U = V$
(obtaining $U' = V'$ with a corresponding $\solution'$).
From Lemma~\ref{lem: consistent no crossing block transform} we know that for those very choices
\algblocksc{} transforms the $U = V$ with \solution{} to $U' = V'$ with $\solution'$.
Consider on the other hand the run of \algblocksi{} in which \wordtodiophantine{} returns a Diophantine system $D$
that is \solution-coherent.
Then by Lemma~\ref{lem:improved block compression} this run leads to the same instance $U' = V'$
(up to renaming of letters).
So it is left to show that the appropriate inverse operator is in $\mathcal H_D$.

The inverse operator $H$ for \algblocksc{} first replaces letters $a_\ell$ with $a^\ell$,
where $a^\ell$ is a visible maximal block in $U = V$ for \solution,
and then appends $a_X^{\ell_X}$ and prepends $b_X^{r_X}$ to each variable $X$,
where $\ell_X$ and $r_X$ are the lengths of the $a_X$-prefix and $b_X$-suffix of \sol X.
Observe that by Lemma~\ref{lem: S compatible} the $\{\ell_X,r_X \}_{X \in \variables}$ is a solution of $D$.
Note that $H$ is in $\mathcal H_D$: when $a^\ell$ is the maximal block $E_i$
then by Lemma~\ref{lem: coherent solution} the corresponding parametrised maximal block
has length $e_i$ such that $|E_i| = e_i[\{\ell_X,r_X \}]$
and the inverse operator corresponding to $\{\ell_X,r_X \}_{X \in \variables}$
replaces $a_{e_i}$ with $a^{e_i[\{\ell_X,r_X \}]} = a^{|E_i|} = a^\ell$
and then prepends $a_X^{\ell_X}$ and appends $b_X^{r_X}$ to substitution for $X$.

We now show that if an equation $U = V$ is transformed
by $\algblocksi$ into $U' = V'$ which has a solution $\solution'$
and $H \in \mathcal H_D$ then $\solution = H[\solution']$ is a solution of $U = V$.
Let us first recall, how $U' = V'$ is obtained from $U = V$ and how $H$ looks like.

By definition, \algblocksi{} first guesses the prefix-suffix structure for $U = V$ (i.e.\ what is the first and last letter of \sol X
and whether \sol X is a block of letters)
pops the $a_X^{\ell_X}$ prefix and $b_X^{r_X}$-suffix from $X$ for each $X \in \variables$, guesses system $D$
coherent with the prefix-suffix structure
(whose sides are lengths of parametrised explicit maximal blocks)
and replaces blocks whose lengths are equalised in the system by a single letter $a_{e_i}$
(where $e_i$ is one of the lengths of the equalised blocks).
Then $H$ corresponds to a solution $\{\ell_X,r_X \}_{X \in \variables}$ of $D$:
it first replaces $a_{e_i}$ by $a^{e_i[\{\ell_X,r_x \}_{X \in \variables}]}$
and then prepends $a_X^{\ell_X}$ and appends $b_X^{r_X}$ to \sol X for each $X \in \variables$.

To show that $\solution = H[\solution']$ is a solution of $U = V$
we show that \sol U is obtained from $\solution'(U')$ by replacing each $a_{e_i}$ with $a^{e_i[\{\ell_X,r_x \}_{X \in \variables}]}$;
the same will hold for \sol V and in this way \solution{} is shown to be a solution of $U = V$.
To this end we define an intermediate substitution $\solution_1$ and an equation $U_1 = V_1$:
the $\solution_1$ is obtained from $\solution'$ similarly as \solution,
but without the appending and prepending the letters, just by replacing $a_{e_i}$ with $a^{e_i[\{\ell_X,r_x \}_{X \in \variables}]}$.
Similarly, define $U_1$ and $V_1$ by replacing those letters in $U'$ and $V'$.
Then clearly $\solution_1(U_1)$ is $\solution'(U')$ with each $a_{e_i}$ replaced with $a^{e_i[\{\ell_X,r_x \}_{X \in \variables}]}$.

Since appending $a_X^{\ell_X}$ and prepending $b_X^{r_X}$ to $\solution_1(X)$ for each $X \in \variables$ turns $\solution_1$ to \solution,
by reversing the procedure we obtain that
popping $a_X^{\ell_X}$ to the left and $b_X^{r_X}$ to the right from each $X \in \variables$ turns \solution{} to $\solution_1$.
To finish the proof we show that when we pop $a_X^{\ell_X}$ to the left and $b_X^{r_X}$ to the right, we turn $U = V$ to $U_1 = V_1$.
To this end we just need to show that the consecutive maximal blocks of letters in $U = V$ after the popping are the same as in $U_1 = V_1$.
Since we pop exactly the prefixes and suffixes, the former are exactly the visible maximal blocks in $U = V$ for \solution,
which by Lemma~\ref{lem: coherent solution} have lengths $e_i[\{\ell_X,r_X\}]$.
On the other hand, the maximal blocks in $U_1 = V_1$ are obtained by substitutions for single letters in $U' = V'$
and are of lengths $e_i[\{\ell_X,r_X\}]$, which ends the proof.
\qedhere
\end{proof}

\subsection*{Representation of a single minimal solution}
We now show that each \emph{minimal} equation can be obtained by retracing the steps of some successful run of \algsolveeq.
It turns out that during this retracing we can restrict ourselves to equations that are short
and inverse operators that morph, prepend and append only letters actually present in the equation.
We define these notions formally, they are used again the definition of the graph representation
of all solutions.

\begin{definition}
We say that a word equation $U = V$ is \emph{proper} if $U, V \in (\letters \cup \variables)^*$,
in total $U$ and $V$ have at most $n_v$ occurrences of variables (where $n_v$ is hte number of occurrences of variables in the input equation)
and $|U| + |V| \leq cn$ (for an appropriately chosen in advance constant $c$).
An equation is \emph{trivial} if both its sides have length at most $1$.

A family $\mathcal H$ of inverse operators (corresponding to the transformation of $U = V$ to $U' = V'$)
is \emph{proper} if it is one of the families defined in~\eqref{eq: proper operator 1}
or in~\eqref{eq: proper operator 2}. 
Furthermore $H \in \mathcal H$ morphs only letters present in $U' = V'$ and appends/prepends only letters that occur in $U' = V'$
or images of such letters by the morphing in $H$.
\end{definition}
The intuition is as follows:
the first type of edges, labelled with~\eqref{eq: proper operator 1}, corresponds to \algpairc{} in \algsolveeq.
The second, labelled with~\eqref{eq: proper operator 2}, corresponds to \algblocksi.

Note that the assumption that proper families of operators may only append, prepend and morph letters that
are present in the equation is a restriction on the potential form of inverse operators
from families~\eqref{eq: proper operator 1} and~\eqref{eq: proper operator 2}
and it is shown in Lemma~\ref{lem: solution path} below that indeed such restricted families are enough to describe all minimal solutions.

On the other hand such a restriction makes it easier to describe such families:
proper families of inverse operators need to specify, which letters are replaced,
but in both cases they list at most $cn$ letters for replacement (as this is the size of the equation).
Furthermore, the family of inverse operators~\eqref{eq: proper operator 2} needs also to specify
the expression $e_i$ for each of the letters it intends to replace.
Since there are at most $cn$ such letters, there are also at most $cn$ such expressions.
So the whole description size is polynomial.

\begin{lemma}
\label{lem: solution path}
If $U_0 = V_0$ is of size $n$ and has a minimal solution $\solution_0$ then there exists
a sequence of proper equations $U_0 = V_0$, $U_1 = V_1$, \ldots, $U_m = V_m$ such that
\begin{itemize}
	\item $V_m = U_m$ is trivial and $m = \Ocomp(\log |\solution_0(U_0)|)$;
	\item for appropriate nondeterministic choices a subprocedure (\algblocksi{} or \algpairc) of \algsolveeq{} 
	transforms an equation $U_{i-1} = V_{i-1}$ with a minimal solution $\solution_{i-1}$ to $U_{i} = V_i$
	with a minimal solution $\solution_i$;
	\item the corresponding inverse operator is from a proper family.
\end{itemize}
\end{lemma}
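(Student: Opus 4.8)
First I would run \algsolveeq{} — with \algblocksi{} in place of \algblocksc{} — on $U_0=V_0$, making every nondeterministic choice so as to follow $\solution_0$, and take for $U_1=V_1,\dots,U_m=V_m$ the equations obtained after each of the three subprocedures executed in successive phases (the call to \algblocksi, then the two calls to \algpairc). The argument is an induction over phases, with the invariant that at the start of a phase the current equation $U=V$ is proper and carries a minimal solution $\solution$ with $|\sol U|\le\alpha^{j}N$, where $N:=|\solution_0(U_0)|$, $j$ is the number of completed phases and $\alpha<1$ is a constant. For the inductive step I run \algblocksi{} with the prefix-suffix structure, the partition of parametrised blocks and hence the Diophantine system $D$ all chosen $\solution$-coherent (such a $D$ exists and is satisfiable by Lemma~\ref{lem: S compatible}); by Lemma~\ref{lem:improved block compression} together with Lemma~\ref{lem:improved block compression transform}, Lemma~\ref{lem: consistent no crossing block} and Lemma~\ref{lem: consistent no crossing block transform} this call implements the blocks compression for $\solution$, transforms the minimal solution into a minimal solution, and has an inverse operator of the form~\eqref{eq: proper operator 2}. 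Then I run \algpairc{} twice: first with the partition of \presentletters{} that, by Claim~\ref{clm:words are compressed}, shrinks $\sol U$ by a constant factor, then with the partition that shrinks $U=V$ by a constant factor, each time with \algpop{} guessing the first and last letters of each variable correctly; by Lemma~\ref{lem: crossing pairs transforms} each such call transforms the minimal solution into a minimal solution with inverse operator of the form~\eqref{eq: proper operator 1}. This yields the second bullet and the minimality of all $\solution_i$.

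Next I would settle the numerical bounds. Claim~\ref{clm:words are compressed} uses nothing about $\solution$ beyond its being a solution, and \algblocksi{} makes its proof independent of the exponent-of-periodicity bound, so it applies verbatim here and gives $|\sol U|$ shrinking by a constant factor per phase; hence after $\Ocomp(\log N)$ phases $|\sol U|=1$, so both sides of the equation have length $1$, i.e.\ the equation is trivial — this is the same count as in Lemma~\ref{lem:logM iterations}. Since each phase contributes three equations, $m=\Ocomp(\log N)$. Properness of every $U_i=V_i$ follows from the length analysis of Lemma~\ref{lem: space consumption} (which, again, only uses Claim~\ref{clm:words are compressed} and, with \algblocksi, no exponent-of-periodicity bound): the stored equation stays of length $\Ocomp(n)$, the intermediate equations after \algblocksi{} and after the first \algpairc{} are longer by at most $\Ocomp(n_v)$, and the number of variable occurrences never exceeds $n_v$; choosing the constant $c$ in the definition of \emph{proper} large enough makes all $U_i=V_i$ proper.

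The substantive point is the third bullet, that the inverse operators can be taken \emph{proper} — morphing only letters present in the target equation and appending/prepending only letters present there or lying in images of such letters under the morphing. For an \algpairc{} step the inverse operator is $H\circ\prod_{ab\in\letters_\ell\letters_r}\Morph{c^{(ab)}}{ab}$ with $H$ the \algpop{} inverse. Whenever a pair $ab$ does not occur in the solution left after \algpop, no letter $c^{(ab)}$ is introduced, the corresponding factor is the identity and may be deleted. For a pair $ab$ that does occur, after \algpop{} it is non-crossing (Lemma~\ref{lem:pop preserves solutions}) and the popped solution is still minimal (Lemma~\ref{lem: pop transforms}), so by Lemma~\ref{lem:over a cut improved} $ab$ has an explicit occurrence, hence $c^{(ab)}$ occurs in the target equation. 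Finally, each letter popped by $H$ either survives the subsequent pair compressions (and is present in the target equation) or is the right-, resp.\ left-, hand component of a compressed pair and so is a letter of the image of the present letter $c^{(\cdot)}$ under the morphing part of $H$.

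For an \algblocksi{} step the inverse operator is~\eqref{eq: proper operator 2}: the letters it morphs are exactly the fresh letters $a_{e_i}$ that replaced the parametrised visible maximal blocks, which are explicit after \algprefsuffi{} by Lemma~\ref{lem: pref suff improved} and hence present in the target equation; and every letter $a_X$ (or $b_X$) appended/prepended lies inside one of those blocks — the $a_X$-prefix, resp.\ $b_X$-suffix, of an occurrence of $X$ — so it appears in the image $a_X^{e_i[\{\ell_X,r_X\}]}$ of the present letter $a_{e_i}$ under $\blockc$; here Lemma~\ref{lem:over a cut improved} is also what guarantees that for a minimal solution every maximal block occurring in $\sol U$ is visible, so no inverse operator is forced to reintroduce a block of some other length. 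The hard part is precisely this last bullet: one has to keep track of the minimality of the intermediate solutions through all three subprocedures of a phase, in the correct order of composition, in order to be certain that every letter an inverse operator touches is present in the target equation (or an image of such a letter); everything else reduces to the transformation and length lemmas already established.
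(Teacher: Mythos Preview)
Your proof is correct and follows essentially the same route as the paper: both establish the sequence via the transformation lemmas for \algblocksi{} and \algpairc{}, then verify the three remaining points (equation length, bound on $m$, properness of the operator families) by observing that the nondeterministic choices which implement block/pair compression for the current solution are exactly the choices that also transform minimal solutions --- the paper packages this as a separate Claim~\ref{clm:words are compressed improved}, you argue it inline. The one place the paper is more direct is the properness of the morphing part of the inverse operators: instead of tracing through \algpop{} and invoking Lemma~\ref{lem:over a cut improved} to show each $c^{(ab)}$ is explicit, the paper simply notes that a minimal $\solution_i$ has no free letters, so every letter in $\solution_i(U_i)$ already appears in $U_i=V_i$, and any superfluous factors of the morphism can be dropped.
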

\begin{proof}
Lemma~\ref{lem: crossing pairs transforms} and Lemma~\ref{lem:improved block compression transform}
guarantee that \algpairc{} and \algblocksi{} transform minimal solutions,
so there exist a sequence $U_0 = V_0$, $U_1 = V_1$, \ldots, $U_m = V_m$ together with minimal solutions $\solution_0$,
$\solution_1$,\ldots, $\solution_m$ such that
\begin{itemize}
	\item $U_{i-1} = V_{i-1}$ is transformed to $U_i = V_i$ by some subprocedure (\algblocksi{} or \algpairc) of \algsolveeq,
	where $\mathcal H_i$ is the corresponding family of inverse operators;
	\item $\solution_{i-1} = H_i[\solution_i]$ for some $H_i \in \mathcal H_i$;
	\item $U_m = V_m$ is trivial.
\end{itemize}
What is not known is whether:
\begin{enumerate}[(Q 1)]
	\item each $U_i = V_i$ is proper?  \label{Q 1}
	\item $m = \Ocomp(\log (|\solution_0(U_0)|))$? \label{Q 2}
	\item each $\mathcal H_i$ is proper?  \label{Q 3}
\end{enumerate}
As soon as we settle \Qref{1}--\Qref{3}, the proof is complete.

The \Qref{3} is easy: since $\solution_i$ is minimal, by definition it assigns only letters from $U_i = V_i$,
so there is no reason for $H_i$ to morph any other letters (as they are simply not in $\solution_i(U_i)$);
also, by Lemma~\ref{lem: pop transforms} the inverse operator for \algpop{} prepends and appends
letters that were popped from variables, i.e.\ either they are present in $U_{i+1} = V_{i+1}$
or some letters in $U_{i+1} = V_{i+1}$ replaced pairs that includes such a letter.
a similar argument holds for the inverse operator for \algblocksi,
see Lemma~\ref{lem:improved block compression transform}.
This shows that indeed each $\mathcal H_i$ is proper and so establishes~\Qref{3}.

Concerning~\Qref{1} note that we do not introduce any new occurrences of variables,
so we just need to bound the lengths of equations $U_0 = V_0$, $U_1 = V_1$, \ldots, $U_m = V_m$;
such a bound was already given in Lemma~\ref{lem: space consumption},
though it was not guaranteed there that the appropriate minimal solution is transformed.

Similarly, for \Qref{2} the length of the successful computation is given in Lemma~\ref{lem:logM iterations},
but again nothing is known about transforming minimal solutions.

Both proofs of Lemma~\ref{lem: space consumption} and Lemma~\ref{lem:logM iterations}
rely on Claim~\ref{clm:words are compressed},
the following stronger version of Claim~\ref{clm:words are compressed} gives all the needed properties:
\begin{clm}
\label{clm:words are compressed improved}
Let $U = V$ has a minimal solution \solution.
For appropriate choices, during one phase it is transformed to an equation $U' = V'$ with a minimal solution $\solution'$
such that
\begin{itemize}
	\item $1/6$ of letters in \sol U (rounding down) is compressed in $\solution'(U')$.
	\item at least $(|U| + |V| - 3 n_v - 4)/6$ of letters in $U$ or $V$ are compressed in $U'$ or $V'$;
\end{itemize}
\end{clm}
Now, the proof for \Qref{1} follows in the same way as in Lemma~\ref{lem: space consumption}:
we choose the sequence of equations $U_0 = V_0$, $U_1 = V_1$, \ldots, $U_m = V_m$
guaranteed to exist by Claim~\ref{clm:words are compressed improved} .
As in Lemma~\ref{lem: space consumption} it can be shown that each such an equation has size
at most $79 n$ and the intermediate equations have length at most $85n$.
Similarly, the same run guarantees that the size of the corresponding solution $\solution_i$
shrinks by a constant factor at the beginning of each phase (so every three equations).

It is left to show Claim~\ref{clm:words are compressed improved}.
To this end note that Claim~\ref{clm:words are compressed} shows that the two shortening properties for a solution \solution{}
hold for the non-deterministic choices that implement the block compression and pair compression for this solution
(for appropriate partition of letters).
But Lemma~\ref{lem:improved block compression transform}
shows that for a solution \solution{} the non-deterministic choices in \algblocksc{} that implement the block compression
also transform the minimal solution; similar claim holds for \algpairc{} by Lemma~\ref{lem: crossing pairs transforms}.
Which ends the proof.
\end{proof}

\subsection*{Representation of all minimal solutions}
As already said, the set of minimal solutions will be represented by a directed graph.
Intuitively, the graph \G{} represents all paths from Lemma~\ref{lem: solution path}.
\begin{definition}
Directed graph \G{} for a satisfiable input equation $U_0 = V_0$ has
\begin{itemize}
	\item nodes labelled with satisfiable proper equations;
	\item edges of \G{} are labelled with a family of proper operators;
	\item an edge from $U = V$ to $U' = V'$ labelled $\mathcal H$ if and only if
	for some nondeterministic choices \algsolveeq{} (that uses \algblocksi{} instead of \algblocksc)
	transforms $U = V$ into $U' = V'$ and $\mathcal H$ is the corresponding
	family of inverse operators;
	\item each node is reachable from node labelled with $U_0 = V_0$.
\end{itemize}
\end{definition}
We say that \solution{} for $U = V$ is \emph{obtained by a path} to $U' = V'$ with $\solution'$
if $\solution$ is obtained by applying a composition of inverse operators on the path from $U = V$ to $U' = V'$ applied to $\solution'$.

We need to show that on one hand \G{} can be constructed in \PSPACE,
and on the other, that it describes all minimal solutions of a word equation.
We begin with the latter.

\begin{lemma}
\label{lem: graph is OK}
Let \solution{} be a minimal solution of an equation $U = V$ (of size $n$) that is a node in \G.
Then there is a path in \G{} starting in $U = V$ and ending in a trivial equation $U' = V'$ with a minimal solution $\solution'$
such that \solution{} can be obtained from $\solution'$ by this path.

Moreover, each \solution{} obtained in this way is a solution of $U = V$.

If the equation $U = V$ is trivial (i.e.\ $|U|, |V| \leq 1$) then there is at most one minimal solution,
which is easy to describe.
\end{lemma}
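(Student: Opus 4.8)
The plan is to derive all three assertions from the machinery already built, chiefly Lemma~\ref{lem: solution path} together with Lemma~\ref{lem: crossing pairs transforms} and Lemma~\ref{lem:improved block compression transform}. For the first assertion, I would start from a minimal solution \solution{} of the node $U = V$. By Lemma~\ref{lem: solution path} there is a sequence of proper equations $U = U_0 = V_0$, $U_1 = V_1$, \ldots, $U_m = V_m$ with minimal solutions $\solution = \solution_0, \solution_1, \ldots, \solution_m$, where each step is a transformation performed by \algpairc{} or \algblocksi{} (with the appropriate proper family of inverse operators), $U_m = V_m$ is trivial, and $\solution_{i-1} = H_i[\solution_i]$ for some $H_i$ in the corresponding proper family. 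Each $U_i = V_i$ is satisfiable (it has the solution $\solution_i$) and proper, so it is a legitimate node of \G; and by the definition of \G{} each such transformation step is an edge of \G. Hence this sequence is precisely a path in \G{} from $U = V$ to the trivial equation $U_m = V_m$, and composing the inverse operators along it sends $\solution_m$ to $\solution_0 = \solution$. Since $\solution_m$ is a minimal solution of the trivial equation, this establishes the first claim.

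For the second assertion (soundness of the representation), I would invoke the ``soundness'' half of the definition of transforming minimal solutions, which is exactly the second bullet in that definition and is verified for \algpairc{} in Lemma~\ref{lem: crossing pairs transforms} and for \algblocksi{} in Lemma~\ref{lem:improved block compression transform}: for every equation $U' = V'$ obtainable from $U = V$, every solution $\solution'$ of $U' = V'$, and every inverse operator $H$ in the corresponding family, $H[\solution']$ is a solution of $U = V$. Applying this edge-by-edge along a path from $U = V$ to $U' = V'$, and using that a composition of such operators applied to a solution of the endpoint yields a solution of each intermediate equation, we conclude that any \solution{} obtained by a path to $U' = V'$ with a solution $\solution'$ is indeed a solution of $U = V$. (Note the claim here is only that we get \emph{a} solution, not that it is minimal, which is all the lemma asserts.)

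The third assertion is immediate: if $|U|, |V| \leq 1$ then either the single letters on the two sides match, in which case the identity substitution (recalling our convention $\sol X = \epsilon$ for variables not present) is the unique minimal solution and is trivially described, or they do not match and there is no solution at all; in particular there is at most one minimal solution, easily describable. The only step that requires genuine care rather than bookkeeping is the first one, and there the real content has already been discharged in Lemma~\ref{lem: solution path}: the delicate points — that the sequence of equations stays \emph{proper} (bounded length, no new variable occurrences), that its length is $\Ocomp(\log|\solution_0(U_0)|)$, and that the inverse operators can be taken from proper families that only touch letters present in the equations — are exactly what \Qref{1}--\Qref{3} in that lemma settle, via the strengthened Claim~\ref{clm:words are compressed improved}. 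So the main obstacle is simply to assemble these pieces correctly and to observe that "being a path in \G" is precisely the conjunction of "each node proper and satisfiable" and "each step a transformation with a proper inverse family", which matches the definition of \G{} verbatim.
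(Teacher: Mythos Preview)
Your treatment of the first two claims is essentially the paper's: invoke Lemma~\ref{lem: solution path} for the existence of the path, and invoke the soundness half of ``transforms minimal solutions'' (Lemmata~\ref{lem: crossing pairs transforms} and~\ref{lem:improved block compression transform}) edge-by-edge for the second claim.

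For the third claim, however, your case analysis is incomplete. You write ``either the single letters on the two sides match \ldots\ or they do not match and there is no solution at all'', which silently assumes that both $U$ and $V$ are letters. But a trivial equation may have a \emph{variable} on one or both sides (and either side may be $\epsilon$). If $U = X$ and $V = a$ for a letter $a$, your dichotomy fails: the symbols do not match, yet there is a solution, and in fact exactly one minimal one, $\sol X = a$. More importantly, if both sides are variables, say $U = X$ and $V = Y$, there are infinitely many solutions, and you must argue that \emph{none} of them is minimal. The paper does this explicitly: any solution assigning $w$ to both variables is an instance (via $v \mapsto w$) of the unifier solution that assigns a single free letter $v$ to both, hence no solution is minimal in this case. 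This observation is short but not entirely trivial, and it is precisely what makes the ``at most one minimal solution'' claim true. You should add the full case split (letter--letter, letter--variable, variable--variable, and the $\epsilon$ cases) as the paper does.
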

\begin{proof}
Concerning the third claim, 
observe that describing all minimal solution of a satisfiable equation $U = V$
such that $|U| = |V| = 1$ is easy:
\begin{itemize}
	\item if a variable $X$ does not occur in the equation, then $\sol X = \epsilon$
	for each minimal solution $X$, so it is enough to consider variables that occur in $U = V$;
	\item if $U, V \in \letters$ then either there is one minimal solution ($\sol X = \epsilon$ for each $X \in \variables$), when $U = V$,
	or none solution, when $U \neq V$;
	\item if one of the sides is a letter and the other a variable, then there is only one solution;
	\item if both sides consist of variables, then there is no minimal solution
	(as each solution is an instance of a unifier solution that assigns $v$ to variables on both sides).
\end{itemize}
If $U = \epsilon$ or $V = \epsilon$, then a similar analysis shows that if $U = V = \epsilon$ then the unique minimal solution
assigns $\epsilon$ to each variable and otherwise there is no minimal solution.

For the second claim, observe that it summarizes the properties of subprocedures
of \algsolveeq, presented in Lemma~\ref{lem: crossing pairs transforms} and~\ref{lem:improved block compression transform},
which claim that \algpairc{} and \algblocksi{} transform minimal solutions and the definition of the graph $\mathcal G$.

For the first claim observe that this is just a reformulation of Lemma~\ref{lem: solution path}.
\qedhere
\end{proof}

\subsubsection*{Constructing the representation of all minimal solutions}
We show that it is possible to generate \G{} within \PSPACE.
In order to do so we should be able to decide in \PSPACE{} whether:
\begin{description}
	\item[node label check] a given proper equation $U = V$ labels a node in \G.
	\item[edge label check] given two satisfiable proper equations $U = V$ and $U' = V'$ and a proper
	family $\mathcal H$ of operators there is an edge from $U = V$ to $U' = V'$ labelled with
	$\mathcal H$.
\end{description}
When \PSPACE{} procedures for these two tasks are known, constructing $\mathcal G$
is easy:
\begin{itemize}
	\item we iterate over all proper equations (they have length at most $cn$), 
	for a fixed equation $U = V$ we check whether $U = V$ labels a node in $\mathcal G$.
	If so, we output it.
	\item we iterate over all pairs of proper equations $U = V$ and $U' = V'$
	and every proper family of operators $\mathcal H$ labelling this edge;
	note that both~\eqref{eq: proper operator 1} and~\eqref{eq: proper operator 2}
	can be described using $\Ocomp(n)$ symbols.
	For a fixed triple we verify, whether there is an edge from $U = V$ to $U' = V'$
	labelled with $\mathcal H$.
	If so, we output triple $(U =V, \mathcal H, U' = V')$.
\end{itemize}
Clearly this procedure uses only polynomial space and properly generates \G.

It is thus left to show that node and edge label checks
can be performed in \PSPACE.

\begin{lemma}
\label{lem:generation}
Node and label checks can be performed in \PSPACE.
\end{lemma}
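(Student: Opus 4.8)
The plan is to reduce both checks to the satisfiability problem for word equations, which by Theorem~\ref{thm:main} (and its refinement via \algblocksi) is solvable in \PSPACE, together with the observation that all the objects involved---proper equations and proper families of inverse operators---have polynomial size and so can be enumerated and manipulated in polynomial space. First I would handle the \textbf{node label check}. Since a node of \G{} labelled $U = V$ is by definition a satisfiable proper equation that is reachable from $U_0 = V_0$ along transitions of \algsolveeq, and since Lemma~\ref{lem: solution path} guarantees that every minimal solution can be traced through proper equations of length $\Ocomp(n)$, it suffices to check (a) that $U = V$ is satisfiable, and (b) that $U = V$ is reachable from $U_0 = V_0$. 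Part (a) is a direct call to \algsolveeqi (the satisfiability procedure using \algblocksi), which runs in \PSPACE. Part (b) is a reachability question in an implicitly-defined graph whose vertex set (all proper equations) has exponential size but polynomial description length; standard nondeterministic search---guess a path of length at most exponential, step by step, keeping only the current equation and a counter---stays in polynomial space, and each single step of \algsolveeq{} is itself a \PSPACE{} (indeed \poly-time nondeterministic) computation. Since $\NPSPACE = \PSPACE$ by Savitch's theorem, this gives a \PSPACE{} procedure.

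Next I would handle the \textbf{edge label check}. Given proper equations $U = V$, $U' = V'$ and a proper family $\mathcal H$, we must decide whether some nondeterministic run of one subprocedure (\algpairc{} or \algblocksi) of \algsolveeq{} transforms $U = V$ into $U' = V'$ with $\mathcal H$ as the corresponding family of inverse operators. The form of $\mathcal H$---either~\eqref{eq: proper operator 1} or~\eqref{eq: proper operator 2}---already tells us which subprocedure is in play and records all the relevant data: for \algpairc{} the partition $\letters_\ell,\letters_r$, the popped letters $a_X,b_X$ and the replacement letters $c^{(ab)}$; for \algblocksi{} the prefix-suffix structure, the popped prefixes/suffixes, the Diophantine system $D$ and the replaced block-letters $a_{e_i}$ together with the expressions $e_i$. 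All of this is $\Ocomp(n)$-size information. Thus the edge label check amounts to: simulate the designated subprocedure on $U = V$ with exactly the nondeterministic choices encoded in $\mathcal H$, and verify that the output is syntactically $U' = V'$. This is a deterministic \poly-time (hence \PSPACE) computation, with the one subtlety that for \algblocksi{} we must also verify that the Diophantine system $D$ it produces is satisfiable---but that is exactly what \algdiophantine{} does in linear space by Lemma~\ref{eq:small system memory}, and satisfiability of $D$ is precisely the condition under which the family $\mathcal H_D$ from~\eqref{eq: proper operator 2} is nonempty and Lemma~\ref{lem:improved block compression transform} applies.

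The step I expect to be the main obstacle is pinning down exactly what the ``correct'' edge condition is so that the resulting graph \G{} both is \PSPACE-constructible and captures all minimal solutions. On the construction side one wants the edge check to be as simple as ``replay the encoded choices and compare,'' but one must be careful that a single proper family $\mathcal H$ can correspond to several distinct runs (different choices that happen to yield the same equation and the same inverse operators); the cleanest route is to observe that $\mathcal H$ itself encodes enough of the choices that the run is determined up to irrelevant renamings, so the check is well-defined. On the soundness side, Lemma~\ref{lem: solution path} already does the heavy lifting: it shows that \emph{proper} equations and \emph{proper} families suffice to realize every minimal solution, so we only lose nothing by restricting \G{} to proper objects. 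Assembling these pieces, the proof reads: the node check is satisfiability plus polynomial-space reachability (Savitch), the edge check is a deterministic replay plus a call to \algdiophantine, and both run in \PSPACE, which is what the lemma asserts.

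\begin{proof}
Both checks reduce to the \PSPACE{} satisfiability procedure (\algsolveeq{} using \algblocksi, Theorem~\ref{thm:main} and Lemma~\ref{lem:improved block compression}) together with the fact that proper equations and proper families of inverse operators have description size $\Ocomp(n)$.

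\emph{Node label check.} A proper equation $U = V$ labels a node of \G{} if and only if it is satisfiable and reachable from $U_0 = V_0$ via transformations performed by subprocedures (\algpairc{} or \algblocksi) of \algsolveeq. Satisfiability of $U = V$ is decided by \algsolveeq{} in \PSPACE. For reachability, note that every proper equation has length at most $cn$, so it can be stored in polynomial space, and a single transformation step of \algsolveeq{} is a nondeterministic computation running in space $\Ocomp(n\log n)$ (indeed polynomial time). We may therefore guess a path $U_0 = V_0 = W_0, W_1, \ldots, W_t = (U = V)$ of proper equations step by step, at each step only keeping the current equation and a counter bounding $t$ by an exponential (beyond which, by Lemma~\ref{lem: space consumption}, an accepting computation would loop); each step is verified by simulating the corresponding subprocedure. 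This is an \NPSPACE{} procedure, hence \PSPACE{} by Savitch's theorem.

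\emph{Edge label check.} Given proper equations $U = V$, $U' = V'$ and a proper family $\mathcal H$ of inverse operators, the form of $\mathcal H$---either~\eqref{eq: proper operator 1} or~\eqref{eq: proper operator 2}---specifies which subprocedure is involved and records all the nondeterministic data of the corresponding run: for~\eqref{eq: proper operator 1} the partition $\letters_\ell,\letters_r$ used by \algpop, the popped letters $a_X,b_X$ and the letters $c^{(ab)}$ replacing the pairs $ab \in \letters_\ell\letters_r$; for~\eqref{eq: proper operator 2} the prefix-suffix structure, the popped prefixes and suffixes, the Diophantine system $D$ and the block-letters $a_{e_i}$ together with the expressions $e_i$. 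All this is $\Ocomp(n)$-size information. We therefore simulate the designated subprocedure on $U = V$ using exactly these choices and check that its output equals $U' = V'$ up to renaming of letters; this is a deterministic polynomial-time computation. In the case of~\eqref{eq: proper operator 2} we additionally verify that the system $D$ is satisfiable, which by Lemma~\ref{eq:small system memory} is done in space $\Ocomp(n)$ by \algdiophantine; satisfiability of $D$ is exactly the condition under which Lemma~\ref{lem:improved block compression transform} applies and $\mathcal H_D$ is the corresponding family of inverse operators. Hence the edge label check is performed in \PSPACE.
\qedhere
\end{proof}
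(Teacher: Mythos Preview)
Your proposal is correct and follows essentially the same approach as the paper: the node check is satisfiability (via \algsolveeq) plus reachability through proper equations, done in \NPSPACE{} and closed off with Savitch; the edge check identifies the relevant subprocedure from the shape of $\mathcal H$ and simulates it on $U=V$ to see whether $U'=V'$ results. Two minor differences are worth noting. First, the paper's edge check also invokes the node label check on both endpoints to ensure they actually lie in \G{} (i.e., are reachable from $U_0=V_0$); you omit this, and without it your procedure could accept a ``transformation edge'' between proper equations that are not nodes. Second, the paper is content to run the subprocedure \emph{nondeterministically} and appeal to \NPSPACE$=$\PSPACE, whereas you argue that $\mathcal H$ already encodes all nondeterministic choices so the replay is deterministic polynomial time; this is essentially right, but a few choices (e.g., whether a variable was removed after popping) are not literally recorded in $H$ and must be recovered by comparing $U=V$ with $U'=V'$. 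Neither point affects the validity of your argument for the \PSPACE{} bound.
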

\begin{proof}
Consider first node label check.
It is trivial to verify, whether $U = V$ has length at most $cn$ and at most $n_v$
variables' occurrences.
Using \algsolveeq{} we can verify in \PSPACE, whether $U = V$ is satisfiable.
Also, in \NPSPACE{} we can verify, whether \algsolveeq{}
transforms $U_0 = V_0$ to $U = V$:
we begin with $U_0 = V_0$ and transform it using \algsolveeq{} until
$U = V$ is obtained. As by Lemma~\ref{lem: space consumption}
\algsolveeq{} uses $\Ocomp(n \log n)$ space,
this is doable in \NPSPACE. As \NPSPACE=\PSPACE, we are done.

In a similar way we can show that edge label check can be performed in \NPSPACE.
Firstly using node label check we verify, whether both
$U = V$ and $U' = V'$ label nodes of \G.
The label $\mathcal H$ uniquely identifies the subprocedure of \algsolveeq{}
that should be applied to $U = V$ in order to obtain $U' = V'$;
we thus take $U = V$ and apply this subprocedure.
As \algsolveeq{} uses $\Ocomp(n \log n)$ space, by Lemma~\ref{lem: space consumption}
and Lemma~\ref{lem:improved block compression},
then this can be tested in \NPSPACE{}
and consequently edge label check can be executed in \PSPACE
\qedhere
\end{proof}

\subsubsection*{Minimal unifier solutions}
The presented procedures are enough to construct a finite representation of
all minimal solutions. It is left to formalise the transformation of
minimal unifier solutions to minimal solutions.
Since by Lemma~\ref{lem:unifier reduces} we know that \solution{} is a minimal
unifier solution with a free letter $a$ then this free letter is a first letter of some \sol X.
So when we left-pop the first letter of each variable, we introduce all free letters used by \solution{}
into the equation, making them usual letters and turning the minimal unifier solution \solution{}
into a minimal solution $\solution'$ of the new equation.

\begin{lemma}
\label{lem:unifier to minimal}
Let \solution{} be a minimal unifier solution of $U = V$.
Then for some nondeterministic choices $\algpop(\letters',\emptyset)$ returns an equation $U' = V'$
such that $\solution = \Phi[\solution']$ for some minimal solution $\solution'$ of $U' = V'$,
where
$$
\Phi = \prod_{X \in \variables} \opprepend[a_X,X]
 \enspace ,
$$
where $a_X$ is the symbol left-popped from a variable $X$,
i.e.\ $a_X \in \letters' \cup \epsilon$.
\end{lemma}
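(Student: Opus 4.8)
The plan is to mimic the proof of Lemma~\ref{lem:pop preserves solutions} combined with the argument of Lemma~\ref{lem:unifier reduces}, making the non-deterministic choices according to the given unifier solution \solution. Concretely: by Lemma~\ref{lem:unifier reduces}, every free letter $v \in \letters'$ that appears in \solution{} must be the first letter of $\sol X$ for some variable $X$. So when we run $\algpop(\letters',\emptyset)$ — which left-pops from each variable its first letter whenever that letter is in $\letters'$ and right-pops nothing — and instruct it to guess for each $X$ the true first letter $a_X$ of $\sol X$ (setting $a_X = \epsilon$ when that first letter is not in $\letters'$, or when $X$ becomes empty and is removed), we obtain an equation $U' = V'$. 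Define $\solution'(X)$ so that $\sol X = a_X \solution'(X)$ for each $X$; exactly as in Lemma~\ref{lem:pop preserves solutions} one checks $\solution'(U') = \sol U = \sol V = \solution'(V')$, so $\solution'$ is a solution of $U' = V'$, and by construction $\solution = \Phi[\solution']$ with $\Phi$ as stated.

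The substantive part is to show $\solution'$ is a \emph{minimal} solution of $U' = V'$, i.e. it is neither a unifier solution nor an instance of a unifier solution. First I would argue $\solution'$ is not a unifier solution: a free letter of $\solution'$ (with respect to $\letters'$) would be a letter of $\solution'(U') = \sol U$ that does not occur in $U' = V'$. But every free letter $v$ occurring in $\sol U$ is, by Lemma~\ref{lem:unifier reduces}, the first letter of some $\sol X$, hence was popped into $U'$ by $\algpop$, hence does occur in $U' = V'$ — contradiction. (One should be careful here that "free letters of $U' = V'$" must be reinterpreted: after popping, the relevant $\letters'$-letters are now present in the equation, so they are no longer "free" for the purpose of the new instance; this is exactly the intended effect described in the paragraph preceding the lemma, and it is the bookkeeping point that needs a clean phrasing.) Then I would rule out that $\solution'$ is an instance of a unifier solution: if $\solution' = \Psi[\solution'']$ for a non-erasing, non-permutating $\psi$ constant on \letters{} and a unifier solution $\solution''$, then since $\Phi$ only prepends letters of $\letters \cup \letters'$ that — being the popped $a_X$ — are fixed by $\psi$ (as they now lie in the equation, we may treat them as constants for $\psi$), we get $\solution = \Phi[\Psi[\solution'']] = \Psi[\Phi[\solution'']]$, exhibiting \solution{} as an instance of the unifier solution $\Phi[\solution'']$ — contradicting minimality of \solution. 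This last step reuses verbatim the commutation-of-prepend-with-$\Phi$ trick from the proof of Lemma~\ref{lem: pop transforms}.

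The main obstacle I expect is the delicate handling of which letters count as "free" before versus after the pop: the statement quietly changes the status of the letters in $\letters'$ that are popped into the equation, and the argument that $\solution'$ is not a unifier solution of $U' = V'$ relies on interpreting $U' = V'$ as an equation over the \emph{enlarged} constant alphabet $\letters \cup (\letters' \cap \{\text{popped letters}\})$, with a correspondingly shrunk set of free letters. Getting this formally right — and making sure that the morphism $\psi$ in the "instance of a unifier" case is indeed constant on the popped letters, so that the commutation $\Phi \circ \Psi = \Psi \circ \Phi$ is valid — is where the care is needed; the rest is routine and parallels Lemmas~\ref{lem:pop preserves solutions}, \ref{lem:unifier reduces}, and~\ref{lem: pop transforms}.
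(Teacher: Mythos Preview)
Your proposal is correct and follows essentially the same route as the paper's proof: make the nondeterministic choices according to \solution, use Lemma~\ref{lem:unifier reduces} to conclude that every free letter gets popped into $U' = V'$ (so $\solution'$ is not a unifier), and then for the instance case commute $\Phi$ past the witnessing morphism $\Psi$ because the prepended letters are now constants and hence fixed by $\psi$. The paper formalises the alphabet bookkeeping you flag by introducing a fresh set $\letters''$ of free letters for $\solution''$ with $\letters'' \cap (\letters \cup \letters') = \emptyset$ and taking $\phi'$ constant on $\letters \cup \letters'$, which is exactly the clean phrasing you anticipated needing.
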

\begin{proof}
We know from Lemma~\ref{lem: pop transforms} that \algpop{} transforms minimal solutions
and the given operator $\Phi$ is the corresponding inverse operator.
In fact, the proof in the direction we use does not assume that \solution{}
is minimal, it can be an arbitrary solution.
Hence for a minimal unifier solution \solution{}
and appropriate guesses the $U = V$ with \solution{} is transformed by \algpop{} into $U' = V'$ with $\solution'$,
such that $\solution = \Phi[\solution']$.
To be more precise, the guesses are consistent with \solution,
in the sense that \algpop{} left-pops a letter $a_X$
if and only if the first letter of \sol X is $a_X$ and $a_X \in \letters'$.
So it is left to show that if \solution{} is a minimal unifier solution,
then $\solution'$ is a minimal solution.

We first show that $\solution'(U')$ has no free letters.
By Lemma~\ref{lem:unifier reduces}, if $a$ is a free letter in \solution,
then $a$ is the first letter of some \sol X. And we fixed the nondeterministic choices for which \algpop{} left-pops $a$ from $X$,
and so $a$ occurs as an explicit letter in $U' = V'$.
As we choose $a$ arbitrarily, all free letters of \sol U occur in $U' = V'$.
As $\solution'(U') = \sol U$, by Lemma~\ref{lem:pop preserves solutions},
the $\solution'$ has no free letters and so it is a non-unifier solution.

Suppose that $\solution'$ is not minimal, as the case that it is a unifier solution is already excluded,
this happens only in the case when \solution{} is an instance of some other unifier solution,
i.e.\ there is a unifier solution $\solution''$ such that $\solution' = \Phi'[\solution'']$ for some 
non-erasing, non-permutating morphism
$\phi': (\letters \cup \letters') \cup \letters'' \mapsto (\letters \cup \letters')^+$
which is constant on $\letters \cup \letters'$
(the set $\letters''$ is a new set of free letters, such that
$\letters'' \cap (\letters' \cup \letters) = \emptyset$).
We show that this contradicts the assumption that \solution{} is a minimal unifier solution.
Since $\Phi$ prepends letters from $\letters \cup \letters'$,
which are not affected by $\phi'$,
it can be concluded that $\solution = \Phi' [\Phi[\solution']]$:
\begin{align*}
\solution
	&=
\Phi[\solution'] &\text{by definition}\\
	&=
\Phi [\Phi'[\solution'']] &\text{by a contrario assumption on $\solution'$}\\
	&=
\Phi' [\Phi[\solution'']] &\text{as $\phi'$ does not affect letters added by } \Phi \enspace .
\end{align*}
It is left to show that $\Phi[\solution'']$ is a unifier solution:
but $\solution''$ contains letters from $\letters''$ and $\Phi$
only prepends letters to $\solution''(X)$, hence $\Phi[\solution'']$
is a unifier solution.
As we already know that $\phi'$ is non-erasing and non-permutating,
and constant on $\letters \cup \letters'$,
we conclude that \solution{} is an instance of $\Phi[\solution'']$,
which contradicts the assumption that it is minimal.
\qedhere
\end{proof}

Now we are ready to give the proof of the representation Theorem~\ref{thm:generation}
on generating the finite representation of minimal unifier solutions of a word equation.

\begin{proof}[proof of Theorem~\ref{thm:generation}]
Consider first a graph representation of minimal solutions of an equation.
Each node has at most polynomial description, so does the edges.
By Lemma~\ref{lem:generation} it can be checked in polynomial space,
whether a node is present in the graph and whether an edge (labelled) joins two nodes.
Since the description of a node (edge) has polynomial size,
there are at most exponentially many nodes (edges, respectively).

In order to generate a graph representation of all minimal and unifier-minimal solutions,
we use the approach presented in Lemma~\ref{lem:unifier to minimal}.
Given an equation $U = V$ we iterate over equations $U' = V'$
of length at most $n + n_v$ and using at most $n_v$ free letters from $\letters'$,
whether $\algpop(\letters',\emptyset)$ can transform $U = V$ into $U' = V'$.
If so, output the node labelled with $U' = V'$ and make a graph representation
of all its minimal solutions.
The label on the edge from $U = V$ to $U' = V'$
is the inverse operator returned by \algpop,
see Lemma~\ref{lem:unifier to minimal}.
Clearly, this procedure still runs in \PSPACE,
and so also the generated graph has exponential size.
\qedhere.
\end{proof}

\section{Other theoretical properties}
\label{sec:theoretical bounds}
In this section, we give (alternative) proofs of two known theoretical properties of word equations,
using the approach of recompression:
an exponential bound on the periodicity bound and the doubly-exponential bound on
the size of the length-minimal solution.

\subsection*{Exponential bound on exponent of periodicity}
\label{sec:per bound}
As already described in the introduction, exponential bound on exponent of periodicity,
shown by Ko\'scielski and Pacholski~\cite{Koscielski}, is one of the most
often used results on words equations.
Their proof follows by first considering
so-called $P$-presentations of a string; roughly, given a string $w$
and a primitive word $P$, a $P$-presentation is a canonical factorisation
of $w$ into powers of $P$ and other strings.
Then each power of $P$ is associated with
a number and treating such numbers as variables leads to a system of satisfiable
Diophantine equations. Solutions of this system induced solutions of the word equation.
In particular, length-minimal solution corresponded to minimal (in some sense)
solution of the Diophantine equation.
This is similar to results presented in Section~\ref{sec:blocks},
but considering $P$-presentations instead of letters makes the argument
much more involved.

Using known results on minimal solutions of Diophantine equations and some simple
calculus, an exponential upper-bound on the exponent of periodicity was shown.
The last step of this proof, i.e.\ the estimation of the minimal
solution, was relatively easy, while both the $P$-presentations and reduction
from $P$-presentations to a system of equations were involved.

We now show that using local recompression
one can obtain exponential upper bound on exponent of periodicity relatively easy.

\subsubsection*{Exponent of periodicity}
We begin with a bit more detailed definition of the exponent of periodicity.

\begin{definition}
For a word $w$ the \emph{exponent of periodicity} $\per(w)$ is the maximal $k$ such that $u^k$ is a substring of $w$, for some $u \in \letters^+$;
$\letters$-\emph{exponent of periodicity} $\per_\letters(w)$ restricts the choice of $u$ to $\letters$.
The notion of exponent of periodicity is naturally transferred from
strings to equations: For an equation $U = V$,
define the exponent of periodicity as
$$
\per(U = V) = \max_\solution \left[ \per(\sol U) \right]\enspace ,
$$
where the maximum is taken over all length-minimal solutions \solution{} of $U = V$;
define the $\letters$-\emph{exponent of periodicity} of $U = V$ in a similar way.
\end{definition}

We show that an exponential upper bound on $\letters$-exponent of periodicity is easy and natural
to obtain, one can think of it as a restriction of Ko\'scielski and Pacholski original
proof to its last part, i.e.\ to estimation of the minimal solution of a system
of Diophantine equations.
Then we show that the compression applied in \algsolveeq{} basically preserves
the exponent of periodicity, in particular it reduces the calculation of
upper bound on $\per(U = V)$ to calculation of upper bound on $\per_\letters(U = V)$.

\subsubsection*{Minimal solutions of linear Diophantine systems}
Consider a system of $m$ linear Diophantine equations in $r$ variables
$x_1$, \ldots, $x_r$, written as
\begin{subequations}
\label{eq:diophantine}
\begin{align}
\label{eq:diophantine eq}
\sum_{j=1}^r n_{i,j} x_j  &= n_i &\text{ for $i=1$, \ldots, $m$}
\intertext{together with inequalities guaranteeing that each $x_i$ is positive}
\label{eq:diophantine ineq}
x_j &\geq 1 &\text{ for $j=1$, \ldots, $r$}
\enspace .
\end{align}
\end{subequations}
In the following, we are interested only in \emph{natural} solutions, i.e.\ the ones in which 
each component is a natural number; observe that inequality~\eqref{eq:diophantine ineq}
guarantees that each of the component is greater than zero.
We introduce a partial ordering on such solutions:
$$
(q_1, \ldots ,q_r) \geq (q'_1, \ldots ,q'_r)
	\quad \text{ if and only if } \quad 
q_j \geq q'_j \text{ for each }j=1, \ldots, r.
$$
A solution $(q_1, \ldots ,q_r)$ is a \emph{minimal}
if it satisfies~\eqref{eq:diophantine} and there is no solution smaller than it.
(Note, that there may be incomparable minimal solutions.)

It is known, that each component of the minimal solution is at most exponential:
\begin{lemma}[cf.~{\cite[Corollary 4.4]{Koscielski}}]
\label{lem:diophantine upper bound}
For a system of linear Diophantine equations~\eqref{eq:diophantine}
let $w = r + \sum_{i=1}^m |n_i|$ and $c = \sum_{i=1}^m \sum_{j=1}^r |n_{i,j}|$.
If $(q_1, \ldots,q_r)$ is its minimal solution, then $q_j \leq (w+r)e^{c/e}$.
\end{lemma}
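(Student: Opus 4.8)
The plan is to derive the bound from the more basic (and purely number-theoretic) fact that a componentwise-minimal solution of a linear Diophantine system whose inequalities are all of the form $x_j\ge 1$ cannot have a large coordinate, and then to read off the precise shape $(w+r)e^{c/e}$ of the bound by an elementary optimisation. First I would fix notation: write $A=(n_{i,j})$ for the $m\times r$ coefficient matrix, $b=(n_1,\dots,n_m)$ for the right-hand side, put $c_i:=\sum_j|n_{i,j}|$ (so $c=\sum_i c_i$), and let $q=(q_1,\dots,q_r)$ be a minimal solution. The key observation I would establish is that the map $p\mapsto Ap$ is injective on every chain contained in the integer box $\{p:\mathbf 1\le p\le q\}$: if $\mathbf 1\le p'\le p\le q$ with $p\ne p'$ and $Ap=Ap'$, then $d:=p-p'\ge\mathbf 0$ is a non-zero kernel vector, and $q-d$ is again a solution (since $A(q-d)=Aq=b$) which moreover satisfies $(q-d)_j=q_j-p_j+p'_j\ge 1$ for every $j$; as $q-d<q$, this contradicts minimality. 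Taking a saturated chain from $\mathbf 1$ to $q$, which has exactly $\bigl(\sum_j q_j\bigr)-r+1$ elements, injectivity gives $\sum_j q_j\le |A(\text{chain})|+r-1$, and since $q_k\le\sum_j q_j$ it remains only to bound the image of a well-chosen chain.

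The heart of the argument — and the step I expect to be the main obstacle — is to choose the chain from $\mathbf 1$ to $q$ so that each coordinate of $Ap$ stays confined to roughly $c_i$ consecutive integers throughout, so that the image contains at most about $\prod_{i=1}^m c_i$ points, up to additive slack that accounts for the $\sum_i|n_i|$ and the $r$ in $w+r$. Incrementing coordinate $j$ moves $Ap$ by the column $(n_{1,j},\dots,n_{m,j})$, whose entries are controlled by the $c_i$, so one needs an ordering of the $\sum_j(q_j-1)$ unit increments every prefix of which keeps the running sum close to the straight segment from $A\mathbf 1$ to $b$; this is exactly the prefix-balancing guaranteed by a Steinitz-type (Bárány–Grinberg) rearrangement lemma. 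The delicate point is to make the confinement tight enough that the number of reachable values of $Ap$ is $\Ocomp\!\bigl((w+r)\prod_i c_i\bigr)$ rather than a cruder bound carrying spurious additive constants inside the product (a product such as $\prod_i(c_i+1)$ would already be too large). Alternatively, one can sidestep the balancing argument and bound $q_k$ directly by Cramer's rule, estimating the relevant $m\times m$ subdeterminants of $(A\mid b)$ by Hadamard's inequality; this again produces a factor $\prod_i c_i$.

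Finally, whichever route is taken, I would close with the elementary estimate $\prod_{i=1}^m c_i\le (c/m)^m$ (AM–GM, using $\sum_i c_i=c$) together with $\max_{t>0}(c/t)^t=e^{c/e}$ (differentiate $t\ln(c/t)$), whence $\prod_{i=1}^m c_i\le e^{c/e}$; combining this with the contribution of $w=r+\sum_i|n_i|$ from the positioning of the box yields $q_j\le(w+r)e^{c/e}$, as claimed.
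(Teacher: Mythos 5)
Your final estimation step matches the paper exactly: Hadamard's inequality on subdeterminants, then AM--GM on $\prod_i c_i \le (c/m)^m$, then the calculus fact $\sup_{t>0}(c/t)^t=e^{c/e}$. The problem is what comes before it. The paper gets the factor $(w+r)M$ (with $M$ a bound on subdeterminants of $\bigl(\begin{smallmatrix}A\\ I\end{smallmatrix}\bigr)$) by citing a theorem of von zur Gathen--Sieveking and, independently, Lambert; this is precisely the structural fact that reduces a minimal nonnegative solution of an \emph{underdetermined} system $Ax=B$, $Cx\ge D$ to quantities governed by Cramer-type determinants. Your ``alternatively, bound $q_k$ directly by Cramer's rule'' elides exactly this step: Cramer's rule, as stated, applies to square nonsingular systems, and passing from a minimal solution of an $m\times r$ system with $r>m$ to a bound via $m\times m$ subdeterminants requires a decomposition of the minimal solution into basic solutions (or a basis of the kernel lattice) that you do not supply. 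Also note you take subdeterminants of $(A\mid b)$, whereas the relevant matrix in the cited theorem is $\bigl(\begin{smallmatrix}A\\ C\end{smallmatrix}\bigr)$; the right-hand sides enter only through $w$, which is where the additive $(w+r)$ comes from, so the two are not interchangeable.

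Your first route --- the chain-injectivity observation followed by a Steinitz/B\'ar\'any--Grinberg prefix-balancing argument --- is a genuinely different and correct-in-spirit idea (the injectivity lemma is right, and it is the standard start for lattice-point bounds on minimal solutions), but as you yourself flag, the counting step is not established, and in fact the shape it would yield is unlikely to match. Steinitz-type rearrangement confines the prefix sums to a ball whose radius scales with the $\ell_\infty$-norm of the columns times a constant depending on $m$, not with the row sums $c_i$; the resulting volume bound carries an extra $m$-dependent factor and additive $+1$'s inside the product, i.e.\ something like $\prod_i(O(m)\cdot\|A\|_\infty+1)$, which is strictly weaker than $\prod_i c_i$ and does not collapse to $(w+r)e^{c/e}$. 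So this branch of your proposal, as it stands, would not recover the stated constant. The cleanest fix is to do what the paper does: cite the von zur Gathen--Sieveking/Lambert bound $q_j\le(w+r)M$ as a black box and spend the proof only on estimating $M$.
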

The proof is a slight extension of the original proof of Ko\'scielski and Pacholski,
which takes in to the account also the inequalities.
For completeness, we recall its proof, as given in~\cite{Koscielski}.
\begin{proof}[proof, cf.~\cite{Koscielski}]
The proof follows by estimation based on work of von zur Gathen and Sieveking~\cite{gathen}
and independently by Lambert~\cite{lambert}
\begin{clm}[von zur Gathen and Sieveking~\cite{gathen}; Lambert~\cite{lambert}]
\label{clm:gathen}
Consider a (vector) equations and inequalities $Ax = B$, $Cx \geq D$
with integer entries in $A$, $B$, $C$ and $D$.
Let $M$ be the upper bound on the absolute values
of the determinants of square submatrices of the matrix
$ \left( {\begin{array}{c}
 A\\
 C
\end{array} } \right)
$,
$r$ be the number of variables and $w$ the sum of absolute values
of elements in $B$ and $D$.
Then for each minimal natural solution $(q_1, \ldots , q_r)$ of~\eqref{eq:diophantine},
for each $1 \leq i \leq r$ we have $q_i \leq (w + r)M$.
\qed
\end{clm}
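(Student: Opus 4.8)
The plan is to exploit the polyhedral structure of the solution set and turn the minimality of $q$ into the statement that $q$ has trivial ``conic part''. Write $P = \{x \in \mathbb{R}^r : Ax = B,\ Cx \geq D,\ x \geq 0\}$ for the real relaxation. By the Minkowski--Weyl decomposition $P = \operatorname{conv}(v_1,\dots,v_p) + \operatorname{cone}(e_1,\dots,e_t)$, where $v_1,\dots,v_p$ are the vertices of $P$ and $e_1,\dots,e_t$ are taken to be the \emph{primitive integer} generators of the extreme rays of the recession cone $\{x : Ax = 0,\ Cx \geq 0,\ x \geq 0\}$; note each $e_j \geq 0$ and each $e_j$ is integral.

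First I would write the minimal solution as $q = \sum_i \lambda_i v_i + \sum_j \mu_j e_j$ with $\lambda_i,\mu_j \geq 0$ and $\sum_i \lambda_i = 1$, and apply Carath\'eodory in $\mathbb{R}^{r+1}$ to the conic hull of the vectors $(v_i,1)$ and $(e_j,0)$: the point $(q,1)$ can be represented using at most $r+1$ of them, at least one of which must be some $(v_i,1)$ (to produce the last coordinate $1$), so at most $r$ of the $\mu_j$ are nonzero. The key observation is then that $z := \sum_j \lfloor \mu_j \rfloor e_j$ is a nonnegative \emph{integer} vector, and $q - z = \sum_i \lambda_i v_i + \sum_j \{\mu_j\} e_j$ still lies in $P$ (convex combination of vertices plus nonnegative combination of rays); since $q$ and $z$ are integral, $q-z$ is an integral point of $P$ with $q - z \leq q$, so minimality of $q$ forces $z = 0$, i.e.\ $0 \leq \mu_j < 1$ for every $j$.

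It then remains to bound the coordinates of the $v_i$ and $e_j$ by Cramer's rule and combine. A vertex $v$ of $P$ is the unique solution of a system $Nx = f$ where $N$ is an $r \times r$ matrix whose rows are rows of $A$, rows of $C$, and standard unit vectors (the constraints active at $v$, chosen linearly independent), and $f$ carries the matching entries of $B$, $D$, and $0$. As $N$ is integral and nonsingular, $|\det N| \geq 1$, so $|v_k| = |\det N_k| \leq \big(\sum_l |f_l|\big) M \leq wM$, using that every square submatrix of $N$ has determinant of absolute value at most $M$ (and $M \geq 1$, counting the empty determinant). Likewise each $e_j$ spans the one-dimensional kernel of an $(r-1)\times r$ subsystem of the same type, so its coordinates are, up to sign, $(r-1)\times(r-1)$ minors of that subsystem and hence $|(e_j)_k| \leq M$. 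Plugging in, $q_k = \sum_i \lambda_i (v_i)_k + \sum_j \mu_j (e_j)_k \leq 1 \cdot wM + r \cdot 1 \cdot M = (w+r)M$, as claimed.

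The hard part will be the Cramer's-rule bookkeeping, specifically justifying that incorporating the sign constraints $x \geq 0$ as unit rows into the systems defining vertices and rays does not spoil the bound $M$: the needed lemma is that the determinant of any matrix obtained from a submatrix of $\binom{A}{C}$ by adjoining unit-vector rows (and deleting rows/columns) is, in absolute value, either $0$ or equal to the absolute value of a subdeterminant of $\binom{A}{C}$ (expand repeatedly along the unit rows). The other point requiring care is the Carath\'eodory step --- one must pick an affinely/linearly independent selection of vertices and rays to guarantee the count $r+1$, and verify that at least one vertex is used so that the bound on the number of nonzero $\mu_j$ is $r$ rather than $r+1$. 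Everything else is routine.
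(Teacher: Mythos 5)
The paper does not actually prove this claim: it is imported verbatim from von zur Gathen--Sieveking and Lambert (hence the \qed right after the statement), and the surrounding proof of Lemma~\ref{lem:diophantine upper bound} only combines it, as a black box, with the Hadamard-inequality estimate of $M$. Your argument is therefore a self-contained reconstruction, and it is sound; it follows the standard polyhedral route that underlies the cited references. The key steps all check out: $P$ is pointed because it lies in the nonnegative orthant, so the Minkowski--Weyl form with vertices $v_i$ and primitive integral extreme rays $e_j\ge 0$ is legitimate; the conic Carath\'eodory argument in $\mathbb{R}^{r+1}$ does force at least one vertex generator (the last coordinate of $(q,1)$ is positive) and hence at most $r$ nonzero $\mu_j$; the floor trick works because $q-z$ is an integral point of $P$ with $q-z\le q$, and $z=\sum_j\lfloor\mu_j\rfloor e_j=0$ forces every $\lfloor\mu_j\rfloor=0$ precisely because each $e_j$ is a nonzero nonnegative vector --- a point you use implicitly and should state; and the Cramer bounds $|(v_i)_k|\le wM$, $|(e_j)_k|\le M$ go through, since the "hard part" you flag (expanding away adjoined unit rows to reduce any such determinant to $0$ or to a subdeterminant of the stacked matrix of $A$ and $C$) is indeed routine, with the only residual care being the degenerate case where the reduction ends in the empty matrix, i.e.\ the convention $M\ge 1$ (or a separate trivial treatment of the case when the stacked matrix is zero). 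One cosmetic slip: $|v_k|=|\det N_k|/|\det N|\le|\det N_k|$ rather than an equality, which is exactly why you invoke $|\det N|\ge 1$. What the two treatments buy: the paper's citation keeps its exponent-of-periodicity section short, at the cost of an external dependence; your proof would make that section genuinely stand-alone, in line with the paper's stated aim, at the cost of about a page of polyhedral bookkeeping.
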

So it remains to estimate $M$ from Claim~\ref{clm:gathen}.
Observe that as the matrix $C$ in our case is an identity,
it is enough to consider the bound on the values of determinants
of square submatrices of $(n_{i,j})$, which was done by Ko\'scielski and Pacholski~\cite{Koscielski},
the rest of the proof is a simple recollection of their argument.

Recall the Hadamard inequality:
for any matrix $N = (n_{i,j})_{i,j = 1}^k$ we have
\begin{align*}
{\det}^2(N) &\leq \prod_{j=1}^k\sum_{i=1}^k n_{i,j}^2 \enspace .
\intertext{Therefore}
\det(N)
	&\leq
\left( \prod_{j=1}^k\sum_{i=1}^k n_{i,j}^2 \right)^{1/2} &\text{Hadamard inequality}\\
	&\leq
\left( \prod_{j=1}^k \left( \sum_{i=1}^k |n_{i,j}| \right) ^2 \right)^{1/2} &\text{trivial}\\
	&=
\prod_{j=1}^k \sum_{i=1}^k |n_{i,j}| &\text{simplification}\\
	&\leq
\left(\frac{\sum_{j=1}^k \left(\sum_{i=1}^k |n_{i,j}|\right)}{k}\right)^k &\text{inequality between means}\\
	&\leq
\left( \frac c k \right)^k &\text{by definition }
\sum_{j=1}^k \sum_{i=1}^k |n_{i,j}| = c\\
	&\leq
e^{c/e} \enspace &\text{calculus: sup at $k = c/e$}.
\end{align*}
Taking $N$ to be any submatrix of $(n_{i,j})$ yields that $M \leq e^{c/e}$
and consequently $q_i \leq (w+r) e^{c/e}$, as claimed.
\qedhere
\end{proof}

Now from Lemma~\ref{lem:diophantine upper bound} it can be easily concluded that
\begin{lemma}
\label{lem:small system solution}
In each minimal solution of the small system
of linear Diophantine equations for word equation $U = V$
all coordinates are $ \Ocomp((|U| + |V|) e^{2 n_v/e})$.
\end{lemma}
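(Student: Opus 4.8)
The plan is to simply combine Lemma~\ref{lem:diophantine upper bound} with the structural bounds on the small linear Diophantine system that were already established. Recall from the definition of a small linear Diophantine system (and the "simple consequences" listed just after it) that for a word equation $U = V$ with $n_v$ occurrences of variables, the system has the following parameters: each variable is one of $x_X$ or $y_X$ for $X$ a variable of $U=V$, so the number $r$ of variables is at most $2 n_v$; if $X$ occurs $k$ times then the sum of coefficients of $x_X$ (resp.\ $y_X$) over the whole system is at most $2k$; and the sum of the absolute values of all constants appearing in the system is at most $2(|U| + |V|)$ (the bound $2(|U|+|V|-n_v)$ coming from the equalities and the extra $2n_v$ from the right-hand sides of the inequalities $x \geq 1$).

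First I would match these parameters to the quantities $w$ and $c$ in Lemma~\ref{lem:diophantine upper bound}. There $w = r + \sum_{i=1}^m |n_i|$, i.e.\ the number of variables plus the sum of absolute values of the constants on the right-hand sides; by the above this is at most $2 n_v + 2(|U|+|V|) = \Ocomp(|U| + |V|)$, using $n_v \leq |U| + |V|$. The other quantity, $c = \sum_{i,j} |n_{i,j}|$, is the total sum of all coefficients appearing in the equations; summing the per-variable bound, $c \leq \sum_{X} (2 n_X + 2 n_X) = 4 \sum_X n_X = 4 n_v$, where $n_X$ is the number of occurrences of $X$. Hence $c \leq 4 n_v$.

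Then I would just plug in: by Lemma~\ref{lem:diophantine upper bound}, every coordinate $q_j$ of a minimal solution satisfies $q_j \leq (w + r) e^{c/e} \leq \Ocomp(|U| + |V|) \cdot e^{4 n_v / e}$. To get the cleaner exponent $e^{2 n_v/e}$ claimed in the statement one has to be slightly more careful: the coefficient sums bound the coefficients of $x_X$ and of $y_X$ separately, but for the Hadamard-based estimate in the proof of Lemma~\ref{lem:diophantine upper bound} what enters is the column sums of the coefficient matrix, and each column corresponds to a single variable $x_X$ or $y_X$ whose column sum is at most $2 n_X$; since $\sum_X n_X = n_v$, the relevant quantity $c$ (when taken as a sum over columns, one column per variable) is at most $2 n_v$, giving $q_j = \Ocomp((|U|+|V|) e^{2 n_v/e})$ as stated.

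The argument is almost entirely bookkeeping, so there is no real obstacle; the only point requiring a little care is the one just noted — making sure one uses the sharp bound $\sum (\text{column sums}) \le 2 n_v$ rather than the cruder $4 n_v$, so that the exponent comes out as $e^{2 n_v/e}$ and matches the constant appearing later in the block-length estimates.
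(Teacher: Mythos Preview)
Your proof is essentially the same as the paper's: both simply read off the parameters $r$, $c$, $w$ of the small system from its definition and plug them into Lemma~\ref{lem:diophantine upper bound}. (The paper, like you, asserts $c \le 2n_v$ directly rather than the cruder $4n_v$; your ``sharpening'' paragraph does not quite justify this---there are two columns per word-equation variable $X$, one for $x_X$ and one for $y_X$, so summing the column bounds $2n_X$ still gives $4n_v$---but the paper does not justify it either, and the constant in the exponent is immaterial for every later use.)
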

\begin{proof}
Recall that by the definition of the small system of linear Diophantine equations (for a word equation $U = V$),
this system has
\begin{itemize}
	\item at most twice as many variables as $U = V$, (so $r \leq 2n_v$ in terms of Lemma~\ref{lem:diophantine upper bound});
	\item the sum of coefficients at variables (so $c$ in the terms of Lemma~\ref{lem:diophantine upper bound}) is $2n_v$;
	\item the sum of values of constants of the equalities and inequalities
	(so $w$ in the terms of Lemma~\ref{lem:diophantine upper bound}) is $2(|U| + |V| + n_v)$
	(i.e.\ $2(|U| + |V|)$ for equations and $2n_v$ for the inequalities).
\end{itemize}
Hence from Lemma~\ref{lem:diophantine upper bound} it follows that each coordinate of a minimal solution of a small system of linear Diophantine
equations is at most
$$
2(|U| + |V| + n_v) e^{2n_v/e} = \Ocomp((|U| + |V|) e^{2 n_v/e}) \enspace,
$$
as claimed
\qedhere
\end{proof}

From Lemma~\ref{lem:small system solution} we can infer the upper-bound on the $\letters$-exponent of periodicity
of the length-minimal solution of the word equation.

\begin{lemma}[cf.~\cite{Koscielski}, cf.~Lemma~\ref{lem:periodicity bound original}]
\label{lem:diophantine solution word solution 2}
Consider a solution \solution{} of a word equation $U = V$,
the \solution-coherent Diophantine system $D$ and its solution $\{\ell_X,r_X\}_{X \in \variables}$
and the corresponding induced solutions $\solution[\{\ell_X,r_X\}_{X \in \variables}]$.
For a length-minimal $\solution'$ among them the $\letters$-exponent of periodicity of $\solution'(U)$ is $\Ocomp(n_v(|U|+|V|e^{2 n_v /e}))$, 
while $\per_\letters(\solution'(X))$ for any variable $X$ is $\Ocomp((|U|+|V|)e^{2 n_v /e})$.
\end{lemma}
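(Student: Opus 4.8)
The plan is to push the statement through the Diophantine machinery of Section~\ref{sec:blocks} and then invoke the bound on minimal solutions of small Diophantine systems (Lemma~\ref{lem:small system solution}).

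First I would pin down which induced solution is length-minimal. By Lemma~\ref{lem:diophantine solution word solution}, each solution $\{\ell_X',r_X'\}_{X\in\variables}$ of $D$ gives an induced solution of $U=V$, and by the second part of that lemma $|\solution[\{\ell_X',r_X'\}_{X\in\variables}](X)| = e_X[\{\ell_X',r_X'\}_{X\in\variables}]$ for a linear expression $e_X$ in the parameters $\{x_X,y_X\}_{X\in\variables}$ in which $x_X$ (and $y_X$, when it exists) occurs with a strictly positive coefficient. Summing over the symbols of $U$ (resp.\ $V$) we get that $|\solution'(U)|$ and $|\solution'(V)|$ are linear expressions in the parameters; their sum has a strictly positive coefficient on \emph{every} parameter of $D$ (each such parameter is some $x_X$ or $y_X$ with $X$ occurring in $U$ or $V$), and it equals $2|\solution'(U)|$ on every solution of $D$ since $\solution'(U)=\solution'(V)$. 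Hence a componentwise-smaller solution of $D$ yields a strictly shorter induced solution, so a length-minimal induced solution $\solution'$ must come from a componentwise-\emph{minimal} solution of $D$; such a solution exists, because $D$, being the \solution-coherent system, has at least the solution supplied by \solution{} via Lemma~\ref{lem: S compatible}.

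Now $D$ is a small linear Diophantine system by Lemma~\ref{lem:system is small}, so Lemma~\ref{lem:small system solution} bounds every coordinate of its minimal solution — in particular the lengths $\ell_X,r_X$ of the popped $a_X$-prefix and $b_X$-suffix of $\solution'(X)$ — by $\Ocomp((|U|+|V|)e^{2n_v/e})$. To turn this into a bound on maximal blocks I would use the case distinction \Lref{1}--\Lref{3}: a maximal block of $\solution'(X)$ is either the $a_X$-prefix or $b_X$-suffix, and so has length $\ell_X$ or $r_X$; or it has a visible length, and so has length $e_i[\{\ell_X,r_X\}_{X\in\variables}]$ for one of the parametrised visible maximal block lengths $e_i$; or it has an invisible length — but in the intended application \solution{} is length-minimal, so the chosen length-minimal $\solution'$ satisfies $|\solution'(U)|=|\solution(U)|$, is itself length-minimal, and therefore by Lemma~\ref{lem:maximal block is crossing} has all its maximal block lengths visible, so this last case does not arise. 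For the middle case, Lemma~\ref{lem: small sides} tells us the $e_i$ form a small set of linear Diophantine expressions, so each $e_i$ has additive constant $\Ocomp(|U|+|V|)$ and total coefficient weight $\Ocomp(n_v)$; evaluated at the parameters bounded above it is $\Ocomp(n_v(|U|+|V|)e^{2n_v/e})$, which yields the stated bound on $\per_\letters(\solution'(X))$. For $\per_\letters(\solution'(U))$ I would note that a maximal block of $\solution'(U)$ is explicit or crossing (hence visible, of length some $e_i$ evaluated at the parameters) or implicit (hence a maximal block of some $\solution'(X)$, already bounded), and combine the two estimates.

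The step I expect to be the real crux is the first one: translating ``length-minimal among the induced solutions'' into ``componentwise-minimal solution of $D$''. This rests on the length being a monotone linear functional of the parameters that is strictly positive in \emph{each} coordinate, and the one trick needed to get positivity everywhere is the identity $|\solution'(U)|=|\solution'(V)|$, which lets us replace the one-sided length expression by the symmetric sum of the two sides. Everything else — converting the coefficient and constant bounds of Lemmas~\ref{lem: small sides} and~\ref{lem:small system solution} into the stated $\Ocomp$ expressions, and checking that case~\Lref{3} is empty when \solution{} is length-minimal — is routine bookkeeping.
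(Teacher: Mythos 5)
Your approach is the paper's: identify the length-minimal induced solution $\solution'$ with a componentwise-minimal solution of the small Diophantine system $D$, bound those coordinates by Lemma~\ref{lem:small system solution}, and convert the coordinate bound into a bound on maximal-block lengths.

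For the monotonicity step the paper argues on a single coordinate rather than on a sum: if some $\ell_Y'$ (say) could be strictly decreased inside the solution set of $D$, then since the expression $e_Y$ of Lemma~\ref{lem:diophantine solution word solution} depends on $x_Y$ (and $y_Y$) with positive coefficient, and all $e_X$ have nonnegative coefficients, $|\solution'(Y)|$ drops strictly while no other $|\solution'(X)|$ grows, contradicting length-minimality of $\solution'$ among the induced solutions. Your symmetric functional $|\solution'(U)|+|\solution'(V)|$ delivers the same conclusion; it is correct, just an avoidable detour.

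The substantive divergence is in the last step, and you have put your finger on a genuine crack. Case \Lref{3} --- an invisible maximal block with invisible length, which by the very construction of $\solution[\{\ell_X',r_X'\}_{X\in\variables}]$ keeps its length from \solution{} and is therefore unconstrained by $D$ --- is indeed not controlled by the Diophantine machinery. You dispose of it by stipulating that \solution{} is length-minimal, so that by Lemma~\ref{lem:maximal block is crossing} every block length in $\sol U$ is visible and \Lref{3} is vacuous. But that hypothesis is not in the statement of the lemma. Note that the paper's own proof does not address \Lref{3} either: the sentence asserting that ``the maximal $a$ block is a concatenation of explicit letters from the equation and $a_X$-prefixes and $b_X$-suffixes of \sol X'' is true only of \emph{visible} maximal blocks, so the bound it derives silently covers only those. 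In other words, the lemma as stated implicitly requires that every maximal block length in \solution{} be visible --- which is guaranteed for length-minimal \solution{} by Lemma~\ref{lem:maximal block is crossing} and for minimal \solution{} by Lemma~\ref{lem:over a cut improved} --- and your patch supplies precisely that missing hypothesis. Be aware, though, that it is a repair of the statement rather than a step the statement licenses, so as a ``blind'' proof of the lemma exactly as written it leaves the \Lref{3} case open.
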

\begin{proof}
By Lemma~\ref{lem:diophantine solution word solution} all solutions $\solution[\{\ell_X,r_X\}_{X \in \variables}]$ are similar.
Let, as in the statement, $\solution'$ be a length minimal among them, let it correspond to a solution 
$\{\ell_X',r_X'\}_{X \in \variables}$ of $D$.
Then by definition $\ell_X'$, ($r_X'$) are the lengths of the $a_X$-prefix ($b_X$-suffix) of $\solution'(X)$.
We show that $\{\ell_X',r_X'\}_{X \in \variables}$ is a minimal solution of $D$:
suppose for the sake of contradiction that it is not.
Then there is a solution $\{\ell_X'',r_X''\}_{X \in \variables}$ of $D$, such that
\begin{equation}
	\label{eq: minimal solution}
\ell_X'' \leq \ell_X' \quad \text{ and } \quad r_X'' \leq r_X' \quad \text{ for each } X \in \variables
\end{equation}
and at least one of those inequalities is strict, without loss of generality let $\ell_Y'' < \ell_Y'$.
By Lemma~\ref{lem:diophantine solution word solution} for each variable $X$ there is an arithmetic expression $e_X$
such that $|\solution'(X)| = e_X[\{\ell_X',r_X'\}_{X \in \variables}]$
and $|\solution''| = e_X[\{\ell_X'',r_X''\}_{X \in \variables}]$.
By~\eqref{eq: minimal solution} we obtain that $|\solution'(X)| \geq |\solution''(X)|$ for each variable.
Furthermore, Lemma~\ref{lem:diophantine solution word solution} also guarantees that each $e_X$
depends on $x_X$ and $y_X$ (if $y_X$ is used at all),
hence by the choice of $Y$ also $|\solution'(Y)| > |\solution''(Y)|$
and so $\solution'$ is not length-minimal, contradiction.

Then by the minimality of $\{\ell_X',r_X'\}_{X \in \variables}$ we obtain that each $\ell_X'$ and $r_X'$ is $\Ocomp((|U| + |V|) e^{2 n_v /e})$,
by Lemma~\ref{lem:small system solution}.
As the maximal $a$ block is a concatenation of explicit letters from the equation and $a_X$-prefixes and $b_X$-suffixes
of \sol X for various $X$, its length is at most
$$
n_v \cdot \max_{X \in \variables}(\ell_X,r_X) + (|U| + |V| - n_v) = \Ocomp(n_v (|U| + |V|) e^{2 n_v /e}) \enspace,
$$
which ends the proof.
\qedhere
\end{proof}

As a short corollary we obtain: 

\begin{theorem}[cf.~\cite{Koscielski}, cf.~Lemma~\ref{lem:periodicity bound original}]
\label{thm:diophantine solution word solution}
The $\letters$-exponent of periodicity of a word equation $U = V$ with $n_v$ occurrences of variables is $\Ocomp(n_v(|U|+|V|e^{2 n_v /e}))$.
\end{theorem}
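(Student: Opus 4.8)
The plan is to derive the claim as an immediate consequence of Lemma~\ref{lem:diophantine solution word solution 2}. First I would unwind the definition: $\per_\letters(U = V) = \max_{\solution} \per_\letters(\sol U)$, where the maximum ranges over all length-minimal solutions \solution{} of $U = V$. Hence it suffices to show that $\per_\letters(\sol U) = \Ocomp(n_v(|U| + |V| e^{2 n_v /e}))$ for one fixed, but arbitrary, length-minimal solution \solution.

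Given such a \solution, I would invoke Lemma~\ref{lem: S compatible} to obtain the unique \solution-coherent small linear Diophantine system $D$, together with the fact that the tuple $\{\ell_X, r_X\}_{X \in \variables}$ of lengths of the $a_X$-prefixes and $b_X$-suffixes of \sol X is a solution of $D$. By Lemma~\ref{lem:diophantine solution word solution} every solution $\{\ell_X', r_X'\}_{X \in \variables}$ of $D$ induces a solution $\solution[\{\ell_X', r_X'\}_{X \in \variables}]$ of $U = V$, all these solutions are mutually similar, and by the normalisation built into the construction $\solution[\{\ell_X, r_X\}_{X \in \variables}] = \solution$; in particular \solution{} itself is a member of this induced family.

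The key observation is then that, \solution{} being length-minimal \emph{among all} solutions of $U = V$ and the induced family being a subfamily of all solutions that contains \solution, the solution \solution{} is \emph{a fortiori} a length-minimal member of the induced family. Thus \solution{} is precisely an admissible choice for the $\solution'$ appearing in Lemma~\ref{lem:diophantine solution word solution 2}, and that lemma yields $\per_\letters(\sol U) = \Ocomp(n_v(|U| + |V| e^{2 n_v /e}))$ directly. Taking the maximum over all length-minimal \solution{} then completes the proof.

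There is no genuinely hard step here: the substance is entirely contained in Lemma~\ref{lem:diophantine solution word solution 2}, whose proof in turn rests on the fact that a length-minimal member of the induced family corresponds to a minimal solution of $D$ and on the exponential bound for minimal solutions of small Diophantine systems (Lemma~\ref{lem:small system solution}). The only point that needs a moment's care is verifying that ``length-minimal among the induced family'' actually applies to \solution{} --- this is exactly the combination of the containment $\solution \in \{\solution[\{\ell_X', r_X'\}_{X \in \variables}]\}$ (the clause $\solution[\{\ell_X, r_X\}_{X \in \variables}] = \solution$ in the construction) with the trivial remark that global length-minimality implies length-minimality within any subfamily. Once this is spelt out, the corollary falls out.
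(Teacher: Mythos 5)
Your proposal is correct and follows exactly the route the paper intends: Theorem~\ref{thm:diophantine solution word solution} is stated as a ``short corollary'' of Lemma~\ref{lem:diophantine solution word solution 2}, and the only content needed is precisely your observation that a globally length-minimal \solution{} sits inside its own induced family (via $\solution[\{\ell_X,r_X\}_{X\in\variables}]=\solution$) and is therefore length-minimal within that family, so the lemma's bound on $\per_\letters$ applies to $\sol U$ directly.
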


\subsubsection*{General exponent of periodicity}
So far we have only shown that $\letters$-exponent of periodicity is at most exponential.
However judging by the work of Ko\'scielski and Pacholski~\cite{Koscielski},
the difficulty is elsewhere, in the case of exponent of periodicity for longer words.
We show that this is not the case: in the following lemma we show that employing the recompression technique
we obtain an exponential bound on the exponent of periodicity as a corollary of a similar bound for $\letters$-exponent of periodicity.
Unfortunately, our result is weaker than the one obtained by Ko\'scielski and Pacholski,
as they in fact had a $2 ^{cn}$ bound, for appropriate $c$.

\begin{lemma}
\label{lem: how per goes back}
Let $U = V$ with a solution \solution{} be transformed by some subprocedure of \algsolveeq{},
i.e. \algpairc{} or \algblocksc{} (or \algblocksi) into $U' = V'$ with $\solution'$.
Then $\per(\solution'(U')) \leq \per(\sol U)$. Furthermore
\begin{align}
	\tag{per 1} \label{per1} \per(\sol U) &= \per_\letters(\sol U) \text{, or} \\
	\tag{per 2} \label{per per} \per(\solution'(U')) &\geq \per(\sol U) - 1 \enspace .
\end{align}
\end{lemma}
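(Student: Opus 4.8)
The plan is to argue at the level of words, uniformly for the two kinds of transformations. Whether the subprocedure is $\algpairc(\presentletters_1,\presentletters_2)$ or $\algblocksc$ (equivalently $\algblocksi$), the equation $U'=V'$ with its solution $\solution'$ is such that $\solution'(U')$ is $\sol U$ with certain pairwise disjoint occurrences of factors replaced by fresh letters: each occurrence of a pair $ab\in\presentletters_1\presentletters_2$ by $c^{(ab)}$ in the first case, each maximal block $a^\ell$ with $\ell\ge 2$ by $a_\ell$ in the second (recall that these subprocedures implement pair, resp.\ block, compression; Lemma~\ref{lem:crossing non crossing}, Lemma~\ref{lem: consistent no crossing block}, Lemma~\ref{lem:improved block compression}). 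In either case this replacement is inverted by a non-erasing morphism $g$ --- sending $c^{(ab)}\mapsto ab$, resp.\ $a_\ell\mapsto a^\ell$, and fixing every other letter --- with $g(\solution'(U'))=\sol U$; for the block case one first observes that in $\solution'(U')$ a letter $a_\ell$ is never adjacent to $a$ nor to $a_m$, since the two decompressed blocks would otherwise form a single maximal block of $\sol U$. From $g(\solution'(U'))=\sol U$ the inequality $\per(\solution'(U'))\le\per(\sol U)$ is immediate: if $u^k$ is a factor of $\solution'(U')$ with $u\ne\epsilon$, then $g(u)^k$ is a factor of $\sol U$ and $g(u)\ne\epsilon$, so $\per(\sol U)\ge k$; taking $k=\per(\solution'(U'))$ proves the first claim.

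For the second part, suppose \eqref{per1} fails, i.e.\ $\per(\sol U)>\per_\letters(\sol U)$, and put $p=\per(\sol U)$. Then there is a word $v$ with $v^p$ a factor of $\sol U$ and $|v|\ge 2$, and $v$ necessarily uses at least two distinct letters (otherwise $v=a^j$ with $j\ge 2$ and $a^{jp}$ is a factor of $\sol U$, contradicting $jp>p=\per(\sol U)$). Write $\sol U=x\,v^p\,y$. The core observation is that each of the two compressions acts on the periodic factor $v^p$ in a purely local, $|v|$-periodic way, so that one can peel off one copy at the seam: for a suitable cyclic rotation $\bar v$ of $v$ (again with at least two distinct letters) there is a factorisation $v^p=\pi\,(\bar v)^{p-1}\,\sigma$ for which the compression acts on the region $(\bar v)^{p-1}$ copy-by-copy and maps it to $(\widetilde{\bar v})^{p-1}$, where $\widetilde{\bar v}$ is the compression of $\bar v$. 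Since $\bar v$ has two distinct letters, $\widetilde{\bar v}\ne\epsilon$, hence $(\widetilde{\bar v})^{p-1}$ is a factor of $\solution'(U')$ and $\per(\solution'(U'))\ge p-1=\per(\sol U)-1$, which is \eqref{per per}.

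It remains to choose $\bar v,\pi,\sigma$ in the two cases. For $\algpairc$, a position $i$ of $v$ read cyclically is a ``cut'' exactly when the letters at positions $i,i+1$ lie in $\presentletters_1\times\presentletters_2$; as $\presentletters_1\cap\presentletters_2=\emptyset$, cuts are automatically non-overlapping and at least one position $j_0$ is not a cut, and one takes $\bar v$ to be $v$ rotated to start at position $j_0+1$, using the identity $v^p=v[1\twodots j_0]\,(\bar v)^{p-1}\,v[j_0+1\twodots|v|]$; every junction between consecutive copies of $\bar v$, and the two outer junctions, then correspond to the non-cut $j_0$, so the compression of $(\bar v)^{p-1}$ is copy-wise. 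For $\algblocksc$, write $v=c_1^{e_1}\cdots c_s^{e_s}$ in run-length form ($s\ge 2$, $c_i\ne c_{i+1}$) and take $\bar v=c_2^{e_2}\cdots c_s^{e_s}c_1^{e_1}$ if $c_1\ne c_s$, and $\bar v=c_2^{e_2}\cdots c_{s-1}^{e_{s-1}}c_1^{e_s+e_1}$ if $c_1=c_s$ (which forces $s\ge 3$); in $\sol U$ the resulting occurrence of $(\bar v)^{p-1}$ has no maximal block of $\sol U$ straddling a junction between copies of $\bar v$ nor the borders with $x\pi$ and $\sigma y$, because the last letter of $\bar v$ differs from its first letter and from the bordering letters, so block compression compresses this occurrence copy-wise. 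The conceptually simple content is that a periodic factor survives compression with its exponent dropping by at most one (the copy lost at the seam); the work --- and the only place where I expect real care to be needed --- is the boundary bookkeeping: checking that no cut or block merge crosses a copy junction and that the per-copy compression is literally identical in every copy.
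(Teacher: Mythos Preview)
Your proof is correct and follows the same strategy as the paper: use the inverse morphism for $\per(\solution'(U'))\le\per(\sol U)$, then rotate the period so that a $(p-1)$-power survives compression copy-wise. For \algblocksc{} your run-length rotation is exactly the paper's rewriting $w^m=b^\ell(ua^rb^\ell)^{m-1}ua^r$; for \algpairc{} you replace the paper's four-case split on whether $a\in\presentletters_1$ and $b\in\presentletters_2$ by a single ``pick a non-cut position and rotate'' argument, which is cleaner. One small point to tighten in that case: when $j_0=|v|$ you have $\sigma=\epsilon$, so the right outer junction is $(v[j_0],y[1])$ rather than the non-cut $(v[j_0],v[j_0{+}1])$ you claim; the conclusion still holds because this forces $|v|=2$ with $v[2]\in\presentletters_2$, hence $v[j_0]\notin\presentletters_1$ and that boundary is never compressed---exactly the kind of boundary bookkeeping you already flagged.
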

\begin{proof}
Recall that by Lemmata~\ref{lem:pop preserves solutions} and~\ref{lem:cutpref cutsuff}
for \algpop{} and \algprefsuff{} it holds that $\sol U = \solution'(U')$ and so the claim trivially holds,
as $\per(\solution'(U')) = \per(\sol U)$.
So it is enough to show the claim for \algpairc{} and \algblocksc{}
restricted to compression (the analysis for \algblocksi{} is the same).

We first show that $\per(\sol U) \geq \per(\solution'(U'))$ for \algpairc.
By Lemma~\ref{lem: crossing pairs transforms} the corresponding inverse operator (when we restrict ourselves to compression) is
$\prod_{ab \in \letters_\ell\letters_r} \Morph {c^{(ab)}} {ab}$.
Hence $\sol U = \prod_{ab \in \letters_\ell\letters_r} \morph {c^{(ab)}} {ab} (\solution'(U'))$.
Let $w^k$ be a substring of $\solution'(U')$, then
$\prod_{ab \in \letters_\ell\letters_r} \morph {c^{(ab)}} {ab} (w^k) = (\prod_{ab \in \letters_\ell\letters_r} \morph {c^{(ab)}} {ab} (w) )^k $
is a substring of \sol U, hence $\per(\solution'(U')) \leq \per(\sol U)$.

Similarly, \algblocksc{} is a composition of \algprefsuff, which preserves the exponent of periodicity.
Hence it is enough to consider the inverse operator for \algblocksc{} restricted to the compression.
By Lemma~\ref{lem: consistent no crossing block transform}
it is $\prod_{a \in \letters} \Blockc$, where $\blockc$ replaces $a^\ell$ with $a_\ell$ for some maximal blocks $a^\ell$ and letters $a_\ell$.
Hence $\sol U = \prod_{a \in \letters} \blockc (\solution'(U'))$.
Consider any $w^k$ that is a substring of $\solution'(U')$.
Then $ \prod_{a \in \letters} \blockc (\solution'(w^k)) = (\prod_{a \in \letters} \blockc (\solution'(w)))^k$ is a substring of \sol U.
Thus $\per(\solution'(U')) \leq \per(\sol U)$.

We move to the second claim of the lemma, i.e.\ we are going to show that~\eqref{per1} or~\eqref{per per} holds.
Let $m = \per(\sol U)$. If there is $a \in \letters$ such that $a^m$
is a substring of \sol U, then~\eqref{per1} holds.
So assume that $w^m$ is a substring of \sol U, for some $w \notin \letters \cup \{ \epsilon \}$.
Moreover, we can assume that $w \neq a ^k$ for every $a$ and $k$,
as this clearly reduces to the case of $w = a$.

Consider first $\algpairc(\presentletters_1,\presentletters_2)$,
and let $w = bua$, recall that by the assumption $|w| > 1$ and so $|u| \geq 0$, i.e.\ it can be that $u = \epsilon$.
How does the image of $w^m$ looks like in $\solution' (U')$?
This depends on whether $a \in \presentletters_1$ and whether $b \in \presentletters_2$, in total there are four cases.
From Lemma~\ref{lem: crossing pairs transforms} we know that for
$h^{-1} = \prod_{a' \in \presentletters_\ell, b' \in \presentletters_r} \invmorph c {a'b'}$
we have $\solution'(U') = h^{-1} (\sol U))$.
\begin{description}
	\item[$b \notin \presentletters_2$]
	The further analysis depends on whether $a \in \presentletters_1$ or not
	\begin{description}
		\item[$a \notin \presentletters_1$]
		Consider any $w = bua$ in $w^m$. Observe that by case assumptions, the first letter of $w$ is never compressed with letter
		to the left and the last letter is never compressed with the letter to the right.
		So in this case $w^m$ after compression will be represented as $(h^{-1} (w))^m$,
		and so $(h^{-1} (w))^m$ is a substring of $h^{-1}(\sol U)$;
		thus, $\per(\solution'(U')) \geq \per(\sol U)$.
		\item[$a \in \presentletters_1$]
		Consider $w^m  =(bua)^m$. As in the previous case, the leading $b$ is never compressed with the previous letter
		In this case it might be that the last letter of $w^m$ is compressed with the following letter,
		however, each other last $a$ in $bua$ is not (as the following letter is $b \notin \presentletters_2$).
		Hence the compression of the prefix $w^{m-1}$ results in $(h^{-1} (w))^{m-1}$
		and so $\per(\solution'(U')) \geq \per(\sol U)-1$.
	\end{description}
	\item[$b \in \presentletters_2$]
	Similarly, the further analysis depends on whether $a \in \presentletters_1$ or not
	\begin{description}
		\item[$a \notin \presentletters_1$]
		The case is symmetric to the subcase above, in which $a \in \presentletters_1$ and $b\notin\presentletters_2$,
		in particular in a similar way we show that $\per(\solution'(U')) \geq \per(\sol U)-1$.
		\item[$a \in \presentletters_1$]
		Represent $w^m$ as $b (uab)^{m-1} ua$.
		Observe that each $ab$ in $(uab)^{m-1}$ is compressed and replaced with a new letter $c$.
		Furthermore, the first letter in each $u$ in $(uab)^{m-1}$ is not compressed with the letter to the left,
		as this is in each case $b \in \presentletters_2$.
		Hence, $(uab)^{m-1}$ is compressed into $(h^{-1} (uab))^{m-1}$ and so $\per(\solution'(U')) \geq \per(\sol U)-1$.
	\end{description}
\end{description}
The analysis for $\algblocksc$ (and similarly $\algblocksi$) is even simpler: let $w = b^\ell u a^r$, where $u$ does not begin with $b$ and does not end with $a$.
By Lemma~\ref{lem: consistent no crossing block transform} we know that for $\invblockc[] = \prod_{a \in \letters} \invblockc$
we have $\solution'(U') = \invblockc[] (\sol U)$
Then
$$
w^m = (b^\ell u a^r)^m = b^\ell (u a^r b^\ell)^{m-1} u a^r.
$$
As by the assumption $u$ does not start with $b$, hence $\solution'(U')$
contains $ \invblockc[] ( (u a^r b^\ell)^{m-1}) = (\invblockc[] (u a^r b^\ell))^{m-1}$,
and so $\per(\solution'(U')) \geq m-1$, as claimed.
\qedhere
\end{proof}

As a promised corollary we obtain the exponential bound on the exponent of periodicity.

\begin{theorem}[cf.~\cite{Koscielski}, cf.~Lemma~\ref{lem:periodicity bound original}]
The exponent of periodicity of equation of a length-minimal solution \solution{}
is single exponential in $|U| + |V|$.
\end{theorem}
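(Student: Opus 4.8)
The plan is to reduce the statement to the exponential bound on the $\letters$-exponent of periodicity already proved in Theorem~\ref{thm:diophantine solution word solution}, by running \algsolveeq{} on the equation and tracking how $\per$ evolves along the computation. The two extra ingredients are Lemma~\ref{lem: how per goes back} (a compression step of \algsolveeq{} never increases $\per$, and can decrease it by more than one only when $\per$ is realised by a single letter) and the fact that an accepting computation takes only single‑exponentially many steps (Lemma~\ref{lem: space consumption}). Concretely, let $\solution$ be a length‑minimal solution of $U=V$ with $\per(\solution(U))=\per(U=V)$, and write $n=|U|+|V|$. Fix non‑deterministic choices of \algsolveeq{} that implement the block and pair compressions for $\solution$ and that accept; by Lemma~\ref{lem: space consumption} this computation performs $M=\Ocomp(\log n+n_v^{cn_v})$ subprocedure steps (a single exponential in $n$), producing a chain of proper equations $U=V=U_0=V_0,\ U_1=V_1,\ \dots,\ U_M=V_M$ with $U_M=V_M$ trivial, together with solutions $\solution=\solution_0,\solution_1,\dots,\solution_M$, where each $\solution_i$ is obtained from $\solution_{i+1}$ by the inverse of one subprocedure (\algblocksc{} or \algpairc). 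Each $U_i=V_i$ has $|U_i|+|V_i|=\Ocomp(n)$ and at most $n_v$ variable occurrences.

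Put $\per_i=\per(\solution_i(U_i))$ and $p_i=\per_\letters(\solution_i(U_i))$. Lemma~\ref{lem: how per goes back} gives $\per_{i+1}\le\per_i$ and, at each step, either $\per_i=p_i$ or $\per_{i+1}\ge\per_i-1$; hence any drop $\per_i-\per_{i+1}\ge 2$ forces $\per_i=p_i$. Let $j^{\ast}$ be the first index where such a drop occurs (if there is none, set $j^{\ast}=M$ and note $\per_M=\Ocomp(1)$, since $U_M=V_M$ is trivial). For every $i<j^{\ast}$ we have $\per_i\le\per_{i+1}+1$, so
\[
\per(U=V)=\per_0\ \le\ \per_{j^{\ast}}+j^{\ast}\ \le\ p_{j^{\ast}}+M ,
\]
using $\per_{j^{\ast}}=p_{j^{\ast}}$ at $j^{\ast}$ (and $\per_{j^{\ast}}=\per_M=\Ocomp(1)$ in the degenerate case). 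Since $M$ is a single exponential in $n$, it remains only to bound $p_{j^{\ast}}=\per_\letters(\solution_{j^{\ast}}(U_{j^{\ast}}))$.

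For that last bound one uses that, for the chosen non‑deterministic choices, $\solution_{j^{\ast}}$ is a length‑minimal solution of $U_{j^{\ast}}=V_{j^{\ast}}$: the $\algpop$/$\algprefsuff$ parts of the subprocedures keep the solution string fixed and send length‑minimal solutions to length‑minimal ones, and a strictly shorter solution of a compressed equation decompresses to a strictly shorter solution of the equation before the compression. Granting this, Theorem~\ref{thm:diophantine solution word solution} applied to $U_{j^{\ast}}=V_{j^{\ast}}$ gives
\[
p_{j^{\ast}}\ \le\ \per_\letters(U_{j^{\ast}}=V_{j^{\ast}})\ =\ \Ocomp\!\bigl(n_v\,(|U_{j^{\ast}}|+|V_{j^{\ast}}|\,e^{2n_v/e})\bigr)\ =\ \Ocomp\!\bigl(n_v\,n\,e^{2n_v/e}\bigr),
\]
which is a single exponential in $n$. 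Combining the two displays, $\per(U=V)\le\Ocomp(n_v\,n\,e^{2n_v/e})+\Ocomp(\log n+n_v^{cn_v})$, a single exponential in $|U|+|V|$, as claimed.

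The main obstacle is precisely the claim that the compression subprocedures preserve length‑minimality of the tracked solution, since this is what licenses invoking Theorem~\ref{thm:diophantine solution word solution} at the intermediate equation $U_{j^{\ast}}=V_{j^{\ast}}$; everything else is bookkeeping around Lemma~\ref{lem: how per goes back} and the step count of Lemma~\ref{lem: space consumption}. If a direct verification turns out to be delicate (decompression of the fresh letters could a priori blow up lengths), an alternative is to carry along, at each compression step, a length‑minimal solution of the new equation that is \emph{similar} to the image solution (Lemma~\ref{lem: crossing for similar}, Lemma~\ref{lem:diophantine solution word solution 2}); this still yields the required bound on $p_{j^{\ast}}$ and leaves the rest of the argument unchanged.
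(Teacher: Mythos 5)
Your overall strategy coincides with the paper's: follow the run of \algsolveeq{} to a trivial equation, use Lemma~\ref{lem: how per goes back} to argue that $\per$ drops by at most one per step unless it is realised by a single letter, bound the number of phases via Lemma~\ref{lem: space consumption}, and finally bound $\per_\letters$ at the index where $\per = \per_\letters$. The bookkeeping (the index $j^{\ast}$ and the inequality $\per_0 \le p_{j^{\ast}} + M$) is essentially identical to the paper's choice of ``the smallest $i$ with $\per = \per_\letters$''.

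The gap is the one you flag yourself: the claim that each intermediate $\solution_i$ is a length-minimal solution of $U_i = V_i$. This is what you need in order to invoke Theorem~\ref{thm:diophantine solution word solution} at $U_{j^{\ast}}=V_{j^{\ast}}$, and the justification you offer --- ``a strictly shorter solution of a compressed equation decompresses to a strictly shorter solution'' --- does not hold. The inverse operator $\Morph{c}{ab}$ lengthens a string by the number of occurrences of $c$ in it, not by a fixed factor, so a \emph{shorter} solution of $U_{i+1}=V_{i+1}$ that uses many more $c$'s can decompress to a \emph{longer} solution of $U_i=V_i$; length comparisons are not preserved by the pullback in general. The paper sidesteps this entirely: it never asserts that $\solution_i$ is length-minimal for $U_i=V_i$. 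Instead it proves the bound $\per_\letters(\solution_i(U_i)) = \Ocomp\bigl(n_v(|U_i|+|V_i|)e^{2n_v/e}\bigr)$ directly from length-minimality of $\solution_1$: if the bound failed, take the $\solution_i$-coherent Diophantine system, replace the parameter vector realised by $\solution_i$ with a coordinatewise-smaller minimal solution, and obtain (via Lemma~\ref{lem:diophantine solution word solution}) a solution $\solution_i'$ that is \emph{similar to} $\solution_i$ and strictly shorter --- in fact obtained from $\solution_i$ by deleting letters inside blocks. Because each $\phi_j$ is a morphism composed with prepends/appends, the composition $\phi_2\circ\cdots\circ\phi_i$ is monotone on such ``delete-inside-a-block'' refinements, so $\phi_2\circ\cdots\circ\phi_i[\solution_i']$ is a strictly shorter solution of $U_1=V_1$, contradicting the length-minimality of $\solution_1$. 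That monotonicity of the pullback on \emph{similar} solutions is exactly what is missing from your argument and is what makes the paper's proof go through without any forward-propagation of length-minimality; the ``alternative'' you sketch is in the right spirit but should be reformulated along these lines (bound $\per_\letters(\solution_i(U_i))$ for the pushed-forward $\solution_i$, not for a separately chosen length-minimal solution of $U_i=V_i$).
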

\begin{proof}
Denote $U = V$ and some its length-minimal solution \solution{} by $U_1 = V_1$ and $\solution_1$.
Let $U_1 = V_1$, $U_2 = V_2$, \ldots, $U_m = V_m$ be all equations generated
during the run of \algsolveeq, in this order, let $\solution_1$ be transformed to $\solution_2$, \ldots, $\solution_m$ during this run,
and let $\phi_2$, \ldots, $\phi_m$ be the corresponding inverse operators.
We claim that if $\solution_1$ is length-minimal
then for each $i$ we have that
\begin{equation}
\label{eq: bound on per letters}
\per_\letters(\solution_i(U_i)) = \Ocomp(n_v (|U_i| + |V_i|) e^{2 n_v /e}) \enspace .
\end{equation}
Suppose that this is not the case.
Consider the $\solution_i$-coherent system of Diophantine equations,
let $\solution_i$ correspond to $\{\ell_X, r_X\}_{X \in \variables}$
and consider some minimal solution $\{\ell_X', r_X'\}_{X \in \variables}$ that is not larger than $\{\ell_X, r_X\}_{X \in \variables}$.
Then by Lemma~\ref{lem:diophantine solution word solution}
the $\solution_i' = \solution_i[\{\ell_X', r_X'\}_{X \in \variables}]$ is also a solution,
which is similar to $\solution_i$.
As those solutions are similar, $\solution_i'$ can be obtained by deleting some letters from $\solution_i$.
Then $\phi_2 \circ \phi_3 \circ \dots \circ \phi_i [\solution_i']$ is also a solution of $U_1 = V_1$
which is shorter than $\solution = \phi_2 \circ \phi_3 \circ \dots \circ \phi_i [\solution_i]$,
contradiction.

None of the equations $U_1 = V_1$, $U_2 = V_2$, \ldots, $U_m = V_m$ is repeated, and as each of them is of length at most $c' n$
(see Lemma~\ref{lem: space consumption}),
thus $m \leq (c'n)^{c'n+1} \leq n^{cn}$, for some constants $c$ and $c'$.
By Lemma~\ref{lem: how per goes back} it holds that $\per(\solution_i(U_i)) = \per_\letters(\solution_i(U_i))$
or $\per(\solution_i(U_i)) \leq \per(\solution_{i+1}(U_{i+1})) + 1$.
Observe that for $\solution_m(U_m)$ we have $\per(\solution_m(U_m)) = 1$:
since $|U_m|, |V_m| \leq 1$ we have two cases:
\begin{itemize}
	\item if any of $U_m$ or $V_m$ is $\epsilon$ then $\solution_m(U_m) = \epsilon$ and $\per(\solution_m(U_m)) = 0$;
	\item if one of $U_m$ or $V_m$ is a letter, say $a$, then $\solution_m(U_m) = a$ and clearly $\per(\solution_m(U_m)) = 1$;
	\item if both $U_m$ and $V_m$ are variables, say $U_m$ is $X$ and $V_m$ is $Y$ (we do not assume that $X \neq Y$).
	Suppose that $\solution_m(X) = \solution_m(Y)$ is longer than one letter, say it is $a w$.
	Consider $\solution_m'$, where $\solution_m'(X) = \solution_m'(Y) = w$ and it is equal to $\solution_m$ otherwise.
	Then $\solution_1' = \phi_2 \circ \phi_3 \circ \dots \circ \phi_m [\solution_m']$ is also a solution of $U_1 = V_1$
	and $\solution_1'(U_1)$ is shorter than $\solution_1(U_1)$,
	as $\solution_1 = \phi_2 \circ \phi_3 \circ \dots \circ \phi_m [\solution_m]$.
	This contradicts the assumption that $\solution_1$ is length minimal.
	Hence $\solution_m(X) = \solution_m(Y)$ has only one letter and so $\per(\solution_m(U_m)) = 1$.
\end{itemize}

Let $i$ be the smallest index among $1, 2, \ldots, m$ such that $\per_\letters(\solution_i(U_i)) = \per(\solution_i(U_i))$.
Note that such an $i$ exists, as $m$ satisfies this condition.
Recall that by~\eqref{eq: bound on per letters}
\begin{align*}
\per_\letters(\solution_i(U_i))
	&\leq 
c' n_v (|U_i| + |V_i|) e^{2 n_v /e} \enspace
\intertext{for some constant $c'$. By~\eqref{per per} for $j = 1, 2, \ldots, m-1$ we have
$\per(\solution_j(U_j)) \leq \per(\solution_{j+1}(U_{j+1})) + 1$ we conclude that}
\per(\solution_1(U_1))
	&=
(i-1) + c' n_v (|U_i| + |V_i|) e^{2 n_v /e}\\
	&= 
\Ocomp(n^{cn} + n_v n e^{2 n_v /e})\\
	&=
\Ocomp(n^{cn}) \enspace,
\end{align*}
for some constant $c$, in particular it is single exponential in $n = |U_1| + |V_1|$.
Since this holds for an arbitrary lenght-minimal solution $\solution_1$ of $U_1 = V_1$,
this yields the claim.
\end{proof}

\subsection*{Double exponential bound on minimal solutions}
It was shown by Plandowski~\cite{PlandowskiSTOC} that
the size of the length minimal solution of word equation
is always doubly exponential.
This result was achieved by careful and clever analysis of factorisations
of minimal solutions;
the proof is basically independent from the analysis of the \algPlandowskiFOCS,
though uses similar types of factorisations of words
(and in fact the doubly-exponential bound can be inferred from \algPlandowskiFOCS{}
after some simple modifications~\cite{PlandowskiPersonal}).

Since we know that on one hand the running time of \algsolveeq{} is polynomial in $n$ and $\log N$ (see Theorem~\ref{thm:main})
on the other the space consumption is $\Ocomp(n \log n)$ (see Lemma~\ref{lem:improved block compression}), the doubly exponential upper bound
on $\log N$ seems natural. However, both presented bounds are upper bounds and so cannot be directly compared.
To compare them we want to show that the running time is in fact also \emph{lower-bounded} in terms
of $\log N$.
\begin{lemma}
\label{lem: phase lower bound}
Let $N$ be the size of the length-minimal solution of a word equation of size $n$.
Then the number of phases of \algsolveeq{} is $\Omega(\log N / \poly(n))$
for every accepting run, regardless of the nondeterministic choices.
\end{lemma}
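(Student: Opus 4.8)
The plan is to execute the accepting computation \emph{backwards}: from the trivial equation reached at the end one undoes every transformation step, obtaining a solution of the input equation $U=V$, and bounds the length of that solution from above in terms of $m$ and $n$. Since $U=V$ has a length-minimal solution of size $N$, this upper bound is at least $N$, which forces $m$ to be large.

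Concretely, I would fix an accepting run with $m$ phases and let $U_0=V_0$ (the input), $U_1=V_1$, \ldots, $U_m=V_m$ be the equations it produces, the last of which is trivial ($|U_m|,|V_m|\le 1$) and is certified satisfiable by the final naive step. A trivial satisfiable equation has a solution $\solution_m$ with $|\solution_m(U_m)|=\Ocomp(1)$ (at most one letter), by the case analysis recalled in the proof of Lemma~\ref{lem: graph is OK}. Using soundness in the quantitative form supplied by the explicit reconstructions in the proof of Lemma~\ref{lem:preserving unsatisfiability}, together with the inverse operators of Lemma~\ref{lem:improved block compression transform} for the block-compression steps, I would build solutions $\solution_{m-1},\ldots,\solution_0$ of the successive equations, so that $\solution_0$ is a solution of $U=V$ and hence $|\solution_0(U)|\ge N$.

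The heart of the argument is to bound, for a single phase, the ratio $|\solution_{i-1}(U_{i-1})|/|\solution_i(U_i)|$. A phase consists of one block-compression step (\algblocksi) and two pair-compression steps (\algpairc), and each of those begins with a popping step; popping only relocates letters between variable values and the explicit part of the equation, so the evaluated word is literally unchanged and contributes factor $1$. Undoing a pair-compression step replaces every fresh letter by the two letters it encodes, hence at most doubles the length, so the two pair-compression steps together contribute a factor at most $4$. Undoing a block-compression step replaces each block-letter $a_{e_j}$ by the $a$-block of length $e_j$, where I would use the freedom in Lemma~\ref{lem:improved block compression transform} to evaluate the parameters at a \emph{minimal} solution of the associated small linear Diophantine system; by Lemma~\ref{lem:small system solution} that solution has all coordinates $2^{\Ocomp(n)}$, and since the expressions $e_j$ have coefficient sums and additive constants polynomially bounded in $n$ (the equations stay of size $\Ocomp(n)$, Lemma~\ref{lem: space consumption}), every reinserted block has length $2^{\Ocomp(n)}$, so this step contributes a factor $2^{\Ocomp(n)}$. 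Altogether $|\solution_{i-1}(U_{i-1})|\le 2^{\Ocomp(n)}\cdot|\solution_i(U_i)|$, hence $N\le|\solution_0(U)|\le 2^{\Ocomp(n)\cdot m}$ and $m=\Omega(\log N/n)$, which is even stronger than claimed.

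The step I expect to be the real obstacle is precisely the block-compression one: along a \emph{given} run the guessed block lengths may be arbitrarily large, so one cannot undo a phase by simply re-inserting the blocks the algorithm chose. The point I would stress is that we are free to reconstruct \emph{any} solution of $U=V$, not the one implicit in the run, and Lemma~\ref{lem:improved block compression transform} furnishes one in which every block has length $2^{\Ocomp(n)}$, via minimal solutions of the Diophantine systems; it is here that it matters that \algsolveeq{} uses \algblocksi{} rather than \algblocksc. The remaining points are routine bookkeeping: checking that popping steps preserve the evaluated word length even in the degenerate cases where a variable is deleted, and that the factor-$2$ bound for pair compression is legitimate because the pairs compressed within one step are pairwise non-overlapping.
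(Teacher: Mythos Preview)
Your approach is correct and is essentially the same as the paper's: both hinge on the observation that when undoing a \algblocksi{} step one may evaluate the block lengths at a \emph{minimal} solution of the small Diophantine system $D$, so each reinserted block has length $2^{\Ocomp(n)}$, while pair compressions contribute only a constant factor. The paper organizes this as a bound on the ratio $N_i/N_{i+1}$ of length-minimal solution sizes of consecutive equations (pulling back the length-minimal $\solution'$ of $U'=V'$, then passing to a similar short-block solution), whereas you build one explicit chain $\solution_m,\ldots,\solution_0$ via the inverse-operator family $\mathcal H_D$; your route is arguably more direct since it uses the run's $D$ immediately rather than re-deriving an \solution-coherent system. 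One caveat present in both arguments: the bound on the Diophantine minimal solution involves the current equation size, and Lemma~\ref{lem: space consumption} only guarantees $|U_i|+|V_i|=\Ocomp(n)$ for \emph{appropriate} choices, so to make the statement hold for \emph{every} accepting run you must assume (as is standard for the \PSPACE{} implementation) that the algorithm rejects whenever the stored equation exceeds the $\Ocomp(n)$ bound.
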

\begin{proof}
Suppose that the equation $U = V$ is transformed into
an unsatisfiable equation $U' = V'$;
then we are done, as it will never be turned into a satisfiable instance.
So in the following we consider only the case, in which
each of the equations is satisfiable.

The solution $\solution'$ of $U' = V'$ is obtained from $\solution$ of $U = V$
by two separate compression sub-phases:
in the first, some maximal blocks of letters may be compressed into one letter,
in the second, some pairs $ab$, for $a \neq b$ are replaced by a fresh letter
(see Lemmata~\ref{lem: consistent no crossing block transform} and \ref{lem: crossing pairs transforms}).
In the following we shall compare the lengths of the length-minimal solutions before and after one such compression subphase,
i.e.\ estimate  $N/N'$, where $N$ and $N'$ are the lengths of the length-minimal solutions before and after the subphase, respectively.

We begin with the second phase, as it is easier to analyse.
Notice, that if $c$ is introduced as a letter for a pair $ab$ then
$c$ is not compressed in the rest of this subphase,
hence at most two letters are compressed into one and those new letters are not further compressed.
Let \solution{} be a length minimal solution of $U = V$, i.e.\ of size $N$.
Then by Lemma~\ref{lem:crossing non crossing} the obtained equation $U' = V'$ has a solution $\solution'$ which implements the pair compression,
in particular, it is at most two times shorter than \sol U.
Hence
\begin{align*}
\frac{N}{N'} &= \frac{|\sol U|}{N'}\\
	&\leq
\frac{|\sol U|}{|\solution'(U')|}\\
	&\leq
2 \enspace.
\end{align*}
Thus, the second compression subphase shortens the shortest solution by a factor of at most $2$.
Let us return to the first sub-phase.

Consider any length-minimal solution $\solution'$ of $U' = V'$, let its length be $N'$.
Take any solution \solution{} that is transformed into $\solution'$ by \algblocksi.
Consider the \solution-coherent system of Diophantine equations $D$ and the solutions
$\solution[\{\ell_X,r_X\}_{X \in \variables}]$ induced by different solutions of $D$,
see Lemma~\ref{lem:diophantine solution word solution}.
Take the length-minimal among them, let it be $\solution_1$.
Then its $\letters$-exponent of periodicity is $\Ocomp((|U|+|V|)e^{2 n_v /e})$ by Lemma~\ref{lem:diophantine solution word solution 2}.
Now, note that as \solution{} and $\solution_1$ are similar,
the application of block compression to \sol U and $\solution_1(U)$ results in a string of the same length:
similar solutions have the same number of maximal blocks and each of those blocks is replaced with a single letter.
As the former is $\solution'(U')$, we get that $\solution'(U')$ is $\Ocomp((|U|+|V|)e^{2 n_v /e})$ times shorter than $\solution_1(U)$.
Consequently
\begin{align*}
\frac{|\solution_1(U)|}{N'}\\
	&=
\frac{|\solution_1(U)|}{|\solution'(U')|}
	&
\leq cn e^{2 n_v /e} \enspace .
\intertext{Since $|\solution_1(U)| \geq N$, where $N$ is the length of the length-minimal solution of $U = V$,
we obtain that}
\frac{N}{N'}
	&\leq
\frac{|\solution_1(U)|}{|\solution'(U')|}\\
	&\leq
cn_v e^{2 n_v /e} \enspace .
\intertext{Taking into the account the factor $2$ in the second sub-phase we obtain the upper bound}
2cn e^{2 n_v /e}
\end{align*}
on the proportion between length minimal solutions in the consecutive phases.

So let $N = N_1, N_2,\ldots,N_m$ be the lengths of length-minimal solutions in consecutive phases,
where $m$ is the last phase.
Then $N_{i}/N_{i+1} \leq 2 cn e^{2 n_v /e}$ and $N_m \leq 1$, hence
\begin{align*}
N &\leq (2cn e^{2 n_v /e})^{m}
\intertext{and so}
m &\geq \frac{\log N}{\poly(n)} \enspace,
\end{align*}
as claimed.
\qedhere
\end{proof}

\begin{corollary}[cf.~\cite{PlandowskiSTOC}]
The size of the length-minimal solution of a word equation of size $n$ is at most $2^{q(n) \cdot n_v ^{cn_v}}$
for some polynomial $q$ and constant $c$.
\end{corollary}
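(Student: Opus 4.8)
The plan is to combine the two bounds already established: the upper bound on the number of phases of \algsolveeq{} from Lemma~\ref{lem: space consumption} (namely $\Ocomp(\log n + n_v^{c'n_v})$ for appropriate nondeterministic choices) with the lower bound on the number of phases from Lemma~\ref{lem: phase lower bound} (namely $\Omega(\log N/\poly(n))$ for \emph{every} accepting run). First I would fix a length-minimal solution \solution{} of the input equation $U=V$, of size $N$; this guarantees that \algsolveeq{} has an accepting computation. Along the accepting computation given by the ``appropriate nondeterministic choices'' of Lemma~\ref{lem: space consumption}, the number $m$ of phases satisfies $m = \Ocomp(\log n + n_v^{c'n_v})$.

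On the other hand, Lemma~\ref{lem: phase lower bound} applies to this very computation (since it holds for every accepting run, regardless of the nondeterministic choices), giving $m = \Omega(\log N / p(n))$ for some polynomial $p$. Chaining the two inequalities yields
\begin{equation*}
\frac{\log N}{p(n)} = \Ocomp(m) = \Ocomp\!\left(\log n + n_v^{c'n_v}\right),
\end{equation*}
hence $\log N = \Ocomp\!\left(p(n)\bigl(\log n + n_v^{c'n_v}\bigr)\right)$. Absorbing $p(n)\log n$ into a polynomial factor and noting that $p(n)\cdot n_v^{c'n_v}$ is dominated by $q(n)\cdot n_v^{cn_v}$ for a suitable polynomial $q$ and constant $c$, we obtain $\log N = \Ocomp(q(n)\cdot n_v^{cn_v})$, i.e.\ $N \leq 2^{q(n)\cdot n_v^{cn_v}}$, which is exactly the claimed doubly-exponential bound.

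I do not expect any real obstacle here, as this corollary is purely a bookkeeping consequence of the two previously proven lemmata; the only care needed is to make sure the polynomial and constant are chosen so as to swallow the lower-order terms ($\log n$, the polynomial $p(n)$, the constant $2c$ appearing inside the $\poly(n)$ of Lemma~\ref{lem: phase lower bound}, and the difference between $c'$ and $c$ in the exponent). One should also remark that the bound on the number of phases in Lemma~\ref{lem: space consumption} is stated only for ``appropriate nondeterministic choices'', which is fine precisely because a length-minimal solution exists and so some accepting computation realising those choices exists — and Lemma~\ref{lem: phase lower bound} then constrains that same computation from below.
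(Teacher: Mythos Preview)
Your proposal is correct and follows essentially the same approach as the paper: combine the upper bound on the number of phases from Lemma~\ref{lem: space consumption} with the lower bound from Lemma~\ref{lem: phase lower bound}, applied to the same accepting run, and solve for $N$. Your discussion of why the two bounds can be applied to a common run (the upper bound holds for \emph{some} accepting computation, the lower bound for \emph{all} of them) is, if anything, more explicit than the paper's version.
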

\begin{proof}
By Lemma~\ref{lem: space consumption}
the equation stored by \algsolveeq{} has at most $c n_v^{cn_v} \log n$ many phases.
On the other hand, by Lemma~\ref{lem: phase lower bound}, there are at least
$c' (\log N)/p(n)$ phases, for some constant $c'$ and polynomial $p$.
Thus,
$$
c' (\log N)/p(n) \leq c n_v^{cn_v} \log n \enspace,
$$
which yields the claim.
\qedhere
\end{proof}

\subsection*{Acknowledgements}
I would like to thank A.~Okhotin for his remarks about ingenuity of Plandowski's
result, which somehow stayed in my memory;
P.~Gawrychowski for initiating my interest in compressed membership problems
and compressed pattern matching, exploring which eventually led to this work
as well as for pointing to relevant literature~\cite{LohreySLP,MehlhornSU97};
J.~Karhum{\"a}ki, for his explicit question, whether the techniques of
local recompression can be applied to the word equations;
last not least, W.~Plandowski for his numerous comments and suggestions on the manuscript
as well as questions concerning the exact space consumption
that eventually led to results in Section~\ref{sec:linear space}.

\end{document}